\documentclass[11pt]{article}
\usepackage[utf8]{inputenc}
\usepackage{style}
\usepackage{framed}
\usepackage{nicefrac}
\usepackage[parfill]{parskip}
\usepackage{setspace}
\usepackage{hyperref}
\usepackage{mdwlist}
\usepackage{mdframed}
\usepackage[ruled,linesnumbered]{algorithm2e}

%\pagenumbering{gobble}
\usepackage[utf8]{inputenc}
\usepackage{style}
\usepackage{mdframed}
\usepackage{mdwlist}
\usepackage{nicefrac}
\usepackage{graphicx}
\usepackage[parfill]{parskip}
\usepackage{setspace}
\usepackage[colorlinks]{hyperref}
\usepackage[ruled,linesnumbered]{algorithm2e}
\usepackage{amsmath}
\usepackage{amssymb}
\usepackage{complexity}

\newcommand{\Ex}{\E}

\allowdisplaybreaks

\newcommand{\defeq}{\overset{\rm def}{=}}
\usepackage{microtype}

\usepackage{thm-restate}

% {{{ boxedminipage }}}
\usepackage{boxedminipage}

% parentheses

% brackets

% absolute value sign

% ceil floor

% norm

% linear algebra

% Projection Operators

%conditioning 

\newcommand{\smallsetexpansion}{{\sc SmallSetExpansion}}

\newcommand{\uniquegames}{{\sc Unique Games}}

\newcommand{\cH}{\mathcal{H}}

\renewcommand{\cX}{\mathcal{X}}

\newcommand{\cE}{{\mathcal{E}}}

\newcommand{\wh}[1]{\widehat{#1}}
\newcommand{\Langle}{\left\langle}
\newcommand{\Rangle}{\right\rangle}
\newcommand{\non}{\nonumber}
\newcommand{\dksh}{\ensuremath{{\sf D}k{\sf SH}}}
\newcommand{\dks}{\ensuremath{{\sf D}k{\sf S}}}
\newcommand{\cM}{\mathcal{M}}
\newcommand{\uphi}{u_\emptyset}

%%%%%%% Transferrred from SSE Lift %%%%%%%%%

\newcommand{\Pom}{{(\Omega_i)}}
\newcommand{\Pmu}{{(\mu_I)}}
\newcommand{\Pb}{{(\beta)}}

\newcommand{\sT}{\mathcal{T}}

\newcommand{\bphi}{\bar{\phi}}
\newcommand{\what}{\widetilde}

\newcommand{\Inf}[2]{{\sf Inf}_{#1}\left[#2\right]}

\newcommand{\ce}{{\sf e}}
\allowdisplaybreaks

%%%%
\def\DEBUG{true}
\usepackage[colorinlistoftodos,textsize=tiny,textwidth=2cm,color=red!25!white,obeyFinal]{todonotes}

\ifdefined\DEBUG
	
	\newcommand{\el}[1]{\todo[color=blue!25!white]{EL: #1}\xspace}
\else
	
	\newcommand{\el}[1]{}
\fi
%%%%

\title{On Lifting Integrality Gaps to SSEH Hardness \\ for Globally Constrained CSPs}
\author{ Suprovat Ghoshal \\ Northwestern University \& TTIC \\ suprovat.ghoshal@northwestern.edu
	\and Euiwoong Lee \\ University of Michigan \\ euiwoong@umich.edu}
\setcounter{tocdepth}{1}

\begin{document}
\begin{titlepage}
	\maketitle
	\begin{abstract}
		A $\mu$-constrained Boolean {\sc Max-CSP}$(\psi)$ instance is a Boolean Max-CSP instance on predicate $\psi:\{0,1\}^r \to \{0,1\}$ where the objective is to find a labeling of relative weight {\em exactly} $\mu$ that maximizes the fraction of satisfied constraints. In this work, we study the approximability of constrained Boolean Max-CSPs via SDP hierarchies by relating the integrality gap of {\sc Max-CSP}$(\psi)$ to its $\mu$-dependent approximation curve. \\
		
		Formally, assuming the Small-Set Expansion Hypothesis, we show that it is \NP-hard to approximate $\mu$-constrained instances of {\sc Max-CSP}($\psi$) up to factor ${\sf Gap}_{\ell,\mu}(\psi)/\log(1/\mu)^2$ (ignoring factors depending on $r$) for any $\ell \geq \ell(\mu,r)$. Here, ${\sf Gap}_{\ell,\mu}(\psi)$ is the optimal integrality gap of $\ell$-round Lasserre relaxation for $\mu$-constrained {\sc Max-CSP}($\psi$) instances. 
		
		Our results are derived by combining the framework of Raghavendra [STOC 2008] along with more recent advances in rounding Lasserre relaxations and reductions from the Small-Set Expansion (SSE) problem. A crucial component of our reduction is a novel way of composing generic bias-dependent dictatorship tests with SSE, which could be of independent interest.
	\end{abstract}	
	
\end{titlepage}

\tableofcontents

\section{Introduction}

Maximum Constraint Satisfaction Problems ({\sc Max-CSP}s) are some of the most commonly studied optimization problems in theoretical computer science. An instance $\Psi(H = (V,E),\{\Pi\}_e)$ of a $r$-ary Boolean {\sc Max-CSP} is identified with a $r$-ary constraint (hyper)graph $H = (V,E)$, a constraint set $\{\Pi_e\}_e$, where for every edge $e$, $\Pi_e \subseteq \{0,1\}^r$ defines the set of ``accepting'' assignments to the edge. The overall objective of the {\sc Max-CSP} problem is to find a Boolean labeling of the vertices that maximizes the fraction of satisfied constraints. The generality of the definition allows it to express a wide array of combinatorial optimization problems such as {\sc Max-Cut}, {\sc Max-SAT}, {\sc Max-$3$-Lin}, {\sc Label Cover}, among many others as special cases; each of these problems are of fundamental interest on their own, and have their own dedicated line of works that explore its various aspects (see \cite{MM17CSP} and references therein for a comprehensive overview of such results). 

An extensively studied question in the context of each of these problems, and for {\sc Max-CSP}s in general, is that of proving tight bounds for efficiently achievable approximation guarantees. While in the beginning, individual classes of CSPs seem to require their own problem-specific techniques for algorithms and hardness, over a steady sequence of works, the shape of a more unified framework for understanding CSPs began to emerge; upper bounds were facilitated using advances in understanding the power of semidefinite programming \cite{GW94,KMS98,CMM06}, whereas lower bounds were being facilitated by dictatorship test-based gadget reductions \cite{Has01,KKMO07}). These developments eventually culminated in the work of Raghavendra~\cite{Rag08}, which connected the two directions by showing that the limitations of SDP based algorithms are fundamentally connected to the best efficiently achievable approximation factors. Specifically, \cite{Rag08} showed that assuming the Unique Games Conjecture (UGC)~\cite{Khot02a}, for every {\sc Max-CSP} there exists a canonical SDP relaxation whose integrality gap matches the best possible hardness for the problem under UGC. The soundness analysis of its reduction automatically yielded the optimal rounding schemes for the SDP relaxation, thus further strengthening the connection between SDPs and UGC-based hardness reductions. Following \cite{Rag08}, there have been several subsequent works which establish analogous connections for the settings of graph partitioning problems~\cite{MNRS08-multicut}, Ordering CSPs~\cite{GHMRC11-OrderingCSP}, Strict CSPs~\cite{KMTV11}, and Tree CSPs~\cite{CM22}.   

Despite the ubiquity of \cite{Rag08}'s framework, there are still several natural classes of optimization problems for which an analogous unified theory of optimal algorithms and hardness has remained elusive. A particularly well-studied class of such problems are CSPs with global cardinality constraints. Formally, given a predicate $\psi:\{0,1\}^r \to \{0,1\}$, a $\mu$-constrained {\sc Max-CSP}$(\psi)$ instance is a Boolean {\sc Max-CSP} where the edge constraints are identified by $\psi$, and the objective is to find an assignment of relative weight {\em exactly} $\mu$ that satisfies the maximum fraction of constraints. Such CSPs and its variants again express several well-studied problems such as {\sc Densest-$k$-Subgraph}, {\sc Max-Bisection}, {\sc Max $k$-Coverage}, {\sc Small-Set Expansion}, among many others. 

While there is extensive literature that study algorithms and hardness for these problems, these results often employ problem-specific techniques, and as such, a general unified framework of these problems, as in \cite{Rag08}, is far from being realized. In particular, in the quest to develop such a theory for globally constrained CSPs, one is faced with following immediate challenges:

{\bf Question 1}: {\em What is a natural algorithmic framework that is amenable towards deriving optimal approximation guarantees for globally constrained CSPs?} 

{\bf Question 2}: {\em Is there a way to lift lower bounds for the class of optimal algorithms to (possibly conditional) \NP-hardness lower bounds?} 

Towards answering the above, it is useful to understand why the techniques of \cite{Rag08} don't immediately apply to the setting of globally constrained CSPs. The key observation here is that in the setting of Max-CSPs, one can usually reduce the task of solving the CSP instance to that of solving a distribution over constant sized `local' instances. This viewpoint is especially useful in the context of both algorithmic and hardness frameworks. In particular, analyses of SDP based approximation algorithms often view the CSP as a distribution over constraints, and derive the approximation guarantee by reducing the task to that of analyzing the performance of the algorithm on individual constraints. On other hand, for establishing hardness results, a usual approach is to first construct a toy instance (aka a dictatorship test) that models the conceptual difficulty of the problem, and then one can create a large hard instance of the same by embedding copies of the smaller instance along the edges of a hard CSP such as Unique Games. However, back in the setting of globally constrained Max-CSPs, these local techniques often fail to capture the additional complexity introduced due to the `global' constraint(s), and yield bounds that are far from optimal. 

Fortunately, it turns out that there are more recent advances in the theory of approximation which can provide us with candidate guesses for the answers to these questions. Firstly, there is promising evidence which suggests that algorithmic frameworks based on higher-order SDP hierarchies (such as the {\em Sum-of-Squares} (SoS) hierarchy) are better suited for addressing local and global constraints simultaneously. In particular, the work of Raghavendra and Tan~\cite{RT12} proposed a systematic SoS-based framework for solving constrained CSPs, variations of which have been used to establish near-optimal approximation guarantees for several problems such as {\sc Max/Min-Bisection}, {\sc Balanced Separator}~\cite{RT12,ABG12}, {\sc Max-$k$-VC}, {\sc CC Max-Cut} ~\cite{AS19}. For the second question, a candidate staging ground for proving hardness results for globally constrained Boolean Max-CSPs is the {\em Small Set Expansion Hypothesis (SSEH)}. Introduced in the work of Raghavendra and Steurer~\cite{RS09}, the Small-Set Expansion problem has been useful in establishing tight bounds for several natural problems such as {\sc Balanced Separator}, {\sc Min-Bisection}~\cite{RST12}, {\sc Max-Biclique}, {\sc Min-$k$-Cut}~\cite{Man18}, {\sc Densest-$k$-SubHypergraph} \cite{GL22}. 

In this work, we make progress towards bridging the gap between upper and lower bounds for globally constrained CSPs. In particular, we extend the techniques of \cite{Rag08} and develop a systematic way of lifting SoS integrality gaps to SSEH hardness. We detail the contributions of this work in the remainder of this section.

\subsection{Our Results}

Let $\psi:\{0,1\}^r \to \{0,1\}$ be a $r$-ary Boolean predicate. Let $G_{\rm gap} = (V_{\rm gap},E_{\rm gap},\tilde{w},w)$ be a weighted constraint hypergraph (where the vertex and edge weights are given by $\tilde{w}$ and $w$ respectively) for the {\sc Max-CSP} problem with predicate $\psi$. We will assume that the edge weights (vertex weights) sum to one, and hence, they define a distribution over the set of edges (vertices). We will use $i \sim G_{\rm gap}$ and $e \sim E_{\rm gap}$ to denote a draw of a vertex and an edge from the corresponding distributions.

In this work, we deal with Lasserre SDP relaxations for the {\sc Max-CSP}$(\psi)$ instance on $G_{\rm gap}$. Formally, the $\ell$-round Lasserre relaxation for $G_{\rm gap}$ -- denoted by ${\sf Lass}_{\mu,\ell}(G_{\rm gap})$ -- is described in the Figure \ref{fig:lass}. The convex relaxation described is the $\ell$-round Lasserre lifting of the basic SDP. For every subset of variables $A$ of size at most $\ell$ and any partial labeling $\alpha$ of the vertices in $A$, it introduces a vector variable $v_{A,\alpha}$ which is meant to indicate whether the vertices in $A$ are assigned the label $\alpha$. Furthermore, it also introduces a local distribution $\theta_A$ over partial assignments to the vertices in $A$, for every subset of $A$ of size at most $\ell$. Finally it enforces inner product constraints which ensure that the $\ell^{th}$-order moment matrix (defined using the local distributions) is well-defined and positive semidefinite. 

\begin{figure}[ht!]
	\begin{mdframed}
		\begin{eqnarray*}
			\textnormal{Maximize} & \Ex_{e \sim E_{\rm gap}} \Pr_{X_e \sim \theta_e}\left[\psi(X_e) = 1\right] & \\
			\textnormal{Subject to} & \Ex_{i \sim G_{\rm gap}}\Pr_{X_i \sim \theta_i}\left[X_i = 1\right] = \mu &\\
			& \langle u_{A,\alpha},u_{B,\beta} \rangle = \Pr_{\theta_{A \cup B}}\left[X_A = \alpha, X_B = \beta\right] & 
			\forall~A,B \subseteq [\ell] \textnormal{such that} \\
			& & |A|,|B| \leq \ell/2, \\
			& & \alpha \in \{0,1\}^A, \beta \in \{0,1\}^B
		\end{eqnarray*}
	\end{mdframed}
	\caption{${\sf Lass}_{\mu,\ell}(G_{\rm gap})$}
	\label{fig:lass}
\end{figure} 

Our main result here is that integrality gaps for the above relaxation (with $\ell \geq \ell(\mu,r)$) can be translated to almost matching Small-Set Expansion hardness. In order to state our result, we need to formally define the notion of a gap instance.

\begin{definition}[$(\ell,\mu,c,s,\gamma)$-gap instance]					\label{defn:gap-inst}
	An $(\ell,\mu,c,s,\gamma)$-gap instance $(G_{\rm gap},\theta := \{\theta_S\}_{|S| \leq \ell})$ is characterized by a weighted constraint hypergraph $G_{\rm gap} = (V_{\rm gap},E_{\rm gap},\tilde{w},w)$ and a valid feasible solution $\theta = \{\theta_S\}_S$ to the $\ell$-round Lasserre lifting ${\sf Lass}_{\mu,\ell}(G_{\rm gap})$ which satisfies the following conditions:
	\begin{itemize}
		\item {\bf Bias Constraint} $\Ex_{i \sim G_{\rm gap}}\Pr_{X_i \sim \theta} \left[X_i = 1\right] = \mu$, where $i \sim G_{\rm gap}$ denotes the random draw of vertex according to the distribution induced by the vertex weight function $\tilde{w}$.
		\item {\bf Completeness}. The set of local distributions satisfy
		\[
		\Ex_{e \sim E_{\rm gap}}\Pr_{X_e \sim \theta_e}\left[\psi(X_e) = 1\right] = c.
		\]
		\item {\bf Robust Soundness}. The CSP $G_{\rm gap}$ satisfies ${\sf Opt}_{\mu'}(G_{\rm gap}) \leq s$ for every $\mu' \in \mu(1 \pm \sqrt{\gamma})$, where ${\sf Opt}_{\mu'}(G_{\rm gap})$ denotes the optimal $\mu'$-constrained value of the instance $G_{\rm gap}$.
	\end{itemize}
\end{definition}

In particular, a gap instance for a $\mu$-constrained {\sc Max-CSP}$(\psi)$ corresponds to an instance with $\ell$-round Lasserre SDP value $c$ -- which is witnessed by a $\ell$-round Lasserre solution $\theta$ --  and optimal $\mu$-constrained value $s$. Equipped with the above definition, we are now ready to state the main result of this work in the following theorem.

\begin{theorem}				\label{thm:main}
	The following holds assuming the SSEH. Fix a predicate $\psi:\{0,1\}^r \to \{0,1\}$. Let $\mu \in (0,1/2)$. Suppose there exists a $(\ell,\mu,c,s,\gamma)$-gap instance $(G_{\rm gap} = (V_{\rm gap},E_{\rm gap},\tilde{w},w),\theta)$ for {\sc Max-CSP}$(\psi)$ as in Definition \ref{defn:gap-inst} such that $\ell \geq 1/\gamma^8 + r$ and $\mu^{10r} \geq \gamma \geq |V_{\rm gap}|^{-1/C}$, where $C > 0$ is a large fixed constant. Then given a $\mu$-constrained {\sc Max-CSP}$(\psi)$ instance $\cH = (V_\cH,E_\cH,\tilde{w}_{\cH},w_\cH)$, it is \NP-hard to distinguish between the following two cases:
	\[
	{\bf YES~Case}:~{\sf Opt}_\mu(\cH) \gtrsim_r \frac{c}{\log(1/\mu)^2}
	\ \ \ \ \ \ \ \ \textnormal{and} \ \ \ \ \ \ \ \
	{\bf NO~Case}:~{\sf Opt}_{\mu}(\cH) \lesssim_r s.
	\]
	Here the $\lesssim_r$ and $\gtrsim_r$ notations hide multiplicative factors that depend only on $r$.
\end{theorem}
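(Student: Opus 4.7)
The plan is to adapt Raghavendra's integrality-gap-to-UG-hardness framework to the bias-constrained setting, replacing the UG composition with an SSE composition tailored to the marginal $\mu$. First I would use the feasible Lasserre solution $\theta = \{\theta_S\}_{|S|\leq \ell}$ to build a bias-dependent dictatorship test: for each edge $e \in E_{\rm gap}$, the local distribution $\theta_e$ induces a correlated distribution on $\{0,1\}^r$ whose single-coordinate marginals are each Bernoulli$(\mu)$. A candidate function $f : \{0,1\}^{V_{\rm gap}} \to \{0,1\}$ is queried by sampling $e \sim E_{\rm gap}$ together with $r$ independently drawn vectors of vertex-labelings correlated via $\theta_e$, and checking whether $\psi$ is satisfied on the corresponding $r$-tuple of function values. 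Correctness of this step relies on the $\ell \geq 1/\gamma^8 + r$ rounds giving us enough Lasserre consistency that the projected local distributions behave, up to error $\gamma$, like a true probability measure on $\{0,1\}^r$ with the correct $\mu$-marginals.

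Second, I would establish the two key properties of this test. Completeness at value $\approx c$ comes from the dictator functions $f(x) = x_i$, which have bias exactly $\mu$ under the vertex weighting $\tilde{w}$ and, by construction of $\theta_e$, pass the test with probability $c$. For soundness I would invoke a biased-cube invariance principle (Mossel–O'Donnell–Oleszkiewicz style) to argue that any function of bias in $\mu(1 \pm \sqrt{\gamma})$ with no coordinate of large low-degree influence achieves test value at most $\lesssim_r s$, using the robust soundness condition in Definition \ref{defn:gap-inst} to absorb the $\sqrt{\gamma}$ bias slack that invariance produces.

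Third, I would compose this test with a hard SSE instance $H$. For each vertex $v$ of $H$ one introduces a copy of the biased hypercube on coordinate set $V_{\rm gap}$, viewed as a long-code encoding; the edges of the final $\mu$-constrained CSP $\cH$ are produced by sampling an SSE-correlated pair $(u,v)$ of $H$-vertices, then sampling $r$ correlated strings in the long-codes of a tuple of such vertices using the dictatorship test distribution and a random permutation drawn from a noise-operator-like process tied to the SSE small-set expansion profile. The $\mu$-global bias of $\cH$ is enforced because every edge in the construction respects the $\mu$-marginal, and hence any assignment to $\cH$ corresponds to a collection of biased functions on the long-codes whose averaged bias is $\mu$. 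For YES case, the dictator solutions plus a random SSE coordinate survive the composition with value $\gtrsim_r c/\log(1/\mu)^2$; for the NO case, a standard influence decoding argument (in the spirit of \cite{RS09,RST12}) converts any assignment of value $\geq s$ to a small non-expanding set in $H$, contradicting SSEH.

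The main obstacle I expect is the bias-dependent composition itself: the UG long-code composition of \cite{Rag08} is label-symmetric and vertex-regular, which trivially preserves marginals, whereas an SSE long-code has to simultaneously maintain a $\mu$-biased encoding, route low-influence structure through the SSE graph so that a globally good assignment decodes to a small non-expanding set, and survive a biased invariance principle whose hypercontractive ingredients each contribute a $\log(1/\mu)$ factor in the $\mu \to 0$ regime — this is where the loss $\log(1/\mu)^2$ in the completeness enters, and calibrating the noise schedule between the SSE instance, the invariance step, and the Lasserre rounding so that all three are simultaneously tight will be the heart of the argument.
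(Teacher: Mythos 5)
Your proposal correctly identifies the overall architecture (a Raghavendra-style gaps-to-hardness reduction with SSE replacing Unique Games as the outer verifier, a bias-dependent dictatorship test built from $\theta_e$, and influence decoding in the soundness case), but it omits the central technical idea and misdiagnoses where the $\log(1/\mu)^2$ loss comes from. The composition with SSE forces the long code to live on a lifted domain $V^R\times\{0,1\}^R\times\{\bot,\top\}^R$ with folding operators $M_z$ (as in RST12) so that a globally $\mu$-biased table is locally $\mu$-biased for most local tests. The crux your proposal does not address is how the $r$ auxiliary strings $z_{i_1},\ldots,z_{i_r}$ inside a single constraint should be jointly distributed: fully independent $z_i$'s destroy completeness (after folding, dictators can no longer align on a common coordinate), while fully correlated $z_i$'s destroy soundness (a function depending only on $z$ cheats). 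The paper's resolution is to couple the $z_i$'s weakly (per coordinate, agree with probability $\rho^2$, else independent, with the $(x_i)$ and $(z_i)$ ensembles independent of each other) and then prove a half-decoupling statement (Lemma \ref{lem:prod}) via the invariance principle plus the multi-dimensional Borell inequality and the stability estimate of Lemma \ref{lem:hspace}, which lets the soundness analysis average the $z$'s out entirely for low-influence functions and fall back to the Raghavendra--Tan rounding on the $x$-variables alone. This is the main new contribution, and without it the composition step of your outline does not go through. Relatedly, the $\log(1/\mu)^2$ loss is not a hypercontractivity loss from a biased invariance principle: it is exactly the factor $\rho^2$ paid in completeness for the event that all $z_i$'s agree on the dictator coordinate, where $\rho=\Theta(1/(r^2\log(1/\mu)))$ is forced to be this small so that $\Lambda_{\sqrt{\rho}}(\delta_1,\ldots,\delta_r)\lesssim_r\prod_i\delta_i+\mu^r$ holds in soundness.

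A second genuine gap is in your final step, where you invoke the Raghavendra--Tan rounding to "absorb the $\sqrt{\gamma}$ bias slack." As stated in their work, the rounded solution's weight deviates from $\mu$ by a multiplicative $2^{O(R)}$ times the average covariance of the local distributions, and here $R=1/(r\beta\delta)$ depends inversely on the SSE volume parameter $\delta$, so the bound is vacuous. The paper must (a) pre-process $\theta$ by smoothing and conditioning (Lemma \ref{lem:smooth}) to obtain small average \emph{correlation} rather than covariance, and (b) replace the covariance-based variance bound with a Hermite-analytic one (Lemma \ref{lem:var-bound}) that is $R$-independent, so the rounded weight lands in $\mu(1\pm 2\sqrt{\gamma})$ and robust soundness applies. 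Your proposal presumes this step but supplies no mechanism for it.
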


The above theorem states that an $s/c$-integrality gap for $\ell(\mu,r)$-round Lasserre relaxation can be translated to an $\Omega_r((s/c) \cdot \log(1/\mu)^2)$-factor SSEH hardness. At a high level, this implies that hard-to-round instances for globally constrained CSPs approximately translate to conditional \NP-hardness i.e., the integrality gaps for higher-order convex relaxations imply fundamental computational bottlenecks in approximating globally constrained CSPs. Whereas previous works such as~\cite{RT12,ABG12,BansalSticky20} illustrate the efficacy of Lasserre based algorithms in approximating constrained {\sc Max-CSP}s (albeit for specific cases), we provide a connection in the other direction by lifting Lasserre gaps to SSEH based hardness for every constrained {\sc Max-CSP}. Theorem \ref{thm:main} illustrates the power of Lasserre hierarchy and SSEH in understanding the approximability of problems with global constraints, where their basic counterparts, namely SDPs and UGC, seem to provide relatively limited insights. 

%\begin{remark}[On the parameters of Theorem \ref{thm:main}]
%	Note that while lifting a integrality gap to SSEH hardness, Theorem \ref{thm:main} loses a factor of $\Omega_r(\log(1/\mu)^2)$, and in particular, the bounds are interesting for the setting when $r$ is constant and $\mu$ is allowed to vary over the entire range of $[0,1]$. In particular, given a predicate $\psi$, if the integrality gap curve of the predicate $\alpha_{\psi}(\mu)$ goes to $0$ as $\mu$ tends to $0$ (or if $\mu$ tends to $1$), then Theorem \ref{thm:main} at rates much faster than $O(\log(1/\mu)^2)$, then Theorem \ref{thm:main} not only shows that the SSEH hardness must also go to $0$, but it also exhibits the rate at which the hardness curve goes to $0$ by bounding it using the integrality gap curve. 
%\end{remark}

\begin{remark}[Lifting Degree-$2$ SoS gaps]
	An obvious interesting question here is whether basic SDP integrality gaps can be lifted to (conditional) \NP-hardness as well. We believe that an optimal gaps-to-hardness lifting theorem (which does not lose any constants) would require higher levels of SoS, since recent works on cardinality constrained problems such as Max/Min-Bisection~\cite{RT12,ABG12}, Balanced Max $2$-SAT~\cite{ABG12}, Max-$k$-VC~\cite{AS19} all crucially use properties of higher-level SoS-relaxations. However, there are individual instances of globally constrained problems such as Small-Set Expansion and Max-$k$-Coverage for which SDP~\cite{RST10a} and LP relaxations\footnote{For Max-$k$-Coverage, it is folklore that independent rounding on LP relaxations yields a $(1 - 1/e)$-approximation algorithm, which is tight~\cite{FeigeSetCover}.} (for the latter) are known to be optimal, and it is likely that the SDP gap instances of these problems themselves can be losslessly lifted to \NP-hardness. 
\end{remark}

\subsection{Related Work}

{\bf Integrality Gaps and Hardness}. The connection between SDP integrality gaps and hardness was formally first established in the work of Austrin \cite{Aus10}, who showed matching bounds for $2$-CSPs under certain assumption on the hardest-to-round distributions. Following \cite{Rag08}, several works have extended this to various settings such as Ordering CSPs~\cite{GHMRC11-OrderingCSP}, Strict CSPs~\cite{KMTV11}, graph partitioning problems~\cite{MNRS08-multicut}, and Tree CSPs~\cite{CM22}. Khot and Saket~\cite{KS15} showed that LP gaps for {\sc Max-CSP}s can be lifted to Unique Games hardness while losing a factor of $O(\log k)$, where $k$ is the label set size. There have also been several works which lift previously known integrality gaps to explicit Unique Games based lower bounds, for e.g. see \cite{GSS15,Lee17,BHPZ21}.

{\bf Lasserre Hierarchy}. The Lasserre -- aka Sum-of-Squares (SoS) -- hierarchy has been studied extensively in the context of approximation algorithms for CSPs, and is widely believed to be a candidate meta-algorithm for refuting the Unique Games Conjecture. Works such as \cite{BRS11,GS11,AJT19} exhibit bounds against the expansion profile of hard instances of Unique Games and related CSPs by showing that instances with small threshold rank can be efficiently solved using SoS. On the other hand, there have been several works~\cite{BBHKSZ12,OZ13,KOTZ14,BBKSS21} which show that SoS can efficiently refute integrality gap instances of several fundamental problems which fool the basic SDP relaxation. We refer interested readers to \cite{BS-SoSSurvey,FKT-SoSSurvey} for an overview of related results.

{\bf Globally Constrained CSPs}. There have been several lines of works that study specific globally constrained CSPs such as Densest-$k$-Subgraph problem~\cite{FS97,BCCFV10,Man17}, Max-Bisection~\cite{RT12,ABG12}, Max $k$-Vertex Coverage~\cite{RT12,AS19}. \cite{Gh22} studied the constrained variant of the Homogeneous Max-$3$-Lin problem from the context of approximation resistance and established nearly-tight bounds in several regimes. There have been several works which propose general purpose algorithmic frameworks using Lasserre hierarchy in several settings, for e.g. see \cite{RT12,GS11,BansalSticky20} and references therein.  

Of particular relevance to the current work is that of Ghoshal and Lee~\cite{GL22} who studied Biased CSPs, where the objective is to find a labeling of relative weight {\em at most} $\mu$ that satisfies the maximum number of constraints. They established tight bounds for the bias-approximation curve of every Biased CSP of constant arity by expressing it as a function of the bias-approximation curves of the Densest-$k$-SubHypergraph problems. We point out that their results do not apply to this setting due to the following reasons (i) in this work we study CSPs with bounded weights (i.e., where the vertex weights at most inverse polynomial in the instance size -- say, bounded by $1/n^{1/100}$) whereas \cite{GL22} studied CSPs where the vertex weights can be unbounded and (ii) the feasible labelings are constrained to have relative weight exactly $\mu$ as opposed to at most $\mu$. Due to these differences, there exist predicates for which the approximation curve of the ``equals'' version studied in this work is distinctly different from the curve for the version studied in \cite{GL22}. For illustration, we provide such an example in Section \ref{sec:examp}.

\section{Overview and Techniques}

Our approach towards establishing Theorem \ref{thm:main} is based on the following principle: we want to use the ``hard-to-round'' integrality gap instance to design a dictatorship test which, when combined with an appropriate outer verifier, will yield a similar (conditional) hardness. This is a well-understood process that can be distilled into two clear objectives:

\begin{itemize}
	\item[(i)] Given a $(c,s)$-integrality gap instance, construct a ``bias-dependent'' $(c,s)$-dictatorship test\footnote{A {\em bias-dependent} dictatorship test is a dictatorship test where the completeness-soundness guarantees of the test only apply when the input long code table $f$ has fixed relative weight, see Figure \ref{fig:test-prop} for a more formal description of these properties.} $\cT_{\rm dict}$ for the same predicate.
	\item[(ii)] Compose $\cT_{\rm dict}$ with \smallsetexpansion~as the outer verifier.
\end{itemize}

There have been several works which use the above framework (for {\sc Max-CSP}s) to establish (often tight) connections between convex programming relaxations and UGC based hardness. At a high level, Theorem \ref{thm:main} is based on \cite{Rag08} (and in part, \cite{RT12}) -- however extending their framework to the setting of globally constrained CSPs is challenging and will require several new ideas. 

\subsection{\cite{Rag08}'s approach}

We first give a brief account of \cite{Rag08}'s framework, which will be useful towards highlighting the key bottlenecks that need to be addressed in the setting of constrained CSPs. As mentioned above, \cite{Rag08} also reduces the task into establishing (i) and (ii). Towards establishing (i), \cite{Rag08} considers the following natural dictatorship test.\footnote{To keep the description simple and consistent, here we describe \cite{Rag08}'s construction for the setting of Boolean CSPs -- however, their setting and results are far more general, we refer interested readers to \cite{Rag08} for details.}

\begin{figure}[ht!]
	\begin{mdframed}
		{\bf Setting}. Let $(G_{\rm gap} = (V_{\rm gap},E_{\rm gap},w),\theta)$ be a $(c,s)$-gap instance\footnote{A $(c,s)$-gap instance refers to a gap instance for the SDP relaxation corresponding to {\sc Max-CSP}$(\psi)$.} for {\sc Max-CSP}$(\psi)$.\\
		{\bf Input}. An assignment\footnote{Typically, $r$ and $|V_{\rm gap}|$ are treated as problem-specific constants, and then $R$ is chosen large enough as a function of $r$ and $|V_{\rm gap}|$.} $f:\{0,1\}^R \to \{0,1\}$. \\
		{\bf Test}. 
		\begin{enumerate}
			\item Sample edge $e \sim E_{\rm gap}$ according to the distribution given by the weight function $w$.
			\item For each $j \in [R]$, independently sample $(x_1(j),\ldots,x_r(j)) \sim \theta_e$.
			\item For each $i \in e$, sample $x'_i \underset{1 - \eta}{\sim}  x_i$ independently.
			\item Accept if and only if
			\[
			\psi\Big(f(x'_1),\ldots,f(x'_r)\Big) = 1.
			\]
		\end{enumerate}
	\end{mdframed}
\caption{Test $\cT_{\rm UGC}$}
\label{fig:test-rag}
\end{figure}

Then they proceed to show that the above test has completeness $c$ and soundness $s$. As is usual, arguing completeness is simple -- one can show that any dictator function $f = \chi_i$ passes the test with probability at least $c - o(1)$. On the other hand, arguing that the soundness of the test is at most $s := {\sf Opt}(G_{\rm gap})$ is the more challenging direction: this is established by providing a rounding algorithm ${\sf Round}$
which shows that non-influential functions can be used to round off solutions with value matching the acceptance probability -- this is stated formally in the following theorem.
\begin{theorem}[\cite{Rag08}]					\label{thm:sound-inf}
	There exists a randomized algorithm  ${\sf Round}$ with the following property. Given a non-influential long code assignment $f$, it can use $f$ to round off a labeling $\sigma:={\sf Round}(f)$ such that 
	\[
	\Ex_{\sigma}\left[{\sf Val}_\sigma(G_{\rm gap})\right] \approx \Pr\left[\cT_{\rm UGC} \mbox{ Accepts } f\right],
	\]
	where ${\sf Val}_{\sigma}(G_{\rm gap})$ denotes the weight of edges in $G_{\rm gap}$ satisfied by the labeling $\sigma$.	
\end{theorem}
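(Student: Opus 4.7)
The plan is to use the invariance principle to rewrite the Boolean dictatorship test's acceptance probability as a Gaussian quantity, and then realize that Gaussian quantity as the expected value of a labeling of $G_{\rm gap}$ obtained via a Raghavendra-style hyperplane rounding.

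First I would write
\[
\Pr[\cT_{\rm UGC} \mbox{ accepts } f] = \Ex_{e \sim E_{\rm gap}} \Ex_{x \sim \theta_e^{\otimes R}} \Ex_{x'} \psi\big(f(x'_1), \ldots, f(x'_r)\big),
\]
where $x'$ is the $\eta$-noised version of $x$. Expanding $\psi$ as a multilinear polynomial over $\{0,1\}^r$ and substituting the Fourier expansion of $f$ turns the inner integrand into a polynomial in $\hat f$ whose coefficients depend only on $\theta_e$ and on $\eta$; the noise smoothing damps the contribution of high-degree Fourier coefficients, so only effectively degree $O_\eta(1)$ terms carry weight.

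Second, I would invoke Mossel's invariance principle. Since $f$ has vanishing low-degree influences $\Inf{j}{T_{1-\eta} f}$ by the non-influence hypothesis, each Boolean evaluation $f(x'_i)$ can be replaced, at cost $o(1)$ per edge, by a Gaussian analog $\Gamma(f)(\zeta_i)$, where the tuple $(\zeta_1,\ldots,\zeta_r)$ is correlated so as to match the second moments of $\theta_e$. Averaging over edges $e \sim E_{\rm gap}$ then yields a Gaussian acceptance quantity equal to the Boolean acceptance probability up to $o(1)$.

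Third, I would design ${\sf Round}$ to realize this Gaussian quantity as the expected value of an actual labeling. For each vertex $v \in V_{\rm gap}$, the Lasserre solution supplies SDP vectors $u_{\{v\},0}, u_{\{v\},1}$ consistent with the local distribution $\theta_{\{v\}}$. Combining these with the Hermite decomposition of $f$ produces an ambient Gaussian vector $\phi_v$; sampling a fresh Gaussian $g$ and setting $\sigma(v)$ by thresholding $\langle \phi_v, g \rangle$ yields a labeling whose per-edge satisfaction probability equals the Gaussian integrand at that edge. Taking expectation over $e \sim E_{\rm gap}$ matches $\Ex_\sigma[{\sf Val}_\sigma(G_{\rm gap})]$ with the Gaussian quantity from the previous step, closing the loop.

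The hard part will be controlling the invariance error uniformly across edges $e$: the $\eta$-noise step of $\cT_{\rm UGC}$ must ensure that the relevant polynomial has effective degree $O_\eta(1)$, so that the invariance principle yields an error proportional to $\max_j \Inf{j}{T_{1-\eta} f}$, which is small under the non-influence hypothesis. A subtler point is that the SDP vectors used in rounding must be globally consistent across edges; this is guaranteed because $\theta$ is a valid Lasserre solution on $G_{\rm gap}$, so the per-vertex vectors $u_{\{v\},\alpha}$ agree with the per-edge distributions $\theta_e$ via the inner-product constraints in Figure \ref{fig:lass}.
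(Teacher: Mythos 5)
Your proposal follows essentially the same route as the paper: this theorem is quoted from \cite{Rag08} without proof, but the paper's own proof of the analogous Lemma \ref{lem:round} (Section \ref{sec:round}) uses exactly your pipeline --- arithmetize the acceptance probability, apply the invariance principle to ensembles with covariance matching $\theta_e$, and realize the resulting Gaussian quantity via a rounding that projects the degree-$2$ SDP vectors $w_i$ with a shared Gaussian matrix. One imprecision in your step three: the rounding does not threshold $\langle \phi_v, g\rangle$; it evaluates the multilinear extension $H_i$ of $T_{1-\eta}f$ at $q_i = \mu_i \mathbf{1}_R + {\bf G}w_i$, clips to $[0,1]$, and then rounds each vertex independently with that probability (Figure \ref{fig:round}) --- a hard threshold would only give an inequality via Borell-type rearrangement, whereas the clip-and-independently-round scheme is what makes the per-edge satisfaction probability exactly match the Gaussian integrand.
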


In particular, since ${\sf Val}_{\sigma}(G_{\rm gap})$ is always upper bounded by ${\sf Opt}(G_{\rm gap})$, the above implies that  
\[
\Pr\left[\cT_{\rm UGC} \mbox{ accepts } f\right] \approx \Ex_{\sigma}\left[{\sf Val}_\sigma(G_{\rm gap}) \right] \leq {\sf Opt}(G_{\rm gap}) = s,
\] 
thus establishing soundness, and consequently (i). Now given the dictatorship test from (i), using \uniquegames~instances as the outer verifier, the composition step (ii) follows almost immediately using standard techniques~\cite{KKMO07}.

{\bf \cite{Rag08} for $\mu$-constrained CSPs?}. In extending the above to the setting of $\mu$-constrained CSPs, several immediate issues arise, some of which have been addressed in previous works. To begin with, a first step is to understand what kind of convex relaxations can be used to construct dictatorship tests for $\mu$-constrained CSPs? Since the local distributions completely specify the dictatorship test, the above is akin to asking what kind of properties of local distributions would allow one to design algorithms that can round-off non-influential functions to {\em globally-feasible} integral solutions. Towards this, Raghavendra and Tan~\cite{RT12} showed that local distributions with small average covariance suffice for this purpose; in particular they showed that given a gap instance for a $\mu$-constrained CSP instance with completeness and soundness parameters $c$ and $s$, such that the local distributions witnessing $c$ have average covariance $o(1)$, there exists a bias-dependent dictatorship test $\cT_\mu$ (on some domain $\Omega^R$) with the following properties (Figure \ref{fig:test-prop}):
\begin{figure}[ht!]
	\begin{mdframed} 			
\begin{itemize}			
	\item {\bf Completeness}. If $f:\Omega^R \to [0,1]$ is a dictator, then $f$ passes the test with probability at least $c - o(1)$, and the relative weight of $f$ w.r.t. the distribution over the query indices in $\cT_\mu$ is $\mu$.
	\item {\bf Soundness} If $f:\Omega^R \to [0,1]$ has no influential coordinates, and has relative weight $\mu$ under the test distribution $\cT_{\mu}$, then the test accepts with probability at most $s + o(1)$.
\end{itemize}
	\end{mdframed}
	\caption{Bias Dependent Dictatorship Test}
	\label{fig:test-prop}
\end{figure}

We point out to the readers the construction of $\cT_\mu$ using the local distributions itself is identical to that of \cite{Rag08}. On the other hand, its analysis requires additional work -- in particular, in order to establish an analogue of Theorem \ref{thm:sound-inf}, \cite{RT12} uses the small average-covariance guarantee to show that the rounded solution would have relative weight close to $\mu$ with high probability, thus allowing the soundness analysis to relate the acceptance probability to the optimal value of the $\mu$-constrained CSP. 

Another key question here is to understand what choice of outer verifier would be amenable to lifting local $\mu$-constrained dictatorship tests to (conditional) \NP-hardness. As discussed before, for the setting of globally constrained CSPs, we require outer verifiers with stronger mixing properties. Specifically, we would want the mixing properties to ensure that any globally feasible long code table is also locally feasible for most instantiations of the local tests, so that one cannot create globally feasible assignments in the reduction that can cheat in a significant fraction of the local tests. 

\begin{figure}[ht!]
	\begin{mdframed}
		\begin{center}
			{\bf \underline{ Mixing Properties for Bias Dependent Tests} }
		\end{center} 
		\vspace{0.5cm}
		More formally, suppose we want to compose the dictatorship test with an outer verifier $\Phi$. Then in the full reduction, the space of assignments to the instance (say we denote it by $\cH$) usually consists of $f:= \{f_A\}_{A \in \Phi}$, where each $f_A:\Omega^R \to \{0,1\}$ is the local assignment corresponding to a vertex $A \in \Phi$. Now, as is usual in the dictatorship test style reductions, one can reduce the analysis of the full reduction to that of analyzing the local tests, i.e.,
		\[
		{\sf Val}_\cH(f) :=  \Pr_A\Pr_{B \sim M(A)}\left[\cT_\mu~\textnormal{ accepts }~f_B\right],
		\]
		where $M$ is a suitable averaging operator on the space of vertices of $\Phi$. Now the main challenge here is to ensure that for most choices of $A$, the averaged local assignment $\tilde{f}_B = \Ex_{B \sim M(A)} f_B$ satisfies $\Ex_x \tilde{f}_A(x) \approx \Ex_A\Ex_x \tilde{f}_A(x) = \mu$, so that we can leverage the completeness and soundness guarantees of the test (Figure \ref{fig:test-prop}) to argue the completeness and soundness of the full reduction.
	\end{mdframed}
	\caption{}
	\label{fig:mix}
\end{figure}

While several constructions exist that provide various trade-offs between such mixing properties and the PCP sizes (e.g., Mixing Label Cover\cite{HK04}, Quasi-random PCP~\cite{Khot06}, Birthday Repetition~\cite{MR17}), a relatively convenient choice is the \smallsetexpansion~problem, which is more well-suited for dictatorship test-based reductions for deriving conditional \NP-hardness results.

At this point, while it may appear that one has all the necessary ingredients for proving Theorem \ref{thm:main}, it turns out that one still needs to address several key issues which we outline below:  

{\bf Composition with SSE}. Unfortunately, it turns out that one of the simplest steps in \cite{Rag08} i.e., {\em composition}, is also the trickiest step in this framework due to our choice of \smallsetexpansion~as the outer verifier. The reason for this is, unlike \uniquegames, there is an absence of generic techniques that can compose dictatorship tests for CSPs with \smallsetexpansion~as is in a black-box way. This is evident in that only a handful of previous works~\cite{RST12,LRV13,Man17,GL22} have successfully used dictatorship-test-gadget-based reductions for showing SSEH-based hardness results and in particular, heavily rely on the techniques of \cite{RST12} to derive the mixing properties required for the soundness analysis of the full reduction. 

{\bf Variance Blow-up in Soundness Analysis}. A more subtle issue lies in the analysis of \cite{RT12}'s analogue of Theorem \ref{thm:sound-inf}; in particular, it can only guarantee that if the average covariance of the local distribution is $\eta$, then the rounded solution has relative weight bounded in $\mu ( 1 \pm 2^{O(R)}\eta)$, where $R$ is the dimension of the cube corresponding to the long code table. Since the cube-dimension of long code typically depends inversely on the volume parameter of the \smallsetexpansion~instance, which is often required to be an extremely small constant as a function of the other reduction parameters, the weight bound becomes unusable in the context of the reduction.

Handling the above issues are the key contributions of this work, we expand on these issues and our techniques for handling them in the remaining sections.

\subsection{Understanding Composition with SSE}

For the purpose of illustration, we will describe a ``SSE composable'' test from a gap instance for the {\sc Densest-$k$-SubHypergraph} ($\dksh$) problem. Recall that in the $\dksh_r$~problem, we are given a hypergraph $H = (V,E,w)$ of arity $r$, and the objective is to find a subset of $k$ vertices that induces the maximum weight of hyperedges. Equivalently, in the terminology used in this work, we can phrase it as a $\mu = k/n$-constrained {\sc Max-CSP} instance with the ${\sf AND}$ predicate of arity $r$. Our first attempt at a test for the $\dksh_r$~problem will be an adaptation of the test in Figure \ref{fig:test-rag} to the setting of \cite{RST12}. The key difference between the test from Figure \ref{fig:test-0} and the test employed in our reduction is the following: similar to \cite{RST12}, in order to break the local gadget structure and facilitate mixing, we will employ a ``folding'' operator that acts in a lifted space. Specifically, instead of defining our long codes to be on the intended probability space $L_2((\Omega^R,\gamma^R))$ (for e.g., $(\Omega,\gamma) = \{0,1\}_\mu$ in Figure \ref{fig:test-rag}), we will define the long codes to be on the lifted space $L_2((\Omega^R,\gamma^R) \otimes \{\bot,\top\}^R_\beta)$, where $\{\bot,\top\}_\beta$ is the distribution on $\{\bot,\top\}$ which assigns measure $\beta$ on `$\top$.' This lifted space allows one to define a family of stochastic folding operators $\{M_z\}_{z \in \{\bot,\top\}^R}$, which are defined as follows. 

\begin{definition}[Noise operator with leakage~\cite{RST12}]
	For any probability space $L_2(\Omega^R,\gamma^R)$, and $z \in \{\bot,\top\}^R$, we define the stochastic functional $M_z$ as follows. For any $x \in \Omega^R$, we sample $x' \sim M_z(x)$ as follows. Do the following for every $j \in [R]$:
	\begin{itemize}
		\item If $z(j) = \top$, then set $x'(j) = x(j)$.
		\item If $z(j) = \bot$, then sample $x'(j) \sim \gamma$ independently.
	\end{itemize}
\end{definition}

The above noise operator has several interesting properties that are useful for breaking the local gadget structure of the reduction. For instance, observe that for a fixed choice of $z \in \{\bot,\top\}^R$, the operator $M_z$ folds the probability space $(\Omega^R,\gamma^R)$ along the coordinates where $z(i) = \bot$. Hence, for a randomly sampled $z \sim \{\bot,\top\}^R_\beta$, the corresponding noise operator $M_z$ behaves like the $\beta$-correlated noise operator $T_\beta$ on $(\Omega^R,\gamma^R)$ (which in turn ensures the mixing properties (Fig. \ref{fig:mix}). Then following \cite{RST12}, we incorporate these operators into \cite{Rag08}'s test which yields the following test (described in Figure \ref{fig:test-0}): 

\begin{figure}[ht!]
	\begin{mdframed}
		Let $(G_{\rm gap},\theta,\tilde{w},w)$ be $(c,s,\ell,\mu,\gamma)$ gap instance for $\dksh$. \\ 
		{\bf Input}. Long code $f:\{0,1\}^R \times \{\bot,\top\}^R \to \{0,1\}$ satisfying
		\begin{equation}				\label{eqn:bias-0}
			\Ex_{i \sim G_{\rm gap}}\Ex_{x \sim \{0,1\}^R_{\mu_i}}\Ex_{z \sim \{\bot,\top\}^R_\beta}\left[f(x,z)\right] = \mu.
		\end{equation}
		
		{\bf Setup}. For every edge $e$, and $i \in e$, $x_i$ and $z_i$ are $\{0,1\}^R$ and $\{\bot,\top\}^R$-valued vector random variables. Furthermore, let $\cD^R_e$ be the joint distribution (to be determined later) on variables $(x_i,z_i)_{i \in e}$.	 \\
		{\bf Test}.
		\begin{enumerate}
			\item Sample edge $e \sim E$.
			\item {\bf Long code Step}.
			\begin{itemize}
				\item For every $j \in [R]$, sample $(x_i(j),z_i(j))_{i \in e}$ from the joint distribution $\cD^R_e$. 
				\item For every $i \in e$, sample $(1 - \eta)$-correlated copies $(\tilde{x}_i,z'_i) \underset{1 - \eta}{\sim} (x_i,z_i)$. 
			\end{itemize}
			\item {\bf Folding Step}.
			\begin{itemize}
				\item For every $i \in [r]$, sample $x'_i \sim M_{z'_i}(\tilde{x}_i)$.
			\end{itemize}
			\item Accept if and only if
			\[
			f(x'_i,z'_i) = 1 ~~~~~ \forall i \in [R].
			\]
		\end{enumerate}
	\end{mdframed}
	\caption{Test Framework for SSEH Reduction}
	\label{fig:test-0}
\end{figure}

Informally, the test in Figure \ref{fig:test-0} first samples $(x_i,z_i)_{i \in e}$ from a distribution which is intended to enforce the checks corresponding to the test in Figure \ref{fig:test-rag}, following which it re-randomizes, and then folds each $x_i$ vector variables via the $M_{z_i}$ operator\footnote{In the actual reduction, the folding step also folds along the space of vertices of the outer verifiers (See Footnote \ref{foot:mix}).}. Clearly, the main factor determining the properties of this test is the family of joint distributions $\{\cD^R_e\}_{e \in E}$ on the $(x_i,z_i)$ variables, which has to be designed carefully to ensure several properties which we briefly describe below. 

The key issue here is that while the choice of the distribution of the $x_i$ variables is clear, apriori, the distribution of the $z_i$ variables is not immediate as these are auxiliary variables that are introduced to ensure the mixing properties required for the reduction (Figure \ref{fig:mix}), and as such, are not immediately relatable to the SDP solution of the gap instance. Ideally, we would want to introduce the $z_i$ variables to the test in a way such that they facilitate the mixing properties, and then we would like to carry out the completeness and soundness analysis of the test as before just using the properties of the distribution of the $x_i$ variables. Unfortunately, this ``modular'' scenario is somewhat of a pipe dream as the $z_i$ variables end up affecting the completeness and soundness properties in unexpectedly non-trivial ways. To illustrate this, let us try out a couple of elementary approaches and see why they fail. Fix an edge $e = [r]$, and let $z_1,\ldots,z_r$ be the corresponding variables from the test distribution conditioned on the fixing of $e$. 
\begin{itemize}
	\item Suppose $z_1,\ldots,z_r$ are sampled completely independently.  Then because of the folding step, the resulting $(\tilde{x}_i,{z}'_i)$-variables will be almost independent for different choices of $i \in [r]$. In that case, dictator functions would no longer be able to exploit the correlations along a single coordinate, thus resulting in poor completeness parameters.
	\item On the other hand, let us consider the case where $z_i$'s are identical random variables (i.e., completely correlated). Then it is possible to construct long code assignments which depend only on the $z$-variable component (i.e., $f(x,z) = h(z)$ for some function $h$) that does strictly better than the intended soundness of the basic test, which actually relies on the correlation structure of the $(x_i)_i$-variables. 
\end{itemize}  

Another bottleneck is that the soundness analysis of \cite{Rag08,RT12} -- and Theorem \ref{thm:sound-inf} in particular -- crucially relies on the fact that under the test distribution, the ensemble of $(x_i)_i$ variables satisfy the following property: the covariance structure of the variables along any coordinate $j \in [R]$ is identical to that of the corresponding local variables $(X_i)_i$ under the SDP solution $\theta$. On the other hand, the inclusion of the additional $(z_i)_i$ variables in our setting makes the correlation structure of the resulting set of variables $(x_i,z_i)_{i \in V_{\rm gap}}$ incomparable of correlation structure of the vector solution, and rules out the possibility of carrying out \cite{Rag08}'s soundness analysis in the lifted space as is.

\subsection{Our Approach: Weak Coupling of $z$ Variables}

Our approach towards addressing the above is that we devise a way of sampling the $(x_i,z_i)$ variables in a {\em  coupled way} such that the correlation structure of the $z_i$ variables does not overwhelm the $x_i$ variables, while still being correlated enough to guarantee that (i) the test has the required completeness/soundness properties (up to some loss) under the resulting distribution (Figure \ref{fig:test-prop}), and (ii) the corresponding folding operators $\{M_z\}_z$ have the desired mixing properties (Figure \ref{fig:mix}). In particular, our construction of the joint distribution will have the following key property: if $f$ is non-influential, then we can {\em decouple} the $(z_i)_i$ variables and ``average them out'' so that the soundness analysis reduces to the setting where the resulting function is just dependent on the $x_i$ variables, thus enabling the use of \cite{Rag08,RT12}'s rounding argument to conclude the soundness analysis. 

Towards stating our idea more formally, let us begin by defining the collection of distributions $(\cD^R_e)_{e \in E_{\rm gap}}$:
\begin{definition}[Distribution $\cD^R_e$]							\label{defn:dist}
For any edge $e = (i_1,\ldots,i_r) \in E_{\rm gap}$, let $\cD^R_e$ be the following joint distribution over variables $(x_i,z_i)_{i \in e}$. Firstly, $(x_{i_1},\ldots,x_{i_r})$ are jointly distributed $R$-dimensional variables such that $(x_{i_1}(j),\ldots,x_{i_r}(j)) \sim \theta_e$ for every $j \in [R]$ (as in Figure \ref{fig:test-rag}). Furthermore, $(z_1,\ldots,z_r)$ are $r$ vector-valued random variables of dimension $R$, such that for every coordinate $j \in [R]$, the variables $(z_i(j))_{i \in e}$ is distributed (independent of $(x_i)_{i \in e}$) as follows. Sample a common assignment $z = (z(j))_{j \in [R]} \sim \{\bot,\top\}^R_\beta$, and then do the following for every $j \in [R]$ independently:
\begin{itemize}
	\item W.p. $\rho$, set every variable $(z_i(j))_{i \in e}$ to $z(j)$.
	\item W.p. $1 - \rho$, we sample $z_i(j) \sim \{\bot,\top\}_\beta$ independently for each $i \in e$.
\end{itemize}
\end{definition}

The following lemma (stated for $e = [r]$) says that if $\rho$ is small enough as a function of $\mu$ and $r$, and the functions $f_{1}(x_{1},z_{1}),\ldots,f_{r}(x_{r},z_{r})$ have small influences, then under the above distribution, the product of the functions $\prod_{i \in e} f_i(x_i,z_i)$ would behave near-identically to the distribution where $z_i$'s are fully independent. 

\begin{lemma}[Informal version of Lemma \ref{lem:prod}]				\label{lem:prod-inf}
	Let $r \geq 2$ and $\mu \in (0,1)$ be small enough. Let $f_1,\ldots,f_r$ be functions, such that $f_i \in L_2((\{0,1\}_{\mu_i} \otimes \{\bot,\top\}_\beta)^R)$ has its influences bounded by $\tau(\mu,r)$, where $\tau(\mu,r)$ depends only on $\mu$ and $r$. Furthermore, suppose $\rho \leq 1/(4r^2\log (1/\mu)^2)$. Then,
	\[
	\Ex_{(x_i,z_i)_{i \in [r]} \sim \cD^R_e}\left[\prod_{i \in [r]}f_i(x_i,z_i)\right] \lesssim_r \Ex_{(x_i)_{i \in [r]} \sim \theta^R_e}
	\left[\prod_{i \in [r]}\Ex_{z_i \sim \{\bot,\top\}^R_\beta}f_i(x_i,z_i)\right] + \mu^r,
	\]
	where $\lesssim_r$ hides multiplicative factors which depend only on $r$.
\end{lemma}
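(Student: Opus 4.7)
The plan is to isolate the contribution of the $z$-coupling by Fourier-decomposing each $f_i$ along its $z$-average. Define $\tilde{f}_i(x_i) := \Ex_{z_i \sim \{\bot,\top\}^R_\beta}[f_i(x_i, z_i)]$ and $h_i := f_i - \tilde{f}_i$, so that $\Ex_{z_i}[h_i(x_i,z_i)] = 0$ for every fixed $x_i$. Expanding $\prod_i f_i = \prod_i(\tilde{f}_i + h_i)$ and taking expectations under $\cD^R_e$ (which, by inspection of Definition~\ref{defn:dist}, is a product of the $\theta^R_e$-distribution on $x$ and a correlated distribution on $z$) yields
\[
\Ex_{\cD^R_e}\Bigl[\prod_i f_i\Bigr] = \Ex_{\theta^R_e}\Bigl[\prod_i \tilde{f}_i\Bigr] + \sum_{\emptyset \neq S \subseteq [r]} \Ex_{\cD^R_e}\Bigl[\prod_{i \in S} h_i \prod_{i \notin S} \tilde{f}_i\Bigr],
\]
reducing the task to bounding each of the $2^r - 1$ error terms by $\lesssim_r \mu^r$.

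For each nonempty $S$, I would condition on the \emph{coupling pattern} of the $z$-variables: by Definition~\ref{defn:dist} there is a random $B \subseteq [R]$ with $\Pr[j \in B] = \rho$ independently such that, conditional on $B$ and a common draw $z^* \sim \{\bot,\top\}^R_\beta$, we have $z_i|_B = z^*|_B$ for every $i \in [r]$ while $z_i|_{[R] \setminus B}$ is drawn independently from $\{\bot,\top\}_\beta$. On the event $B = \emptyset$ the $z_i$'s become fully independent, and since each $h_i$ has zero $z$-mean for every $x_i$, the inner expectation vanishes; thus only events $B \neq \emptyset$ contribute, giving
\[
\Ex_{\cD^R_e}\Bigl[\prod_{i \in S}h_i\prod_{i \notin S}\tilde{f}_i\Bigr] = \sum_{\emptyset \neq B \subseteq [R]} \rho^{|B|}(1-\rho)^{R-|B|}\,\Ex_{\theta^R_e \times \{\bot,\top\}^B_\beta}\Bigl[\prod_{i \in S} H_i^B(x_i, z^*) \prod_{i \notin S} \tilde{f}_i(x_i)\Bigr],
\]
where $H_i^B(x_i, z^*) := \Ex_{z_i|_{[R] \setminus B}}[h_i(x_i,z_i) \mid z_i|_B = z^*]$ is the restriction of $h_i$ to the coupled coordinates. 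Fourier-expanding $h_i$ over $(\{0,1\}_{\mu_i} \otimes \{\bot,\top\}_\beta)^R$, only modes whose $z$-support sits inside $B$ survive the inner averaging, so $\|H_i^B\|_2^2 \leq \sum_{j \in B} \Inf{j}{f_i} \leq |B| \cdot \tau(\mu,r)$ by the low-influence hypothesis.

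To close the argument I would bound the inner expectation by a H\"older-type inequality, pairing the $|S|$ factors of $H_i^B$ against the $r - |S|$ factors of $\tilde{f}_i$: using $\tilde{f}_i \in [0,1]$ with $\Ex \tilde{f}_i \leq \mu$ gives $\|\tilde{f}_i\|_q \leq \mu^{1/q}$ for any $q \geq 1$, while reverse hypercontractivity on the biased cube $\{\bot,\top\}_\beta$ upgrades the $L^2$ bound on $H_i^B$ to an $L^p$ bound at the cost of a factor of the form $(\log(1/\mu))^{O(|B|)}$ per function. Choosing the H\"older exponents so that the $\tilde{f}_i$-factors contribute $\mu^{r-|S|}$ and the $H_i^B$-factors contribute a small polynomial in $\tau$, the $\rho^{|B|}$ weight dominates the hypercontractive penalty exactly when $\rho \cdot r^2 \log(1/\mu)^2 \leq 1/4$, yielding a per-pattern contribution that sums over $B$ and $S$ to $\lesssim_r \mu^r$. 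The main obstacle is the execution of this last step: simultaneously managing (i) the character-moment blowup $|\Ex[\chi^m]| \leq \beta^{-(m-2)/2}$ for biased characters on $\{\bot,\top\}_\beta$ at coordinates with $|K_j|=m$ active functions, (ii) the low-influence Fourier control on $h_i$, and (iii) the H\"older exponent choices needed to avoid a weaker bound such as $\mu^2$ in place of $\mu^r$. Lining up the powers of $\mu$ and $r$ simultaneously is the delicate part and is what ultimately dictates both the form of $\tau(\mu,r)$ and the $\log(1/\mu)^2$ factor appearing in the hypothesis on $\rho$.
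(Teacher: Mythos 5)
Your plan---split off the $z$-averages and bound each of the $2^r-1$ cross terms \emph{additively} by $\lesssim_r\mu^r$---fails at the decomposition target itself, not merely at the ``delicate last step'' you flag. Take $r=2$ and $f_1=f_2=f$ with $f(x,z)=1_A(z)$, where $A\subseteq\{\bot,\top\}^R_\beta$ is a symmetric Hamming ball of measure $\delta=\sqrt{\mu}$; all its influences are $o_R(1)\leq\tau$. Then $\tilde f_i\equiv\delta$, $h_i=1_A-\delta$, and the one surviving cross term equals $\langle h,\sT^{(\beta)}_{\rho}h\rangle\geq\rho\, W_1[1_A]$, where the level-one Fourier weight of a Hamming ball is $W_1[1_A]=\Theta(\delta^2\log(1/\delta))$. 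This gives a cross term of order $\rho\,\mu\log(1/\mu)=\Theta\!\left(\mu/(r^2\log(1/\mu))\right)\gg\mu^2=\mu^r$. So the cross terms are genuinely \emph{not} $O_r(\mu^r)$: they are a non-negligible fraction of the main term $\prod_i\Ex f_i$ and can only be absorbed \emph{multiplicatively}. This is precisely why the lemma's conclusion carries a $2^r$ multiplicative factor (hidden in $\lesssim_r$) on $\Ex[\prod_i\bar f_i]$, and why the paper keeps the product intact: it transfers the whole expression to Gaussian space via the invariance principle (Theorem \ref{thm:clt}), applies the vector Borell inequality (Corollary \ref{corr:egt}) to bound it by $\Lambda_{\rho}(\delta_1,\ldots,\delta_r)$, and then uses the explicit estimate $\Lambda_{\rho}\leq 2^r\prod_i\delta_i$ of Lemma \ref{lem:hspace}, the additive $\mu^r$ arising only from the degenerate case $\min_i\delta_i\leq\mu^r$ where that estimate's hypothesis on $\rho$ fails. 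A second-moment/H\"older argument cannot see this multiplicative structure.

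Three supporting steps are also broken. First, the bound $\|H_i^B\|_2^2\leq|B|\tau$ is integrated against $\Pr[B]$ with $\Ex|B|=\rho R$, and $R=1/(r\beta\delta)$ grows without bound as the SSE volume parameter $\delta$ shrinks while $\tau$ does not depend on it, so $|B|\tau$ is useless on the typical event. Second, $\|\tilde f_i\|_q\leq\mu^{1/q}$ presupposes $\Ex f_i\leq\mu$, which is not among the hypotheses---the means $\delta_i$ are arbitrary in $[0,1]$. Third, reverse hypercontractivity provides \emph{lower} bounds on expectations of products of nonnegative functions and does not apply to the signed $H_i^B$; forward hypercontractivity on $\{\bot,\top\}_\beta$ with $\beta=\mu^{4r}|V_{\rm gap}|^{-4}$ costs factors polynomial in $1/\beta$ per degree rather than $\log(1/\mu)^{O(|B|)}$, and $H_i^B$ is not low-degree.
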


The above lemma is a key component of our soundness analysis, and we sketch a proof of the lemma later in this section. It implies that at the cost of making $(z_i)_i$ variables slightly correlated (due to which we get slightly weaker completeness), we can average them out completely when the functions $f_1,\ldots,f_r$ all have small influences. We point out that the (weaker) observation that products low-degree functions on slightly correlated spaces behave almost similarly to the independent setting is not new in the hardness literature, and has been used in several works~\cite{MNT16,KS15,GL22}. However, the above half-decoupling version (which retains the correlation structure on the $x_i$ variables, while making the slightly correlated $z_i$ variables completely independent) is new to the literature to the best of our knowledge, and requires more ideas. 

The above immediately provides a way of stitching together the $(x_i)_i$ and $(z_i)_i$ variables to derive the test distribution over the lifted space $L_2(\Omega^R)$. Our final test is the test from Figure \ref{fig:test-0} where for every $e \in E_{\rm gap}$, the joint distribution over $(x_i,z_i)_{i \in e}$ variables is the distribution $\cD^R_e$ from Definition \ref{defn:dist}. We now briefly analyze this test, and then conclude this part by providing a proof sketch of Lemma \ref{lem:prod-inf}. 

{\bf Completeness}. Due to the introduction of the $z_i$-variables, the completeness analysis is relatively more involved in SSE based reductions. Here, the basic idea is to use the $z$-vector to identify the choice of the dictator function. Let $j^*:\{\bot,\top\}^R \to [R]$ be the following map: For $z \in \{\bot,\top\}^R$, if there exists a unique index $j' \in [R]$ such that $z(j') = \top$, assign $j^*(z) = j'$, otherwise assign $j$ arbitrarily. Finally, for any $z$, we define the map\footnote{Here we denote $f_z(x) = f(x,z)$ for every $x,z$.} $f_z(x) = x({j^*(z)})$. 

To analyze this assignment, let us consider the test distribution for fixed edge $\ce = [r]$. Our first step is to observe that setting $R = 1/r\beta$, with probability at least $e^{-1}/r$ over the draw of $z \sim \{\bot,\top\}^R_{\beta}$, we will have that there exists a unique index $j' \in [R]$ such that $z(j') = \top$. Furthermore, conditioned on this, we can argue that with probability at least $\rho(1 - (1 - \rho)r\beta)^{1/r\beta} \gtrsim \rho \cdot e^{-1}$, we have that $z_1,\ldots,z_r$ also have $j'$ as the unique index such that $z_i(j') = \top$. In other words, we will have $j^*(z_i) = j'$ for every $i \in [r]$. This in turn implies that $f_{z_1},\ldots,f_{z_r} = x(j')$ i.e., they will be identical dictator functions. Note that since $(x_i)_i$ and $(z_i)_i$ variables are independent under the test distribution, conditioning on the above events does not affect the distribution of $(x_i)_i$. Therefore, conditioning on the above, we can bound the probability of the test accepting as
\[
\Pr_{(x_i)_{i \in [r]}}\left[\forall_{i}~f_{z_i}(x_i) = 1 \right] = \Pr_{(x_i)_{i \in [r]}}\left[\forall_i~x_i(j') = 1\right] = p_\ce,
\] 
where $p_\ce = \Pr_{X_\ce \sim \theta_\ce} \left[X_\ce = 1_\ce\right]$ is the probability of the $\theta$ satisfying edge $\ce$. Therefore, for a fixed edge $\ce$, the test accepts the assignment with probability at least $(1/er) \rho \cdot e^{-1} \cdot p_\ce$. Averaging over the choice of $\ce$, we have that the overall the test accepts with probability $\Omega(\rho \cdot c/r)$. Furthermore, since $f_z$ is a dictator function for every choice of $z$, overall $f$ is a feasible assignment\footnote{We point out that the factor $\Omega(\rho)$ loss in the completeness of the reduction is the main reason behind the SSEH hardness losing an additional factor of $\Omega_r(\log(1/\mu)^2)$.}. 

{\bf Soundness}. For the soundness analysis, as is usual, let $f:\{0,1\}^R \times \{\bot,\top\}^R \to \{0,1\}$ be a feasible assignment i.e, it satisfies the weight constraint
\[
\Ex_{i \sim G_{\rm gap}}\Ex_{x \sim \{0,1\}^R_{\mu_i}}\Ex_{z \sim \{\bot,\top\}^R_\beta}\left[f(x,z)\right] = \mu.
\]
Furthermore, for the soundness analysis, we will assume that $f$ has no influential coordinates\footnote{In the actual argument, for any fixed edge $e \in E_{\rm gap}$ the small influences condition needs to be defined with respect to probability space $(\{0,1\}_{\mu_i} \otimes \{\bot,\top\}_{\beta})$ for every vertex $i \in e$. Using influence decoding arguments, we will then be able to show that this happens for most choices of $e$ in the soundness analysis of the reduction.}. Firstly, using the observation that the dictatorship test for $\dksh$ simply performs {\sf AND} checks, we can arithmetize the probability of the test accepting as:
\[
\Pr\left[\mbox{Test Accepts }\right] = \Ex_{e}\Ex_{(x'_i,z'_i)_{i \in e}}\left[\prod_{i \in e}f(x'_i,z'_i)\right].
\]
Next, by averaging over the action of $M_{z}$ operator and the $(1 - \eta)$-correlated noise operator, i.e., $\tilde{f}(x,z) = \Ex_{(\tilde{x}_i,z'_i) \sim \sT_{1 - \eta}(x,z)}\Ex_{x'_i \sim M_{z_i}(x_i)}f(x'_i,z'_i)$, we may further re-write the above RHS as:
\begin{equation}				\label{eqn:mix-inf}
\Ex_{e}\Ex_{(x'_i,z'_i)_{i \in e}}\left[\prod_{i \in e}f(x'_i,z'_i)\right]
= \Ex_{e}\Ex_{(x_i,z_i)_{i \in e}\sim \cD^R_e}\left[\prod_{i \in e}\tilde{f}(x_i,z_i)\right].
\end{equation}
We point out that while the above averaging step might seem superfluous in the context of the (local) analysis of the dictatorship test on a single function, it is a crucial step in the actual reduction where it leverages the mixing properties of the $M_z$ operator to ensure that the globally feasible long code table is also locally feasible for most local instantiations of the test\footnote{\label{foot:mix}In particular, in the actual reduction, the full assignment consists of assignments $\{f_A\}_{A}:\{0,1\}^R \times \{\bot,\top\}^R \to \{0,1\}$ where $f_A$ is the local assignment for a vertex $A$ in ($R$-powered) SSE instance. The folding step \eqref{eqn:mix-inf} along with the spectral properties of the $\{M_z\}$ operators would ensure that $\Ex_{(x,z)} \tilde{f}_A(x,z) \approx \Ex_A \Ex_{x,z} \tilde{f}_A(x,z)$ for most choices of $A$. We point the readers to Lemma \ref{lem:conc} for a more formal statement of this guarantee.}.
  
Now assuming $\tilde{f}$ has no large influences, using Lemma \ref{lem:prod-inf}, we can decouple the $z_i$ variables and average them out: 
\begin{align*}
\Ex_{e}\Ex_{(x_i,z_i)_{i \in e}\sim \cD^R_e}\left[\prod_{i \in e}\tilde{f}(x_i,z_i)\right]
&\lesssim_r \Ex_{e}\Ex_{(x_i)_{i \in e} \sim  \theta^R_e}\left[\prod_{i \in e} \Ex_{z_i \sim \{\bot,\top\}^R_\beta}\tilde{f}(x_i,z_i)\right] + o(1) \\
&= \Ex_{e}\Ex_{(x_i)_{i \in e} \sim  \theta^R_e}\left[\prod_{i \in e} \bar{f}(x_i)\right] + o(1), 
\end{align*}
where $\bar{f}(x) = \Ex_{z \sim \{\bot,\top\}^R_\beta}\left[\tilde{f}(x,z)\right]$. Finally, note that the expression in the above RHS precisely corresponds to accepting probability of the test $\cT_{\rm UGC}$ (from Figure \ref{fig:test-rag}) w.r.t. assignment $\bar{f}(\cdot) = \Ex_z \tilde{f}(\cdot,z)$. Therefore, again assuming that $\bar{f}$ has small influences\footnote{Note that now the influences of $\bar{f}$ have to be measured w.r.t. the probability space $\{0,1\}^R_{\mu_i}$.}, we can invoke \cite{RT12}'s soundness analysis to show that 
\[
\Ex_{e}\Ex_{(x_i)_{i \in e} \sim  \theta^R_e}\left[\prod_{i \in e} \bar{f}(x_i)\right] \lesssim {\sf Opt}_\mu(G),
\] 
which concludes the soundness analysis.

\begin{remark}
	The key difference between the distribution employed by the test in this work and those of \cite{RST12} is the following: here, the $(x_i)_i$ and $(z_i)_i$ variables are sampled independent from each other, whereas in \cite{RST12} the variables $(x_{i_1},z_{i_1}),\ldots,(x_{i_r},z_{i_r})$ are sampled as independent $\rho$-correlated copies of some common assignment $(x,z)$, and in particular, $(x_i)_i$ and $(z_i)_i$ are not independent. This is crucially used in their analysis as they can use {\em Noise Stability}~\cite{MOO10} based arguments in the joint space of $(x_i,z_i)$ variables to argue soundness in one shot. However, executing a similar correlated-sampling based composition for arbitrary bias dependent tests is known to be challenging and might require newer tools for composition~\cite{RT12}. 
	
	On the other hand, 	the independence of $(x_i)_i$ and $(z_i)_i$ variables, and the distribution on the $(z_i)_i$ variables under our test distribution allow us to de-couple and average out the effect of $z_i$ variables in the soundness analysis, thus effectively letting us run our soundness analysis in the space of $(x_i)_i$ variables.
\end{remark}

\subsection{Establishing Lemma \ref{lem:prod-inf}}

We sketch a proof of Lemma \ref{lem:prod-inf} here, since it is the key technical result that drives our soundness analysis. Recall that in the setting of the lemma, we are given $R$-variate functions $f_1,\ldots,f_r:\Omega^R \to [0,1]$, where $\Omega = \{0,1\} \times \{\bot,\top\}$, which satisfy the condition
\[
\max_{j \in [R]} {\sf Inf}_{(x_i,z_i)_j}\left[f_i\right] \leq \tau(\mu,r)
\]
for every $i \in [r]$. Our first step is to show that we can transfer the above small-influences condition in the $R$-variate space (w.r.t. variables $(x_i,z_i)_1,\ldots,(x_i,z_i)_R)$ to a small-influences condition in the $2R$-variate space (w.r.t. variables $(x_i(1),\ldots,x_i(R),z_i(1),\ldots,z_i(R))$, and show that\footnote{We point out that we can only assume that the function has small influences in the $R$-variate space, since directly assuming the small influences condition in the $2R$-variate space is insufficient for the influence decoding argument.} 
\[
\max\Big\{{\sf Inf}_{x_i(j)}\left[f_i\right], {\sf Inf}_{z_i(j)}\left[f_i\right]\Big\} \leq {\sf Inf}_{(x_i,z_i)_j}\left[f_i\right] \leq \tau.
\]  	
The above follows from the observation that since $f_i$'s are defined on the probability space $\Omega^R_i := (\{0,1\}_\mu\otimes \{\bot, \top \}_\beta)^R$, we can use the Fourier characters for $\{0,1\}_{\mu_i}$ and $\{\bot,\top\}_\beta$ to derive a basis for $\Omega^R_i$ (see Claim \ref{cl:inf-tr} for the formal proof of the above inequality). Then, using the Fourier expansion of $f_i$ in $2R$-dimensional space, we can write it as a multi-linear polynomial:
\begin{equation}			\label{eqn:f-form}
f_i(x_i,z_i) = \sum_{S,T \subseteq [R]}\wh{f_i}(S,T)\prod_{j \in S}\phi^{(\mu_i)}(x_i(j))\prod_{j' \in T} \phi^{(\beta)}(z_i(j')),
\end{equation}
where $\phi^{(\mu_i)}$ and $\phi^{(\beta)}$ are the non-trivial Fourier characters in the $\mu_i$ and $\beta$-biased Boolean spaces. Then, as is standard, we extend $f_i$ to a $2R$-variate multi-linear polynomial $H_i:\mathbbm{R}^{2R} \to \mathbbm{R}$ as 
\[
H_i(\cW^x[i],\cW^z[i]) = \sum_{S,T \subseteq [R]} \wh{h_i}(S,T) \prod_{j \in S} \cW^x_j[i] \prod_{j' \in T} \cW^z_{j'}[i],
\]
where under the distribution, $\cW^x_j[i] := \phi^{(\mu_i)}(x_i(j))$ and $\cW^z_j[i]:= \phi^{(\beta)}(z_i(j))$. Next, since the functions $f_1,\ldots,f_r$ (and consequently the polynomials $H_1,\ldots,H_r$) have small influences, using the Invariance principle (Theorem \ref{thm:clt}), we can move the analysis from the Boolean space to the Gaussian space and show that 
\begin{equation}				\label{eqn:inf-val}
\Ex_{(x_i,z_i)_{i \in [r]}}\left[\prod_{i \in [r]}f_i(x_i,z_i)\right] = \Ex_{\cW}\left[\prod_{i \in [r]}H_i(\cW^x[i],\cW^z[i])\right]
\approx \Ex_{\cG}\left[\prod_{i \in [r]} H_i(g_{x,i},g_{z,i})\right]
\end{equation}
such that $\cG := (g_{x,i},g_{z,i})_{i \in [r]}$ is a collection of $r \times R$ jointly distributed Gaussian random variables which matches the covariance structure of $\cW$. In particular, the matching covariance structure will imply the following properties:
\begin{itemize}
	\item For every $i \in [r]$, $g_{x,i}$ and $g_{z,i}$ are marginally distributed as $R$-dimensional standard Gaussian vectors.
	\item $(g_{x,i})_{i \in [r]}$ and $(g_{z,i})_{i \in [r]}$ are independent from each other -- this is where we use the fact that under the test distribution $\cD^R_{[r]}$, the $(x_i)_i$ and $(z_i)_i$ variables are independent of each other. 
	\item Furthermore, the covariance structure of the $(z_i)_{i \in [r]}$ variables implies that $g_{z,1},\ldots,g_{z,r}$ will be distributed as $r$ independent $\sqrt{\rho}$-correlated copies of a standard Gaussian vector $g \sim N(0_R,I_R)$.
\end{itemize} 

The above properties allow us to re-write the RHS of \eqref{eqn:inf-val} as:
\begin{equation}			\label{eqn:sound}
\Ex_{\cG}\left[\prod_{i \in [r]} H_i(g_{x,i},g_{z,i})\right]
= \Ex_{(g_{x,i})_{i \in [r]}}\Ex_{g_{z,1},\ldots,g_{z,r} \underset{\sqrt{\rho}}{\sim} g} \left[\prod_{i \in [r]}{H}_i(g_{x,i},g_{z,i})\right].
\end{equation}

{\bf Averaging out $(g_{z,i})^r_{i = 1}$}. The next step is the crucial part of the proof. Consider for any fixing of $(g_{x,i})_{i \in [r]}$; note that this does not affect the distribution of $g_{z,1},\ldots,g_{z,r}$. Then using the Gaussian rearrangement from the multi-dimensional {\em Borell's Isoperimetric Inequality} (Theorem \ref{thm:egt}), we can upper bound the inner expectation from \eqref{eqn:sound} as  
\[
\Ex_{g_{z,1},\ldots,g_{z,r} \underset{\sqrt{\rho}}{\sim} g} \left[\prod_{i \in [r]}{H}_i(g_{x,i},g_{z,i})\right]
\leq \Lambda_{\sqrt{\rho}}\Big(\Ex_{g_{z,1}}{H}_i(g_{x,1},g_{z,1}),\ldots,\Ex_{g_{z,r}}{H}_r(g_{x,r},g_{z,r})\Big).
\]
Here $\Lambda_{\sqrt{\rho}}(\delta_1,\ldots,\delta_r)$ is the $r$-ary Gaussian noise stability\footnote{Formally, $\Lambda_{\rho}(\delta_1,\ldots,\delta_r) = \Pr_{g \sim N(0,1)}\Pr_{g_1,\ldots,g_r \underset{\rho}{\sim}g}\left[\forall_i~g_i \leq \Phi^{-1}(\delta_i)\right]$ where $\Phi(\cdot)$ is the Gaussian CDF function.} for $\sqrt{\rho}$-correlated Gaussians w.r.t. halfspaces of Gaussian volumes $\delta_1,\ldots,\delta_r$. Furthermore, using our choice of $\rho$ and explicit bounds on the multivariate Gaussian CDF (Lemma \ref{lem:hspace}), we can bound\footnote{In the actual analysis, we will lose the additive factor of $\mu^r$ only when $\Ex_{g_{z,i}}H_i(g_{x,i},g_{z,i}) \leq \mu^r$ for some $i \in [r]$.}
\[
\Lambda_{\sqrt{\rho}}\Big(\Ex_{g_{z,1}}{H}_i(g_{x,1},g_{z,1}),\ldots,\Ex_{g_{z,r}}{H}_r(g_{x,r},g_{z,r})\Big)
\lesssim_r \prod_{i \in [r]}\Ex_{g_{z,i}}\left[{H}_i(g_{x,i},g_{z,i})\right] + \mu^r.
\]
Applying the above sequence of arguments for each fixing of $(g_{x,i})_{i \in [r]}$, we get that 
\[
\Ex_{(g_{x,i})_{i \in [r]}} \Ex_{g_{z,1},\ldots,g_{z,r} \underset{\rho}{\sim} g}\left[\prod_{i \in [r]} H_i(g_{x,i},g_{z,i})\right]
\lesssim_r  \Ex_{(g_{x,i})_{i \in [r]}}\left[\prod_{i \in [r]}\Ex_{g_{z,i}} H_i(g_{x,i},g_{z,i})\right] + \mu^r.
\]
Finally, again using the Invariance principle, we can transfer the analysis back to the setting of $(x_i,z_i)$ variables and get that 
\[
\Ex_{(g_{x,i})_{i \in [r]}}\left[\prod_{i \in [r]}\Ex_{g_{z,i}} H_i(g_{x,i},g_{z,i})\right] \approx \Ex_{(\cW^x[i])_{i \in [r]}}\left[\prod_{i \in [r]}\Ex_{\cW^z[i]} H_i(\cW^x[i],\cW^z[i])\right] = \Ex_{(x_i)_{i \in [r]}}\left[\prod_{i \in [r]}\Ex_{z_i}f_i(x_i,z_i)\right].
\] 	 
Putting the above inequalities together completes the proof.

\subsection{Handling the Variance Blow-up}

We conclude our discussion by briefly describing our approach to handling the $2^{O(R)}$-blow-up in relative weight guarantee in \cite{RT12}'s soundness analysis. Roughly, in their analysis, the $2^{O(R)}$-blow up results from trying to relate the covariance of the rounded values to covariance of the corresponding local variables. The following proposition states their bound from this step formally.
\begin{proposition}[Implicit in \cite{RT12}]
	Let $F:\mathbbm{R}^R \to [0,1]$ be an $R$-variate function. Let $(X_i,X_j)$ be a pair of jointly distributed $\{0,1\}$-valued random variables, and let $(g_i,g_j)$ be a pair of jointly distributed $R$-dimensional Gaussian random variables such that $(g_i(\ell),g_j(\ell))$ matches the covariance structure of $(X_i,X_j)$ for every $\ell \in [R]$. Then, 
	\[
	|\Ex F(g_i)F(g_j) - \Ex F(g_i) \Ex F(g_j)| \leq 2^{O(R)}|{\rm Cov}(X_i,X_j)|.
	\] 
\end{proposition}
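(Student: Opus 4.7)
The plan is to reduce to standard Gaussian marginals by an affine rescaling, expand both copies of $F$ in the orthonormal multivariate Hermite basis, and then close with Cauchy--Schwarz on the resulting Hermite series. Concretely, I would let $\mu_i := \Ex[X_i]$, $\sigma_i^2 := \mu_i(1-\mu_i)$, and define $\mu_j, \sigma_j^2$ analogously. The covariance-matching condition forces $g_i(\ell) \sim N(\mu_i, \sigma_i^2)$ and $g_j(\ell) \sim N(\mu_j, \sigma_j^2)$ with per-coordinate correlation $\rho := \mathrm{Cov}(X_i, X_j)/(\sigma_i \sigma_j) \in [-1,1]$, with the pairs being independent across $\ell$. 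Set $\tilde g_i := (g_i - \mu_i \mathbf{1})/\sigma_i$ and $\tilde F_i(y) := F(\sigma_i y + \mu_i \mathbf{1})$, and analogously $\tilde g_j, \tilde F_j$; then the coordinate pairs $(\tilde g_i(\ell), \tilde g_j(\ell))$ are $\rho$-correlated standard bivariate Gaussians (independent across $\ell$), $F(g_i) = \tilde F_i(\tilde g_i)$ and $F(g_j) = \tilde F_j(\tilde g_j)$, and $\tilde F_i, \tilde F_j \in [0,1]$, so their $L^2$-norms (w.r.t. the standard Gaussian on $\mathbb{R}^R$) are at most $1$.

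Next I would expand $\tilde F_i = \sum_S \hat f_i(S) H_S$ and $\tilde F_j = \sum_S \hat f_j(S) H_S$ in the orthonormal multivariate Hermite basis $\{H_S\}$, indexed by tuples $S$ of nonnegative integers of length $R$ with $|S| := \sum_\ell S_\ell$. Tensorizing the classical identity that $\Ex[H_k(X) H_m(Y)]$ equals $\rho^k$ when $k = m$ and vanishes otherwise (for $\rho$-correlated standard Gaussians $X,Y$) gives $\Ex[H_S(\tilde g_i) H_T(\tilde g_j)] = \rho^{|S|}$ if $S = T$ and $0$ otherwise. Hence
\[
\Ex[F(g_i) F(g_j)] - \Ex[F(g_i)] \Ex[F(g_j)] \;=\; \sum_{S \neq 0} \hat f_i(S) \hat f_j(S) \rho^{|S|},
\]
the $S = 0$ term producing exactly the product of marginals $\hat f_i(0) \hat f_j(0)$. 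Combining $|\rho|^{|S|} \leq |\rho|$ for $|S|\geq 1$ with Cauchy--Schwarz and Parseval ($\sum_S \hat f_i(S)^2, \sum_S \hat f_j(S)^2 \leq 1$), I get
\[
\left| \Ex[F(g_i) F(g_j)] - \Ex[F(g_i)] \Ex[F(g_j)] \right| \;\leq\; |\rho| \cdot \|\tilde F_i\|_2 \|\tilde F_j\|_2 \;\leq\; |\rho| \;=\; \frac{|\mathrm{Cov}(X_i, X_j)|}{\sigma_i \sigma_j}.
\]
The claimed $2^{O(R)} |\mathrm{Cov}(X_i, X_j)|$ bound then follows by invoking the paper's bias regime, in which $\mu_i, \mu_j \geq 2^{-O(R)}$ (since $R$ is chosen as an appropriate function of $\mu$), so $\sigma_i \sigma_j \geq 2^{-O(R)}$ and $1/(\sigma_i \sigma_j) \leq 2^{O(R)}$.

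The hard part --- and in fact the very source of the $2^{O(R)}$ factor --- is precisely this $1/(\sigma_i \sigma_j)$ loss when passing from correlation to covariance for biased Boolean variables: Cauchy--Schwarz on the Hermite tail gives a clean correlation-level bound, but translating it into a covariance-level bound is where dimension-exponential factors enter. The estimate above is essentially the loosest clean bound of this flavor, and the blow-up it incurs is precisely what the paper's subsequent analysis targets by replacing this naive argument with a sharper rounding scheme that avoids the exponential-in-$R$ dependence.
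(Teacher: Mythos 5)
Your core Hermite argument is correct, but it is worth being clear about what it proves and where it sits relative to the paper. The paper never actually proves this proposition: it is stated as motivation and attributed to \cite{RT12}, whose implicit argument expands the multilinear rounding polynomial monomial by monomial and picks up the $2^{O(R)}$ factor from the exponential number of monomials, working with covariances throughout and never passing through correlation. What your Hermite expansion, Plancherel, $|\rho|^{|\sigma|}\le|\rho|$ for $\sigma\neq\emptyset$, Cauchy--Schwarz, and $\|F\|_2\le\|F\|_\infty\le 1$ actually establish is the bound $|\Ex F(g_i)F(g_j)-\Ex F(g_i)\Ex F(g_j)|\le|\mathrm{Corr}(X_i,X_j)|$ --- which is precisely the paper's \emph{improvement} on this proposition (the ``Folklore'' lemma proved in Section~\ref{sec:var-bound} by the identical calculation), not the proposition itself. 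So your closing narrative has the history backwards: the $2^{O(R)}$ in \cite{RT12} does not come from a correlation-to-covariance conversion; rather, the paper's fix is to switch to the correlation bound and to preprocess the Lasserre solution so that small \emph{average correlation} (not just covariance) is available.

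The genuine gap is your final step. The proposition places no lower bound on $\mu_i,\mu_j$, so $1/(\sigma_i\sigma_j)\le 2^{O(R)}$ is an added hypothesis, and the correlation bound does not imply the covariance bound without it: take $X_i=X_j\sim\mathrm{Bernoulli}(2^{-R^2})$ and $g_i=g_j$ coordinatewise; then the left-hand side can be $\mathrm{Var}(F(g_i))=\Omega(1)$ for a suitable threshold $F$, while $2^{O(R)}|\mathrm{Cov}(X_i,X_j)|=2^{O(R)-R^2}$ is negligible. This shows the statement read literally requires some restriction on the marginal variances, so your assumption is not avoidable --- but it must be stated as a hypothesis rather than imported from ``the paper's bias regime,'' especially since an unconditional proof would have to follow \cite{RT12}'s monomial-expansion route instead. (In the paper's actual parameter regime, $\mu_i\ge\gamma\mu$ while $R$ is astronomically larger than $\log(1/\gamma\mu)$, so $\sigma_i\sigma_j\ge 2^{-O(R)}$ holds with enormous slack and your derivation goes through there.) You also quietly assume the pairs $(g_i(\ell),g_j(\ell))$ are independent across $\ell$; this is true for the rounding scheme the proposition models but is not literally part of its hypotheses, and the Hermite orthogonality you use needs it.
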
	 
Their analysis needs to use the above bound since the local-distributions used to construct their test can only guarantee a bound on the average covariance. This eventually results in an additional $2^{O(R)}$ multiplicative blow-up in the variance of the weight of the rounded solution, which in turn shows up as the multiplicative error term in the relative weight.

{\bf Our Fix}. Towards addressing this, our first observation is that since $F$ is $[0,1]$-valued, one can directly bound the covariance of the rounded values using the correlation of the Gaussian vector variables i.e., using elementary Hermite analysis one can show that 
\[
|\Ex F(g_i)F(g_j) - \Ex F(g_i) \Ex F(g_j)| \leq |{\rm Corr}(X_i,X_j)|,
\] 
where ${\rm Corr}(X_i,X_j)$ is the correlation \footnote{Formally, ${\rm Corr}(X_i,X_j) = \frac{\Ex\left[X_iX_j\right] - \mu_i\mu_j}{\sqrt{\mu_i(1 - \mu_i)}\sqrt{\mu_j(1 - \mu_j)}}$, where $\mu_i = \Ex[X_i], \mu_j = \Ex[X_j]$.} between $X_i$ and $X_j$ under the local distribution. Hence, if we started with a local distribution with small average-correlation, then we would be able to get a $R$-independent bound on the variance of the rounded weight. However, note that in general the correlation of a pair of random variable is incomparable to its covariance, and hence a bound on the average covariance is insufficient to exploit this observation. To get around this issue, we use the observation that random conditionings of {\em smoothened} Lasserre solutions directly yield local distributions with the stronger property of small average correlation. Combining this with the above observation yields the improved bound on the variance of the rounded solution. We refer the readers to Sections \ref{sec:pre-proc} and \ref{sec:round} for more details on this.

%\subsection*{Organization}
%
%In Section \ref{sec:prelim} we introduce the notation and technical tools used in this work. Section \ref{sec:pre-proc} describes the pre-processing step for the gap instance, and Section \ref{sec:redn} describes the reduction from SSE using the gap instance. We analyze the completeness of the reduction in Section \ref{sec:comp} and the soundness guarantee is derived in Section \ref{sec:sound}. Finally, we put together the completeness and soundness guarantees to prove Theorem \ref{thm:main} in Section \ref{sec:thm-main}. Section \ref{sec:prod} proves the main technical lemma (Lemma \ref{lem:prod}) for decoupling the folding variables. Sections \ref{sec:dec}, \ref{sec:round}, and the appendix establish auxiliary technical results used in the proof.

\section{Notation and Preliminaries}			\label{sec:prelim}

We introduce some frequently used notation in this work. Given a distribution $\cD$, we will use $x \sim \cD$ to denote the draw of a random variable $x$ from the distribution $\cD$. We use $x_1,\ldots,x_r \sim \cD$ to denote $r$-independent draws from the distribution $\cD$. If $x$ is vector-valued, then we use $x(j)$ to denote the $j^{th}$ entry of the variable $x$. Furthermore if $(x_1,\ldots,x_r)$ are jointly distributed as $\cD$, then we use $(x_i)_{i \in [r]} \sim \cD$ to denote a single draw of $(x_1,\ldots,x_r)$ from $\cD$, and it is not to be confused with the notation $x_1,\ldots,x_r \sim \cD$ defined above. For $R \geq 1$, for any positive semidefinite matrix $\Sigma \in \mathbbm{R}^{R \times R}$, we use $N(0_R,\Sigma)$ to denote the distribution over $R$-dimensional Gaussian distribution with mean $0_R$ and covariance matrix $\Sigma$; here $0_R$ denotes the all-zeros vector of dimension $R$. We use $I_R$ to denote the identity matrix of dimension $R \times R$. We will also use $\Pi_{[0,1]}$ to denote the clipping function which is defined as 
\begin{equation}			\label{eqn:pi-def}
\Pi_{[0,1]}(x) = 
\begin{cases}
	0 & \mbox{ if } x < 0, \\
	x & \mbox{ if } x \in [0,1], \\
	1 & \mbox{ if } x > 1. 
\end{cases}
\end{equation} 
For any $R \in \mathbbm{N}$, we use $\mathbbm{S}_R$ to denote the set of permutations on $R$ indices. For any $R$-dimensional vector $x$, and permutation $\pi \in \mathbbm{S}_R$, we use $\pi(x)$ to denote permuted vector $(x(\pi(1)),x(\pi(2)),\ldots,x(\pi(R)))$.

\subsection{Fourier Analysis}

A finite probability space $(\Omega,\mu)$ is characterized by a set $\Omega$ and a measure $\mu$ on the set. The vector space of all square integrable functions on $\Omega$ w.r.t. measure $\mu$ is denoted as $L_2(\Omega,\mu)$. For ease of notation, whenever the measure is clear from context, we will omit the measure and simply write $L_2(\Omega) = L_2(\Omega,\mu)$. Given a probability space $(\Omega,\mu)$ a Fourier basis for $L_2(\Omega,\mu)$ is an orthornomal basis $\phi_0 \equiv 1, \phi_1,\ldots,\phi_{|\Omega|-1}:\Omega \to \mathbbm{R}$. 

Using these Fourier bases, we can construct a basis for functions on the product probability spaces. Formally, given a product probability space $L_2(\prod_{i \in [R]} \Omega_i, \prod_{i \in [R]} \mu_i)$, let $\phi_{i,0} \equiv 1,\ldots,\phi_{i,|\ell_i| - 1}: \Omega_i \to \mathbbm{R}$ denote the Fourier basis for $L_2(\Omega_i,\mu_i)$, where $\ell_i := |\Omega_i|$. Given this basis, the elements of the Fourier basis for $L_2(\prod_i \Omega_i,\prod_i \mu_i)$ are $\{\phi_\alpha\}_{\alpha \in \prod_{i \in [r]}\mathbbm{Z}^R_{\leq \ell_i - 1}}$ where 
\[
\phi_\alpha(\omega) = \prod_{i \in [R]} \phi_{i,\alpha(i)}(\omega(i))		\qquad\qquad \forall~\omega \in \Omega^R, \alpha \in \prod_{i \in [R]} \mathbbm{Z}_{\leq \ell_i - 1}.
\]
Using this basis, any $f \in L_2(\Omega^R)$ can be expressed as a multi-linear polynomial in $\{\phi_{i}\}_{i \in |\Omega| - 1}$ as 
\[
f(\omega) = \sum_{\alpha \in \prod_{i \in [R]} \mathbbm{Z}_{\leq \ell_i - 1}}\wh{f}(\alpha) \prod_{i \in [R]} \phi_{\alpha(i)}(\omega(i))
\]
where $\wh{f}(\alpha)$ is referred to as the {\em Fourier coefficient} corresponding to multi-index $\alpha$.

{\bf Influences.} Given a function $f \in L_2(\prod^R_{i = 1} \Omega_i,\prod^R_{i = 1} \mu_i)$, the influence of a coordinate $j \in [R]$ on function $f$ -- denoted as ${\sf Inf}^{(\mu_j)}_j\left[f\right]$ -- is defined as 
\[
{\sf Inf}_j\big[f\big] = \Ex_{(\omega(j'))_{j' \neq j} \sim \prod_{j' \neq j} \mu_{j'}}\left[{\rm Var}_{\omega(j)}\left[f(\omega(1),\ldots,\omega(j))\right]\right].
\]
It is well-known that the influences can be expressed in terms of the Fourier coefficients of the function, as stated in the following fact.
\begin{fact}
	Given a function $f \in L_2(\prod_{i \in [r]} \Omega_i, \prod_{i \in [r]} \mu_i)$, for any fixed choice of Fourier basis, we have 
	\[
	{\sf Inf}_j\big[f\big] = \sum_{\alpha: \alpha(j) \neq 0} \wh{f}(\alpha)^2.
	\]
\end{fact}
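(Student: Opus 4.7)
The plan is to prove the identity by direct Fourier expansion combined with orthonormality of the basis. First I would write
\[
f(\omega) = \sum_{\alpha} \wh{f}(\alpha) \prod_{i \in [R]} \phi_{\alpha(i)}(\omega(i))
\]
and split the sum as $f = f_0 + f_1$, where $f_0$ collects the multi-indices with $\alpha(j) = 0$ and $f_1$ collects those with $\alpha(j) \neq 0$. Since $\phi_0 \equiv 1$, the function $f_0$ does not depend on $\omega(j)$. Thus for any fixing of $(\omega(i))_{i \neq j}$, $f_0$ contributes only to the mean (not the variance) in $\omega(j)$, and so
\[
{\rm Var}_{\omega(j)}[f(\omega)] = {\rm Var}_{\omega(j)}[f_1(\omega)].
\]

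Next I would verify that the conditional mean of $f_1$ in $\omega(j)$ vanishes: for any $\alpha$ with $\alpha(j) \neq 0$, the one-dimensional expectation $\Ex_{\omega(j) \sim \mu_j}[\phi_{\alpha(j)}(\omega(j))] = \langle \phi_{\alpha(j)}, \phi_0\rangle = 0$ by orthonormality. Hence $\Ex_{\omega(j)}[f_1(\omega)] = 0$ for every fixing of $(\omega(i))_{i \neq j}$, and the inner variance reduces to the raw second moment $\Ex_{\omega(j)}[f_1(\omega)^2]$. Expanding this square produces a double sum over pairs $(\alpha,\beta)$ with $\alpha(j), \beta(j) \neq 0$, and because the outer law $\prod_{i \neq j} \mu_i$ is a product measure, the combined expectation factorises coordinate-wise.

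At each coordinate $i$, the surviving factor equals $\Ex_{\omega(i)}[\phi_{\alpha(i)}(\omega(i)) \phi_{\beta(i)}(\omega(i))] = \delta_{\alpha(i),\beta(i)}$ by orthonormality. Therefore only the diagonal terms $\alpha = \beta$ (with $\alpha(j) \neq 0$) survive, producing $\sum_{\alpha : \alpha(j) \neq 0} \wh{f}(\alpha)^2$ as claimed. I do not anticipate any substantive obstacle here: this is a textbook identity in the Fourier analysis of product spaces, and the only care required is to justify the interchange of the inner and outer expectations (valid because the law is a product measure and $f \in L_2$) and to observe that the entire argument reduces to repeated application of the one-dimensional orthonormality relation $\langle \phi_a,\phi_b \rangle_{\mu_i} = \delta_{a,b}$.
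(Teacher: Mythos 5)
Your proof is correct and is the standard argument; the paper states this fact without proof as a well-known identity, and your decomposition $f = f_0 + f_1$ according to whether $\alpha(j) = 0$, followed by the observation that $f_1$ has vanishing conditional mean in $\omega(j)$ and an application of orthonormality to kill the off-diagonal terms, is exactly the textbook derivation it implicitly relies on. Nothing further is needed.
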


{\bf Noise Operator} Given a product probability space $L_2(\Omega^R,\mu^R)$, and a $\omega \in \Omega^R$, for any $\rho \in [0,1]$, a $\rho$-correlated copy of $\omega$ -- denoted as $\omega' \underset{\rho}{\sim} \omega$ is sample as follows. For every $j \in [R]$, do the following independently:
\begin{itemize}
	\item W.p. $\rho$, set $\omega'(j) = \omega(j)$.
	\item W.p. $1 - \rho$, sample $\omega'(j) \sim \mu$ independently.
\end{itemize}
Then, the $\rho$-correlated noise operator in the space $(\Omega_R,\mu^R)$ is the stochastic operator $\sT^{(\Omega)}_{\rho}$ defined as
\[
\sT^{(\Omega)}_{\rho} f(\omega) = \Ex_{\omega' \underset{\rho}{\sim} \omega}\Big[f(\omega')\Big],		
\]
for every $f \in L_2(\Omega^R,\mu^R)$. The following are some well-known properties of noise operators.

\begin{fact}[Fourier Decay]				\label{fact:decay}
	Let $f \in L_2(\{0,1\}^R,\mu^R)$ be a function satisfying $\|f\|_\infty \leq 1$. Then for any $\eta \in (0,1)$ and $d \in \mathbbm{N}$ we have 
	\[
	{\rm Var}\left[f^{>d}\right] \leq (1 - \eta)^d
	\]
\end{fact}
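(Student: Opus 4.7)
The plan is to prove this by a direct Parseval computation that exploits the diagonal action of the noise operator $\sT^{(\Omega)}_{1-\eta}$ on the Fourier basis. First I would fix an orthonormal Fourier basis $\{\phi_\alpha\}$ for $L_2(\{0,1\}^R, \mu^R)$ and expand $f = \sum_\alpha \hat{f}(\alpha)\, \phi_\alpha$, writing $|\alpha| = |\{j \in [R] : \alpha(j) \neq 0\}|$ for the degree of a multi-index. The hypothesis $\|f\|_\infty \leq 1$ immediately gives $\|f\|_2^2 \leq 1$, which by Parseval translates to $\sum_\alpha \hat{f}(\alpha)^2 \leq 1$.

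Next, I would use that $\sT^{(\Omega)}_{1-\eta}$ acts as a multiplier on the Fourier basis via $\sT^{(\Omega)}_{1-\eta} \phi_\alpha = (1-\eta)^{|\alpha|} \phi_\alpha$, so the high-degree noise-smoothed component admits the expansion $f^{>d} = \sum_{|\alpha| > d} (1-\eta)^{|\alpha|} \hat{f}(\alpha)\, \phi_\alpha$. Orthonormality of $\{\phi_\alpha\}$ then gives
\[
{\rm Var}\left[f^{>d}\right] \;\leq\; \sum_{|\alpha| > d}(1-\eta)^{2|\alpha|}\, \hat{f}(\alpha)^2 \;\leq\; (1-\eta)^{2(d+1)} \sum_\alpha \hat{f}(\alpha)^2 \;\leq\; (1-\eta)^d,
\]
where the middle step uses $|\alpha| \geq d+1$ on every surviving Fourier level to uniformly factor out the worst eigenvalue, and the final step combines $(1-\eta)^{2(d+1)} \leq (1-\eta)^d$ for $\eta \in (0,1)$, $d \in \mathbbm{N}$, with the Parseval bound from the first step.

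I do not expect any real obstacle here: the argument is essentially a one-line spectral calculation, and it goes through whenever the basis is orthonormal — no low-influence, product structure, or Gaussian invariance is needed. The only care required is notational: if the paper's convention for $f^{>d}$ is the plain high-degree truncation $\sum_{|\alpha| > d} \hat{f}(\alpha)\phi_\alpha$, then the right-hand side depending on $\eta$ signals that an implicit noise smoothing is baked in (either by applying $\sT^{(\Omega)}_{1-\eta}$ before truncating, or by treating $f^{>d}$ as $(\sT^{(\Omega)}_{1-\eta} f)^{>d}$). Once that convention is fixed, the calculation above delivers the bound verbatim.
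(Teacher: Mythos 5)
Your proof is correct and is essentially the paper's own argument: the paper states Fact \ref{fact:decay} without proof, but its proof of the analogous Fact \ref{fact:decay-2} in the appendix is exactly this Parseval computation on the noise-smoothed high-degree part. You are also right to flag the convention issue — as literally written the bound is false for a plain truncation (all Fourier mass could sit above level $d$), and the $\eta$ on the right-hand side only makes sense because the fact is always invoked for functions of the form $\sT_{1-\eta}g$, exactly as in your reading.
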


{\bf Fourier basis for Biased Hypercube}. Our hardness reduction and soundness analysis will often involve the $p$-biased hypercubes. For any $p \in [0,1]$, the $p$-biased cube -- denoted by $\{0,1\}_p$ -- is the cube equipped with the ${\sf Bernoulli}(p)$ measure. We use $\phi^{(p)}$ to denote the {\em unique} non-trivial Fourier character corresponding to the $p$-biased cube, where 
\begin{equation}			\label{four:def}
\phi^{(p)}(x) \defeq \frac{x - p}{\sqrt{p(1 - p)}}.
\end{equation} 
The following is well-known, and follows directly using the above definitions.
\begin{fact}				\label{fact:p-fourier}
	For any $p \in [0,1]$, the Fourier expansion of any $f \in L_2(\{0,1\}^R_p)$ is unique and can be expressed as 
	\[
	f(x)  = \sum_{S \subseteq [R]}\wh{f}(S) \phi^{(p)}_S(x) = \sum_{S \subseteq [R]}\wh{f}(S) \prod_{i \in S} \phi^{(p)}_i(x(i)). 
	\]
\end{fact}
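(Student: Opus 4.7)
The plan is to verify that $\{\phi^{(p)}_S\}_{S \subseteq [R]}$ forms an orthonormal basis of $L_2(\{0,1\}^R_p)$; both the existence and uniqueness of the claimed expansion then follow from standard Hilbert-space reasoning. Since the definition $\phi^{(p)}(x) = (x-p)/\sqrt{p(1-p)}$ requires $p \in (0,1)$, I would implicitly restrict to that range (the boundary cases $p \in \{0,1\}$ collapse $L_2(\{0,1\}_p)$ to the one-dimensional space of constants, for which the expansion $f \equiv \wh{f}(\emptyset)$ is trivial).

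First, I would handle the base case $R = 1$. The space $L_2(\{0,1\}_p)$ is two-dimensional whenever the ${\sf Bernoulli}(p)$ measure has full support, so any orthonormal pair is automatically a basis. A direct computation gives $\Ex_{x \sim \{0,1\}_p}[\phi^{(p)}(x)] = (p-p)/\sqrt{p(1-p)} = 0$ and $\Ex_{x \sim \{0,1\}_p}[\phi^{(p)}(x)^2] = \mathrm{Var}(x)/(p(1-p)) = 1$, so $\{1, \phi^{(p)}\}$ is indeed orthonormal.

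Next, I would lift to the product space by invoking the tensor-product construction of Fourier bases already recorded in the preliminaries: if $\{\phi_{i,0} \equiv 1, \phi_{i,1}\}$ is an orthonormal basis of $L_2(\Omega_i, \mu_i)$ for each $i$, then the products $\prod_{i} \phi_{i,\alpha(i)}$ indexed by $\alpha \in \prod_i \mathbb{Z}_{\leq |\Omega_i| - 1}$ form an orthonormal basis of $L_2(\prod_i \Omega_i, \prod_i \mu_i)$. Specialized with every factor equal to $(\{0,1\}, \mathsf{Bernoulli}(p))$, this produces exactly the family $\{\phi^{(p)}_S\}_{S \subseteq [R]}$. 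The orthonormality is checkable directly, since the product measure factorizes and yields $\Ex[\phi^{(p)}_S \phi^{(p)}_T] = \prod_{i \in [R]} \Ex[\phi^{(p)}_{\mathbf{1}[i \in S]}(x(i))\, \phi^{(p)}_{\mathbf{1}[i \in T]}(x(i))] = \mathbf{1}[S = T]$ by the base case and coordinate-wise independence. The cardinality $2^R$ matches the dimension of $L_2(\{0,1\}^R_p)$, so the family is a basis.

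Finally, existence of the expansion is immediate from spanning, and uniqueness follows by pairing any putative expansion $f = \sum_S \wh{f}(S) \phi^{(p)}_S$ against $\phi^{(p)}_T$ and using orthonormality to extract $\wh{f}(T) = \Ex_{x \sim \{0,1\}^R_p}[f(x) \phi^{(p)}_T(x)]$, which pins down each coefficient as a functional of $f$ alone. I do not anticipate a substantive obstacle here: the claim is a routine instantiation of the general product-space Fourier formalism set up earlier, with the only mildly delicate point being that $p \in (0,1)$ is needed for $\phi^{(p)}$ to be well-defined and for the base two-point space to actually be two-dimensional.
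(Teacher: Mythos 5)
Your proof is correct and matches what the paper intends: the paper states this as a well-known fact that "follows directly using the above definitions," i.e.\ by checking that $\{1,\phi^{(p)}\}$ is orthonormal on the two-point space and invoking the tensor-product Fourier basis construction already set up in the preliminaries, which is exactly your argument. Your observation that one must implicitly restrict to $p\in(0,1)$ for $\phi^{(p)}$ to be well-defined is a fair (minor) correction to the fact's stated range.
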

Using the above, we can derive the following corollary.
\begin{corollary}
	Let $f:\{0,1\}^R \to \mathbbm{R}$ be a function in $L_2(\{0,1\}^R_\mu)$ for some $\mu \in [0,1]$. Then the Fourier expansion of $f(x)$ is a multilinear polynomial $H$ in the variables $x(1),\ldots,x(R)$.
\end{corollary}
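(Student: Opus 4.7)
The plan is to apply Fact \ref{fact:p-fourier} with $p = \mu$, which yields the unique biased Fourier decomposition
\[
f(x) = \sum_{S \subseteq [R]} \wh{f}(S) \prod_{i \in S} \phi^{(\mu)}(x(i)),
\]
and then to observe that the individual characters are already of a particularly simple form. By the explicit definition in \eqref{four:def}, each factor $\phi^{(\mu)}(x(i)) = (x(i) - \mu)/\sqrt{\mu(1-\mu)}$ is a degree-one polynomial in the single variable $x(i)$. The degenerate cases $\mu \in \{0,1\}$ collapse $\{0,1\}^R_\mu$ to a single atom on which $f$ is constant, so multilinearity is trivial there and can be handled separately.

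Thus each term $\prod_{i \in S} \phi^{(\mu)}(x(i))$ in the Fourier expansion is a product of $|S|$ affine factors in pairwise distinct variables $\{x(i)\}_{i \in S}$. When expanded, this is a polynomial in which no $x(i)$ appears with exponent greater than one, i.e.\ a multilinear monomial in $x(1),\ldots,x(R)$. Taking the $\wh{f}(S)$-weighted sum over $S \subseteq [R]$ preserves the multilinear property, producing a polynomial $H \in \mathbbm{R}[x(1),\ldots,x(R)]$ of degree at most $R$ with no repeated variables that agrees with $f$ at every point of $\{0,1\}^R$.

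There is no substantive obstacle here: the result is a structural consequence of the explicit shape of the biased Fourier character, and the entire argument amounts to unpacking definitions after citing Fact \ref{fact:p-fourier}. The only mild subtlety worth flagging is that the multilinear $H$ so obtained coincides with the standard multilinear extension of $f$ that one would produce via Lagrange interpolation on $\{0,1\}^R$, so the bias $\mu$ affects the coefficients $\wh{f}(S)$ but not the polynomial $H$ itself; I would mention this only as a remark rather than as part of the formal proof.
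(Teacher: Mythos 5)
Your proof is correct and follows essentially the same route as the paper's: both invoke Fact \ref{fact:p-fourier} and then use that each character $\phi^{(\mu)}(x(i))$ is affine in the single coordinate $x(i)$, so the expansion is multilinear (the paper phrases this as composing the multilinear polynomial $P$ in the characters with the coordinatewise affine map $x \mapsto (\phi^{(\mu)}(x(j)))_j$). Your explicit handling of the degenerate cases $\mu \in \{0,1\}$ is a small point the paper glosses over, but it does not change the argument.
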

\begin{proof}
	 Let $P:\mathbbm{R}^R \to \mathbbm{R}$ denote the multilinear polynomial
	 \[
	 P(a) \defeq \sum_{S \subseteq [R]} \wh{f}(S) \prod_{i \in S} a(i).
	 \]
	 Now from fact \ref{fact:p-fourier}, it follows that there exists a linear transformation $A:\mathbbm{R}^R \to \mathbbm{R}^R$ which satisfies the identity
	 \[
	 A\left(x\right) = \left(\phi^{(p)}(x(j))\right)^R_{j = [R]}
	 \]
	 Then, $H = P \circ A$ is a multilinear polynomial in $(x(j))^r_{j = 1}$ which satisfies the identity:
	 \[
	 H(x) = P\left(\left(\phi^{(p)}(x(j))\right)^R_{j = 1}\right) = \sum_{S \subseteq [R]} \wh{f}(S)\prod_{j \in S} \phi^{(p)}(x(j)).
	 \]	
\end{proof}

\subsection{Ensembles, Polynomials, and Invariance Principle}				\label{sec:clt}

Let us recall some notation and terminology from \cite{IM12} which is required for stating its invariance principle. To begin with, an ensemble $\cX_j = (\cX_{j}[0] \equiv 1,\cX_{j}[1],\ldots,\cX_{j}[k])$ is just an ordered collection of (possibly dependent) random variables. An independent sequence of random variable ensembles $\cX:= (\cX_1,\cX_2,\ldots,\cX_n)$ is a sequence of individual ensembles such that across $j \in [n]$, $\cX_1,\ldots,\cX_n$ are jointly independent. A pair of independent ensemble sequences $\cX = (\cX_1,\ldots,\cX_n)$ and $\cX' = (\cX'_1,\ldots,\cX'_n)$ are said to have matching covariance structure if for every $j \in [n]$, we have $|\cX_j| = |\cX'_j|$ and $\Ex_{\cX}[\cX_j\cX^\top_j] = \Ex_{\cX'}[\cX'_j\cX'^\top_j]$. 

{\bf Multi-linear Polynomial}. A multi-index $\sigma:= (\sigma_1,\ldots,\sigma_n)$ is a sequence of non-negative integers. The degree of $\sigma$ is the number of non-zero entries. Given an ensemble sequence $\cX= (\cX_1,\ldots,\cX_n)$, a multilinear polynomial $Q$ on $\cX$ is a function of the form
\[
Q(\cX) = \sum_{\sigma \in \mathbbm{Z}^n_{\leq k}} \wh{Q}_\sigma \prod_{j \in [n]} \cX_{j}[\sigma(j)],
\]
where $\mathbbm{Z}_{\leq k} := \{0,1,\ldots,k\}$ i.e., every monomial contains exactly one variables from every ensemble in the sequence. The degree of a monomial corresponding to $\sigma$ is simply $|\sigma|$ i.e., the number of non-zero entries in $\sigma$. Finally, we use $Q^{\leq d}(\cX)$ to denote the truncation of $Q$ to monomials with degree at most $d$. Finally, we can define the influence of the $j^{th}$-ensemble on $P$ as 
\[
{\sf Inf}_{\cX_j}\left[P\right] = \Ex_{(\cX_{j'})_{j' \neq j}}\left[{\rm Var}_{\cX_j}\left[P(\cX)\Big|\left(\cX_{j'}\right)_{j'\neq j}\right]\right].
\] 
Now we are ready to state the Invariance principle used in our applications
\begin{theorem}[Theorem 3.6~\cite{IM12}]					\label{thm:clt}
	Let $\cX = (\cX_1,\ldots,\cX_n)$ be an independent sequence of ensembles, such that $\Pr\left[\cX_j = x\right] \geq \alpha > 0$ for all $j,x$. Fix $\gamma, \tau \in (0,1)$ and let $Q = (Q_1,\ldots,Q_r)$ be a $r$-dimensional multilinear polynomial on $\cX$ such that ${\rm Var}[Q_i(\cX)] \leq 1$, ${\rm Var}\left[Q^{>d}_i(\cX)\right] \leq (1 - \gamma)^{2d}$ and ${\sf Inf}_j\left[Q^{\leq d}_i\right] \leq \tau$, where $d = \frac{1}{18}\log\frac1\tau/\log\frac1\alpha$. Finally, let $\Psi:\mathbbm{R}^r \to \mathbbm{R}$ be Lipschitz continuous with constant $A$. Then, 
	\[
	\left|\Ex\left[\Psi(Q(\cX))\right] - \Ex\left[\Psi(Q(\cG))\right] \right| \leq C_r A \tau^{\frac{\gamma}{18}/\log\frac1\alpha},
	\]
	where $\cG$ is an independent sequence of Gaussian ensembles with the same covariance structure as $\cX$, and $C_r$ is a constant depending only on $r$.
\end{theorem}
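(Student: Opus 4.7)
The plan is to follow the Lindeberg-style replacement strategy underlying the invariance principle of Mossel--O'Donnell--Oleszkiewicz: first strip away the high-degree tail of $Q$, then swap the ensembles $\cX_j$ for the Gaussian ensembles $\cG_j$ one at a time. Writing $Q = Q^{\leq d} + Q^{>d}$ for the truncation at degree $d$, the Lipschitz property of $\Psi$ together with the hypothesis ${\rm Var}[Q^{>d}_i(\cX)] \leq (1-\gamma)^{2d}$ gives
\[
\bigl|\Ex[\Psi(Q(\cX))] - \Ex[\Psi(Q^{\leq d}(\cX))]\bigr| \leq A\sqrt{r}\,(1-\gamma)^d,
\]
and by the covariance matching the same bound holds on the Gaussian side. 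Choosing $d = \tfrac{1}{18}\log(1/\tau)/\log(1/\alpha)$ as in the statement makes this at most $A\sqrt{r}\,\tau^{\gamma/(18\log(1/\alpha))}$, so it suffices to prove the invariance claim for the degree-$d$ truncation.

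For the truncated polynomial I would run the standard hybrid argument: set $\cZ^{(j)} = (\cG_1,\ldots,\cG_{j-1},\cX_j,\ldots,\cX_n)$ and telescope the total error as a sum over $j \in [n]$ of $|\Ex[\Psi(Q^{\leq d}(\cZ^{(j+1)}))] - \Ex[\Psi(Q^{\leq d}(\cZ^{(j)}))]|$. At swap $j$, condition on the remaining ensembles and decompose each coordinate as $Q^{\leq d}_i = R_i + \sum_{k \geq 1} S_{i,k}\,V[k]$, where $V$ is either $\cX_j$ or $\cG_j$ and $R_i, S_{i,k}$ are functions of the other ensembles only. A third-order Taylor expansion of $\Psi$ around $(R_1,\ldots,R_r)$ then shows that the zeroth-order contribution cancels across the swap, the first-order contribution vanishes because the nontrivial coordinates of both $\cX_j$ and $\cG_j$ have mean zero, and the second-order contribution cancels because the covariance matrices of $\cX_j$ and $\cG_j$ agree by hypothesis.

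The main obstacle is controlling the cubic Taylor remainder, which is the heart of the argument. This remainder at swap $j$ is bounded by $A$ times the third absolute moment of the linear form $L_i := \sum_k S_{i,k}\,V[k]$. The key tool is hypercontractivity for low-degree multilinear polynomials on a product of finite probability spaces with minimum atom probability $\alpha$: any degree-$d$ such polynomial $P$ satisfies $\|P\|_3 \leq \alpha^{-cd}\|P\|_2$ for an absolute constant $c$, and the same bound (with better constants) holds in the Gaussian model. Since $L_i$ has total degree at most $d$ and $\|L_i\|_2^2 = {\sf Inf}_{\cX_j}[Q^{\leq d}_i] \leq \tau$, this yields $\Ex|L_i|^3 \leq \alpha^{-O(d)}\,\tau^{1/2}\,{\sf Inf}_{\cX_j}[Q^{\leq d}_i]$. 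Summing over $j$ and using the elementary bound $\sum_j {\sf Inf}_{\cX_j}[Q^{\leq d}_i] \leq d\cdot{\rm Var}[Q^{\leq d}_i] \leq d$ for a degree-$d$ multilinear polynomial, the aggregate hybrid error is at most $C'_r A\,d\,\alpha^{-O(d)}\tau^{1/2}$. Plugging in $d = \tfrac{1}{18}\log(1/\tau)/\log(1/\alpha)$ turns $\alpha^{-O(d)}$ into $\tau^{-O(1/18)}$, so the whole bound collapses to $C_r A\,\tau^{\gamma/(18\log(1/\alpha))}$, matching the statement.
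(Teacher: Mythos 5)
The paper never proves this statement: it is imported verbatim as Theorem 3.6 of [IM12] and used as a black box, so there is no in-paper proof to compare against. Your sketch reconstructs the standard Mossel--O'Donnell--Oleszkiewicz / Isaksson--Mossel replacement argument --- tail truncation via the hypothesis ${\rm Var}[Q_i^{>d}] \le (1-\gamma)^{2d}$, a Lindeberg one-ensemble-at-a-time hybrid, cancellation of the zeroth-, first- and second-order terms by moment matching, and control of the cubic remainder by $(2\to 3)$-hypercontractivity for degree-$d$ multilinear polynomials together with $\sum_j {\sf Inf}_{\cX_j}[Q_i^{\le d}] \le d\cdot{\rm Var}[Q_i^{\le d}]$ --- and the bookkeeping that converts $\alpha^{-O(d)}$ into $\tau^{-O(1/18)}$ so that $\tau^{1/2-O(1/18)}$ dominates $\tau^{\gamma/(18\log(1/\alpha))}$ is correct. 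This is the right architecture for the theorem.

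The one step that fails as written is the ``third-order Taylor expansion of $\Psi$'': the theorem only assumes $\Psi$ is Lipschitz, so it need not be differentiable anywhere and has no third derivative to expand. The standard repair, and the one used in [IM12] and MOO, is to mollify first: replace $\Psi$ by $\Psi_\lambda = \Psi * \varphi_\lambda$ for a Gaussian bump of width $\lambda$, which satisfies $\|\Psi - \Psi_\lambda\|_\infty \lesssim A\lambda\sqrt{r}$ and has third derivatives bounded by $O(A/\lambda^2)$; run the hybrid argument for $\Psi_\lambda$, whose cubic remainder now carries the extra factor $1/\lambda^2$; and optimize $\lambda$ against the mollification error. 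This costs only a constant factor in the exponent of $\tau$ (absorbed into the $1/18$), but without it the swap step has no remainder to bound. A smaller point: the first-order cancellation should be attributed to the matching of \emph{first} moments of $\cX_j$ and $\cG_j$ (which is implied by matching covariance structure, since each ensemble contains the constant $1$ as its zeroth element), not to the nontrivial coordinates having mean zero --- the latter holds in this paper's application but is not a hypothesis of the theorem.
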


\subsection{Noise Stability Bounds}					\label{sec:stab}

We shall need the following multi-dimensional version of Borell's isoperimetric inequality.

\begin{theorem}[Theorem 1.2~\cite{IM12}]					\label{thm:egt}
	The following holds for any $r \geq 2, d\in \mathbbm{N}$ and $\rho \in [0,1]$. Let $(X_1,\ldots,X_r)$ be a set of jointly distributed Gaussian vector random variables, such that $X_i \sim N(0_d,I_d)$ for every $i \in [r]$, and for any $i \neq j$ we have ${\rm Cov}(X_i,X_j)  = \rho I_d$. Then for any $A_1,\ldots,A_r \subseteq \mathbbm{R}^d$ we have 
	\[
	\Pr_{X_1,\ldots,X_r}\left[\forall^r_{i = 1} X_i \in A_i\right] \leq \Pr_{X_1,\ldots,X_r}\left[\forall^r_{i = 1} X_i(1) \leq \Phi^{-1}(\mu_i) \right],
	\]
	where for every $i \in [r]$, $\mu_i := \Pr_{X_i}\left[X_i \in A_i\right]$.
\end{theorem}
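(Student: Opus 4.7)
The plan is to establish Theorem \ref{thm:egt} via a Gaussian rearrangement argument in the spirit of Borell's classical isoperimetric inequality, extended to handle $r$ sets simultaneously. The key structural observation is that the exchangeable covariance structure $\mathrm{Cov}(X_i,X_j) = \rho I_d$ for all $i \neq j$ admits the decomposition $X_i = \sqrt{\rho}\, Z + \sqrt{1-\rho}\, Y_i$, where $Z, Y_1, \ldots, Y_r$ are mutually independent $N(0_d, I_d)$. Conditioning on $Z$ factors the joint event so that
\[
\Pr\left[\forall i:~X_i \in A_i\right] = \Ex_Z\left[\prod_{i=1}^r g_i(Z)\right], \qquad g_i(z) := \Pr_{Y_i}\left[\sqrt{\rho}\,z + \sqrt{1-\rho}\,Y_i \in A_i\right].
\]
Each $g_i:\mathbbm{R}^d \to [0,1]$ is the Ornstein--Uhlenbeck-type smoothing of the indicator $\mathbbm{1}_{A_i}$ and satisfies $\Ex_Z[g_i(Z)] = \mu_i$. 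For the half-space $H_i = \{x : x(1) \leq \Phi^{-1}(\mu_i)\}$, the corresponding smoothing is $\tilde{g}_i(z) = \Phi\bigl((\Phi^{-1}(\mu_i) - \sqrt{\rho}\,z(1))/\sqrt{1-\rho}\bigr)$, which depends only on $z(1)$ and is monotone in that coordinate, so the target bound becomes $\Ex_Z[\prod_i g_i(Z)] \leq \Ex_Z[\prod_i \tilde{g}_i(Z)]$.

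I would prove this by a one-at-a-time replacement: show that for any fixed measurable $A_2, \ldots, A_r$, swapping $A_1$ for the half-space $H_1$ of the same Gaussian measure $\mu_1$ weakly increases the expectation. By the exchangeable symmetry among the $X_i$'s, iterating the single-replacement lemma for $i = 1, 2, \ldots, r$ yields the full inequality, and after all sets are replaced by parallel half-spaces along $e_1$ the expression reduces to the one-dimensional quantity $\Pr[\forall i : X_i(1) \leq \Phi^{-1}(\mu_i)]$ claimed by the theorem. The single-swap step itself I would carry out via Gaussian symmetric decreasing rearrangement along lines parallel to $e_1$: on each affine fiber $z^{\perp} + \mathbbm{R} e_1$, we replace $g_1$ by its fiberwise decreasing rearrangement (matched against the standard 1D Gaussian measure) and invoke a Hardy--Littlewood-type inequality paired against $F(z) := \prod_{i\geq 2} g_i(z)$.

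The main obstacle is making this fiberwise rearrangement step rigorous, since the Hardy--Littlewood inequality requires $F$ itself to be appropriately monotone along the chosen fibers, which is not automatic for generic $A_2, \ldots, A_r$. I see two natural routes around this. The first is a preliminary symmetrization phase: using Ehrhard/Steiner-type symmetrizations with respect to the coordinate hyperplanes, iteratively replace each $A_i$ by an axis-symmetric set with the same measure, reducing to the case where every $g_i$ is a function of $z(1)$ alone, at which point the whole problem collapses to a one-dimensional inequality whose base case ($r=2$) is Borell's classical result. The second route is a semigroup-interpolation proof: parametrize the correlation $\rho' \in [0,\rho]$ and differentiate the gap $\Ex[\prod \tilde{g}_i(Z)] - \Ex[\prod g_i(Z)]$ with respect to $\rho'$, showing non-negativity of the derivative via an integration-by-parts identity plus a pointwise application of the two-function Borell inequality on each one-dimensional fiber. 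Either route reduces the multi-dimensional, multi-set problem to iterated applications of the classical one-dimensional Borell inequality, which is the technical heart that I would need to treat carefully; the rest of the argument is a bookkeeping induction on $r$.
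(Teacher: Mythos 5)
This statement is not proved in the paper at all: it is imported verbatim as Theorem 1.2 of Isaksson--Mossel \cite{IM12} (the Exchangeable Gaussians Theorem), so there is no internal proof to compare against, and a correct submission here would simply be the citation. Judged on its own terms, your sketch gets the standard reduction exactly right --- the decomposition $X_i = \sqrt{\rho}\,Z + \sqrt{1-\rho}\,Y_i$ is the unique Gaussian realization of the covariance structure, conditioning on $Z$ does factor the event into $\Ex_Z[\prod_i g_i(Z)]$ with $\Ex_Z[g_i(Z)] = \mu_i$, and the half-space target does take the form $\Ex_Z[\prod_i \tilde{g}_i(Z)]$ with each $\tilde{g}_i$ monotone in $z(1)$. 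But everything after that is the theorem, not a proof of it, and you say so yourself: the single-set replacement step is left open.

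The gap is not just unfinished bookkeeping; both of your proposed completions run into the same circularity. After conditioning, all the $g_i$ are functions of the \emph{same} variable $Z$, so replacing $g_1$ by a rearrangement $g_1^*$ and comparing $\Ex_Z[g_1(Z)F(Z)]$ with $\Ex_Z[g_1^*(Z)F(Z)]$, where $F = \prod_{i\ge 2} g_i$, via Hardy--Littlewood requires $F$ to already be monotone along the rearrangement direction --- which is exactly what the symmetrization phase is supposed to produce, and which cannot be established one set at a time for the same reason. Moreover, even granting that each $g_i$ may be replaced by \emph{some} monotone function of $z(1)$ with mean $\mu_i$, the inequality fails: taking $g_i^* = \mathbbm{1}[z(1)\le \Phi^{-1}(\mu_i)]$ gives $\Ex_Z[\prod_i g_i^*(Z)] = \min_i \mu_i$, which vastly exceeds $\Lambda_\rho(\mu_1,\dots,\mu_r)$. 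So the rearrangement must remember that each $g_i$ is specifically a $\sqrt{\rho}$-smoothed indicator, i.e.\ it must be Borell's Ehrhard-type rearrangement applied \emph{before} conditioning, not a generic equimeasurable rearrangement of $g_i$ applied after. Similarly, the ``iterate two-function Borell'' induction does not close, because the conditional factor $h(x_1) = \Pr[\forall i\ge 2: X_i\in A_i \mid X_1=x_1]$ is not an indicator, and its Borell rearrangement is not of the half-space form needed to feed the inductive hypothesis. These are precisely the difficulties that make \cite{IM12}'s Theorem 1.2 a substantial standalone result rather than a corollary of the two-set case; the honest course here is to cite it, as the paper does.
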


Next, we define the notion of sampling correlated  Gaussians.
\begin{definition}[Correlated Gaussians]			\label{defn:g-corr}
	Fix $R \in \mathbbm{N}$ and $\rho \in [-1,1]$. Then given a Gaussian vector $g \sim N(0_R,I_R)$, a $\rho$-correlated draw of a Gaussian $h$ from $g$, denoted as $h \underset{\rho}{\sim} g$, is generated using the following process: independently sample a Gaussian vector $\zeta \sim N(0_R,I_R)$, and let $h = \rho \cdot g + \sqrt{1 - \rho^2} \cdot \zeta$. It can be verifed that for every $j \in [R]$, 
	\[
	(g(j),h(j)) \sim N\left(\begin{bmatrix} 	0 \\ 0 	\end{bmatrix},\begin{bmatrix}
		1 & \rho \\ \rho  & 1
	\end{bmatrix}\right).
	\]
\end{definition}

{\bf $r$-ary Gaussian Stability}. We define the $r$-ary Gaussian stability for halfspaces with volumes $\delta_1,\ldots,\delta_r$ as
\begin{equation}				\label{eqn:lambda-def}
\Lambda_\rho(\delta_1,\ldots,\delta_r) \defeq \Pr_{\substack{g \sim N(0,1) \\ g_1,\ldots,g_r \underset{\rho}{\sim} g}}\left[\forall_{i \in [r]}~g_i \leq \Phi^{-1}(\delta_i)\right],
\end{equation}
where $\Phi(\cdot)$ is the Gaussian CDF function. We derive the following elementary corollary which extends the above theorem to the setting of $[0,1]$-valued functions.
\begin{corollary}					\label{corr:egt}
	The following holds for any $r \geq 2, d\in \mathbbm{N}$ and $\rho \in [0,1]$. Let $f_1,\ldots,f_r :\mathbbm{R}^d \to [0,1]$ be arbitrary functions. Then,
	\begin{equation}				\label{eqn:egt}
		\Ex_{g \sim N(0,I_d)}\Ex_{g_1,\ldots,g_r \underset{\rho}{\sim} g} \left[\prod_{j \in [r]} f_j(g_j)\right]
		\leq \Lambda_\rho(\mu_1,\ldots,\mu_r),
	\end{equation}
	where $\mu_i:= \Ex_{g \sim N(0,I_d)} \left[f_i(g)\right]$ for every $i \in [r]$ .
\end{corollary}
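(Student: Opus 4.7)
The plan is to reduce the $[0,1]$-valued statement to the indicator version in Theorem~\ref{thm:egt} via a layer-cake decomposition of each $f_i$, and then to collapse the resulting $r$-dimensional integral by applying Jensen's inequality together with concavity of $\Lambda_\rho$ in each argument.

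First I will write $f_i(x) = \int_0^1 \mathbbm{1}[f_i(x) \geq t_i]\, dt_i$, take the product over $i \in [r]$, and exchange expectation and integration via Fubini to obtain
\[
\Ex_{g \sim N(0,I_d)}\Ex_{g_1,\ldots,g_r \underset{\rho}{\sim} g}\left[\prod_{i \in [r]} f_i(g_i)\right] = \int_{[0,1]^r} \Pr\left[\forall i \in [r],\, g_i \in A_i(t_i)\right] dt,
\]
where $A_i(t_i) := \{x : f_i(x) \geq t_i\}$ has Gaussian volume $\nu_i(t_i) := \Pr_g[f_i(g) \geq t_i]$. Since the $g_i$'s are independent $\rho$-correlated copies of $g$, their joint covariance structure matches the setup of Theorem~\ref{thm:egt}, so applying that theorem pointwise in $t$ yields
\[
\Pr\left[\forall i,\, g_i \in A_i(t_i)\right] \leq \Lambda_\rho(\nu_1(t_1), \ldots, \nu_r(t_r)).
\]

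Second, I will verify that $\Lambda_\rho(\delta_1, \ldots, \delta_r)$ is concave in each argument $\delta_i$ whenever $\rho \in [0,1]$. Setting $c_i = \Phi^{-1}(\delta_i)$, the chain rule together with the Leibniz integral rule gives $\partial_{\delta_i} \Lambda_\rho = \Pr[g_j \leq c_j,\, \forall j \neq i \mid g_i = c_i]$. For $\rho \geq 0$ each other $g_j$ is positively correlated with $g_i$, so raising $c_i$ increases the conditional mean of $g_j \mid g_i = c_i$, which can only decrease the conditional probability above. Since $c_i$ is monotone in $\delta_i$, this shows $\partial_{\delta_i} \Lambda_\rho$ is non-increasing in $\delta_i$, hence $\Lambda_\rho$ is concave in $\delta_i$.

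Finally, applying Jensen's inequality iteratively in each coordinate of the $t$-integration, together with the identity $\int_0^1 \nu_i(t_i)\, dt_i = \Ex[f_i(g)] = \mu_i$ (layer cake in reverse), I deduce
\[
\int_{[0,1]^r} \Lambda_\rho(\nu_1(t_1), \ldots, \nu_r(t_r))\, dt \leq \Lambda_\rho(\mu_1, \ldots, \mu_r),
\]
which chains with the previous bound to give the desired inequality~\eqref{eqn:egt}. The only non-routine step is the concavity verification for $\Lambda_\rho$; everything else is standard bookkeeping given Theorem~\ref{thm:egt}.
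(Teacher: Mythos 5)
Your proposal is correct, but it takes a genuinely different route from the paper. The paper's proof is a one-step extremality argument: for fixed means $\mu_1,\ldots,\mu_r$ the feasible set of tuples $(f_1,\ldots,f_r)$ is convex, the objective is linear in each $f_j$ separately, so the maximum is attained at $\{0,1\}$-valued functions, and Theorem \ref{thm:egt} is then invoked once on the resulting indicator sets. You instead keep the functions $[0,1]$-valued throughout: the layer-cake identity $f_i(x)=\int_0^1 \mathbbm{1}[f_i(x)\ge t_i]\,dt_i$ plus Tonelli turns the product into an integral of joint probabilities of superlevel sets, Theorem \ref{thm:egt} is applied pointwise in $t$, and the $t$-integral is collapsed via coordinatewise concavity of $\Lambda_\rho$ and Jensen, using $\int_0^1\nu_i(t_i)\,dt_i=\mu_i$. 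Both applications of Theorem \ref{thm:egt} are legitimate here because the $g_i$'s (pairwise covariance $\rho^2 I_d$) are sampled by exactly the process defining $\Lambda_\rho$, so the rearranged bound is literally $\Lambda_\rho$ evaluated at the relevant volumes. Your concavity verification is sound: $\partial_{\delta_i}\Lambda_\rho=\Pr[\forall j\ne i:\ g_j\le c_j\mid g_i=c_i]$, and since the conditional law of $(g_j)_{j\ne i}$ given $g_i=c_i$ is a fixed centered Gaussian shifted by mean $\rho^2 c_i\ge 0$ in each coordinate, this derivative is non-increasing in $c_i$ and hence in $\delta_i$. The trade-off is that your route pays for avoiding the (slightly informal) extreme-point step with an extra analytic fact about $\Lambda_\rho$; the paper's route is shorter but leans on the standard "linear functional on a convex set is maximized at indicators" reasoning without spelling out the extreme-point characterization.
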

\begin{proof}
	Fix a choice of $\mu_1,\ldots,\mu_r$, and consider the set $\cF$ of functions $(f_1,\ldots,f_r):\mathbbm{R}^d \to [0,1]^r$ such that $\Ex_{g \sim N(0,I_d)} \left[f_j(g)\right] = \mu_j$ for every $j \in [r]$. Then observe that the set of functions is convex. Furthermore, fixing for any $j \in [r]$, and choices of functions $\{f_{j'}\}_{j' \neq j}$, the mapping 
	\[
	f_j \mapsto \Ex_{g_1,\ldots,g_r}\left[\prod_{j' \in [r]} f_{j'}(g_{j'})\right]
	\] 
	is linear in $f_j$, and hence, it is maximized in $\cF$ for some choice of $f_j :\mathbbm{R}^d \to \{0,1\}$. Therefore, it follows that it suffices to prove the inequality for $\{0,1\}$-valued choices of $f_1,\ldots,f_r$. To that end, fix such a choice of $f_1,\ldots,f_r:\mathbbm{R}^d \to \{0,1\}$, and let $\mu_j := \Ex_g\left[f_j(g)\right]$, and $A_j = {\rm supp}(S_j)$. Then,
	\begin{align*}
		\Ex_{g \sim N(0,1)^r}\Ex_{g_1,\ldots,g_r \underset{\rho}{\sim} g} \left[\prod_{j \in [r]} f_j(g_j)\right]
		&= \Pr_{\substack{ g \sim N(0,1)^r \\ g_1,\ldots,g_r \underset{\rho}{\sim} g}} \left[\mathop{\forall}^r_{j = 1}~g_j \in A_j\right] \\
		&\leq \Pr_{\substack{ g \sim N(0,1) \\ g_1,\ldots,g_r \underset{\rho}{\sim} g}} \left[\mathop{\forall}^r_{j = 1}~g_j \leq  \Phi^{-1}(\mu_j)\right],
	\end{align*}
	where the last inequality follows from applying Theorem \ref{thm:egt}.
\end{proof}

\subsection{Small-Set Expansion Hypothesis}

Given a $d$-regular graph $G = (V,E)$, and a subset $S \subseteq V$ of size at most $|V|/2$, the expansion of $S$ in $G$ -- denoted by $\phi_G(S)$ -- is defined as follows: 
\[
\phi_G(S) \defeq \Pr_{(i,j) \sim E| i \in S}\left[j \in S\right].
\]
Our reductions are from the \smallsetexpansion~problem which we define formally below:

\begin{definition}[\smallsetexpansion]					\label{defn:sseh}
	For any $\epsilon,\delta \in (0,1)$ and $M \in \mathbbm{N}$, an instance of $(\epsilon,\delta,M)$-\smallsetexpansion~problem is characterized by a regular graph $G = (V,E)$. The objective here is to distinguish between the following cases:
	\begin{itemize}
		\item {\bf YES Case}. There exists a set $S$ of volume $\delta$ such that $\phi_G(S) \leq \epsilon$.
		\item {\bf NO Case}. For every $S \subseteq V$ such that ${\sf Vol}(S) \in \left[\frac{\delta}{M},M\delta\right]$ we have $\phi_G(S) \geq 1 - \epsilon$.
	\end{itemize}
\end{definition}

Our reduction uses the hard instances of \smallsetexpansion~given by the following theorem as the starting point.

\begin{conjecture}[\cite{RS10},\cite{RST12}]						\label{conj:sseh}
	There exists a constant $\epsilon_0 \in (0,1)$, such that the following holds. For every $\epsilon \in (0,\epsilon_0)$, and $M \leq 1/\sqrt{\epsilon}$ there exists $\delta = \delta(\epsilon,M)$ such that $(\epsilon,\delta,M)$-\smallsetexpansion~is \NP-hard. 
\end{conjecture}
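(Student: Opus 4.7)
The statement is the Small-Set Expansion Hypothesis in its amplified form, first introduced in the basic ($M=1$) case by Raghavendra--Steurer \cite{RS10} and stated in the $M$-amplified form by Raghavendra--Steurer--Tulsiani \cite{RST12}. It is a conjecture, and no unconditional proof of NP-hardness is known---indeed, establishing it directly would resolve a central open problem in hardness of approximation and would imply, among other things, the Unique Games Conjecture. The most plausible concrete plan is therefore to reduce the amplified form to the basic SSE hypothesis via tensor-product amplification, following \cite{RST12}, and to flag explicitly the remaining gap: the basic SSEH itself is assumed, not proved.

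The reduction proceeds as follows. Given a graph $G$ hard for basic $(\epsilon_0,\delta_0,1)$-\smallsetexpansion, form the tensor product $H := G^{\otimes t}$ where $t = t(\epsilon, M)$ is chosen so that the target volume $\delta := \delta_0^t$ sits at the center of the window $[\delta/M, M\delta]$. The YES case is immediate: a non-expanding $S \subseteq V(G)$ of volume $\delta_0$ and expansion $\epsilon_0$ lifts to the tensor power $S^{\otimes t} \subseteq V(H)$, which has volume exactly $\delta$ and expansion at most $1 - (1 - \epsilon_0)^t \leq t \epsilon_0 \leq \epsilon$, provided $\epsilon_0$ is chosen as a small polynomial in $\epsilon/t$.

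The NO case is the technically delicate direction: one must show that every $T \subseteq V(H)$ of volume in $[\delta/M, M\delta]$ has expansion at least $1 - \epsilon$ in $H$. The standard approach is a slice-and-induct over the $t$ tensor coordinates. Fix a random coordinate $j \in [t]$ and decompose $T$ as a disjoint union of slices $T_v := T \cap \{x : x_j = v\}$ for $v \in V(G)$. A hybrid/averaging argument over the choice of $j$ shows that a typical slice has volume in a multiplicative range where the basic SSE NO-case applies---either at the base of the recursion or via induction at higher levels of tensoring---so that each such slice has expansion at least $1 - \epsilon_0$ within its respective fiber. Aggregating the slice-wise bounds via a convex combination (weighted by slice volumes) yields the global expansion bound $\geq 1 - \epsilon$ for $T$ in $H$. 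The condition $M \leq 1/\sqrt{\epsilon}$ in the statement is the quantitative threshold at which the slice-volume concentration survives the $t$-fold amplification; beyond this threshold, the fraction of ``atypical'' slices becomes too large to absorb into the bound.

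The main obstacle, and the reason the statement remains conjectural, is the complete absence of any direct gap-preserving reduction from an NP-complete problem (such as 3-SAT) to small-set expansion. All known evidence for SSEH is indirect---algorithmic barriers against hierarchies and spectral methods, the implication SSEH $\Rightarrow$ UGC, the failure of known rounding schemes in the small-set regime---and producing a genuine Cook--Levin-style reduction would require fundamentally new techniques well beyond tensoring. Accordingly, the present plan converts the $M$-amplified conjecture into the unamplified SSE hypothesis, but takes the latter as given rather than as something this paper proves.
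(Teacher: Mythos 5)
Your proposal is correct in the only sense that matters here: the statement is the Small-Set Expansion Hypothesis, which the paper itself does not prove but simply adopts as a conjecture citing \cite{RS10} and \cite{RST12}, exactly as you flag. Your additional sketch of deriving the $M$-amplified form from the basic hypothesis is the route taken in \cite{RST12} (the paper relies on that reference rather than reproving it), so there is no gap between your treatment and the paper's.
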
 

\subsection{Lasserre Hierarchy, Pseudo-variables, and Vector Solution}

The Lasserre (aka the Sum-of-Squares) hierarchy is a sequence of strengthenings of a basic SDP relaxation. The $\ell$-round Lasserre relaxation introduces local distributions on subsets of size at most $\ell$ that are locally consistent, and whose $\ell^{th}$-order moment matrix is PSD.  
In particular, the $\ell$-round Lasserre lifting of a basic SDP in Figure \ref{fig:lass} optimizes over the space of ``collections'' of local distributions $\theta := \{\theta_S\}_{|S| \leq \ell}$, which for any subset $S \subseteq V_{\rm gap}$ of size at most $\ell$, $\theta_S$ defines a local distribution over partial assignments to vertices in $S$. It then enforces local consistency constraints that ensure that for any pair of sets $A,B$, the corresponding distributions $\theta_A$ and $\theta_B$ are consistent on the support of the intersection $A \cap B$. In particular, this allows us to define pseudo-variables $X_1,\ldots,X_n$ which are not necessarily jointly distributed but are locally consistent i.e., for every subset $S$ of size at most $\ell$, the corresponding collection of variables $X_S = (X_i)_{i \in S}$ is distributed as $\theta_S$. Finally, it ensures that $\ell^{th}$-order pseudo-covariance matrix is PSD. We describe the $\ell$-round Lasserre relaxation in the figure below:

\begin{figure}[ht!]
	\begin{mdframed}
		\begin{eqnarray}
			\textnormal{Maximize} & \Ex_{e \sim E} \Pr_{X_e \sim \theta_e}\Big[\psi(X_e) = 1\Big]	&  \\
			\textnormal{Subject to} & \Ex_{i \sim V_{\rm gap}} \Pr_{X_i \sim \theta_\{i\}}\left[X_i = 1\right] = \mu & \\
									& \Pr_{\theta_A}\left[X_{A \cap B} = \omega\right] = \Pr_{\theta_B}\left[X_{A \cap B} = \omega \right]
									& \forall~|A \cup B| \leq \ell, \omega \in \{0,1\}^{A \cap B} \\
									& M_\ell(\theta) \succeq 0,
		\end{eqnarray}
	\end{mdframed}
	\caption{Lasserre Relaxation ${\sf Lass}_{\mu,\ell}(G_{\rm gap})$}
\end{figure}
In the above, $M_\ell(\theta)$ is the ${[n]\choose{\ell/2}} \times {[n]\choose {\ell/2}}$-size matrix whose rows and columns are indexed by subsets of size at most $\ell/2$, and for any such pair of row-column subset index $A,B$, the corresponding entry is defined as $M_\ell(\theta)[A,B] = \Ex_{\theta}\left[\prod_{i \in A \cup B} X_i\right]$ -- note that the RHS is {\em well-defined} due to the local consistency constraints. 

{\bf Covariance, Correlation}. We will refer a ${\sf Lass}_{\mu,\ell}(G_{\rm gap})$ feasible solution $\theta = \{\theta_S\}_{|S| \leq \ell}$ as a degree-$\ell$ pseudo-distribution. The $\ell$-wise local consistency allows us to consistently define various quantities involving subsets of at most $\ell$-variables. In particular, for any pair of pseudo-variables $X_i,X_j$, we can define the covariance 
\[
{\sf Cov}_{\theta}(X_i,X_j) = \Ex_{\theta}\left[X_iX_j\right] - \Ex_{\theta}\left[X_i\right]\Ex_{\theta}\left[X_j\right].
\]
Analogously, we can also define the standard deviation of variable $X_i$ as ${\sf stdev}_\theta(X_i) = (\Ex_{\theta}X^2_i - (\Ex_\theta X_i)^2)^{1/2}$, and the correlation between a pair of variables $X_i,X_j$ is then defined as 
\[
{\sf Corr}_\theta(X_i,X_j) = \frac{{\sf Cov}_{\theta}(X_i,X_j)}{{\sf stdev}_\theta(X_i)\cdot{\sf stdev}_\theta(X_j)}.
\]

{\bf Degree-$2$ Solution}. Given a degree-$\ell$ pseudo-distribution, one can identify vectors $\{u_\emptyset\} \cup \{u_i\}_{i \in V_{\rm gap}}$ from the Cholesky decomposition of the second order covariance matrix $M_2(\theta)$. The following proposition lists some easy to verify properties of these vectors.

\begin{proposition}				\label{prop:vector}
	Let $\{u_\emptyset\} \cup \{u_i\}_{i \in V_{\rm gap}}$ be the vector solution as defined above. For every $i \in V_{\rm gap}$ let $u_i = \mu_i \uphi + w_i$, where $w_i \perp u_i$. Then the following properties hold:
	\begin{itemize}
		\item For any $i \in V_{\rm gap}$, $\mu_i = \Pr_{X_i \sim \theta}\left[X_i = 1\right]$.
		\item For any $i \in V_{\rm gap}$, $\|w_i\| = {\sf stdev}_{\theta}(X_i)$.
		\item For any $i,j \in V_{\rm gap}$, $\langle w_i,w_j \rangle = {\sf Cov}_{\theta}(X_i,X_j)$
		\item For any $i,j \in V_{\rm gap}$. $\langle u_i,u_j \rangle = \Pr_{\theta}\left[X_i = 1,X_j = 1\right]$.
	\end{itemize} 
\end{proposition}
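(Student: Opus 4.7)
The plan is straightforward: every claim in the proposition is an immediate algebraic consequence of the fact that the vectors $\{u_\emptyset\} \cup \{u_i\}_{i \in V_{\rm gap}}$ are the Cholesky factors of $M_2(\theta)$, together with the identity $X_i^2 = X_i$ valid for any Boolean pseudo-variable. I would organize the argument around first recording the base inner products coming directly from the Cholesky decomposition, and then deriving the four listed identities in the order they appear.

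First I would unpack what the Cholesky vectors give us. Indexing $M_2(\theta)$ by subsets of size at most $1$, the entries are $M_2(\theta)[\emptyset,\emptyset]=1$, $M_2(\theta)[\emptyset,\{i\}]=\Ex_\theta[X_i]$, and $M_2(\theta)[\{i\},\{j\}]=\Ex_\theta[X_iX_j]$, all of which are well-defined since $\ell\ge 2$ and local consistency applies. Hence the Cholesky factorization $M_2(\theta)=U^\top U$ yields vectors satisfying $\|u_\emptyset\|^2=1$, $\langle u_\emptyset,u_i\rangle=\Ex_\theta[X_i]=\Pr_\theta[X_i=1]$, and $\langle u_i,u_j\rangle=\Ex_\theta[X_iX_j]=\Pr_\theta[X_i=1,X_j=1]$. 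The last of these is exactly the fourth bullet of the proposition.

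Next, to obtain the first bullet, I take the inner product of the decomposition $u_i=\mu_i u_\emptyset + w_i$ with $u_\emptyset$; since $w_i\perp u_\emptyset$ and $\|u_\emptyset\|^2=1$, this gives $\mu_i=\langle u_i,u_\emptyset\rangle=\Pr_\theta[X_i=1]$, consistent with the notation $\mu_i$ used throughout the paper. For the second bullet, I compute $\|u_i\|^2=\langle u_i,u_i\rangle=\Ex_\theta[X_i^2]=\Ex_\theta[X_i]=\mu_i$, where the middle equality uses $X_i^2=X_i$. By Pythagoras applied to the orthogonal decomposition, $\|w_i\|^2=\|u_i\|^2-\mu_i^2\|u_\emptyset\|^2=\mu_i-\mu_i^2=\mu_i(1-\mu_i)$, which equals ${\sf stdev}_\theta(X_i)^2$.

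Finally, for the third bullet, I expand
\[
\langle w_i,w_j\rangle=\langle u_i-\mu_i u_\emptyset,\,u_j-\mu_j u_\emptyset\rangle=\langle u_i,u_j\rangle-\mu_i\mu_j-\mu_j\mu_i+\mu_i\mu_j=\Pr_\theta[X_i=1,X_j=1]-\mu_i\mu_j,
\]
which is exactly ${\sf Cov}_\theta(X_i,X_j)$. No step poses a genuine obstacle; the only subtle point is to make sure that when invoking the Cholesky identities for $|A\cup B|\le 2$, the corresponding entries of $M_2(\theta)$ are well-defined, which follows from the $\ell$-wise local consistency of $\theta$ and the assumption $\ell\ge 2$.
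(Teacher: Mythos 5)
Your proof is correct and is exactly the standard verification the paper has in mind when it calls these properties ``easy to verify'': read off the Gram identities $\langle u_A,u_B\rangle = M_2(\theta)[A,B]$ from the Cholesky factorization, use Booleanity ($X_i^2=X_i$, so $\Ex_\theta[X_iX_j]=\Pr_\theta[X_i=1,X_j=1]$ and $\|u_i\|^2=\mu_i$), and expand the orthogonal decomposition. You also correctly read the condition $w_i\perp u_i$ in the statement as the intended $w_i\perp u_\emptyset$ (a typo in the paper), without which the first bullet would not follow.
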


\section{Small Average-Correlation via Pre-processing}				\label{sec:pre-proc}

Let $(G_{\rm gap},\theta)$ be a $(\ell,\mu,c,s,\gamma)$-gap instance as in Definition \ref{defn:gap-inst} with $G_{\rm gap} = (V_{\rm gap},E_{\rm gap},\tilde{w},w)$ where $\tilde{w}:V_{\rm gap} \to \mathbbm{R}_{\geq 0}$ and $w:E_{\rm gap} \to \mathbbm{R}_{\geq 0}$ define the vertex and edge weights, and $\theta$ is a feasible solution to ${\sf Lass}_{\mu,\ell}\left(G_{\rm gap}\right)$ i.e., the $\mu$-constrained $\ell$-round Lasserre relaxation described in Figure \ref{fig:lass-1}.

\begin{figure}[ht!]
	\begin{mdframed}
		\begin{eqnarray*}
			\textnormal{Maximize} & \Ex_{e \sim w} \Pr_{X_e \sim \theta_e}\left[\psi(X_e) = 1\right] & \\
			\textnormal{Subject to} & \Ex_{i \sim G_{\rm gap}}\Pr_{X_i \sim \theta_i}\left[X_i = 1\right] = \mu &\\
			& \langle u_{A,\alpha},u_{B,\beta} \rangle = \Pr_{\theta_{A \cup B}}\left[X_A = \alpha, X_B = \beta\right] & 
			\forall~A,B \subseteq [\ell] \textnormal{such that} \\
			& & |A|,|B| \leq \ell/2, \\
			& & \alpha \in \{0,1\}^A, \beta \in \{0,1\}^B
		\end{eqnarray*}
	\end{mdframed}
	\caption{${\sf Lass}_{\mu,\ell}(G_{\rm gap})$}
	\label{fig:lass-1}
\end{figure} 

We introduce a definition that specifies some useful properties of a feasible pseudo-distribution.

\begin{definition}[Smooth and Independent Pseudo-distributions]
	Let $\theta$ be a feasible solution ${\sf Lass}_{\mu,\ell}(G_{\rm gap})$. We will say that $\theta$ is $\gamma$-{\bf smooth} if for every edge $e$, and every $\alpha \in \{0,1\}^e$, we have $\Pr_{X_e \sim \theta_e}\left[X_e = \alpha\right] \geq \gamma$. 	Furthermore, we will say that $\theta$ is $\gamma$-{\bf independent} if 
	\[
	\Ex_{i,j \sim G_{\rm gap}}\left[|{\sf Corr}_\theta(X_i,X_j)|\right] \leq \gamma,
	\]
	where ${\rm Corr}_\theta(X_i,X_j)$ is the correlation of the variables corresponding to vertices $i,j \in V_{\rm gap}$ under the local distribution $\theta$.
\end{definition}	

The following shows that given a $\ell$-round feasible Lasserre solution, we can construct a  $\ell - O(1/\gamma^C)$-round Lasserre solution that is $(\gamma\mu)^r$ smooth and $\gamma^2$ independent with almost matching completeness.

\begin{lemma}					\label{lem:smooth}
	Let $\theta:= \{\theta_A\}_{|A| \leq \ell}$ be a feasible solution to ${\sf Lass}_{\mu,\ell}(G_{\rm gap})$. Then there exists $t = t(\gamma)$ such that we can construct $\theta' = \{\theta'_A\}_{|A| \leq \ell - t}$ which is a feasible solution to ${\sf Lass}_{\mu,\ell - t}(G_{\rm gap})$ satisfying the following properties:
	\begin{itemize}
		\item {\bf Smoothness}. There exists a subset of $S$ of size at most $O(1/\gamma^4)$ such that for every subset $A \subseteq V_{\rm gap} \setminus S$ of size at most $\ell - t$, and every assignment $\alpha_A \in \{0,1\}^A$ corresponding to the variables in $A$ we have 
		\[
		\Pr_{X_A \sim \theta'_A}\left[X_A = \alpha_A\right] \geq (\gamma\mu)^{|A|}.
		\] 
		\item {\bf Small Average Correlation}.
		\[
		\Ex_{i \sim G_{\rm gap}}\left[|{\rm Corr}_{\theta'}(X_i,X_j)|\right] \leq \gamma^2.
		\]
		\item {\bf Almost matching completeness}.
		\[
		\Ex_{e \sim E_{\rm gap}}\Pr_{X_e \sim \theta'_e}\left[\psi(X_e) = 1\right] \geq c - 2\gamma r,
		\]
		where $c$ denotes the SDP objective value w.r.t. $\theta$.	 
	\end{itemize} 
\end{lemma}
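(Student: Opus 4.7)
The plan is to combine two standard operations on Lasserre pseudo-distributions: a convex-combination smoothing against a product distribution, which secures the pointwise lower bound, followed by iterative global correlation rounding in the Barak--Raghavendra--Steurer style, which drives down the average correlation. I would apply the smoothing first so that a product-measure ``floor'' is baked in before conditioning can redistribute mass.

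Concretely, set
\[
\tilde\theta \;:=\; (1-\eta)\,\theta \;+\; \eta\,\theta^{\rm prod}, \qquad \eta := 2\gamma r,
\]
where $\theta^{\rm prod}$ is the $\mu$-biased product distribution on $V_{\rm gap}$, viewed as an honest $\ell$-round Lasserre solution for ${\sf Lass}_{\mu,\ell}(G_{\rm gap})$: it trivially satisfies the bias constraint, local consistency, and PSD-ness of the moment matrix. Convexity of the feasible region gives feasibility of $\tilde\theta$; the objective loses at most $\eta \le 2\gamma r$; and for any $A \subseteq V_{\rm gap}$ with $|A|\le\ell$ and $\alpha_A \in \{0,1\}^A$ we already get $\Pr_{\tilde\theta_A}[X_A = \alpha_A] \ge \eta\,\mu^{|A|}(1-\mu)^{|A|} \ge (\gamma\mu)^{|A|}$, using $\mu \le 1/2$, $r \ge 2$, and $\gamma$ sufficiently small.

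Next, apply the standard global correlation rounding to $\tilde\theta$: iteratively condition on a coordinate whose conditioning reduces the sum of single-variable entropies by a measurable amount, so that Pinsker's inequality translates the resulting drop in total mutual information into a drop in average correlation. After $t = O(1/\gamma^4)$ such steps, there exist a set $S$ with $|S|\le t$ and an assignment $\sigma \in \{0,1\}^S$ in the support of $\tilde\theta_S$ such that $\theta' := \tilde\theta(\,\cdot\,\mid X_S = \sigma)$ satisfies $\Ex_{i,j \sim G_{\rm gap}}\bigl|{\sf Corr}_{\theta'}(X_i,X_j)\bigr| \le \gamma^2$. The $t$ conditioning rounds are paid for out of the hierarchy, leaving $\theta'$ feasible for ${\sf Lass}_{\mu,\ell-t}(G_{\rm gap})$. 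Since averaging over $\sigma \sim \tilde\theta_S$ preserves both the global bias and the objective, a reverse-Markov argument furnishes an $(S,\sigma)$ for which the bias drifts by at most $O(\gamma)$ and the completeness drops by at most $O(\gamma r)$; a final small re-smoothing restores the bias exactly to $\mu$ if needed.

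The step I expect to be the main obstacle is inheriting the pointwise smoothness through the conditioning step. Factoring $\theta^{\rm prod}$ across $S$ and $V_{\rm gap}\setminus S$ gives, for $A \subseteq V_{\rm gap}\setminus S$,
\[
\Pr_{\theta'}[X_A = \alpha_A] \;\ge\; \frac{\eta\,\Pr_{\theta^{\rm prod}}[X_S = \sigma]}{\Pr_{\tilde\theta}[X_S = \sigma]}\cdot\Pr_{\theta^{\rm prod}}[X_A = \alpha_A].
\]
The prefactor has $\sigma$-expectation equal to $\eta$ under $\sigma \sim \tilde\theta_S$, so a typical $\sigma$ keeps it within a constant factor of $\eta$, and the target bound $(\gamma\mu)^{|A|}$ is recovered. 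Coordinating this typicality requirement with the correlation-rounding criterion, the bias-preservation criterion, and the completeness criterion is the delicate balancing act of the argument; fortunately each individual bad event has small probability under $\tilde\theta_S$, so a union bound over the finitely many such events delivers a single $(S,\sigma)$ witnessing all three conclusions of the lemma simultaneously.
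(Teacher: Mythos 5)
There is a genuine gap, and it sits exactly where you flagged it: the pointwise smoothness floor from the global mixture $\tilde\theta = (1-\eta)\theta + \eta\,\theta^{\rm prod}$ does not survive the conditioning step, and your proposed rescue (typicality of the prefactor plus a union bound) does not work. The prefactor $\eta\,\Pr_{\theta^{\rm prod}}[X_S=\sigma]/\Pr_{\tilde\theta}[X_S=\sigma]$ is a $[0,1]$-valued quantity whose $\tilde\theta_S$-expectation is $\eta$; having mean $\eta$ only guarantees (by reverse Markov) that it exceeds $\eta/2$ on a set of $\tilde\theta_S$-measure roughly $\eta = 2\gamma r$, which is tiny -- so the ``good'' event for smoothness is rare, not the ``bad'' event small, and it need not intersect the set of conditionings produced by the correlation-rounding argument. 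Concretely, if $\theta$ is close to a point mass on some assignment, then $\tilde\theta_S$ concentrates on the corresponding $\sigma_0$ and the prefactor there is about $\eta\,\mu^{|S|}$ with $|S| = \Theta(1/\gamma^{4})$, which is exponentially small in $1/\gamma^4$ and destroys the target bound $(\gamma\mu)^{|A|}$ already for $|A|=1$. No union bound over bad events can fix this, because for the natural (high-probability) conditioning the smoothness bound simply fails.

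The paper avoids this by smoothing \emph{per coordinate} rather than globally: for each $i$ independently, $\hat X_i$ is set to $X_i$ with probability $1-\gamma$ and resampled from $\{0,1\}_\mu$ with probability $\gamma$. This gives the deterministic conditional bound $\Pr[\hat X_A = \alpha_A \mid X_A] \ge (\gamma\mu)^{|A|}$ for \emph{every} realization of $X_A$, and makes $\hat X_A$ and $\hat X_S$ conditionally independent given $X_{A\cup S}$; hence $\Pr_{\hat\theta}[\hat X_{A\cup S} = \alpha_{A\cup S}] \ge (\gamma\mu)^{|A|}\Pr_{\hat\theta}[\hat X_S = \alpha_S]$, and the floor survives conditioning on \emph{any} $\alpha_S$ that the RT12 correlation-reduction theorem happens to choose. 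No coordination between the smoothness and correlation requirements is then needed. The rest of your outline (convexity gives feasibility, the objective loses $O(\gamma r)$, correlation rounding via entropy decrement, smoothness implies ${\sf stdev} \ge \sqrt{\gamma\mu/2}$ so that an average-covariance bound of $\gamma^4$ converts to an average-correlation bound of $\gamma^2$) matches the paper; it is only the form of the smoothing operator that must be changed to make the argument go through.
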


We shall use the following result from \cite{RT12} to derive the above.

\begin{theorem}[\cite{RT12}]				\label{thm:cond}
	Let $\theta$ be a ${\sf Lass}_{\mu,\ell}(G_{\rm gap})$-feasible solution. Fix $\gamma > 0$. Then there exists a subset $S \subseteq V_{\rm gap}$ of size at most $O(1/\gamma^8)$, and a partial assignment $\alpha_S \in \{0,1\}^S$ to the variables in $S$ such that $\tilde{\theta} = \theta|X_S \gets \alpha$ satisfies
	\begin{equation}				\label{eqn:thm-1}
		\Ex_{i,j \sim V_{\rm gap}}\left[|{\rm Cov}_{\tilde{\theta}}(X_i,X_j)|\right] \leq \gamma^4
	\end{equation}
	and 
	\begin{equation}				\label{eqn:thm-2}
		\Ex_{e \sim E_{\rm gap}}\Pr_{X_e \sim \tilde{\theta}_e}\left[\psi(X_e) = 1\right] \geq c - \gamma
	\end{equation}
\end{theorem}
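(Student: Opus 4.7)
The plan is to follow the standard global-correlation conditioning template: iteratively condition $\theta$ on a random sequence of vertices, drawing each label from the pseudo-distribution itself, and use an average singleton-entropy potential to bound how often the conditioned pseudo-distribution still has large average covariance. The completeness guarantee will then follow from the martingale property of the SDP objective under this self-sampling rule.

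Concretely, I will set $T = C/\gamma^8$ for a sufficiently large absolute constant $C$ and define a sequence $\theta = \theta^{(0)}, \theta^{(1)}, \ldots, \theta^{(T)}$ as follows: at step $t$, draw $j_t$ uniformly from $V_{\rm gap}$, sample $\alpha_{j_t} \sim \theta^{(t)}\big|_{\{j_t\}}$ from its singleton marginal, and set $\theta^{(t+1)} = \theta^{(t)} \mid X_{j_t} \gets \alpha_{j_t}$. Since $\ell \geq T + 2$ by hypothesis, every $\theta^{(t)}$ remains a valid Lasserre-feasible pseudo-distribution on which singleton entropies and pair mutual informations are genuinely well-defined. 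The chain-rule identity for the average singleton entropy $\Phi(\theta) := \Ex_{i \sim V_{\rm gap}} H_\theta(X_i) \in [0,1]$ gives
\[
\Ex\bigl[\Phi(\theta^{(t)}) - \Phi(\theta^{(t+1)}) \,\big|\, \theta^{(t)}\bigr] \;=\; \Ex_{i,j \sim V_{\rm gap}}\, I_{\theta^{(t)}}(X_i; X_j).
\]
Pinsker's inequality combined with the Boolean identity $\mathrm{TV}(\theta_{ij}, \theta_i \otimes \theta_j) = 2|\mathrm{Cov}_\theta(X_i, X_j)|$ yields $I_\theta(X_i;X_j) \geq 8\,\mathrm{Cov}_\theta(X_i,X_j)^2$, and Jensen over the $(i,j)$-average gives $\Ex_{i,j}\,I_\theta \geq 8\,(\Ex_{i,j}|\mathrm{Cov}_\theta|)^2$. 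Telescoping across $t$ and using $0 \leq \Phi \leq 1$ produces $\sum_{t < T} \Ex\bigl[(\Ex_{i,j}|\mathrm{Cov}_{\theta^{(t)}}|)^2\bigr] \leq 1/8$; Markov on a uniformly random $t \in [T]$ then shows that the covariance event $A_t := \bigl\{\Ex_{i,j}|\mathrm{Cov}_{\theta^{(t)}}(X_i, X_j)| \leq \gamma^4\bigr\}$ fails with probability at most $O(1/(T\gamma^8)) \leq \gamma/4$ for an appropriate $C$.

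For completeness, observe that since $\alpha_{j_t}$ is drawn from $\theta^{(t)}$'s own marginal, $\mathrm{obj}(\theta^{(t)}) := \Ex_{e \sim E_{\rm gap}} \Pr_{\theta^{(t)}}[\psi(X_e)=1]$ is an \emph{exact} martingale, so $\Ex[\mathrm{obj}(\theta^{(t)})] = c$ for every $t$. Combining with the covariance tail bound and using $\mathrm{obj} \leq 1$, we have $\Ex[\mathrm{obj}(\theta^{(t)}) \cdot \mathbbm{1}_{A_t}] \geq c - \Pr[A_t^c] \geq c - \gamma/4$; averaging over $t$ and the internal randomness, there must exist a specific realization and some $t^* \leq T$ for which both $A_{t^*}$ holds and $\mathrm{obj}(\theta^{(t^*)}) \geq c - \gamma$. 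Setting $S = \{j_1, \ldots, j_{t^*}\}$ and letting $\alpha_S$ be the sampled partial assignment yields the desired conditioning with $|S| \leq T = O(1/\gamma^8)$. The main obstacle is achieving the two guarantees \emph{simultaneously} rather than on separate events, and the resolution is precisely the use of unbiased self-sampling from $\theta^{(t)}$'s own marginal (instead of any adversarial selection rule): this is what makes $\mathrm{obj}(\theta^{(t)})$ an exact (not approximate) martingale, and hence what permits the Markov-style combination with the covariance event above.
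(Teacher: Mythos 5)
A preliminary remark on the comparison: the paper does not prove Theorem \ref{thm:cond} at all — it is imported from \cite{RT12} and used as a black box inside the proof of Lemma \ref{lem:smooth} — so there is no in-paper proof to match against. Your proposal reconstructs the standard argument behind the cited result (conditioning on a random vertex whose label is self-sampled from the singleton marginal, the average-entropy potential, Pinsker to pass from mutual information to squared covariance, and the exact martingale property of the SDP objective), and the individual steps you state are correct: the expected one-step drop of $\Phi$ equals $\Ex_{i,j} I_{\theta^{(t)}}(X_i;X_j)$, the Boolean identity $\mathrm{TV}(\theta_{ij},\theta_i\otimes\theta_j)=2|\mathrm{Cov}_\theta(X_i,X_j)|$ gives $I\ge 8\,\mathrm{Cov}^2$, and telescoping yields $\sum_{t<T}\Ex\bigl[(\Ex_{i,j}|\mathrm{Cov}_{\theta^{(t)}}|)^2\bigr]\le 1/8$.

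There is, however, a genuine quantitative gap in your final Markov step. From $\Ex_t\Ex\bigl[(\Ex_{i,j}|\mathrm{Cov}_{\theta^{(t)}}|)^2\bigr]\le 1/(8T)$ you get $\Pr[A_t^c]\le 1/(8T\gamma^8)$, which with $T=C/\gamma^8$ equals $1/(8C)$ — a constant, not $\gamma/4$. Making it $\le\gamma/4$ forces $C\ge 1/(2\gamma)$, i.e. $T=\Omega(1/\gamma^9)$, contradicting both the claim that $C$ is an absolute constant and the bound $|S|=O(1/\gamma^8)$. Moreover, the failure probability really does need to be $\lesssim\gamma$ for your combination step: if the objective were below $c-\gamma$ on every realization where $A_t$ holds, the martingale identity only forces $\Pr[A_t^c]\ge\gamma/(1+\gamma)$, so a constant failure probability cannot be ruled out. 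The bottleneck persists under natural variants (e.g. an optional-stopping formulation), since Pinsker controls only second moments of the covariance. So as written your argument proves the theorem with $|S|=O(1/\gamma^9)$ rather than the stated $O(1/\gamma^8)$; this weaker form would still serve the paper's purposes in Lemma \ref{lem:smooth} after adjusting the requirement $\ell\ge 1/\gamma^8+r$ in Theorem \ref{thm:main}, but it is not the statement as given, and you should either prove the $O(1/\gamma^9)$ version explicitly or exhibit a sharper per-step gain. Two smaller points: the hypothesis ``$\ell\ge T+2$'' is not part of the theorem statement (and you in fact need roughly $\ell\ge T+r$ so that the objective, which involves $r$-sets together with all conditioned variables, remains well-defined after $T$ conditionings); the paper accounts for this by only claiming an $(\ell-t)$-round feasible solution in Lemma \ref{lem:smooth}.
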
	

Now we prove the above lemma.

\begin{proof}
	We construct the new pseudo-distribution by first applying a natural {\em smoothening} operator, and then apply Theorem \ref{thm:cond} to find a conditioning under which the resultant pseudo-distribution will satisfy the properties claimed in the lemma.
	
	{\bf Step 1}. Given $\theta$, construct a ${\sf Lass}_{\mu,\ell}(G_{\rm gap})$ feasible solution $\hat{\theta}$ as follows. For every $A \subseteq V_{\rm gap}$ of size at most $\ell$, we define the local distribution $\hat{\theta}$ as follows.
	\begin{itemize}
		\item Sample $X_A \sim \theta_A$.
		\item For every $i \in A$, do the following independently: w.p $1 - \gamma$, set $\hat{X}_i = X_i$, and w.p. $\gamma$, sample $\hat{X}_i \sim \{0,1\}_\mu$.
	\end{itemize}
	Note that the resulting distribution is $\ell$-Lasserre feasible (Claim \ref{cl:feas}). Furthermore, 
	\begin{equation}				\label{eqn:proof-1}
		\Ex_{i \sim G_{\rm gap}}\Pr_{\hat{X}_i \sim \hat{\theta}}\left[\hat{X}_i = 1\right]
		= (1 - \gamma) \Ex_{i \sim G_{\rm gap}}\Pr_{{X}_i \sim \theta}\left[{X}_i = 1\right] + \gamma \mu = \mu.
	\end{equation}
	Finally, we have that
	\begin{equation}				\label{eqn:proof-2}
		\Ex_{e \sim E_{\rm gap}}\Pr_{\hat{X}_e \sim \hat{\theta_e}}\left[\psi(\hat{X}_e) = 1\right] \geq (1 - \gamma)^r \Ex_{e \sim E_{\rm gap}}\Pr_{{X}_e \sim {\theta_e}}\left[\psi({X}_e) = 1\right] \geq c - \gamma r.
	\end{equation}
	
	{\bf Step 2}. Apply Theorem \ref{thm:cond} on $\hat{\theta}$ to find a ${\sf Lass}_{\mu,\ell - t}(G_{\rm gap})$-feasible solution $\tilde{\theta}$ that satisfies \eqref{eqn:thm-1} and \eqref{eqn:thm-2}.
	
	{\bf Analysis}. Let $X_S \gets \alpha_S$ be the conditioning identified by Theorem \ref{thm:cond} in step $2$. For the smoothness property, we observe that for any subset $A \subseteq V_{\rm gap} \setminus S$, using the definition on the conditioning we have
	\begin{equation}						\label{eqn:prob}
		\Pr_{\tilde{X}_A \sim \tilde{\theta}_A}\Big[\tilde{X}_A = \alpha_A \Big]  
		= \frac{\Pr_{\hat{X}_{A \cup S} \sim \hat{\theta}_{A \cup S}}\Big[\hat{X}_{A \cup S} = \alpha_{A \cup S} \Big]}{\Pr_{\hat{X}_S \sim \hat{\theta}_S}\Big[\hat{X}_S = \alpha_S\Big]}.
	\end{equation}
	Further, note that for sampling $\hat{X}_{A \cup S} \sim \hat{\theta}_{A \cup S}$, fixing $X_{A \cup S} \sim \theta_{A \cup S}$, we have that $\hat{X}_A$ and $\hat{X}_S$ are independent. Using this observation, we can further simplify and bound the above numerator as 
	\begin{align*}
		\Pr_{\hat{X}_{A \cup S} \sim \hat{\theta}_{A \cup S}}\Big[\hat{X}_{A \cup S} = \alpha_{A \cup S} \Big]
		&= \Ex_{\hat{X}_{A \cup S} \sim \hat{\theta}_{A \cup S}}\Pr_{\hat{X}_S}\left[\hat{X}_S = \alpha_S \Big|X_S\right]\Pr_{\hat{X}_A}\left[\hat{X}_A = \alpha_A \Big|X_A\right]		\\
		&\geq (\gamma \mu)^{|A|}\Ex_{\hat{X}_{A \cup S} \sim \theta_{A \cup S}}\Pr_{\hat{X}_S}\left[\tilde{X}_S = \alpha_S \Big|X_S\right] \\
		& = (\gamma \mu)^{|A|} \Pr_{\hat{X}_S \sim \hat{S}}\left[\hat{X}_S = \alpha_S\right],
	\end{align*}
	where in the first step, we use that in the construction of $\hat{\theta}$, fixing $X_S$, $\hat{X}_S$ is sampled by re-randomizing each coordinate of $S$ independently. This combined with \eqref{eqn:prob} gives us that 
	\[
	\Pr_{\tilde{X}_A \sim \tilde{\theta}_A}\Big[\tilde{X}_A = \alpha_A \Big] \geq (\gamma \mu)^{|A|}
	\]
	for every $A \subseteq V_{\rm gap}\setminus S$ of size at most $(\ell - t)$ -- this establishes the smoothness property. The almost matching completeness property follows from \eqref{eqn:proof-2} and \eqref{eqn:thm-2}. 

	To finish the proof, we establish the small-average-correlation property. Towards that, denote $n = |V_{\rm gap}|$ and let $i\sim G_{\rm gap} \setminus S$ denote a draw of a random vertex conditioned on $i \notin S$. Then note that since $|V_{\rm gap}|^{-1/C} \leq \gamma$ for a large constant $C$, we have 
	\[
	\left|\Ex_{i,j \sim G_{\rm gap}\setminus S} |{\rm Cov}_{\tilde{\theta}}(X_i,X_j)| - \Ex_{i,j \sim G_{\rm gap}} |{\rm Cov}_{\tilde{\theta}}(X_i,X_j)| \right| \leq o_n(1)
	\]
	since $S$ of size at most $1/\gamma^8$. Note that for any $i \in V_{\rm gap} \setminus S$, we have $\Pr_{\tilde{X}_i}\left[\tilde{X}_i = 1\right] \in [\gamma \mu, 1 - \gamma \mu]$ and hence ${\sf stdev}(\tilde{X}_i) \geq \sqrt{\gamma\mu/2}$. Hence, 
	\begin{align*}
		\Ex_{i,j \sim G_{\rm gap}}\left[|{\sf Corr}_{\tilde{\theta}}(\tilde{X}_i,\tilde{X}_j)|\right]
		&\leq \Ex_{i,j \sim G_{\rm gap}\setminus S}\left[|{\sf Corr}_{\tilde{\theta}}(\tilde{X}_i,\tilde{X}_j)|\right] + o_n(1) \\
		&= \Ex_{i,j \sim G_{\rm gap}\setminus S}\left[\frac{|{\sf Cov}_{\tilde{\theta}}(\tilde{X}_i,\tilde{X}_j)|}{{\sf stdev}_{\tilde{\theta}}(X_i)\cdot{\sf stdev}_{\tilde{\theta}}(X_j)}\right] + o_n(1) \\
		&{\leq} \Ex_{i,j \sim G_{\rm gap}\setminus S}\left[\frac{|{\sf Cov}_{\tilde{\theta}}(\tilde{X}_i,\tilde{X}_j)|}{\sqrt{\gamma \mu/2}\cdot\sqrt{\gamma \mu/2}}\right] + o_n(1) \\
		&= \frac{2}{\gamma\mu}\Ex_{i,j \sim G_{\rm gap}\setminus S}\left[|{\sf Cov}_{\tilde{\theta}}(X_i,X_j)|\right] + o_n(1) \\
		&= \frac{2}{\gamma\mu}\Ex_{i,j \sim G_{\rm gap}\setminus S}\left[|{\sf Cov}_{\tilde{\theta}}(X_i,X_j)|\right] + o_n(1) \\
		&\leq \frac{1}{\gamma\mu}\Ex_{i,j \sim G_{\rm gap}}\left[|{\sf Cov}_{\tilde{\theta}}(X_i,X_j)|\right] + o_n(\gamma^{-2}\mu^{-2}) \\
		&\leq \gamma^2.
	\end{align*}
\end{proof}

\section{The Reduction}				\label{sec:redn}

Let $(G_{\rm gap},\theta')$ be a $(c,s,\mu,\ell,\gamma)$-gap instance as in Definition \ref{defn:gap-inst}. Then using Lemma \ref{lem:smooth} on $(G_{\rm gap},\theta')$, we obtain a $\gamma^2$-independent $(c - O(r\gamma),s,\mu,\ell')$-gap instance $(G_{\rm gap},\theta)$ such that $\theta$ satisfies conditions $1$-$3$ from Lemma \ref{lem:smooth}. 

{\bf Additional Noise Operators}. We introduce some additional noise operators that will be used in the construction of the reduction. 

\begin{definition}[Noise Random Walk Operator]
Given a regular graph $G = (V,E)$, for any $\eta \in (0,1)$, the $\eta$-noisy random walk operator on $G$ -- denoted by $G_\eta$ -- is a stochastic operator on $L_2(V)$ which is defined as follows. For any $A \in V$, we sample $B \sim G_\eta(A)$ as follows: 
\begin{itemize}
	\item W.p. $(1 - \eta)$, sample $B$ as a uniformly random neighbor of $A$.
	\item W.p. $\eta$, sample $B$ uniformly from $V$.
\end{itemize}
\end{definition}

\begin{definition}[$R$-dimensional noise operators with Leakage]
Given $z \in \{\bot,\top\}^R$, a regular graph $G = (V,E)$, and $\mu \in [0,1]$, the operator $M^{(\mu)}_z$ is the following stochastic operator on $L_2(V^R \otimes \{0,1\}^R_\mu)$. Given $(A,x) \in V^R \times \{0,1\}^R$, we sample $(A',x') \sim M^{(\mu)}_z(A,x)$ by doing the following independently for every $i \in [R]$:
\begin{itemize}
	\item If $z(i) = \top$, then set $(A'(i),x'(i)) = (A(i),x(i))$.
	\item If $z(i) = \bot$, then sample $A'(i) \sim V$ and $x'(i) \sim \{0,1\}_\mu$ independently. 
\end{itemize}
\end{definition}

\subsection{Test Distribution}

Given the gap instance $(G_{\rm gap},\theta)$ from above, we describe our reduction as a dictatorship test in the following figure (Figure \ref{fig:test-1}).
\begin{figure}[ht!]
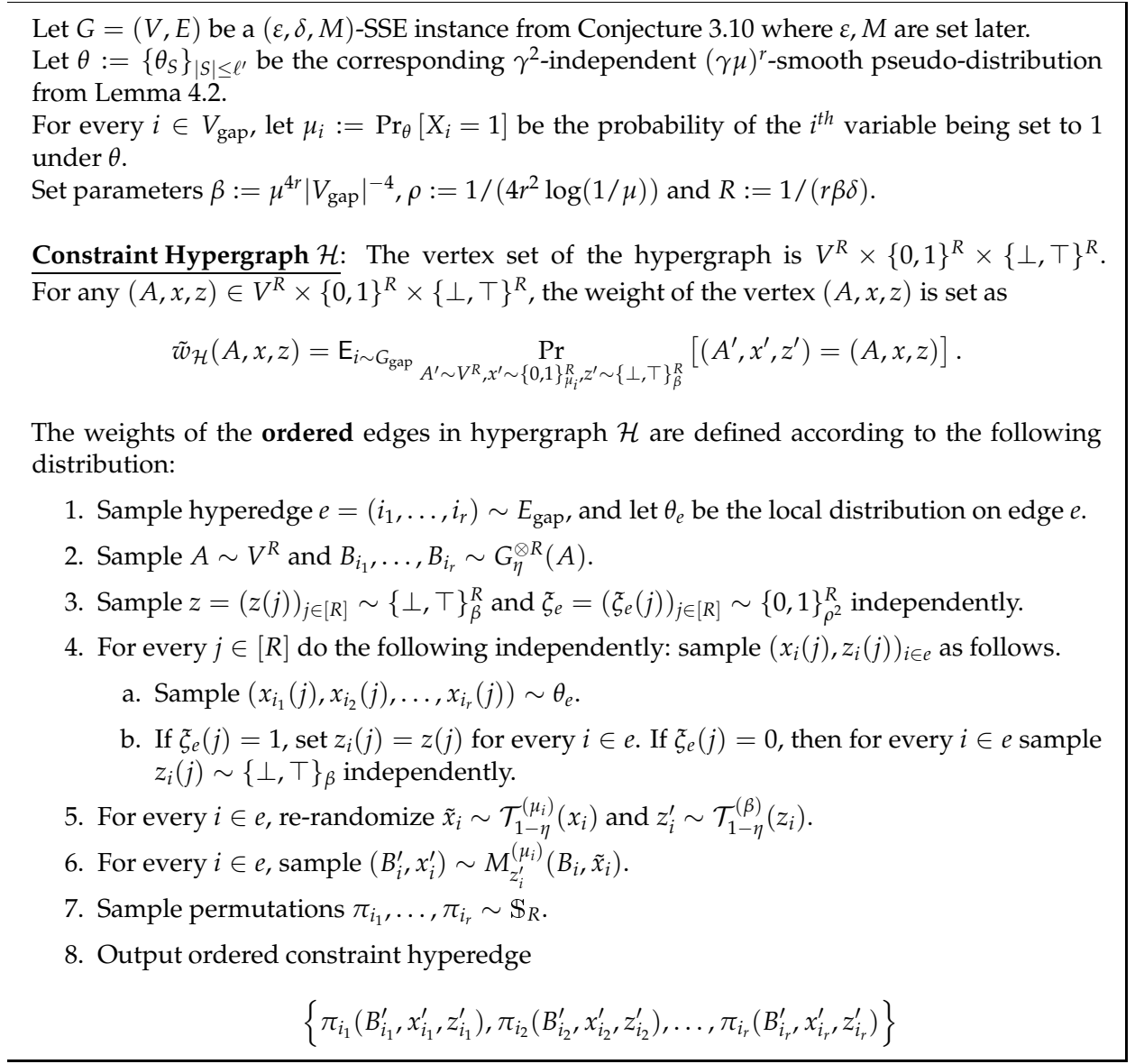

	\begin{mdframed}
		Let $G = (V,E)$ be a $(\epsilon,\delta,M)$-SSE instance from Conjecture \ref{conj:sseh} where $\epsilon,M$ are set later. \\
		Let $\theta := \{\theta_S\}_{|S| \leq \ell'}$ be the corresponding $\gamma^2$-independent $(\gamma\mu)^r$-smooth pseudo-distribution from Lemma \ref{lem:smooth}. \\ 
		For every $i \in V_{\rm gap}$, let $\mu_i := \Pr_\theta\left[X_i = 1\right]$ be the probability of the $i^{th}$ variable being set to $1$ under $\theta$.\\
		Set parameters $\beta := \mu^{4r}|V_{\rm gap}|^{-4}$, $\rho := 1/(4r^2\log(1/\mu))$ and $R := 1/(r\beta\delta)$. \\ \\
		\underline{\bf Constraint Hypergraph $\cH$}:		
		The vertex set of the hypergraph is $V^R \times \{0,1\}^R \times \{\bot,\top\}^R$. For any $(A,x,z) \in V^R \times \{0,1\}^R \times \{\bot,\top\}^R$, the weight of the vertex $(A,x,z)$ is set as 
		\[
		\tilde{w}_{\cH}(A,x,z) = \Ex_{i \sim G_{\rm gap}}\Pr_{A' \sim V^R, x' \sim \{0,1\}^R_{\mu_i},z' \sim \{\bot,\top\}^R_\beta}\left[(A',x',z') = (A,x,z)\right]. 
		\] 
		The weights of the {\bf ordered} edges in hypergraph $\cH$ are defined according to the following distribution: \\[-8pt]
		\begin{enumerate}
			\item Sample hyperedge $e = (i_1,\ldots,i_r) \sim E_{\rm gap}$, and let $\theta_e$ be the local distribution on edge $e$. \\[-8pt]
			\item Sample $A \sim V^R$ and $B_{i_1},\ldots,B_{i_r} \sim G^{\otimes R}_\eta(A)$. \\[-8pt]
			\item Sample $z = (z(j))_{j \in [R]} \sim \{\bot,\top\}^R_\beta$ and $\xi_e = (\xi_e(j))_{j \in [R]} \sim \{0,1\}^R_{\rho^2}$ independently. \\[-8pt]
			\item For every $j \in [R]$ do the following independently: sample $(x_i(j),z_i(j))_{i \in e}$ as follows. \\[-8pt]
			\begin{itemize}
				\item[a.] Sample $(x_{i_1}(j),x_{i_2}(j),\ldots,x_{i_r}(j)) \sim \theta_e$. \\[-8pt]
				\item[b.] If $\xi_e(j) = 1$, set $z_i(j) = z(j)$ for every $i \in e$. If $\xi_e(j) = 0$, then for every $i \in e$ sample $z_i(j) \sim \{\bot,\top\}_\beta$ independently. \\[-8pt]
			\end{itemize}
			\item For every $i \in e$, re-randomize $\tilde{x}_i \sim \sT^{(\mu_i)}_{1 - \eta}(x_i)$ and $z'_i \sim \sT^{(\beta)}_{1 - \eta}(z_i)$. \\[-8pt]
			\item For every $i \in e$, sample $(B'_i,x'_i) \sim M^{(\mu_i)}_{z'_i}(B_i,\tilde{x}_i)$. \\[-8pt]
			\item Sample permutations $\pi_{i_1},\ldots,\pi_{i_r} \sim \mathbbm{S}_R$. \\[-8pt]
			\item Output ordered constraint hyperedge 
			\[
			\Big\{\pi_{i_1}(B'_{i_1},x'_{i_1},z'_{i_1}), \pi_{i_2}(B'_{i_2},x'_{i_2},z'_{i_2}),\ldots,\pi_{i_r}(B'_{i_r},x'_{i_r},z'_{i_r})\Big\}
			\]  
		\end{enumerate}
	\end{mdframed}
	\caption{SSEH to $\mu$-Constrained CSP Reduction}
	\label{fig:test-1}
\end{figure}

We instantiate the various parameters used in the above reduction and its analysis:
\begin{itemize}
	\item $\beta = \mu^{4r}/|V_{\rm gap}|^4$.
	\item $R = 1/(r\beta\delta)$.
	\item $\nu = s/10^r$.
	\item $\gamma = 2^{-10 R}\nu^2$.
	\item $\tau = (s\mu/r)^{100r^2\log(1/\gamma)/\eta\beta}$
	\item $\eta = \min\{\beta^2/r, \nu\}$.
	\item $\epsilon = \frac{\beta^2\nu^4\eta^4\tau^6}{2^{24}r^4}$.
	\item $M = 1/\sqrt{\epsilon} = 	\frac{2^{12}r^2}{\beta\nu^2\eta^2\tau^3}$.
	\item $\kappa = \beta/\log(1/\gamma)$.
\end{itemize}

In the following sections, we analyze the completeness and soundness guarantees of the above reduction.

\section{Completeness}						\label{sec:comp}

Suppose $G = (V,E)$ is a YES instance. Then from Definition \ref{defn:sseh}, there exists a subset $S \subseteq V$ satisfying ${\sf Vol}(S) = \delta$ and $\phi_G(S) \leq \epsilon$. We begin by defining a map $i^* : V^R \times \{\bot,\top\}^R \to [R]$ using this choice of $S$, which will be used to identify the dictator labeling. For every $(A,z) \in V^R \times \{\bot,\top\}^R$, define the set 
\[
\Pi(A,z) := \{i \in [R]~|~(A(i),z(i)) \in S \times \{\top\}\}.
\]
Then we use the following process to construct the map $i^*$: 
\begin{itemize}
	\item[I] Firstly, consider the choices of $(A,z) \in V^R \times \{\bot,\top\}^R_\beta$ satisfying $|\Pi(A,z)| = 1$. For any such $(A,z)$, we let $i^*(A,z) = i$ where $i$ is the {\em unique} element contained in $\Pi(A,z)$. Note that by construction, we have $i^*(\pi(A,z)) = \pi(i^*(A,z))$, for every permutation $\pi$, for every choice of $(A,z)$ satisfying $|\Pi(A,z)| = 1$.
	\item[II]  For the remaining choices of $(A,z)$ (i.e., $|\Pi(A,z)| \neq 1$), we assign $i^*$ using the following process:
	\begin{itemize}
		\item While there exists $(A,z)$ such $i^*(A,z)$ is unassigned, do the next step.
		\item Pick any $(A,z)$ such that $i^*(A,z)$ is unassigned, and assign $i^*(A,z) = 1$. Furthermore, assign $i^*(\pi(A,z)) = \pi(1)$ for every non-identity permutation $\pi$. 
	\end{itemize}
	In other words, we fix $i^*(A,z)$ for a given unassigned $(A,z)$ pair, and then use that to determine the indices for the remaining pairs in the orbit of $(A,z)$. Note that this construction ensures that $i^*(\pi(A,z)) = \pi(i^*(A,z))$ holds for every permutation $\pi$, for all the pairs assigned in II. 
\end{itemize}

Then for every $(A,z) \in V^R \times \{\bot,\top\}^R$, we assign the dictator function $f_{A,z} := \chi_{i^*(A,z)}$ i.e., 
\[
f_{A,z}(x) = x(i^*(A,z)) \qquad\qquad\forall~x \in \{0,1\}^R
\]
Finally, the full assignment for the instance $f:V^R \times \{0,1\}^R \times \{\bot,\top\}^R \to \{0,1\}$ is defined as $f(A,x,z) := f_{A,z}(x) = x(i^*(A,z))$, for every $(A,x,z)$.

\begin{observation}			\label{obs:perm}
By construction, the above assignment satisfies the property $f(\pi(A,x,z)) = f(A,x,z)$ for every permutation $\pi$ and every choice of $(A,x,z)$.
\end{observation}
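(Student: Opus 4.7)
The plan is to reduce the claimed invariance of $f$ under the diagonal $\mathbbm{S}_R$-action to the equivariance of the index map $i^*$ built in the construction. Since $f(A,x,z) = x(i^*(A,z))$, unfolding definitions gives $f(\pi(A),\pi(x),\pi(z)) = \pi(x)\bigl(i^*(\pi(A),\pi(z))\bigr)$, so it suffices to show the equivariance $i^*(\pi(A),\pi(z)) = \pi(i^*(A,z))$ and then invoke the identity $\pi(x)(\pi(j)) = x(j)$ that follows from the paper's convention on how $\pi$ acts on $R$-dimensional vectors.

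I would verify the equivariance of $i^*$ in the two cases of its construction. In case I, where $|\Pi(A,z)| = 1$, the set $\Pi(A,z) = \{i : (A(i),z(i)) \in S \times \{\top\}\}$ is defined by a coordinate-wise condition, so a direct computation shows that $\Pi(\pi(A),\pi(z))$ is precisely the image of $\Pi(A,z)$ under the permutation on $[R]$ induced by $\pi$; hence its unique element is $\pi(i^*(A,z))$. In case II, the equivariance is built into the construction: after picking an anchor $(A,z)$ in each $\mathbbm{S}_R$-orbit and assigning $i^*(A,z) = 1$, every other member of the orbit is assigned $i^*(\pi(A,z)) = \pi(1)$, which is exactly the equivariance relation we need. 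Combining this with the unfolded expression yields $f(\pi(A,x,z)) = \pi(x)(\pi(i^*(A,z))) = x(i^*(A,z)) = f(A,x,z)$.

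The main point I would double-check is well-definedness of the case II anchor assignment when the tuple $(A,z)$ has a non-trivial stabilizer under the $\mathbbm{S}_R$-action: if $\sigma \neq \mathrm{id}$ satisfies $\sigma(A,z) = (A,z)$, then the rule $i^*(\sigma(A,z)) = \sigma(1)$ must agree with $i^*(A,z) = 1$, forcing the anchor value to lie in the fixed-point set of the orbit's stabilizer subgroup acting on $[R]$. I would handle this by refining the rule slightly, so that for each orbit we choose both the anchor and its image jointly so that the image is fixed by the stabilizer (which is possible whenever the orbit admits any equivariant assignment at all; the orbits that do not have negligible vertex-weight and can be handled arbitrarily without affecting subsequent analysis). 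Modulo this refinement, the argument above yields the observation for every $(A,x,z)$ and every $\pi$.
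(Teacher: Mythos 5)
Your argument follows the same route the paper intends: the observation is reduced to the equivariance of $i^*$, which is checked coordinate-wise in case I and is declared to be built into the orbit-transport rule in case II. One small bookkeeping point: with the paper's stated convention $\pi(x) = (x(\pi(1)),\ldots,x(\pi(R)))$, the correct identities are $\pi(x)(\pi^{-1}(j)) = x(j)$ and $\Pi(\pi(A),\pi(z)) = \pi^{-1}(\Pi(A,z))$, so the equivariance in case I actually reads $i^*(\pi(A,z)) = \pi^{-1}(i^*(A,z))$; the two inverses cancel in the final computation, so this is a relabeling issue shared with the paper's own phrasing, not an error.

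The well-definedness concern you raise about case II is a genuine gap in the construction as written, not a pedantic one. If $(A,z)$ has a non-trivial stabilizer under the $\mathbbm{S}_R$-action (e.g.\ $A$ and $z$ both constant, so the orbit is a single point), the rule ``assign $i^*(\pi(A,z)) = \pi(1)$ for every non-identity $\pi$'' is inconsistent, and no choice of $i^*$ on such a point can make $f(\pi(A,x,z)) = f(A,x,z)$ hold for all $x$ and $\pi$: taking $x$ to be the indicator of $i^*(A,z)$ and $\pi$ a transposition moving that index gives a counterexample. So the observation is literally false on these tuples, and the paper's ``by construction'' assertion silently skips this. Your repair --- choosing the anchor's image inside the fixed-point set of the orbit's stabilizer --- is exactly right whenever such a fixed point exists, which happens precisely when the map $i \mapsto (A(i),z(i))$ has a singleton level set. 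The residual orbits (every value repeated) admit no equivariant assignment at all, so there your only option is the one you sketch: show they carry negligible weight under the test distribution and absorb them as an additive error in the completeness calculation (the observation is only invoked to drop the $\pi_j$-averaging in Section \ref{sec:comp}, so an almost-everywhere version suffices). That measure estimate does need to be checked against the parameter choices ($R = 1/(r\beta\delta)$ can exceed $|V|$), but flagging and isolating it is the correct resolution; your write-up is more careful than the paper's on this point.
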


{\bf Analysis of assignment}. Now we analyze the assignment $f$. To begin with, observe that the relative weight of the set indicated by $f$ is
\begin{align*}
\Ex_{(A,x,z) \sim \cH}\left[f(A,x,z)\right] 
&= \Ex_{i \sim G_{\rm gap}}\Ex_{(A,z) \sim V^R \times \{\bot,\top\}^R_\beta}\Ex_{x \sim \{0,1\}^R_{\mu_i}}\left[f(A,x,z)\right] \\
&= \Ex_{i \sim G_{\rm gap}}\Ex_{(A,z) \sim V^R \times \{\bot,\top\}^R_\beta}\Ex_{x \sim \{0,1\}^R_{\mu_i}}\left[x(i^*(A,z))\right] \\  
&= \Ex_{i \sim G_{\rm gap}}\Ex_{(A,z) \sim V^R \times \{\bot,\top\}^R_\beta}\left[\mu_i\right] \\ 
&= \mu, 
\end{align*} 
i.e., $f$ is a feasible assignment for the instance $\cH$. Next, we shall bound the weight of constraints satisfied by $f$. Towards that, we shall need the following key lemma. 

\begin{lemma}			\label{lem:comp-main}
For any fixed edge $e = (j_1,\ldots,j_r) \in E_{\rm gap}$, let $p_e := \Pr_{X_e \sim \theta_e}\left[\psi(X_e) = 1\right]$ denote the probability of local distribution $\theta_e$ satisfying $e$. Then, under the test distribution for the fixing of $e$ we have
	\begin{equation}				\label{eqn:comp-main}
	\Pr_{(B'_j,x'_j,z'_j)_{j \in e}}\left[ \left(f(B'_{j},x'_j,z'_j) \right)_{j \in e} \in \psi^{-1}(1) \right] \gtrsim \rho^2\cdot p_e - O(r\eta),
	\end{equation}
	where for a realization of $(B'_j,x'_j,z'_j)_{j \in e}$, $\left(f(B'_j,x'_j,z'_j)\right)_{j \in e}$ denotes the $r$-ary ordered tuple 
	\[
	(f(B'_{j_1},x'_{j_1},z'_{j_1}),f(B'_{j_2},x'_{j_2},z'_{j_2}),\ldots,f(B'_{j_r},x'_{j_r},z'_{j_r})).
	\]
\end{lemma}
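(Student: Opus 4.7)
My plan is to show that with probability $\Omega(\rho^2)$ over the test randomness, there is a single distinguished coordinate $j^* \in [R]$ that forces $i^*(B'_i, z'_i) = j^*$ simultaneously for every $i \in e$. Once this is so, $f(B'_i, x'_i, z'_i) = x'_i(j^*)$ for each $i$, reducing $\psi$-acceptance to whether the single coordinate $(x'_i(j^*))_{i \in e}$ satisfies $\psi$, which I then tie back to $p_e$ using the distribution of the $x$-variables. By Observation \ref{obs:perm}, the per-vertex permutations in step 7 of the reduction do not affect $f$, so I work entirely with the un-permuted variables $(B'_i, x'_i, z'_i)_{i \in e}$.

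\textbf{Defining the good event.} Call $j \in [R]$ a \emph{master coordinate} if $A(j) \in S$, $z(j) = \top$, and $\xi_e(j) = 1$ all hold; these events are independent across $j$, each of probability $\delta\beta\rho^2$. The good event $\mathcal{G}$ asks for three things: (i) there is exactly one master coordinate $j^*$; (ii) neither the $\eta$-noise of the walks $G_\eta^{\otimes R}$ nor the re-randomizations $\sT^{(\mu_i)}_{1-\eta}, \sT^{(\beta)}_{1-\eta}$ corrupt coordinate $j^*$ for any $i \in e$, so that $B_i(j^*) = A(j^*) \in S$ and $z'_i(j^*) = z_i(j^*) = \top$, and consequently the leakage operator $M^{(\mu_i)}_{z'_i}$ passes $j^*$ through to give $B'_i(j^*) \in S$ and $x'_i(j^*) = \tilde x_i(j^*)$; and (iii) for every $j \ne j^*$ and every $i \in e$, $(B'_i(j), z'_i(j)) \notin S \times \{\top\}$. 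On $\mathcal{G}$, the set $\Pi(B'_i, z'_i)$ equals $\{j^*\}$ for every $i \in e$, so by construction of $i^*$ we have $i^*(B'_i, z'_i) = j^*$ and $f(B'_i, x'_i, z'_i) = \tilde x_i(j^*)$.

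\textbf{Probability estimates.} With $R = 1/(r\beta\delta)$, the expected number of master coordinates is $\rho^2/r$, and a direct Poisson-type computation yields $\Pr[\text{exactly one master}] = \Omega(\rho^2/r)$. Given such a master, the no-corruption event in (ii) holds with probability $(1-\eta)^{O(r)} \geq 1 - O(r\eta)$. For (iii), the marginal probability that a non-master coordinate $j$ has $(B'_i(j), z'_i(j)) \in S \times \{\top\}$ for some $i \in e$ is at most $r(\delta\beta + O(\eta\beta))$, so taking a product over the $R-1$ non-master coordinates gives $\Omega(1)$ using the parameter gap $\eta \leq \beta^2/r$. Finally, conditional on $\mathcal{G}$, the vector $(\tilde x_i(j^*))_{i \in e}$ is a $(1-\eta)$-noised copy of a draw from $\theta_e$, so with probability $1 - O(r\eta)$ the noise is trivial and $\psi$-satisfaction occurs with probability $p_e$. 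Multiplying these factors and absorbing the $r$-dependent constants into $\gtrsim$ yields the claimed $\gtrsim \rho^2 p_e - O(r\eta)$.

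\textbf{Main obstacle.} I expect the most delicate step to be verifying part (iii): that across the $R = \Theta(1/(r\beta\delta))$ coordinates, the unique master $j^*$ is truly the only index where any $(B'_i, z'_i)$ lands in $S \times \{\top\}$. The $(1-\eta)$-noise on $z_i$ can promote a $\bot$ to a $\top$, and the $\eta$-noisy walk can move $B_i(j)$ into $S$ even when $A(j) \notin S$; both contributions must be controlled using the parameter separation $\eta \ll \beta\delta$, without which the union bound over coordinates would fail.
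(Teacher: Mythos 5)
Your overall strategy matches the paper's: isolate a unique distinguished coordinate $j^*$, show that with probability $\Omega(\rho^2/r)$ every pair $(B'_i,z'_i)$ satisfies $\Pi(B'_i,z'_i)=\{j^*\}$ so that all $r$ long codes decode to the same dictator, and then read off $p_e$ from the coordinate-$j^*$ marginal of the $x$-variables. The paper packages this slightly differently — it first conditions on $|\Pi(A,z)|=1$ and $\xi_e(i^*)=1$, then partitions the remaining coordinates into the classes $Q_0,Q_1,Q_2$ and bounds each via Chernoff plus per-class arguments (Claims \ref{cl:c1}--\ref{cl:c4}) — but your ``exactly one master coordinate'' event and your product bound over the non-master coordinates amount to essentially the same computation.

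There is, however, one genuine gap, and it sits exactly where the YES-case structure of the SSE instance must be used. Your event (ii) asks that the $\eta$-noise of the walks not ``corrupt'' coordinate $j^*$, ``so that $B_i(j^*)=A(j^*)$.'' But $G_\eta^{\otimes R}$ is not a lazy operator: at every coordinate it moves to a uniformly random neighbor with probability $1-\eta$ (and to a uniform vertex with probability $\eta$), so $\Pr[B_i(j^*)=A(j^*)]$ is on the order of one over the degree of $G$ — negligible, not $1-O(\eta)$. The event you actually need is $B_i(j^*)\in S$, and the only reason this holds with probability at least $1-(\epsilon+\eta)$ is that $S$ is a YES-case witness with $\phi_G(S)\le\epsilon$; this is precisely Claim \ref{cl:c1} in the paper and it is the step where the non-expansion of $S$ enters the completeness analysis. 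Your proposal never invokes $\phi_G(S)\le\epsilon$, and without it the argument collapses: for an expanding $S$ the walk scatters $B_i(j^*)$ out of $S$ and the decoding fails. The resulting error term is $O(r(\epsilon+\eta))$ rather than $O(r\eta)$, which is harmless since $\epsilon\ll\eta$ in the parameter setting, but the justification must route through the expansion bound. A smaller point: your per-coordinate bound for (iii) should be stated after conditioning on ``$j$ is not a master,'' which costs a factor $(1-\delta\beta\rho^2)^{-1}$; since the coordinates are independent, the product bound then indeed gives $(1-O(r\delta\beta))^{R}=\Omega(1)$ as you claim, using $R=1/(r\beta\delta)$.
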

We defer the proof of Lemma \ref{lem:comp-main} for now and continue with the completeness analysis. We can bound the total weight of constraints satisfied by the assignment $f$ as  
\begin{align*}
	&\Ex_{e \sim E_{\rm gap}}\Pr_{(B'_j,x'_j,z'_j)_{j \in e}}\Pr_{\{\pi_j\}_{j \in e}}\left[\left( f\Big(\pi_j(B'_j,x'_j,z'_j)\Big)\right)_{j \in e} \in \psi^{-1}(1)\right] \\
	&= \Ex_{e \sim E_{\rm gap}}\Pr_{(B'_j,x'_j,z'_j)_{j \in e}}\left[ \left(f(B'_j,x'_j,z'_j)\right)_{j \in e} \in \psi^{-1}(1)\right] 
	\tag{Using Observation \ref{obs:perm}}\\
	&\gtrsim \Ex_{e \sim E_{\rm gap}}\left[\frac{\rho^2}{r}\cdot p_e - O(r \eta)\right]		\tag{Using Lemma \ref{lem:comp-main}} \\ 
	&= \Ex_{e \sim E_{\rm gap}}\big[\frac{\rho^2}{r}\cdot p_e \big] - O(r \eta) \\
	&= \frac{\rho^2}{r}\Ex_{e \sim E_{\rm gap}}\Pr_{X_e \sim \theta_e}\left[\psi(X_e) = 1\right] - O(r \eta)		\tag{Using definition of $p_e$} \\
	&\geq \frac{\rho^2}{r}\cdot c - O(r \eta),
\end{align*} 
where the last step uses the completeness value of the gap instance $(G_{\rm gap},\theta)$.

\subsection{Proof of Lemma \ref{lem:comp-main}}

The proof of the above lemma follows along the lines of the completeness analysis of D$k$SH (Section $9.2$) from \cite{GL22arXiv}. Without loss of generality, assume $e = [r]$. We introduce some additional notations and definitions that are used in the proof. Define the set $V_{\rm good} \subseteq V^R \times \{\bot,\top\}^R$ as 
\[
V_{\rm good} := \Big\{ (\hat{A},\hat{z}) \in V^R \times \{\bot,\top\}^R~\Big|~|\Pi(\hat{A},\hat{z})| = 1 \Big\}.
\]
In other words, $V_{\rm good}$ is the set of $(A,z)$ pairs for which $i^*(A,z)$ was assigned in Step I in the definition of $i^*$. The following useful observation follows directly from the definition.
\begin{observation}					\label{obs:v-good}
	Suppose $(\hat{A},\hat{z}) \in V_{\rm good}$, and let $\hat{i} = i^*(\hat{A},\hat{z})$. Then $\hat{A}(\hat{i}) \in S$ and $\hat{z}(\hat{i}) = \top$.
\end{observation}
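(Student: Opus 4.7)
The observation is essentially a direct unpacking of the definitions of $V_{\rm good}$, $\Pi(\cdot,\cdot)$, and the map $i^*$ on Step~I, so the proof plan is just to chase the definitions in order. First I would note that by the definition of $V_{\rm good}$, membership $(\hat A,\hat z)\in V_{\rm good}$ means exactly that $|\Pi(\hat A,\hat z)|=1$, i.e.\ there is a unique index $\hat i\in[R]$ with $\Pi(\hat A,\hat z)=\{\hat i\}$.

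Next I would invoke Step~I in the construction of $i^*$: for every pair $(A,z)$ with $|\Pi(A,z)|=1$, the map is defined by setting $i^*(A,z)$ equal to the unique element of $\Pi(A,z)$. Applied to $(\hat A,\hat z)$, this shows that $\hat i=i^*(\hat A,\hat z)$ is precisely the unique index in $\Pi(\hat A,\hat z)$, so in particular $\hat i\in\Pi(\hat A,\hat z)$.

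Finally, by the very definition
\[
\Pi(\hat A,\hat z)=\{\,i\in[R]\mid (\hat A(i),\hat z(i))\in S\times\{\top\}\,\},
\]
membership $\hat i\in\Pi(\hat A,\hat z)$ is equivalent to $(\hat A(\hat i),\hat z(\hat i))\in S\times\{\top\}$, which gives both $\hat A(\hat i)\in S$ and $\hat z(\hat i)=\top$, as required.

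There is no real obstacle here: the statement is a tautological consequence of how $V_{\rm good}$ and the first branch of the definition of $i^*$ were set up. The only thing one must be careful about is that $\hat i$ in the observation refers to the index produced by $i^*$, and one should verify that this is indeed the same as the unique element of $\Pi(\hat A,\hat z)$ (which is immediate from Step~I, not Step~II, since $(\hat A,\hat z)\in V_{\rm good}$ falls into the Step~I case).
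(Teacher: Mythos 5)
Your proof is correct and follows exactly the same route as the paper's: use membership in $V_{\rm good}$ to get $\Pi(\hat A,\hat z)=\{\hat i\}$ with $\hat i = i^*(\hat A,\hat z)$ via Step~I of the definition of $i^*$, and then unpack the definition of $\Pi(\hat A,\hat z)$ to conclude $(\hat A(\hat i),\hat z(\hat i))\in S\times\{\top\}$. Nothing is missing.
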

\begin{proof}
	Since $(\hat{A},\hat{z}) \in V_{\rm good}$, we have that $\Pi(\hat{A},\hat{z}) = \{\hat{i}\}$ for some $\hat{i} \in [R]$, and hence using the definition of $i^*$ we must have $\hat{i} = i^*(\hat{A},\hat{z})$. The claim now follows due to the definition of $\Pi(\hat{A},\hat{z})$.
\end{proof}
We also define some events that will be useful in the analysis:
\begin{gather*}
	\cE_{\rm good} := \Big\{(A,z) \in V_{\rm good} \Big\}, \\
	\cE_1 := \Big\{\xi_e(i^*(A,z)) = 1\} \wedge \cE_{\rm good},	 \\
	\cE_2 := \Big\{ \forall j \in [r], \Pi(B'_j,z'_j) = \Pi(A,z) \Big\}.
\end{gather*}
We shall use the above events to condition on and then bound the probability of the event $\left(f(B'_j,x'_j,z'_j) \right)_{j \in [r]} \in \psi^{-1}(1)$. To that end, we observe that
\begin{align}
	&\Pr_{\{(B'_j,x'_j,z'_j)\}_{j \in [r]}}
	\left[\left(f(B'_j,x'_j,z'_j)\right)_{j \in [r]} \in \psi^{-1}(1)\right] 		\non\\
	&\geq \Pr\left[\cE_1 \right]\Pr\left[\cE_2 \Big| \cE_{1}\right] 
	\Pr\left[\left(f(B'_j,x'_j,z'_j) \right)_{j \in [r]} \in \psi^{-1}(1) \Big|~\cE_1 \wedge \cE_{2}\right]				\label{eqn:comp-1} 
\end{align}
We lower bound the probability of the various events in the above expression one-by-one. To begin with, observe that 
\[				
\Pr\Big[\cE_{\rm good}\Big] = \sum_{i \in [R]}\Pr\left[\Pi(A,z) = \{i\}\right] = \sum_{i \in [R]} \beta \delta\left(1 - \beta \delta\right)^{R-1} \geq \frac{e^{-1}}{r},
\]
where the last inequality uses $R = 1/(\beta\delta r)$ from our choice of parameters. Since $(A,z)$ and $\xi_e$ are independent, then using the above bound and the fact that $\xi_e \sim\{0,1\}^R_{\rho^2}$, we get that
\begin{equation}					\label{eqn:comp-2}
	\Pr\left[\cE_1\right] = \Pr_{(A,z)}\left[\cE_{\rm good}\right] \Pr_{\xi_e}\left[\xi_e(i^*(A,z)) = 1 \Big| (A,z)\right] \geq \frac{e^{-1}}{r}\cdot \rho^2,
\end{equation}
For the second term of \eqref{eqn:comp-1}, we have the following lemma which is the main technical component of the proof.
\begin{lemma}				\label{lem:ix-eq}
	Under the test distribution we have,
	\[
	\Pr_{(A,x,z),\{(B'_j,x'_j,z'_j)\}^r_{j = 1}}\left[\forall j \in [r],~\Pi(B'_j,z'_j) = \Pi(A,z) \Big|~\cE_{1} \right] \geq
	e^{-5}.
	\]
\end{lemma}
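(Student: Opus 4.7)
My plan is to exploit the fact that the sampling process in the test distribution factorizes across the $R$ coordinates, so that the global event $\forall j: \Pi(B'_j,z'_j)=\Pi(A,z)$ becomes a product of local per-coordinate events. I fix any $i^*\in[R]$ and compute $\Pr[\forall j: \Pi(B'_j,z'_j)=\{i^*\} \mid \cE_1 \wedge i^*(A,z)=i^*]$; by symmetry this bound is independent of $i^*$. Conditioning on $\cE_1\wedge\{i^*(A,z)=i^*\}$ amounts to imposing the following local conditions: at coordinate $i^*$, $A(i^*)\in S$, $z(i^*)=\top$, and $\xi_e(i^*)=1$; at each coordinate $k\neq i^*$, $(A(k),z(k))\notin S\times\{\top\}$ (with $\xi_e(k)$ unconstrained). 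Because every downstream variable $(B_j(k), z_j(k), z'_j(k), B'_j(k))$ depends only on $(A(k),z(k),\xi_e(k))$ and fresh coordinate-wise randomness, the conditional probability factors as $P_{i^*}\cdot\prod_{k\neq i^*} P_k$.

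For $P_{i^*}$, I use that $\xi_e(i^*)=1$ forces $z_j(i^*)=z(i^*)=\top$ for every $j\in[r]$; then $z'_j(i^*)=\top$ with probability at least $1-\eta$ since $z'_j$ is a $(1-\eta)$-correlated copy of $z_j$. Conditioned on $z'_j(i^*)=\top$, the folding operator $M^{(\mu_i)}_{z'_j}$ sets $B'_j(i^*)=B_j(i^*)$, which is an $\eta$-noisy random-walk step from $A(i^*)\in S$; using the YES-case expansion guarantee $\phi_G(S)\leq\epsilon$, this lies in $S$ with probability at least $(1-\eta)(1-\epsilon)$. Since the $\sT$ and random-walk operators are applied independently across $j$, we obtain $P_{i^*}\geq \bigl((1-\eta)(1-\eta-\epsilon)\bigr)^r \geq 1-O(r(\eta+\epsilon))$, which is $\geq 1-o(1)$ under our parameter choices.

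For $P_k$ with $k\neq i^*$, I first observe that the unconditioned marginal satisfies $\Pr[(B'_j(k),z'_j(k))\in S\times\{\top\}]=\beta\delta$: the noise and folding operators preserve the stationary (uniform) distribution on $V$ and the $\beta$-biased distribution on $\{\bot,\top\}$, and one checks directly that $B'_j(k)$ is uniform in $V$ independently of $z'_j(k)$. Since the conditioning event $\{(A(k),z(k))\notin S\times\{\top\}\}$ has probability $1-\beta\delta\geq 1/2$, the conditional single-$j$ probability is at most $2\beta\delta$, and a union bound over $j\in[r]$ yields $P_k\geq 1-2r\beta\delta$. Multiplying across the $R-1$ coordinates and using $R=1/(r\beta\delta)$ gives $\prod_{k\neq i^*}P_k\geq (1-2r\beta\delta)^{R}\geq e^{-4rR\beta\delta}=e^{-4}$. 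Combining the two pieces yields
\[
P_{i^*}\cdot\prod_{k\neq i^*}P_k \;\geq\; \bigl(1-O(r(\eta+\epsilon))\bigr)\,e^{-4} \;\geq\; e^{-5},
\]
provided (as ensured by the parameter settings $\eta,\epsilon \ll 1/r$) that $O(r(\eta+\epsilon))\leq 1-e^{-1}$.

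The only subtle point is verifying the coordinate-wise factorization under the global conditioning $\cE_{\rm good}=\{|\Pi(A,z)|=1\}$; this is clean because $\Pi(A,z)=\{k:(A(k),z(k))\in S\times\{\top\}\}$ is defined coordinate-wise, so $\cE_{\rm good}\cap\{i^*(A,z)=i^*\}$ decomposes into one local event per coordinate. I expect the main calculational care to be in the $P_k$ bound: getting the factor $\beta\delta$ (rather than $\beta$ or $\delta$ alone) requires using both the independence of $B'_j(k)$ from $z'_j(k)$ under the marginal and the mild distortion from the conditioning, which is exactly where the choice $R=1/(r\beta\delta)$ becomes essential so that the product $\prod_{k\neq i^*}P_k$ stays bounded below by a constant.
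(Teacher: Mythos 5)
Your proof is correct, and for the coordinates $k \neq i^*$ it takes a genuinely different and cleaner route than the paper. The paper conditions on the realized partition $\cQ = (\{i^*\},Q_0,Q_1,Q_2)$ of $[R]$ according to which of $A(k)\in S$ and $z(k)=\top$ holds, invokes a Chernoff bound to control $|Q_0|$ and $|Q_1|$ (the event $\cE'$, whose failure probability must then be re-weighted by $\Pr[\cE_1]^{-1}$), and proves four separate claims, one per cell of the partition, each using a different mechanism (the $\beta$-bias of $z'_j$ for $Q_0$, the expansion of $S$ for $Q_1$, both for $Q_2$). You replace all of that with a single uniform estimate: the stationarity of the noisy walk and of the $\beta$-biased measure gives the exact marginal $\Pr[(B'_j(k),z'_j(k))\in S\times\{\top\}]=\beta\delta$, and the Bayes inequality $\Pr[E\mid F]\le \Pr[E]/\Pr[F]$ with $\Pr[F]=1-\beta\delta\ge 1/2$ converts this into a conditional bound of $2\beta\delta$ per $j$ without any case analysis or concentration argument; the choice $R=1/(r\beta\delta)$ then yields the same constant $e^{-4}$ that the paper extracts from its $Q_0$ term. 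Your approach buys simplicity and avoids the $\Pr[\neg\cE'\mid\cE_1]$ bookkeeping; the paper's finer partition would only be needed if one wanted per-cell information (e.g., that the expansion of $S$ is what controls the $Q_1$ coordinates), which the lemma statement does not require. The two places where you do need structure — the coordinate-wise factorization of the conditioning $\cE_1\wedge\{i^*(A,z)=i^*\}$, and the expansion guarantee at coordinate $i^*$ itself — are both handled correctly, and the final arithmetic $(1-O(r(\eta+\epsilon)))\cdot e^{-4}\ge e^{-5}$ goes through under the stated parameter choices.
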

\begin{proof}
	Let us denote $\cA := (A,z)$, and use $\cB$ to denote the collection of random variables $\{(B_1,z'_1),\ldots,(B_r,z'_r)\}$. To begin with, observe that for every $j \in [r]$, we have $\Pi(B'_j,z'_j) = \Pi(B_j,z'_j)$ (for e.g., see Claim 9.5~\cite{GL22arXiv}), and hence in the rest of the proof, we will simply bound the probability of the event that $\Pi(B_j,z'_j) = \Pi(A,z)$ holds for every $j \in [r]$. To that end, for a fixing of $(A,z) \in V^R \times \{\bot,\top\}^R$, let us define the sets $Q_0,Q_1,Q_2 \subseteq [R]$ as 
	\begin{align*}
		1.& ~ Q_0:= \Big\{i \in [R] \Big|~(A(i),z(i)) \in S \times \{\bot\}\} \\
		2.& ~ Q_1:= \Big\{i \in [R] \Big|~(A(i),z(i)) \in S^c \times \{\top\}\} \\
		3.& ~ Q_2:= \Big\{i \in [R] \Big|~(A(i),z(i)) \in S^c \times \{\bot\}\}.
	\end{align*}
	By definition it follows that (without conditioning on $\cE_1$) we have
	\[
	\Ex_{(A,z)}\left[|Q_0|\right] \leq R\cdot\delta = \frac{1}{r \beta} 
	\ \ \ \ \ \ \textnormal{and} \ \ \ \ \ \ 
	\Ex_{(A,z)}\left[|Q_1|\right] \leq R \cdot \beta = \frac{1}{r\delta}.
	\]
	Let $\cE'$ be the event that $\cQ$ satisfies $|Q_0| \leq 4/(r\beta)$ and $|Q_1| \leq 4/(r\delta)$. Then, using Chernoff bound we have that 
	\[
		\Pr_{(A,z)}\left[\neg\cE'\right] \leq \Pr_{(A,z)}\left[|Q_0| \geq \frac{4}{r\beta}\right] + \Pr_{(A,z)}\left[|Q_1| \geq \frac{4}{r\delta}\right] \leq e^{-2/r\beta} + e^{-2/r\delta} \leq 2e^{-2/r\beta},
	\]
	where the last inequality follows due to $\delta \leq \epsilon \ll \beta$. Using the above bound and \eqref{eqn:comp-2}, we then have 
	\[
	2e^{-2/(r\beta)} \geq \Pr_{(A,z)}\left[\neg \cE'\right] \geq \Pr_{(A,z)}\left[\cE_{1} \right]\Pr_{(A,z)}\left[\neg \cE' \Big|~\cE_1 \right] \geq \frac{e^{-1}\rho^2}{r}\Pr_{(A,z)}\left[\neg \cE' \Big|~\cE_1 \right],
	\]
	which on rearranging gives us that 
	\begin{equation}				\label{eqn:pr-val}
		\Pr_{(A,z)}\left[\neg \cE' \Big|~\cE_{1}\right] \leq e^{-1/(r\beta)}.
	\end{equation}
	
	Now we condition on the events $\cE'$ and $\cE_1 = \{\xi_e(i^*(A,z)) = 1\} \wedge \cE_{\rm good}$, and fix a realization of $\{i^*(A,z),Q_0,Q_1,Q_2\}$ satisfying $\cE'$ and $\cE_1$. Note that conditioned on these events, we have that $\Pi(A,z) = \{i^*(A,z)\}$ and furthermore, the sets $Q_0,Q_1,Q_2$, and the index $i^*(A,z)$ together form a partition of $[R]$. We denote this partition as $\cQ:= (\{i^*\},Q_0,Q_1,Q_2)$. For every fixing of $\cQ$ satisfying $\cE_1$ and $\cE'$, we bound the probability of the event that $\Pi(B_j,z'_j) = \Pi(A,z)$ holds for every $j \in [r]$. This entails lower bounding the probability of the following events:
	\begin{itemize}
		\item Denoting $i_A := i^*(A,z)$, for every $j \in [r]$, we have $(B_j(i_A),z'_j(i_A)) \in S \times \{\top\}$.
		\item For every $\ell \in \{0,1,2\}$, for every $i \in Q_\ell$ and $j \in [r]$ , we have $(B_j(i),z'_j(i)) \notin S \times \{\top\}$.
	\end{itemize}
	We handle these events one-by-one in the following claims.

	\begin{claim}				\label{cl:c1}
		For a fixed $\cQ$ satisfying the events $\cE_1$ and $\cE'$ we have
		\[
		\Pr_{\cA,\cB|\cQ}\left[\forall j \in [r], (B_j(i_A),z'_j(i_A)) \in S \times \{\top\}\right] \geq 1 - 2r(\epsilon + \eta).
		\]
	\end{claim}
	\begin{proof}
	Note that under the conditioning we have $(A,z) \in V_{\rm good}$, and hence using Observation \ref{obs:v-good} we have that $z(i_A) = \top$ (Observation \ref{obs:v-good}). Furthermore, conditioning on $\cE_1$ implies that $\xi_e(i_A) = 1$, which in turn implies that $z_j(i_A) = z(i_A)$ for every $j \in [r]$ (from Step 4b. of Figure \ref{fig:test-1}). Putting these together, we get that
	\[
	\Pr_{(z_j(i_A))^r_{j = 1}|z(i_A) = \top}\left[\forall j \in [r], z_j(i_A) = \top\right]  = 1.
	\]
	Furthermore, for every $j \in [r]$, observe that $z'_j(i_A)$ is an independent $(1 - \eta)$-correlated copy of $z_j(i_A)$ and hence
	\begin{equation}				\label{eqn:bound}
	\Pr_{(z'_j(i_A))^r_{j = 1}|z(i_A)} = \top\left[\forall j \in [r],~z'_j(i_A) = \top\right] \geq \Pr_{(z_j(i_A))^r_{j = 1}|z(i_A)} = \top\left[\forall j \in [r],~ z_j(i_A) = \top\right] - r\eta = 1 - r\eta.
	\end{equation}
	On the other hand, fixing $\cQ$, $A(i_A)$ is a uniformly random vertex in $S$ and $B_1(i_A),\ldots,B_r(i_A)$ are all sampled using $1$-step random walks from $A(i_A)$ in the graph $G_\eta$. Using these observations we can bound:
	\begin{align}
		&\Pr_{\cA,\cB|\cQ}\left[\forall j \in [r],~(B_j(i_A),z'_j(i_A)) \in S \times \{\top\}\right] 		\non\\
		&=\Pr_{\{B_j(i_A)\}_j \sim G_{\eta}(A(i_A))}\left[\forall j \in [r],~ B_j(i_A) \in S \Big| A(i_A) \in S\right] \Pr_{z'_j(i_A)|z(i_A) = \top }\left[\forall j \in [r],~ z'_j(i_A) = \top\Big| z(i_A) = \top\right] 		\non\\
		&\overset{1}{\geq}\left(1 - r(\epsilon + \eta)\right)\left(1 - r\eta\right) 		\non\\
		&\geq 1 - 2r(\epsilon + \eta),			\non
	\end{align}
	where in step $1$, the bound on the first term uses the fact that the expansion of the set $S$ in the graph $G_\eta$ is at most $\epsilon + \eta$, and the second term uses the bound from \eqref{eqn:bound}.
	\end{proof}

	\begin{claim}					\label{cl:c2}
		For a fixed $\cQ$ satisfying the events $\cE_1$ and $\cE'$ we have
		\[
		\Pr_{\cA,\cB|\cQ}\left[\forall i \in Q_0,\forall j \in [r],~(B_j(i),z'_j(i)) \notin S \times \{\top\}\right] \geq e^{-4}.
		\]
	\end{claim}
	\begin{proof}	
	Observe that for any fixed $i \in Q_0$ we have $z(i) = \bot$, and for $j \in [r]$, fixing $z(i)$, marginally $z'_j(i)$ is a $\rho':=\rho^2(1 - \eta)$ correlated copy of $z(i)$. Hence,
	\begin{align}
		&\Pr_{\cA,\cB|\cQ}\left[\forall i \in Q_0,\forall j \in [r],~(B_j(i),z'_j(i)) \notin S \times \{\top\}\right] 		\non\\
		&\geq \Pr_{\cA,\cB|\cQ}\left[\forall i \in Q_0,\forall j \in [r],~ z'_j(i) = \bot \right] 		\non\\
		&= \prod_{i \in Q_0} \Pr_{\cA,\cB|\cQ}\left[\forall j \in [r],~ z'_j(i) = \bot \right] 		\non\\
		&\geq \prod_{i \in Q_0} \left(1 - \sum_{j \in [r]} \Pr_{z'_j(i) \underset{\rho'}{\sim}z(i)} \left[z'_j(i) = \top \Big| i \in Q_0 \right] \right) 		\non\\
		&\overset{1}{=} \prod_{i \in Q_0} \left(1 - \sum_{j \in [r]} \Pr_{z'_j(i) \underset{\rho'}{\sim}z(i)} \left[z'_j(i) = \top \Big| z(i) = \bot \right] \right) \non \\
		&\overset{2}{=} \prod_{i \in Q_0}\left(1 - r(1 - \rho')\beta\right)			\non \\
		&\overset{3}{\geq} (1 - r(1 - \rho')\beta)^{4/r\beta}		\non \\
		&\geq e^{-4} 
		\non
	\end{align}
	where step $1$ follows using the definition of $Q_0$, step $2$ follows from the distribution of $z'_j(i)$ conditioned on $z(i) = \bot$, and step $3$ follows using that the bound $|Q_0| \leq 4/r\beta$ holds conditioned on $\cE'$.
	\end{proof}

	\begin{claim}				\label{cl:c3}
		For a fixed $\cQ$ satisfying the events $\cE_1,\cE'$ we have
		\[
		\Pr_{\cA,\cB|\cQ}\left[\forall i \in Q_1,\forall j \in [r],~(B_j(i),z'_j(i)) \notin S \times \{\top\}\right] \geq 1 - 8(\epsilon + \eta).
		\]
	\end{claim}
	\begin{proof}
	Here we observe that fixing $\cQ$, for every $i \in Q_1$, $A(i)$ is a uniformly random vertex in $S^c$. Hence we can proceed as follows:
	\begin{align}
		&\Pr_{\cA,\cB|\cQ}\left[\forall i \in Q_1,\forall j \in [r],~(B_j(i),z'_j(i)) \notin S \times \{\top\}\right] 			\non\\
		&\geq \Pr_{\cA,\cB|\cQ}\left[\forall i \in Q_1,\forall j \in [r],~B_j(i) \in S^c\right] 			\non\\
		&\geq 1 - \sum_{i \in Q_1} \sum_{j \in [r]}\Pr_{B_j(i) {\sim} G_\eta(A(i))} \left[B_j(i) \in S \Big| i \in Q_1\right] 		\non\\ 
		&\overset{1}{=} 1 - \sum_{i \in Q_1} \sum_{j \in [r]}\Pr_{B_j(i) {\sim} G_\eta(A(i))}\left[B_j(i) \in S \Big| A(i) \in S^c\right] 	\non\\ 
		&\overset{2}{\geq} 1 - |Q_1| r \cdot 2(\eta + \epsilon)\delta 		\non\\
		&\overset{3}{\geq} 1 - 8(\eta + \epsilon),				\non
	\end{align}
	where step $1$ again follows using the definition of $Q_1$. In step $2$, we use that 
	\[
	\Pr_{B_j(i) \sim G_{\eta}(A_{j}(i))}\left[B_j(i) \in S \Big| A_j(i) \in S^c\right]
	\leq \frac{(\epsilon + \eta)\delta}{(1 - \delta)} \leq 2(\epsilon + \eta)\delta,
	\]
	where the first inequality follows using the observation that the expansion of $S$ in $G_\eta$ is at most $\epsilon + \eta$. In step $3$, we use the bound on $|Q_1| \leq 4/r\delta$ from the conditioning on the event $\cE'$.
	\end{proof}

	\begin{claim}				\label{cl:c4}
		For a fixed $\cQ$ satisfying the events $\cE_1,\cE'$ we have
		\[
		\Pr_{\cA,\cB|\cQ}\left[\forall i \in Q_2,\forall j \in [r],~(B_j(i),z'_j(i)) \notin S \times \{\top\}\right] \geq 1 - 2(\epsilon + \eta).
		\]
	\end{claim}
	\begin{proof}
	This case follows by combining the arguments for $Q_0$ and $Q_1$ indices:
	\begin{align}
		&\Pr_{\cA,\cB|\cQ}\left[\forall i \in Q_2,\forall j \in [r],~(B_j(i),z'_j(i)) \notin S \times \{\top\}\right] 		\non\\
		&\geq 1 - \sum_{i \in Q_2} \sum_{j \in [r]}\Pr_{B_j(i) {\sim} G_\eta(A(i))}\Pr_{z'_j(i) \underset{\rho'}{\sim}z(i)} \left[(B_j(i),z'_j(i)) \in S \times \{\top\}  \Big| i \in Q_2\right] 			\non\\
		&\overset{1}{=} 1 - \sum_{i \in Q_2} \sum_{j \in [r]}\Pr_{B_j(i) {\sim} G_\eta(A(i))} \left[B_j(i) \in S \Big| A(i) \in S^c\right] \cdot \Pr_{z'_j(i) \underset{\rho'}{\sim} z(i)} \left[z'_j(i) = \top \Big| z(i) = \bot\right] 		\non\\
		&\overset{2}{\geq} 1 - |Q_2| r \cdot 2(\epsilon + \eta)\delta \cdot (1  - \rho') \beta 		\non\\
		&\overset{3}{\geq} 1 - 2(\eta + \epsilon),					\non
	\end{align}
	where in step $1$, we use the definition of $Q_2$, as well as the fact that for every $i$, $B_j(i) \sim G_{\eta}(A(i))$ and $z'_j(i) \sim z(i)$ are drawn independent of each other. In step $2$, we bound the two probability terms using the ``step-$2$'' arguments from the analysis of $Q_0$ and $Q_1$, and step $3$ uses the elementary bound $|Q_2| \leq R = 1/r\beta\delta$.
	\end{proof}

	{\bf Putting Things together}. Now we put together the bounds for the probabilities of the various events. For brevity, denote $\cE'' := \cE' \wedge  \cE_{1}$. Then,
	\begin{align*}
	&\Pr_{\cA,\cB}\left[\forall j \in [r],~\Pi(B_j,z'_j)  = \Pi(A,z) \Big|~\cE_1 \right] \\
	&\geq \Pr_{\cA,\cB}\left[\forall j \in [r],~~\Pi(B_j,z'_j)  = \Pi(A,z) \Big|~\cE''\right] - \Pr\left[\neg \cE' \Big|~\cE_{1} \right] \\
	&\overset{1}{\geq}  \Ex_{\cQ|\cE''}\Pr_{\cA,\cB|\cQ}\left[\forall j \in [r],~~\Pi(B_j,z'_j)  = \Pi(A,z)\Big|~\cE'',\cQ\right] - e^{-1/r\beta} \\
	&\overset{2}{=} \Ex_{\cQ|\cE''}\Bigg[\Pr_{\cA,\cB|\cQ}\left[\forall j \in [r], (B_j(i_A),z'_j(i_A)) \in S \times \{\top\}\Big|~\cE'',\cQ\right] \cdot \\
	&\qquad\qquad\qquad\prod^2_{\ell = 0} \Pr_{\cA,\cB|\cQ}\left[\forall i \in Q_\ell, j \in [r], (B_j(i),z'_j(i)) \notin S \times \{\top\}\right] \Bigg] - e^{-1/r\beta} \\
	&\overset{3}{\geq} (1 - 2r(\epsilon + \eta)) \cdot e^{-4} \cdot(1 - 8(\epsilon + \eta))\cdot(1 - 2(\epsilon + \eta)) - e^{-1/r\beta} \\
	&\overset{4}{\geq} e^{-5},
	\end{align*}
	where step $1$ follows using the bound from \eqref{eqn:pr-val}. In step $2$, we use the following observation: for a fixing of $\cQ$, the distribution of rows across the sets $\{i_A\}, Q_0,Q_1,Q_2$ are all independent. Step $3$ uses the bounds from Claims \ref{cl:c1}, \ref{cl:c2}, \ref{cl:c3}, and \ref{cl:c4}. Step $4$ follows from our choices of parameters for $\beta,\eta$, and $\epsilon$. This concludes the proof of Lemma \ref{lem:ix-eq}.
\end{proof}

Continuing with the proof of Lemma \ref{lem:comp-main}, we now bound the third probability term from \eqref{eqn:comp-1}. For brevity, we introduce another event $\cE_3 := \cE_1 \wedge \cE_{2}$. Then, randomizing over the choice of $\cV := (A,z),(B'_1,x'_1,z'_1),\ldots,(B'_r,x'_r,z'_r)$, we have
\begin{align}
	&\Pr_{\cV}\bigg[~\psi\Big(f(B'_1,x'_1,z'_1),\ldots,f(B'_2,x'_2,z'_2)\Big) = 1 \Big|~\cE_1 \wedge \cE_{2}\bigg]		\non \\ 		
	&\overset{1}{=} \Pr_{\cV}\bigg[~\psi\Big(x'_1(i^*(B'_1,z'_1)),\ldots,x'_r(i^*(B'_r,z'_r))\Big) = 1 \Big|~\cE_3\bigg]		\non\\
	&\overset{2}{=} \Pr_{\cV}\bigg[~\psi\Big(x'_1(i^*(A,z)),\ldots,x'_r(i^*(A,z))\Big) = 1 \Big|~\cE_3\bigg] 			\non\\
	&\overset{3}{=} \Pr_{\cV}\bigg[~\psi\Big(\tilde{x}_1(i^*(A,z)),\ldots,\tilde{x}_r(i^*(A,z))\Big) = 1 \Big|~\cE_3\bigg]	\non\\
	&\overset{4}{\geq} \Pr_{\cV}\bigg[~\psi\Big({x}_1(i^*(A,z)),\ldots,{x}_r(i^*(A,z))\Big) = 1 \Big|~\cE_3\bigg] - r \eta	\non\\
	&\overset{5}{=}  p_e - r \eta.			\label{eqn:comp-3}
\end{align}
	 
We justify that each of the above steps hold conditioning on $\cE_3 = \cE_1 \wedge \cE_2$. For step $1$, we use the definition of the assignment $f$. Step $2$ uses the following argument: since we condition on $\cE_2$, we have $\Pi(B'_j,z'_j) = \Pi(A,z)$ for every $j \in [r]$. This implies the following for every $j$:
\begin{itemize}
	\item[(i)] Since $(A,z) \in V_{\rm good}$ (due to conditioning on $\cE_1$), we have $\Pi(B'_j,z'_j) = \Pi(A,z) = \{i^*(A,z)\}$, and hence $(B'_j,z'_j) \in V_{\rm good}$. 
	\item[(ii)] Again, since (i) implies that for every $j \in [r]$, we have $|\pi(B'_j,z'_j)| = 1$, we have $i^*(B'_j,z'_j) = i^*(A,z)$.
\end{itemize}
For step $3$, we build upon the above observations and further note that for every $j \in [r]$ we can argue the following:
\begin{itemize}
	\item[(iii)] Since $(B'_j,z'_j) \in V_{\rm good}$ and $i^*(B'_j,z'_j) = i^*(A,z)$, using Observation \ref{obs:v-good} we have $z'_j(i^*(A,z)) = \top$.
	\item[(iv)] Since under the test distribution we have $(B'_j,x'_j) \sim M^{(\mu_j)}_{z'_j}(B_j,\tilde{x}_j)$, this along with (iii) implies that $x'_j(i^*(A,z)) = \tilde{x}_j(i^*(A,z))$ with probability $1$.
\end{itemize}
Step $4$ follows by observing that for every $j \in [r]$, $\tilde{x}_j(i)$ is a $(1 - \eta)$-correlated copy of $x_j(i)$, and hence the event $(\tilde{x}_j(i^*(A,z))) = x_j(i^*(A,z))$ holds for every $j \in [r]$ with probability least $1 - r\eta$. Finally, Step $5$ uses that since  $(x_j(i^*(A,z))_{j \in [r]}$ is sampled from $\theta_{[r]}$, it is an accepting string with probability $p_e$. 

{\bf Finishing the Proof}. We plug in the bounds from \eqref{eqn:comp-2}, Lemma \ref{lem:ix-eq}, and \eqref{eqn:comp-3} into \eqref{eqn:comp-1}, and get that 
\begin{align*}
&\Pr_{\{(B'_j,x'_j,z'_j)\}_{j \in [r]}} \left[\psi\Big(f(B'_1,x'_1,z'_1),\ldots,f(B'_2,x'_2,z'_2) \Big) = 1\right] \\
& \geq \Pr\left[\cE_1 \right]\Pr\left[\cE_2 \Big| \cE_{1}\right] 
\Pr\left[\left(f(B'_j,x'_j,z'_j) \right)_{j \in [r]} \in \psi^{-1}(1) \Big|~\cE_1 \wedge \cE_{2}\right]		\tag{Using \eqref{eqn:comp-1}} \\
&\geq \frac{\rho^2}{e\cdot r}\cdot e^{-5} \cdot (p_e - r\eta) 		\tag{Using \eqref{eqn:comp-2}, Lemma \ref{lem:ix-eq}, \eqref{eqn:comp-3}}	\\
&\geq \frac{\rho^2 p_e}{r}\cdot e^{-6} - O(r\eta),
\end{align*}
which concludes the proof of Lemma \ref{lem:comp-main}.

\section{Soundness}					\label{sec:sound}

We introduce some additional notation used in the soundness analysis. Throughout the proof, we will be dealing with functions defined over various probability spaces. We will always use $\Omega$ to denote the underlying $4$-point set $\{0,1\} \times \{\bot,\top\}$, and for any $i \in V_{\rm gap}$, we will use  $\Omega_i$ to denote the probability space $\{0,1\}_{\mu_i} \otimes \{\bot,\top\}_\beta$. Furthermore, our expressions shall feature noise operators acting on various probability spaces. To avoid ambiguity, we introduce the notation for each noise operator used here. The following holds for any $\kappa \in (0,1)$:

\begin{itemize}
	\item $\sT^{(\beta)}_\kappa$ is noise operator on $L_2(\{\bot,\top\}^R_\beta)$ acting on the $z_i$ variables.
	\item For any $i \in V_{\rm Gap}$, we use $\sT^{(\mu_i)}_\kappa$ to denote the noise operators on $L_2(\{0,1\}^R_{\mu_i})$ which acts on the vector variable $x_i$. 
	\item For any $i \in V_{\rm gap}$, we use $\sT^{(\Omega_i)}_\kappa$ to denote the noise operator $\sT^{(\mu_i)}_\kappa \circ \sT^\Pb_\kappa$ on  $L_2(\{0,1\}^R_{\mu_i} \otimes \{\bot,\top\}^R_\beta)$ which acts on the vector variables $x_i$ and $z_i$ independently.  
\end{itemize}

We extend the above notation and use ${\sf Inf}^{(\Omega_i)}_j\left[ \cdot\right]$ and ${\sf Inf}^{(\mu_i)}_j\left[\cdot\right]$ to denote that the influence of a function is being defined with respect to $\Omega^R_i$ and $\mu^R_i$ respectively.

\begin{remark}[Definition of $\sT^{(\Omega_i)}_{1 - \eta}$]
We point out that in the above convention, $\sT^{(\Omega_i)}_{1 - \eta}$ is {\em not} the $(1-\eta)$-correlated Bonami-Beckner operator in the space $L_2(\Omega_i)$. The $(1 - \eta)$-correlation Bonami-Beckner operator on $L_2(\Omega_i)$ re-randomizes each coordinate$(x,z)(i)$ with probability $(1 - \eta)$, whereas in the above definition of $\sT^{(\Omega_i)}_{1 - \eta}$, for every $i \in [R]$, both $x(i)$ and $z(i)$ are each chosen to be re-randomized with probability $(1 - \eta)$ independently. However, it still exhibits Fourier decay properties similar to the Bonami-Beckner operator on $\Omega^R_i$ (see Fact \ref{fact:decay-2}). 
\end{remark}

Finally, we will use $\cD^R_e$ to denote the distribution on the variables $(x_i,z_i)_{i \in e}$ under the test distribution conditioned on the fixing of $e$.

\subsection{First Steps}

Suppose $G$ is a NO instance, and let $f:V^R \times \{0,1\}^R \times \{\bot,\top\}^R \to \{0,1\}$ be a feasible labeling of $\cH$ i.e., it satisfies the global constraint:
\begin{equation}				\label{eqn:bias-constr}
	\Ex_{i \sim G_{\rm gap}} \Ex_{A \sim V^R} \Ex_{(x,z) \sim \{0,1\}^R_{\mu_i} \otimes \{\bot,\top\}^R_\beta}
	\left[f(A,x,z)\right] = \mu.
\end{equation} 
Now, for every $A \in V^R$, and for every $i \in V_{\rm gap}$, we shall find it useful to define the functions $g_{A,i}:\left(\{0,1\} \times \{\bot,\top\}\right)^R \to [0,1]$as 
\begin{equation}				\label{eqn:g-def}
g_{A,i}(x,z) = \Ex_{B \sim G^{\otimes R}_\eta(A)} \Ex_{(B',x') \sim M^{(\mu_i)}_{z}(B,x)} \Ex_{\pi \sim \mathbbm{S}_R} \left[f(\pi(B',x',z))\right],
\end{equation}
i.e., $g_{A,i}$ is the restriction of $f$ to $A$ averaged over the noisy walk on $G^{\otimes R}_\eta$, the $M^{(\mu_i)}_{z'_i}$ operators, and the random permutation $\pi_i$. The following observation on the averaged functions will be useful throughout the various steps of the soundness analysis.
\begin{observation}					\label{obs:prob-space}
	For every $i \in V_{\rm gap}$, the function $g_{A,i}$ is an $R$-variate function in $L_2(\Omega^R_i)$, where $\Omega_i = \{0,1\}_{\mu_i} \otimes \{\bot,\top\}_\beta$. Consequently, the influences of the function $g_{A,i}$ are defined with respect to the Fourier basis corresponding to $\Omega^R_i$.
\end{observation}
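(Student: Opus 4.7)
The plan is to verify the two assertions of the observation directly from the definition \eqref{eqn:g-def} of $g_{A,i}$ and the test distribution in Figure \ref{fig:test-1}; the observation is essentially book-keeping to fix the correct probability space before the influence-decoding machinery is invoked in the subsequent subsections. The first assertion -- that $g_{A,i}$ is an $R$-variate $[0,1]$-valued function on $\{0,1\}^R \times \{\bot,\top\}^R$ -- is immediate, since $g_{A,i}(x,z)$ is defined as the expectation of the $\{0,1\}$-valued function $f$ over the joint law of $(B, (B',x'), \pi)$ at fixed $(A, i, x, z)$.

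The substantive step is to identify $\Omega_i^R = (\{0,1\}_{\mu_i} \otimes \{\bot,\top\}_\beta)^R$ as the natural product measure on the domain of $g_{A,i}$. I would do this by inspecting the marginal distribution of $(x_i, z_i)$ under the test distribution conditioned on an edge $e \in E_{\rm gap}$ containing $i$. For the $x_i$-component, Step 4(a) of Figure \ref{fig:test-1} independently samples $(x_{i'}(j))_{i' \in e} \sim \theta_e$ for each $j \in [R]$; since $\Pr_{X_e \sim \theta_e}[X_i = 1] = \mu_i$ holds by Proposition \ref{prop:vector} and the gap-instance construction, the marginal of $x_i(j)$ is exactly $\{0,1\}_{\mu_i}$. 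For the $z_i$-component, a short case analysis on $\xi_e(j)$ in Step 4(b) shows that in both branches -- either copying the common draw $z(j) \sim \{\bot,\top\}_\beta$ or sampling $z_i(j) \sim \{\bot,\top\}_\beta$ independently -- the marginal of $z_i(j)$ is $\{\bot,\top\}_\beta$. Independence across coordinates $j \in [R]$, as well as between the $x_i$ and $z_i$ components, is immediate from the independent-across-$j$ sampling described in Step 4.

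Combining these two pieces, $g_{A,i}$ is a bounded function on $\Omega_i^R$ and hence lies in $L_2(\Omega_i^R)$. The second claim on influences then follows directly from the definition of ${\sf Inf}^{(\Omega_i)}_j[\cdot]$ introduced just before the observation: once the ambient probability space is fixed as $\Omega_i^R$, its Fourier basis (built from tensor products of the $\mu_i$- and $\beta$-biased non-trivial characters) is the canonical basis against which the influence decomposition is carried out. I do not anticipate any substantive obstacle in this argument; the only care needed is in verifying the $\{0,1\}_{\mu_i}$ marginal of the $x_i$-component, which relies on the Lasserre-feasibility of $\theta$ and the invariance of the single-coordinate marginal $\mu_i$ across the smoothening and conditioning steps of Lemma \ref{lem:smooth}.
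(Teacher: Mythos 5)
Your verification is correct and matches the intended (and, in the paper, implicit) justification: the observation is stated without proof precisely because it amounts to the marginal computation you carry out, namely that under the test distribution each $x_i(j)$ is marginally $\{0,1\}_{\mu_i}$ (by Lasserre consistency and the definition of $\mu_i$), each $z_i(j)$ is marginally $\{\bot,\top\}_\beta$ in both branches of Step 4(b), and all coordinates and the $x$/$z$ components are mutually independent in the single-vertex marginal. Nothing further is needed.
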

Let us introduce some additional notation that will be useful throughout the soundness analysis. Let 
\[
\psi(x) = \sum_{a \in \psi^{-1}(1)} \prod_{j \in S_+(a)} x_j \prod_{j \in S_-(a)} (1 - x_j)
\] 
be the multilinear representation of $\psi$, where for any $a \in \{0,1\}^r$, $S_+(a)$ and $S_-(a)$ are subsets of $[r]$ consisting of the indices with $a(i) = 1$ and $a(i) = 0$ respectively. Next, for any ordered edge $e = (i_1,\ldots,i_r) \in E_{\rm gap}, a \in \psi^{-1}(1)$, and for every $i \in e$, define 
\begin{equation}		\label{eqn:f-defs}
f^{(a,e,i)} = 
\begin{cases}
	f & \mbox{ if } i = i_j, j \in S_+(a) \\
	1 - f & \mbox{ if } i = i_j, j \in S_-(a).
\end{cases}
\end{equation}
Analogously, for every $e = (i_1,\ldots,i_r) \in E_{\rm gap}$, $i \in e$ and $a \in \psi^{-1}(1)$ we define
\[
g^{(a,e,i)}_{A,i} = 
\begin{cases}
	g_{A,i} & \mbox{ if } i = i_j, j \in S_+(a) \\
	1 - g_{A,i} & \mbox{ if } i = i_j, j \in S_-(a).
\end{cases}
\]
As in Observation \ref{obs:prob-space}, note that for any $i \in V_{\rm gap}$, for any choices of  $A,a,e$, the function $g^{(a,e,i)}_{A,i}$ is in $L_2(\Omega^R_i)$. 

{\bf Arithmetization}. We now proceed to arithmetize the fraction of constraints satisfied by the assignment $f$. To that end, let us fix a choice of $e = (i_1,\ldots,i_r)  \in E_{\rm gap}$, and express the fraction of satisfied constraints corresponding to the distribution conditioned on the choice of edge $e$:
\begin{align*}				\label{eqn:sound-expr}
	&\Pr\Big[\mbox{ Test Accepts for fixed } e\Big] \\
	&= \Ex_{A \sim V^R}\Ex_{(\tilde{x}_i,z'_i)_{i \in e}} \Ex_{(B_i)_{i \in e} \sim G^{\otimes R}_{\eta}(A)} 
	\Ex_{\{(B'_i,x'_i) \sim M^{(\mu_i)}_{z'_i}(B_i,\tilde{x}_i)\}_{i \in e}}\Ex_{\{\pi_i\}_{i \in e} \sim \mathbbm{S}_R}
	\left[\psi\Big(f(\pi_{i_1}(B'_{i_1},x'_{i_1},z'_{i_1})),\ldots,f(\pi_{i_r}(B'_{i_r},x'_{i_r},z'_{i_r}))\Big)\right]  \\
	& = \Ex_{A \sim V^R}\Ex_{(\tilde{x}_i,z'_i)_{i \in e}} \Ex_{(B_i)_{i \in e} \sim G^{\otimes R}_{\eta}(A)} 
	\Ex_{\{(B'_i,x'_i) \sim M^{(\mu_i)}_{z'_i}(B_i,\tilde{x}_i)\}_{i \in e}}\Ex_{\{\pi_i\}_{i \in e} \sim \mathbbm{S}_R} \\
	&\qquad\qquad\qquad
	\left[\sum_{a \in \psi^{-1}(1)}\prod_{i = i_j: a(j) = 1} f(\pi_{i}(B'_{i},x'_{i},z'_{i})) 
	 \prod_{i = i_j :a(j) = 0} (1 - f(\pi_i(B'_i,x'_i,z'_i))) \right] \\
	& = \sum_{a \in \psi^{-1}(1)} \Ex_{A \sim V^R}\Ex_{(\tilde{x}_i,z'_i)_{i \in e}} \Ex_{(B_i)_{i \in e} \sim G^{\otimes R}_{\eta}(A)} 
	\Ex_{\{(B'_i,x'_i) \sim M^{(\mu_i)}_{z'_i}(B_i,\tilde{x}_i)\}_{i \in e}}\Ex_{\{\pi_i\}_{i \in e} \sim \mathbbm{S}_R}
	\left[\prod_{i \in e} f^{(a,e,i)}(\pi_{i}(B'_{i},x'_{i},z'_{i})) \right] \\
	&\overset{1}{=} \sum_{a \in \psi^{-1}(1)}\Ex_{A \sim V^R}\Ex_{({x}_i,z_i)_{i \in e}} \left[\prod_{i \in e}\Ex_{\tilde{x}_i \underset{1 - \eta}{\sim}x_i} \Ex_{z'_i \underset{1 - \eta}{\sim} z_i} \Ex_{B_i \sim G^{\otimes R}_{\eta}(A)} \Ex_{(B'_i,x'_i) \sim M^{(\mu_i)}_{z'_i}(B_i,\tilde{x}_i)}\Ex_{\pi_i \sim \mathbbm{S}_R}  f^{(a,e,i)}(\pi_i(B'_i,x'_i,z'_i)) \right] \\
	&\overset{2}{=} \sum_{a \in \psi^{-1}(1)} \Ex_{A \sim V^R}\Ex_{({x}_i,z_i)_{i \in e}} \left[ \prod_{i \in e}\Ex_{\tilde{x}_i \underset{1 - \eta}{\sim}x_i} \Ex_{z'_i \underset{1 - \eta}{\sim} z_i} g^{(a,e,i)}_{A,i}(\tilde{x}_i,z'_i)\right] \\
	&\overset{3}{=} \sum_{a \in \psi^{-1}(1)} \Ex_{A \sim V^R}\Ex_{({x}_i,z_i)_{i \in e}} \left[ \prod_{i \in e}  \sT^{(\Omega_i)}_{1 - \eta} g^{(a,e,i)}_{A,i}(x_i,z_i)
	 \right],
\end{align*}
where step $1$ uses the independence of the various averaging operators, step $2$ follows using the definition of $g_{A,i}$ from \eqref{eqn:g-def} and step $3$ follows from the definition of $\sT^{(\Omega_i)}_{1 - \eta}$. Using the above, the overall weight of constraints satisfied by the assignment indicated by $f(\cdot)$ can be expressed as 
\begin{equation}				\label{eqn:sound-1}
	\Pr\Big[\mbox{ Test Accepts}\Big] = 
	\sum_{a \in \psi^{-1}(1)}\Ex_{e \sim E_{\rm gap}}\Ex_{A \sim V^R}\Ex_{({x}_i,z_i)_{i \in e} \sim \cD^{R}_e} \left[ \prod_{i \in e}  \sT^{(\Omega_i)}_{1 - \eta} g^{(a,e,i)}_{A,i}(x_i,z_i) \right].
\end{equation}	 

Now we proceed with the soundness analysis in steps.

\subsection{Averaging out the $z$-variables}				\label{sec:z-avg}

Now in this step, we will decouple the $z_i$ variables and average them out. Formally, we have the following key lemma which states that if for a fixed edge, the corresponding functions have small influences, then for that term, we can replace the correlated sampling of the $z_i$-variables with independent sampling, while only losing small additive factors in the soundness value.
\begin{lemma}					\label{lem:prod}
	Fix an edge $e = (i_1,\ldots,i_r) \in E_{\rm gap}$, and let $(h_i)_{i \in e}:\Omega^R \to [0,1]^e$ be a collection of functions where for every $i \in e$ we have $h_i \in L_2(\Omega^R_i)$, and 
	\[
	\max_{i \in e}\max_{j \in [R]} {\sf Inf}^{(\Omega_i)}_{j}\left[{h_i}\right] \leq \tau. 
	\]
	Assume that $\theta_e$ is $(\gamma\mu)^r$-smooth as in the setting of the reduction i.e., for any $\alpha \in \{0,1\}^e$ we have $\Pr_{X_e \sim \theta_e}\left[X_e = \alpha\right] \geq (\gamma\mu)^r$. Furthermore, suppose ${\rm Var}\big[h^{>d}_i\big] \leq (1 - \eta)^d$ for every $d \geq \frac{1}{18}\log\frac1\tau/\log\frac{1}{\eta^r\gamma^r}$. Then,
	\[
	\Ex_{(x_i,z_i)_{i \in e} \sim \cD^{R}_e}\left[\prod_{i \in e} h_i(x_i,z_i)\right]
	\leq 2^r \cdot \Ex_{(x_i)_{i \in e} \sim \theta^R_e} \left[\prod_{i \in [r]} \bar{h_i}(x_i)\right] + \mu^r  + C'_r\cdot\sqrt{r}\tau^{C\eta\kappa/r^2}, 
	\]
	where for every $i \in e$, $\bar{h_i}:\{0,1\}^R \to [0,1]$ is the function in $L_2(\{0,1\}^R_{\mu_i})$ defined as $\bar{h_i}(x) \defeq \Ex_{z \sim \{\bot,\top\}^R_\beta} \left[h_i(x,z)\right]$, and $\kappa = \beta/\log(1/\gamma)$. 
\end{lemma}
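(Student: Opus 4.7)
The plan is to follow the four-step path sketched in the overview: (i) upgrade the hypothesis to small \emph{coordinate-wise} influences in the $2R$-variate Fourier basis of $(x_i,z_i)$, (ii) invoke invariance to move the product of the $h_i$'s into Gaussian space, (iii) condition on the $x$-Gaussian block and apply multi-dimensional Borell to the $z$-block, which is $\sqrt{\rho}$-correlated via a common Gaussian, and (iv) invariance back to obtain the $\bar h_i$'s on the original space. WLOG take $e=[r]$, and for each $i\in[r]$ write $h_i$ in the Fourier basis $\{\phi^{(\mu_i)}(x_i(j))\}_j\cup\{\phi^{(\beta)}(z_i(j))\}_j$ of $L_2(\Omega_i^R)$, yielding a $2R$-variate multilinear polynomial $H_i$. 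Since $\Omega_i$ is a product space, the influences split: ${\sf Inf}^{(\mu_i)}_{x_i(j)}[h_i]+{\sf Inf}^{(\beta)}_{z_i(j)}[h_i]={\sf Inf}^{(\Omega_i)}_{(x_i,z_i)_j}[h_i]\le\tau$, so every coordinate of $H_i$ has influence at most $\tau$. The min-atom probability $\alpha$ of the joint one-coordinate marginal of $\cD^R_e$ (a product of $\theta_e$ on $(x_i)_{i\in e}$ and the $\rho$-mixture distribution on $(z_i)_{i\in e}$) is at least $(\gamma\mu)^r\cdot\min(\rho\beta,(1-\rho)\beta^r)\gtrsim \eta^r\gamma^r$ up to the given parameter choices, so the degree cutoff $d\sim \log(1/\tau)/\log(1/\eta^r\gamma^r)$ in the assumption matches that of Theorem~\ref{thm:clt}.

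The invariance step uses the Lipschitz test function $\Psi(y_1,\ldots,y_r)=\prod_i\Pi_{[0,1]}(y_i)$ (Lipschitz constant $\sqrt{r}$), which equals $\prod_i h_i$ on the Boolean side since each $h_i\in[0,1]$; Theorem~\ref{thm:clt} then gives
\[
\Ex_{\cD^R_e}\Bigl[\prod_{i\in[r]}h_i(x_i,z_i)\Bigr]\le \Ex_{\cG}\Bigl[\prod_{i\in[r]}\Pi_{[0,1]}\bigl(H_i(g_{x,i},g_{z,i})\bigr)\Bigr]+C_r\sqrt{r}\,\tau^{C\eta\kappa/r^2},
\]
where $\cG=(g_{x,i},g_{z,i})_{i\in[r]}$ is the matching Gaussian ensemble. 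By the independence of $(x_i)_i$ and $(z_i)_i$ under $\cD^R_e$, $\cG^x:=(g_{x,i})_i$ and $\cG^z:=(g_{z,i})_i$ are independent; by the $(\rho)$-mixture structure of the $z_i$'s, the blocks $g_{z,1},\ldots,g_{z,r}$ are pairwise $\rho$-correlated Gaussians, realized as independent $\sqrt{\rho}$-correlated copies of a common $g\sim N(0_R,I_R)$.

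Now condition on an arbitrary fixing of $\cG^x$. The functions $F_i(\cdot):=\Pi_{[0,1]}(H_i(g_{x,i},\cdot))$ are $[0,1]$-valued, so Corollary~\ref{corr:egt} (the multi-dim Borell) yields
\[
\Ex_{g_{z,1},\ldots,g_{z,r}\underset{\sqrt{\rho}}{\sim}g}\Bigl[\prod_i F_i(g_{z,i})\Bigr]\le \Lambda_{\sqrt\rho}\bigl(\tilde\delta_1(g_{x,1}),\ldots,\tilde\delta_r(g_{x,r})\bigr),\quad \tilde\delta_i(g_{x,i}):=\Ex_{g_{z,i}}F_i(g_{z,i}).
\]
Because $\rho\le 1/(4r^2\log(1/\mu)^2)$, the halfspace CDF bound of Lemma~\ref{lem:hspace} lets us replace $\Lambda_{\sqrt\rho}$ by $2^r\prod_i\tilde\delta_i$ whenever every $\tilde\delta_i\ge\mu^r$, and otherwise trivially bound by $\mu^r$; combining the two cases,
\[
\Lambda_{\sqrt\rho}(\tilde\delta_1,\ldots,\tilde\delta_r)\le 2^r\prod_{i\in[r]}\tilde\delta_i(g_{x,i})+\mu^r.
\]
Taking $\Ex_{\cG^x}$ and noting that $\tilde\delta_i(g_{x,i})=\Ex_{g_{z,i}}\Pi_{[0,1]}H_i(g_{x,i},g_{z,i})$ corresponds, via a second invariance application (with test function $\Psi'(y_1,\ldots,y_r)=\prod_i\Pi_{[0,1]}(y_i)$ applied to the $r$-dimensional polynomial $(\Ex_{\cW^z[i]}H_i(\cdot,\cW^z[i]))_i$), to $\bar h_i(x_i)$ on the Boolean side, we obtain
\[
\Ex_{\cG^x}\Bigl[\prod_i\tilde\delta_i(g_{x,i})\Bigr]\le \Ex_{x\sim\theta_e^R}\Bigl[\prod_i\bar h_i(x_i)\Bigr]+C'_r\sqrt{r}\,\tau^{C\eta\kappa/r^2},
\]
which, combined with the two earlier error terms, proves the claim.

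The main obstacle is bookkeeping at the two invariance calls: verifying that the Fourier decay hypothesis ${\rm Var}[h_i^{>d}]\le(1-\eta)^d$ transfers to each polynomial encountered (the truncated $H_i$ and the $z$-averaged polynomial obtained by keeping only monomials supported on the $x$-variables), and that the atom probability $\alpha$ for applying Theorem~\ref{thm:clt} is indeed bounded below by $\eta^r\gamma^r$ (dictating the degree cutoff stated in the lemma). The genuinely delicate step is the Borell application: it is crucial that the conditioning on $\cG^x$ does not disturb the distribution of $(g_{z,i})_i$, which is where we use that $(x_i)_i\perp(z_i)_i$ under $\cD^R_e$ by construction (and not merely pairwise independently as in~\cite{RST12}); without this structural property the half-decoupling would fail.
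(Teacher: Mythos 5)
Your proposal follows essentially the same four-step route as the paper's proof (influence transfer to the $2R$-variate basis, invariance forward, conditioning on the $x$-Gaussians plus multidimensional Borell and the $\Lambda$-bound with the two cases on $\min_i\tilde\delta_i$, invariance back), and it correctly identifies the crucial use of $(x_i)_i\perp(z_i)_i$ for the conditioning step. Two small inaccuracies worth fixing: the coordinate influences satisfy $\max\{{\sf Inf}_{x_i(j)},{\sf Inf}_{z_i(j)}\}\le{\sf Inf}^{(\Omega_i)}_j[h_i]$ rather than summing to it (the sum double-counts coefficients with $j\in S\cap T$), and in the return invariance step the clipping does not commute with the $z$-average, so one should not apply invariance to the $z$-averaged polynomials $\Ex_z H_i$; instead, as in the paper, rewrite $\prod_i\Ex_{g_{z,i}}\Pi_{[0,1]}(H_i(g_{x,i},g_{z,i}))$ as a single expectation over an \emph{independent} Gaussian $z$-ensemble and apply invariance to the full $2R$-variate polynomials, after which the clipping is the identity on the Boolean side and the $\tilde z_i$-averages factor into $\prod_i\bar h_i(x_i)$.
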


We defer the proof of the above lemma to Section \ref{sec:prod} for now. Using the above lemma, we will show that for most choices of $A \in V^R$, we can average out the $z_i$-variables in the corresponding inner expectation term with negligible losses. To that end, we define the set $V_{\rm nice} \subset V^R$ as  
\begin{equation}					\label{eqn:vnice-def}
	V_{\rm nice} := \left\{A \in V^R \Big| \Pr_{e \sim E_{\rm gap}}\left[\max_{i \in e} \max_{j \in [R]} {\sf Inf}^{(\Omega_i)}_j\left[\sT^{(\Omega_i)}_{1 - \eta} g_{A,i}\right] > \tau\right] \leq \nu\right\}
\end{equation} 
In other words, $V_{\rm nice}$ is the set of vertices in $A \in V^R$ for which almost all choices of edges $e$ have the property that all functions in $\{g_{A,i}\}_{i \in e}$ have small influences w.r.t. their respective probability spaces. The following lemma bounds the size of $|V_{\rm nice}|$.
\begin{lemma}			\label{lem:dec-set0}
	Since $G$ is a NO instance, we have
	\[
	\Pr_{A \sim V^R} \Big[A \in V_{\rm nice}\Big] \geq 1 - \nu.
	\]
\end{lemma}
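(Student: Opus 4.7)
The plan is to establish the contrapositive via a standard ``influence decoding'' argument for reductions from \smallsetexpansion: I will show that if fewer than a $1 - \nu$ fraction of $A \in V^R$ lie in $V_{\rm nice}$, then the graph $G$ must contain a small non-expanding set, contradicting the NO instance guarantee of Conjecture \ref{conj:sseh}.

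The first step reduces the probability bound on $V_{\rm nice}$ to an expected total-influence bound via two applications of Markov's inequality. Suppose for contradiction that $\Pr_A[A \notin V_{\rm nice}] > \nu$. By the definition of $V_{\rm nice}$, this gives
\[
\Pr_{A \sim V^R,\, e \sim E_{\rm gap}}\left[\,\max_{i \in e,\, j \in [R]} {\sf Inf}^{(\Omega_i)}_j\left[\sT^{(\Omega_i)}_{1-\eta} g_{A, i}\right] > \tau\,\right] > \nu^2,
\]
so a union bound over $(i, j) \in [r] \times [R]$ followed by Markov's inequality in $\tau$ reduces the goal to an upper bound
\[
\Ex_{A,\, e,\, i \in e}\, \sum_{j \in [R]} {\sf Inf}^{(\Omega_i)}_j\left[\sT^{(\Omega_i)}_{1-\eta} g_{A, i}\right] \,\lesssim\, \frac{\nu^2 \tau}{r R}.
\]

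The core step is to control this expected total influence using the SSE structure of $G$. Recall that $g_{A, i}$ is obtained from $f$ by (a) averaging $B$ over the $\eta$-noisy tensor walk $G^{\otimes R}_\eta(A)$, (b) applying the stochastic folding operator $M^{(\mu_i)}_z$, and (c) symmetrizing over random permutations $\pi \sim \mathbbm{S}_R$. I plan to expand $f$ in the Fourier basis of $L_2(V^R \otimes \{0, 1\}^R_{\mu_i} \otimes \{\bot, \top\}^R_\beta)$ and track how each of these operators acts on the characters. The walk $G^{\otimes R}_\eta$ scales a $V^R$-character by the product of the corresponding eigenvalues of $G_\eta$, and averaging over $A \sim V^R$ retains only those Fourier coefficients whose $V^R$-component lies in an eigenspace of $G^{\otimes R}_\eta$ with non-negligible eigenvalue. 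Combined with the Fourier decay of $\sT^{(\Omega_i)}_{1-\eta}$ on the $(x, z)$-characters (Fact \ref{fact:decay}), this confines essentially all of the mass to low-degree Fourier coefficients whose $V^R$-component lives in the bounded-threshold-rank eigenspace of $G_\eta$. The NO case of Conjecture \ref{conj:sseh} bounds the mass of any $[0, 1]$-valued function of expectation $\mu$ on this eigenspace via the standard small-set-expansion-to-threshold-rank translation, and with the parameter choices $\epsilon = \beta^2 \nu^4 \eta^4 \tau^6 / (2^{24} r^4)$ and $M = 1/\sqrt{\epsilon}$ the resulting upper bound falls comfortably below $\nu^2 \tau / (rR)$, yielding the contradiction.

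The main obstacle is handling the stochastic folding operator $M^{(\mu_i)}_z$, since it couples the $A$- and $x$-variables in a $z$-dependent way and so does not act diagonally in any single fixed Fourier basis. My plan is to first condition on $z \in \{\bot, \top\}^R$: at fixed $z$, $M^{(\mu_i)}_z$ simply re-randomizes the $A$- and $x$-coordinates at positions where $z$ is $\bot$, so it acts as a projection onto characters supported on the complementary ``live'' positions, leading to a clean conditional eigenvalue computation. Jensen's inequality in the outer expectation over $z \sim \{\bot, \top\}^R_\beta$ then yields the overall bound, and a routine aggregation across $z, A, e, i, j$ delivers the required expectation inequality.
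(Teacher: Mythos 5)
Your opening reduction matches the paper's first step (pass to the contrapositive, get $\Pr_{A,e}[\max_{i,j}{\sf Inf}_j > \tau] > \nu^2$), but from there your plan diverges onto a route that does not work. The union bound over $(i,j)$ followed by Markov's inequality converts the max-influence event into a bound on the \emph{expected total influence} $\Ex_A\sum_j {\sf Inf}^{(\Omega_i)}_j[\sT^{(\Omega_i)}_{1-\eta}g_{A,i}]$ of order $\nu^2\tau/r$. No such bound is available: the total influence of a $(1-\eta)$-smoothed bounded function is generically $\Theta(1/\eta)$ (e.g.\ take $f$ to be a symmetric threshold function of the $x$-coordinates, which is permutation-invariant, ignores the graph entirely, and survives all the averaging operators with constant variance). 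The NO case of \smallsetexpansion\ only forbids a \emph{single coordinate} from carrying influence $\geq\tau$ for a noticeable fraction of $A$'s; it says nothing about influence spread thinly over all $R$ coordinates, which is exactly the regime your Markov step cannot distinguish. Relatedly, your core spectral claim — that the NO case "bounds the mass of any $[0,1]$-valued function of expectation $\mu$" on the high-eigenvalue eigenspace of $G_\eta^{\otimes R}$ — is not a consequence of Conjecture \ref{conj:sseh}. A NO instance of SSE need not have small threshold rank or a spectral gap, so there is no eigenvalue estimate to feed into the Fourier computation you describe.

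The paper's proof instead keeps the max-influence condition intact: it fixes (by averaging) a single edge $e$ and vertex $i\in e$ for which $\Pr_A[\max_j {\sf Inf}_j[\sT_{1-\eta}g_{A,i}]>\tau]\geq \nu^2/r$, verifies that the relevant family $\{\tilde f_A\}$ is permutation-respecting and that $g_{A,i}=\Ex_{B\sim G^{\otimes R}_\eta(A)}\tilde f_B$, and then invokes the influence-decoding Lemma \ref{lem:sse-dec}: influential coordinates define a candidate labeling $F:V^R\to[R]$ which, by convexity of influences and permutation-consistency, agrees across random edges of $G^{\otimes R}_\eta$ with probability $\mathrm{poly}(\nu,\eta,\tau)$, and Lemma \ref{lem:rst-dec} converts this agreement into a small non-expanding set in $G$, contradicting the NO case. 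The permutation symmetrization, which your sketch mentions but never uses, is essential there — it is what makes the decoded coordinate consistent across the orbit of each vertex. To repair your argument you would need to replace the total-influence/spectral step with this decoding step.
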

The proof of the above lemma proceeds using the {\em influence-decoding} argument for \smallsetexpansion~\cite{KKMO07,RST12}. We defer its proof to Section \ref{sec:dec-set0} for now and continue with the soundness analysis. As a next step, we use Lemma \ref{lem:prod} to derive the following bound on the inner expectation corresponding to the vertices in $V_{\rm nice}$.

\begin{corollary}					\label{corr:z-avg}
	For every $A \in V_{\rm nice}$ we have that 
	\begin{align*}
	&\sum_{a \in \psi^{-1}(1)}\Ex_{e \sim E_{\rm gap}} \Ex_{(x_i,z_i)_{i \in e} \sim \cD^R_e}\left[\prod_{i \in e} \sT^{(\Omega_i)}_{1 - \eta} g^{(a,e,i)}_{A,i}(x_i,z_i)\right] \\
	&\leq 2^r\sum_{a \in \psi^{-1}(1)}\Ex_{e \sim E_{\rm gap}} \Ex_{(x_i)_{i \in e} \sim \theta^R_e}\left[\prod_{i \in e} \sT^{(\mu_i)}_{1 - \eta} \bar{g}^{(a,e,i)}_{A,i}(x_i)\right] +  2^r\mu^r + 2^{2r}\nu + 2^{2r}C'_r\sqrt{r}\tau^{C\eta\kappa/r^2},
	\end{align*}
	where $\bar{g}^{(a,e,i)}_{A,i}:\{0,1\}^R \to [0,1]$ is the function in $L_2(\{0,1\}^R_{\mu_i})$ defined as $\bar{g}^{(a,e,i)}_{A,i}(x) = \Ex_{z \sim \{\bot,\top\}^R_\beta} \left[g^{(a,e,i)}_{A,i}(x,z)\right]$.
\end{corollary}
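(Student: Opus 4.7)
\textbf{Proof Plan for Corollary \ref{corr:z-avg}.}

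The plan is to fix $A \in V_{\rm nice}$ and a single $a \in \psi^{-1}(1)$, set $h_i := \sT^{(\Omega_i)}_{1-\eta} g^{(a,e,i)}_{A,i}$ for each $i \in e$, and split the expectation $\Ex_{e \sim E_{\rm gap}}$ into a ``good'' part (edges $e$ for which $\max_{i \in e}\max_{j \in [R]} {\sf Inf}^{(\Omega_i)}_j[h_i] \leq \tau$) and the complementary ``bad'' part. Since influences are invariant under complementation and $g^{(a,e,i)}_{A,i} \in \{g_{A,i},\,1 - g_{A,i}\}$, the max-influence of $h_i$ equals that of $\sT^{(\Omega_i)}_{1-\eta} g_{A,i}$. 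Hence the definition \eqref{eqn:vnice-def} of $V_{\rm nice}$ gives $\Pr_e[\text{bad}] \leq \nu$.

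On good edges I would apply Lemma \ref{lem:prod} to the collection $(h_i)_{i \in e}$. Three hypotheses need to be checked: (i) the influence bound $\leq \tau$, which holds by construction of the good set; (ii) the $(\gamma\mu)^r$-smoothness of $\theta_e$, which is exactly the smoothness property guaranteed by the pre-processing Lemma \ref{lem:smooth}; and (iii) the Fourier-decay condition ${\rm Var}[h_i^{>d}] \leq (1-\eta)^d$ for the required $d$, which follows from $\|g^{(a,e,i)}_{A,i}\|_\infty \leq 1$ together with the decay properties of $\sT^{(\Omega_i)}_{1-\eta}$ in the product space $\Omega^R_i$ (Fact \ref{fact:decay} / Fact \ref{fact:decay-2}). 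Lemma \ref{lem:prod} then yields
\[
\Ex_{(x_i,z_i)_{i\in e}\sim\cD^R_e}\Bigl[\prod_{i\in e} h_i(x_i,z_i)\Bigr]
\leq 2^r \Ex_{(x_i)_{i\in e}\sim\theta^R_e}\Bigl[\prod_{i\in e}\bar h_i(x_i)\Bigr] + \mu^r + C'_r\sqrt{r}\,\tau^{C\eta\kappa/r^2},
\]
where $\bar h_i(x) := \Ex_{z \sim \{\bot,\top\}^R_\beta}[h_i(x,z)]$. The essential identity $\bar h_i = \sT^{(\mu_i)}_{1-\eta}\,\bar g^{(a,e,i)}_{A,i}$ follows from the factorization $\sT^{(\Omega_i)}_{1-\eta} = \sT^{(\mu_i)}_{1-\eta} \circ \sT^{(\beta)}_{1-\eta}$ and the fact that the $\beta$-biased product measure is stationary for $\sT^{(\beta)}_{1-\eta}$, so averaging over $z$ absorbs the $\sT^{(\beta)}_{1-\eta}$ factor entirely.

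For the bad edges I would bound each term trivially by $1$ using $h_i \in [0,1]$, contributing at most $\nu$ per fixed $a$. Combining the good- and bad-edge contributions (and using $\bar h_i \geq 0$ to extend the good-edge expectation back to all $e$ at no cost) gives, for each $a$,
\[
\Ex_e \Ex_{(x_i,z_i)_{i\in e}}\Bigl[\prod_i h_i\Bigr]
\leq 2^r \Ex_e \Ex_{(x_i)_{i\in e}}\Bigl[\prod_i \sT^{(\mu_i)}_{1-\eta}\bar g^{(a,e,i)}_{A,i}\Bigr] + \mu^r + C'_r\sqrt{r}\,\tau^{C\eta\kappa/r^2} + \nu.
\]
Summing over the $|\psi^{-1}(1)| \leq 2^r$ satisfying assignments $a$ yields the stated inequality (with the factor $2^{2r}$ on the $\nu$ and $\tau$ terms absorbing the extra $2^r$ from the outer sum).

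The only non-routine step is verifying the Fourier-decay hypothesis of Lemma \ref{lem:prod}: one must check that the ``decoupled'' noise operator $\sT^{(\Omega_i)}_{1-\eta}$ (which independently re-randomizes the $x$- and $z$-coordinates, and is \emph{not} the Bonami--Beckner operator on $\Omega^R_i$) still produces exponential decay in the Efron--Stein (and hence multilinear) decomposition of $\Omega^R_i$. This is straightforward because any basis function $\phi^{(\mu_i)}_S \otimes \phi^{(\beta)}_T$ has $\Omega_i$-degree $|S \cup T| \leq |S|+|T|$, and $\sT^{(\Omega_i)}_{1-\eta}$ scales this basis element by $(1-\eta)^{|S|+|T|} \leq (1-\eta)^{|S \cup T|}$, giving exactly the required $(1-\eta)^d$ truncation bound. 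No other step should present substantive difficulty.
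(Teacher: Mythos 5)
Your proposal follows the paper's proof essentially step for step: the same good/bad edge split via the definition of $V_{\rm nice}$, the same observation that negation preserves influences (so the hypothesis of Lemma \ref{lem:prod} transfers from $g_{A,i}$ to $g^{(a,e,i)}_{A,i}$), the same verification of the smoothness and Fourier-decay premises, and the same trivial bound plus sum over $a \in \psi^{-1}(1)$ at the end. The one small imprecision is that passing from the expectation conditioned on good edges back to the unconditional expectation over $E_{\rm gap}$ is not literally free but costs an additional $O(\nu)$ (the paper books this as an extra $3\nu$); this is harmlessly absorbed by the $2^{2r}\nu$ term in the statement.
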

\begin{proof}
	Fix a choice of $A \in V_{\rm nice}$, and let $E_{\rm gap}(A) \subseteq E_{\rm gap}$ denote the subset of edges whose vertices identify functions with small influence i.e., 
	\[
	E_{\rm gap}(A) := \left\{ e \in E_{\rm gap} \Big| \max_{i \in e} \max_{j \in [R]} {\sf Inf}^{(\Omega_i)}_j\left[{\sT^{(\Omega_i)}_{1 - \eta}g_{A,i}}\right] \leq \tau\right\}.
	\]
	Then using the definition of $V_{\rm nice}$ we have that
	\begin{equation}				\label{eqn:enice-bd}
		\Pr_{e \sim E_{\rm gap}}\left[e \notin E_{\rm gap}(A) \right] \leq \nu.
	\end{equation}
	We also have the following useful observation.
	\begin{observation}				\label{obs:inf-eq}
		Suppose $e \in E_{\rm gap}(A)$. Then for every $a \in \psi^{-1}(1)$ and $i \in e$ we have 
		\[
		\max_{j \in [R]}{\sf Inf}^{(\Omega_i)}_j\left[\sT^{(\Omega_i)}_{1 - \eta}g^{(a,e,i)}_{A,i}\right] \leq \tau.
		\]
	\end{observation}
	\begin{proof}
		Since $e \in E_{\rm gap}(A)$, by definition of $E_{\rm gap}(A)$ we have that 
		\[
			\max_{j \in [R]} {\sf Inf}^{(\Omega_i)}_j\left[\sT^{(\Omega_i)}_{1 - \eta}g_{A,i}\right] \leq \tau.
		\]
		Furthermore, note that for any $i \in e$, we must have $\sT^{(\Omega_i)}_{1 - \eta}g^{(a,e,i)}_{A,i} = \sT^{(\Omega_i)}_{1 - \eta}g_{A,i}$ or $\sT^{(\Omega_i)}_{1 - \eta}g^{(a,e,i)}_{A,i} = 1 - \sT^{(\Omega_i)}_{1 - \eta} g_{A,i}$, and hence it follows that
		\[
		{\sf Inf}^{(\Omega_i)}_j\left[\sT^{(\Omega_i)}_{1 - \eta}g^{(a,e,i)}_{A,i}\right]
		= {\sf Inf}^{(\Omega_i)}_j\left[\sT^{(\Omega_i)}_{1 - \eta}g_{A,i}\right]
		\]
		for every $j \in [R]$, since the influence of a function is not affected by translation and negation. The claim now follows by combining the two observations from above.
	\end{proof}
	
	Now fix an accepting string $a \in \psi^{-1}(1)$  and fix an edge $e \in E_{\rm gap}(A)$. Then note that the functions $\left\{\sT^{(\Omega_i)}_{1 - \eta}g^{(a,e,i)}_{A,i}\right\}_{i \in e}$ satisfy the premise of Lemma \ref{lem:prod} w.r.t. the probability spaces $(\Omega_i)_{i \in e}$ i.e., 
	\begin{align*}
		(1)&~\max_{i \in e}\max_{j \in [R]}{\sf Inf}^{(\Omega_i)}_{j}\left[\sT^{(\Omega_i)}_{1 - \eta} g^{(a,e,i)}_{A,i}\right] \leq \tau, 
		\tag{Observation \ref{obs:inf-eq}}\\
		(2)&~\max_{i \in e}{\rm Var}\left[\left(\sT^{(\Omega_i)}_{1 - \eta} g^{(a,e,i)}_{A,i}\right)^{>d}\right] \leq (1 - \eta)^d \qquad\qquad\forall~d \geq \frac{1}{18}\log\frac1\tau/\log\frac{1}{\eta^r\gamma^r}.			\tag{Fact \ref{fact:decay-2}}
	\end{align*}
	Therefore, instantiating Lemma \ref{lem:prod} with $h_i := \sT^{(\Omega_i)}_{1 - \eta}g^{(a,e,i)}_{A,i}$ for every $i \in e$ we get that:
	\begin{equation}				\label{eqn:step}
		\Ex_{(x_i,z_i)_{i \in e} \sim \cD^R_e} \left[\prod_{i \in e} \sT^{(\Omega_i)}_{1 - \eta} g^{(a,e,i)}_{A,i}(x_i,z_i)\right]
		\leq 2^r \Ex_{(x_i)_{i \in e} \sim \theta^R_e}\left[\prod_{i \in e} \sT^{(\mu_i)}_{1 - \eta} \bar{g}^{(a,e,i)}_{A,i}(x_i)\right] + \mu^r + C'_r\sqrt{r}\tau^{C\eta\kappa/r^2}.
	\end{equation}
	Note that above bound holds for any $e \in E_{\rm gap}(A)$ and $a \in \psi^{-1}(1)$. Towards finishing the proof, we now observe that 
	\begin{align}
		&\Ex_{e \sim E_{\rm gap}} \Ex_{(x_i,z_i)_{i \in e} \sim \cD^R_e}\left[\prod_{i \in e} \sT^{(\Omega_i)}_{1 - \eta} g^{(a,e,i)}_{A,i}(x_i,z_i)\right] 
			\non\\
		&\overset{1}{\leq} \Ex_{e \sim E_{\rm gap}| e \in E_{\rm gap}(A)} \Ex_{(x_i,z_i)_{i \in e} \sim \cD^R_e}\left[\prod_{i \in e} \sT^{(\Omega_i)}_{1 - \eta} g^{(a,e,i)}_{A,i}(x_i,z_i)\right] + \nu			\non\\
		&\overset{2}{\leq}  2^r\cdot\Ex_{e \sim E_{\rm gap}| e \in E_{\rm gap}(A)} \Ex_{(x_i)_{i \in e} \sim \theta^R_e}\left[\prod_{i \in e} \sT^{(\mu_i)}_{1 - \eta} \bar{g}^{(a,e,i)}_{A,i}(x_i)\right] + C'_r\sqrt{r}\tau^{C\eta\kappa/r^2} + \mu^r + \nu		\non\\
		&\overset{3}{\leq} 2^r\cdot\Ex_{e \sim E_{\rm gap}} \Ex_{(x_i)_{i \in e} \sim \theta^R_e}\left[\prod_{i \in e} \sT^{(\mu_i)}_{1 - \eta} \bar{g}^{(a,e,i)}_{A,i}(x_i)\right] + \mu^r + C'_r\sqrt{r}\tau^{C\eta\kappa/r^2} + 3\nu,			\label{eqn:exp-bd}
	\end{align}
	where in step $1$, we use the fact that $E_{\rm gap}(A)$ has weight at least $(1 - \nu)$ in $E_{\rm gap}$ (Eq. \eqref{eqn:enice-bd}). In step $2$, we use the bound from \eqref{eqn:step} for every fixed choice of $e \in E_{\rm gap}$ and in step $3$, we again use the bound on the weight of $E_{\rm gap}(A)$ from \eqref{eqn:enice-bd}. Finally, summing over all accepting strings $a \in \psi^{-1}(1)$ and applying \eqref{eqn:exp-bd} we get that
	\begin{align*}
		&\sum_{a \in \psi^{-1}(1)}  \Ex_{e \sim E_{\rm gap}} \Ex_{(x_i,z_i)_{i \in e} \sim \cD^R_e} \left[\prod_{i \in e} \sT^{(\Omega_i)}_{1 - \eta} g^{(a,e,i)}_{A,i}(x_i,z_i)\right] \\
		&\leq 2^r \sum_{a \in \psi^{-1}(1)}\Ex_{e \sim E_{\rm gap}} \Ex_{(x_i)_{i \in e} \sim \theta^R_e}\left[\prod_{i \in e} \sT^{(\mu_i)}_{1 - \eta} \bar{g}^{(a,e,i)}_{A,i}(x_i)\right] + 2^r\mu^r + C''_r\sqrt{r}\tau^{C\eta\kappa/r^2} + 3\cdot2^{r}\nu.
	\end{align*}
\end{proof}			
Using the above lemma, we now continue with bounding the expectation term from the RHS of \eqref{eqn:sound-1}:
\begin{align}
	&\sum_{a \in \psi^{-1}(1)}\Ex_{A \sim V^R} \Ex_{e \sim E_{\rm gap}} \Ex_{(x_i,z_i)_{i \in e} \sim \cD^{R}_e}  \left[\prod_{i \in e} \sT^{(\Omega_i)}_{1 - \eta}g^{(a,e,i)}_{A,i}(x_i,z_i)\right] 	 \non\\
	&=\Ex_{A \sim V^R} \left[\sum_{a \in \psi^{-1}(1)} \Ex_{e \sim E_{\rm gap}} \Ex_{(x_i,z_i)_{i \in e} \sim \cD^{R}_e} \prod_{i \in e} \sT^{(\Omega_i)}_{1 - \eta}g^{(a,e,i)}_{A,i}(x_i,z_i)\right] 	 \non\\
	&\overset{1}{\leq} \Ex_{A \sim V_{\rm nice}} \left[\sum_{a \in \psi^{-1}(1)} \Ex_{e \sim E_{\rm gap}} \Ex_{(x_i,z_i)_{i \in e} \sim \cD^{R}_e}\prod_{i \in e} \sT^{(\Omega_i)}_{1 - \eta}g^{(a,e,i)}_{A,i}(x_i,z_i)\right] + 2^r\nu 		\non\\
	&\overset{2}{\leq} 2^r\Ex_{A \sim V_{\rm nice}} \left[\sum_{a \in \psi^{-1}(1)} \Ex_{e \sim E_{\rm gap}} \Ex_{(x_i)_{i \in e} \sim \theta^R_e}\prod_{i \in e} \sT^{(\mu_i)}_{1 - \eta} \bar{g}^{(a,e,i)}_{A,i}(x_i)\right] + 2^r\mu^r + 2^rC'_r\sqrt{r}\tau^{C\eta\kappa/r^2} + 4\cdot2^r\nu 		\non\\
	&\overset{3}{\leq} 2^r\Ex_{A \sim V^R}\left[\sum_{a \in \psi^{-1}(1)} \Ex_{e \sim E_{\rm gap}} \Ex_{(x_i)_{i \in e} \sim \theta^R_e}\prod_{i \in e} \sT^{(\mu_i)}_{1 - \eta} \bar{g}^{(a,e,i)}_{A,i}(x_i)\right] + 2^r\mu^r + 2^rC'_r\sqrt{r}\tau^{C\eta\kappa/r^2} +  6\cdot2^r\nu  \non \\
	&= 2^r\sum_{a \in \psi^{-1}(1)} \Ex_{A \sim V^R} \Ex_{e \sim E_{\rm gap}} \Ex_{(x_i)_{i \in e} \sim \theta^R_e}\left[\prod_{i \in e} \sT^{(\mu_i)}_{1 - \eta} \bar{g}^{(a,e,i)}_{A,i}(x_i)\right] + 2^r\mu^r + 2^rC'_r\sqrt{r}\tau^{C\eta\kappa/r^2} +  6\cdot2^r\nu  \non \\
	 		\label{eqn:invar-2}
\end{align}
where step $1$ uses Lemma \ref{lem:dec-set0} and the fact that the summation term is bounded by $2^r$ with probability $1$. Step $2$ applies Corollary \ref{corr:z-avg} for every fixed choice of $A \in V_{\rm nice}$, and step $3$ again uses the bound on $|V_{\rm nice}|$ from Lemma \ref{lem:dec-set0}.

\subsection{Raghavendra's Rounding Step} 				\label{sec:rag-step}
In this section, our goal would be to upper bound the RHS of \eqref{eqn:invar-2}, i.e.,
\begin{equation}				\label{eqn:sound-val1}	
\sum_{a \in \psi^{-1}(1)}\Ex_{A \sim V^R} \Ex_{e \sim E_{\rm gap}} \Ex_{(x_i)_{i \in e} \sim \theta^R_e} \left[\prod_{i \in e} \sT^{(\mu_i)}_{1 - \eta}\bar{g}^{(a,e,i)}_{A,i}(x_i) \right] + 2^r\mu^r
\end{equation} 
We first begin with the easy observation that we can bound $\mu^r \leq {\sf Opt}_\mu(G_{\rm gap})$, since $\mu \leq 1/2$, and therefore, the expected weight of constraints satisfied by setting every variable to $1$ independently with probability $\mu$ is at least $\mu^r$. On the other hand, upper bounding the expectation term is relatively more challenging. However, as we describe below, we can handle it along the lines of the soundness analysis of \cite{Rag08} and \cite{RT12}.

Consider the dictatorship test derived from the gap instance $G_{\rm gap}$ as in \cite{Rag08,RT12} (Figure \ref{fig:test-alt}).
\begin{figure}[ht!]
	\begin{mdframed}
		{\bf Input}: Assignments $h_i:\{0,1\}^R \to [0,1]$ for $i \in V_{\rm gap}$ satisfying 
		\[
		\Ex_{i \sim G_{\rm gap}} \Ex_{x_i \sim \{0,1\}^R_{\mu_i}} \left[h_i(x_i)\right] = \mu.
		\]
		{\bf Test Distribution}:
		\begin{enumerate}
			\item Sample edge $e = (i_1,\ldots,i_r) \sim E_{\rm gap}$.
			\item For $j = 1,\ldots,R$ do the following independently: sample $(x_1(j),\ldots,x_r(j)) \sim \theta_e$.
			\item For every $i \in [r]$, sample $x'_i \underset{1 - \eta}{\sim} x_i$.
			\item Return payoff\footnote{Here the payoff refers to the value obtained by evaluating the multi-linear polynomial corresponding to $\psi$ on $h_{i_1}(x_{i_1}),\ldots,h_{i_r}(x_{i_r})$.} $\psi(h_{i_i}(x_{i_1}),\ldots,h_{i_r}(x_{i_r}))$.
		\end{enumerate}
		\caption{Gaps-to-Dictatorship Test}
		\label{fig:test-alt}
	\end{mdframed}
\end{figure}		
We claim that for a fixed choice of $A \in V^R$, the expectation term from \eqref{eqn:sound-val1} is precisely the arithmetization of the acceptance probability of the test with respect to the assignments $\{h_i = \bar{g}_{A,i}\}_i$. And we know from the analysis of \cite{Rag08},\cite{RT12}, whenever the assignment $\{h_i\}_i$ satisfies a suitable ``small-influences'' condition, the expectation term can be upper bounded by the ($\mu$-constrained) optimal value of the instance $G_{\rm gap}$. We state the precise formulation of this guarantee in the lemma below.

\begin{lemma}[Theorem 6.2~\cite{RT12} Restated]				\label{lem:round}
	Suppose $\{\theta_e\}_{e \in E_{\rm gap}}$ is a $\gamma^2$-independent $(\gamma\mu)^r$-smooth feasible solution for the SDP in Figure \ref{fig:lass}. Let $\{h_i\}_{i \in V_{\rm gap}}$ be a collection of functions where for every $i \in V_{\rm gap}$, $h_i:\{0,1\}^R \to [0,1]$ is a function in $L_2(\{0,1\}^R_{\mu_i})$. Furthermore, suppose the collection of functions satisfy the conditions:
	\begin{equation}				\label{eqn:cond-1}
		\Pr_{i \sim G_{\rm gap}}\left[\max_{j \in [R]} {\sf Inf}^{(\mu_i)}_j\left[\sT^{(\mu_i)}_{1 - \eta} h_i\right] > \tau \right] \leq \nu, 
	\end{equation}
	 \begin{equation}				\label{eqn:cond-1a}
	 	\Pr_{e \sim E_{\rm gap}}\left[\max_{i \in e}\max_{j \in [R]} {\sf Inf}^{(\mu_i)}_j\left[\sT^{(\mu_i)}_{1 - \eta} h_i\right] > \tau \right] \leq \nu, 
	 \end{equation}
 	and
	\begin{equation}			\label{eqn:cond-2}
		\Ex_{i \sim G_{\rm gap}} \Ex_{x \sim \{0,1\}^R_{\mu_i}} \left[h_i(x)\right] = \mu.
	\end{equation}
	Then there exists an efficient randomized algorithm which outputs an assignment $\sigma:V_{\rm gap} \to \{0,1\}$ such that the expected value under the assignment $\sigma$ is at least
	\[
	\sum_{a \in \psi^{-1}(1)}\Ex_{e \sim E_{\rm gap}} \Ex_{(x_i)_{i \in e} \sim \theta^R_e} \left[\prod_{i \in e} \sT^{(\mu_i)}_{1 - \eta}h^{(a,e,i)}_i(x_i)\right] - C'_r\tau^{C\eta \kappa/r^2},
	\]
	and the relative weight of the assignment is $\mu(1 \pm  2\sqrt{\gamma})$ with probability at least $1 - 2\sqrt{\gamma}$. Here $h^{(a,e,i)}_i$ is defined from $h_i$ identically as in \eqref{eqn:f-defs}.
\end{lemma}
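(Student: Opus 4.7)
The plan is to follow the Raghavendra--Tan Gaussian-rounding scheme for $\mu$-constrained CSPs, but with the variance analysis of the bias re-done in terms of \emph{correlations} of pseudo-variables rather than covariances, so that bias concentration no longer suffers from a $2^{O(R)}$ blow-up and only requires the average-correlation guarantee supplied by Lemma~\ref{lem:smooth}.

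First I would set up the rounding. Using Proposition~\ref{prop:vector}, extract the vector solution $\{u_\emptyset\} \cup \{u_i\}_{i \in V_{\rm gap}}$ from the degree-$2$ part of $\theta$ and decompose $u_i = \mu_i u_\emptyset + w_i$ with $w_i \perp u_\emptyset$. For each coordinate $j \in [R]$, independently sample a standard Gaussian vector $g^{(j)}$ in the ambient space and set $\xi_i(j) = 1$ iff $\langle g^{(j)}, w_i\rangle/\|w_i\| \geq \Phi^{-1}(1 - \mu_i)$. As in \cite{Rag08}, the joint Gaussian distribution is chosen so that for every edge $e \in E_{\rm gap}$ and every $j \in [R]$, the tuple $(\xi_i(j))_{i \in e}$ is distributed exactly as $\theta_e$. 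Finally, independently over $i \in V_{\rm gap}$, set $\sigma(i) = 1$ with probability $h_i(\xi_i)$.

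Next I would analyze the expected value of $\sigma$. For each edge $e = (i_1, \ldots, i_r) \in E_{\rm gap}$, expand $\psi$ in its multilinear form so that the expected payoff on $e$ decomposes as $\sum_{a \in \psi^{-1}(1)} \Ex\prod_{i \in e} h^{(a,e,i)}_i(\xi_i)$. Essentially reversing the arithmetization in \eqref{eqn:sound-1}, I would insert $\sT^{(\mu_i)}_{1-\eta}$ operators by coupling the Gaussians so that each $\xi_i$ is a $(1-\eta)$-correlated copy of a fresh independent sample, and then apply the Invariance Principle (Theorem~\ref{thm:clt}) using the small-influence hypothesis \eqref{eqn:cond-1a}, the $(\gamma\mu)^r$-smoothness of $\theta_e$ (to lower-bound the atom probabilities required by Theorem~\ref{thm:clt}), and Fact~\ref{fact:decay} on the Fourier decay of the noise operator. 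Summing over $a$ and averaging over $e$ yields the claimed lower bound on the value of $\sigma$ up to the additive error $C'_r \tau^{C\eta\kappa/r^2}$.

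The main obstacle is the bias concentration, and this is where the proof departs from \cite{RT12}. By \eqref{eqn:cond-2}, $W := \Ex_{i \sim G_{\rm gap}} \sigma(i)$ has mean exactly $\mu$. To bound its variance, I would expand each $h_i$ in the $\mu_i$-biased Fourier basis; since $(\xi_i(j))_{j \in [R]}$ is independent across $j$, and for each $j$ the pair $(\xi_i(j), \xi_{i'}(j))$ has the exact joint law of $(X_i, X_{i'})$ under $\theta$, Fourier orthogonality gives
\[
\Ex\bigl[h_i(\xi_i) h_{i'}(\xi_{i'})\bigr] - \Ex h_i(\xi_i)\,\Ex h_{i'}(\xi_{i'}) = \sum_{\emptyset \neq S \subseteq [R]} \widehat{h_i}(S)\,\widehat{h_{i'}}(S)\; {\sf Corr}_\theta(X_i, X_{i'})^{|S|}.
\]
Using $|{\sf Corr}_\theta(X_i, X_{i'})| \le 1$, Cauchy--Schwarz, and $\|h_i\|_2 \le \|h_i\|_\infty \le 1$, this is bounded in absolute value by $|{\sf Corr}_\theta(X_i, X_{i'})|$. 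Averaging over $i,i' \sim G_{\rm gap}$ and invoking the $\gamma^2$-independence of $\theta$ gives $\mathrm{Var}(W) \le \gamma^2$; Chebyshev then yields $\Pr[|W - \mu| > 2\mu\sqrt{\gamma}] \le \gamma/(4\mu^2) \le 2\sqrt{\gamma}$ under the assumption $\gamma \le \mu^{10r}$ of Theorem~\ref{thm:main}. The crucial point is that $\|h_i\|_\infty \le 1$ together with the faithful coupling of $\xi$ to $\theta$ replaces the per-coordinate covariance bound of \cite{RT12} by an $R$-independent correlation bound, thereby avoiding any exponential blow-up in $R$.
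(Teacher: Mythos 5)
Your high-level plan correctly identifies the one genuinely new ingredient of this lemma---bounding the variance of the rounded bias by the average \emph{correlation} of the pseudo-variables, using $\|h_i\|_\infty \le 1$ so that the bound is $R$-independent---and your Fourier-orthogonality/Cauchy--Schwarz mechanism for that step is essentially the same as the paper's (the paper carries it out in the Hermite basis over Gaussian space, Lemma \ref{lem:var-bound}, rather than in the biased Boolean basis, but the calculation is isomorphic). However, the rounding scheme you build everything on has a genuine gap: you threshold the Gaussian projections to produce Boolean strings $\xi_i$ and assert that ``the joint Gaussian distribution is chosen so that for every edge $e$ and every $j$, the tuple $(\xi_i(j))_{i \in e}$ is distributed exactly as $\theta_e$.'' This is not achievable. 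Thresholding jointly Gaussian variables can only control biases and (via the quadrant-probability formula, nonlinearly) pairwise correlations; for $r \ge 3$ the full joint law of $\theta_e$ cannot be reproduced, and even for pairs the indicator correlation of thresholded $\rho$-correlated Gaussians is $\Lambda_\rho(\mu_i,\mu_{i'})$-type and does not equal $\rho = \langle \bar w_i, \bar w_{i'}\rangle$. This false premise infects both halves of your argument: if it were true, the value bound would be an exact identity and your subsequent invocation of the invariance principle would be both unnecessary and inapplicable (Theorem \ref{thm:clt} applies to low-influence \emph{multilinear polynomials} evaluated on ensembles with matching covariance, not to $h_i$ composed with threshold indicators of Gaussians); and your key display for the covariance of $h_i(\xi_i)$ and $h_{i'}(\xi_{i'})$ requires the pairwise law of $(\xi_i(j),\xi_{i'}(j))$ to match $\theta$, which it does not.

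The paper's actual construction (Figure \ref{fig:round}) avoids this entirely: it never thresholds. It feeds the raw Gaussian values $q_i = \mu_i \mathbf{1}_R + \mathbf{G} w_i$ into the multilinear extension $H_i$ of $\sT^{(\mu_i)}_{1-\eta} h_i$, clips the output to $[0,1]$ to get $p_i$, and rounds $\sigma(i) \sim \{0,1\}_{p_i}$. The value bound (Lemma \ref{lem:r-round}) then follows from the invariance principle applied to the multilinear polynomials $H_i$ on the ensembles $\cX^e$ and $\cQ^e$, which have matching covariance by Proposition \ref{prop:vector}; the bias concentration (Lemma \ref{lem:r-bias}) follows from the Hermite-space version of your correlation argument applied to the clipped functions $F_i(\zeta_i) = p_i$, where $\zeta_i,\zeta_j$ really are $\rho_{ij}$-correlated standard Gaussians, so Plancherel for $U_{\rho_{ij}}$ applies exactly. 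To repair your proposal you would need to replace the thresholded rounding with this polynomial-evaluation rounding; the rest of your outline then goes through.
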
	

\cite{RT12}~actually proves the above for the {\sc Max-Cut} predicates, but with a weaker guarantee on the deviation of the weight of the rounded assignment. While extending their analysis to work for all Boolean predicates is straightforward, improving the bound on the deviation requires new ideas. We elaborate on this issue and detail how to improve the bound when we prove the lemma in Section \ref{sec:round}.

{\bf Bounding the Expectation term via Lemma \ref{lem:round}}. Now as preparation towards applying Lemma \ref{lem:round}, we verify that the conditions \eqref{eqn:cond-1}, \eqref{eqn:cond-1a}, and \eqref{eqn:cond-2} hold for the collection of functions $\{\bar{g}_{A,i}\}_{i \in V_{\rm gap}}$, for most choices of $A \in V^R$.	To that end, let us define the subsets $V'_{{\rm nice},1}, V'_{{\rm nice},2} \subset V^R$ as 
\[
V'_{{\rm nice},1} := \left\{ A \in V^R \Big| \Pr_{i \sim {G_{\rm gap}}} \left[\max_{j \in [R]} {\sf Inf}^{(\mu_i)}_j\left[\sT^{(\mu_i)}_{1 - \eta}\bar{g}_{A,i}\right] > \tau \right] \leq \nu \right\},
\]
and
\[
V'_{{\rm nice},2} := \left\{ A \in V^R \Big| \Pr_{e \sim {E_{\rm gap}}} \left[\max_{i \in e}\max_{j \in [R]} {\sf Inf}^{(\mu_i)}_j\left[\sT^{(\mu_i)}_{1 - \eta}\bar{g}_{A,i}\right] > \tau \right] \leq \nu \right\}.
\]
Let $V'_{\rm nice} := V'_{{\rm nice},1} \cap V'_{{\rm nice},2}$. As in Section \ref{sec:z-avg}, we have the following lemma analogous to Lemma \ref{lem:dec-set0}, but now in terms of the averaged functions $\{\bar{g}_{A,i}\}_{A \in V^R}$ on the $(x_i)_{i \in V_{\rm gap}}$-variables.
\begin{lemma} 			\label{lem:dec-set1}
	Since $G$ is a NO instance, we have 
	\[
	\Pr_{A \sim V^R} \left[ A \in V'_{\rm nice} \right] \geq 1 - \nu.
	\]
\end{lemma}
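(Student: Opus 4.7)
The plan is to reduce Lemma \ref{lem:dec-set1} to Lemma \ref{lem:dec-set0} (which already bounds the edge-averaged influences of $g_{A,i}$) together with its vertex-distribution analogue, via a Fourier-analytic inequality bounding the influences of the averaged function $\bar g_{A,i}$ by the influences of $g_{A,i}$ in the lifted space $L_2(\Omega^R_i)$.

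\textbf{Step 1 (Fourier inequality).} First I would verify that $z$-averaging commutes with $x$-noise:
\[
\sT^{(\mu_i)}_{1-\eta}\bar g_{A,i}(x) \;=\; \overline{\sT^{(\Omega_i)}_{1-\eta}g_{A,i}}(x).
\]
This uses the factorization $\sT^{(\Omega_i)}_{1-\eta} = \sT^{(\mu_i)}_{1-\eta}\sT^{(\beta)}_{1-\eta}$ and the fact that averaging over $z$ absorbs the $\beta$-noise (since $\{\bot,\top\}_\beta$ is stationary for $\sT^{(\beta)}_{1-\eta}$). Next, expanding any $F \in L_2(\Omega^R_i)$ in the product Fourier basis $\{\phi^{(\mu_i)}_S(x)\phi^{(\beta)}_T(z)\}_{S,T \subseteq [R]}$, the Fourier coefficients of $\bar F(x) = \Ex_z F(x,z)$ are exactly $\wh{\bar F}(S) = \wh F(S,\emptyset)$, while all characters with $T \neq \emptyset$ die under $\Ex_z$. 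Hence
\[
{\sf Inf}^{(\mu_i)}_j\big[\bar F\big]
\;=\; \sum_{S \ni j} \wh F(S,\emptyset)^2
\;\leq\; \sum_{(S,T):\, j \in S \cup T} \wh F(S,T)^2
\;=\; {\sf Inf}^{(\Omega_i)}_j\big[F\big].
\]
Applying this inequality to $F := \sT^{(\Omega_i)}_{1-\eta}g_{A,i}$ yields ${\sf Inf}^{(\mu_i)}_j\big[\sT^{(\mu_i)}_{1-\eta}\bar g_{A,i}\big] \leq {\sf Inf}^{(\Omega_i)}_j\big[\sT^{(\Omega_i)}_{1-\eta}g_{A,i}\big]$ for every $j$.

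\textbf{Step 2 (edge and vertex parts).} The coordinatewise bound in Step~1 immediately gives $V_{\rm nice} \subseteq V'_{{\rm nice},2}$, so Lemma \ref{lem:dec-set0} supplies $\Pr_A[A \in V'_{{\rm nice},2}] \geq 1 - \nu$. For $V'_{{\rm nice},1}$ I would establish the vertex-distribution analogue of Lemma \ref{lem:dec-set0}: defining
\[
W := \Big\{A \in V^R \,:\, \Pr_{i \sim G_{\rm gap}}\big[\max_{j} {\sf Inf}^{(\Omega_i)}_j[\sT^{(\Omega_i)}_{1-\eta}g_{A,i}] > \tau\big] \leq \nu\Big\},
\]
we should have $\Pr_A[A \in W] \geq 1 - \nu$. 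This follows from the same influence-decoding argument that underlies Lemma \ref{lem:dec-set0}: coordinates $j$ with large influence force $A(j)$ to lie in a ``heavy'' subset of $V$ determined by eigenfunctions of $G^{\otimes R}_\eta$ whose eigenvalues are close to $1$, and such a subset yields a small non-expanding set in $G$ with volume in $[\delta/M, M\delta]$, contradicting the NO case of SSE (Definition \ref{defn:sseh}). Crucially, the $G^{\otimes R}_\eta$-walk and the random permutation $\pi$ in the definition of $g_{A,i}$ do not depend on how $i$ is drawn from $G_{\rm gap}$, so the argument goes through verbatim with the vertex distribution replacing the edge distribution. By Step~1, $W \subseteq V'_{{\rm nice},1}$, so $\Pr_A[A \in V'_{{\rm nice},1}] \geq 1 - \nu$.

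\textbf{Step 3 (union bound).} A union bound over the complements of $V'_{{\rm nice},1}$ and $V'_{{\rm nice},2}$ gives $\Pr_A[A \in V'_{\rm nice}] \geq 1 - 2\nu$; the factor of $2$ can be absorbed into the reduction parameters (e.g., by running the preceding steps with $\nu/2$ in place of $\nu$), yielding the claimed bound.

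\textbf{Main obstacle.} Step~1 is essentially routine Fourier analysis, so the bulk of the work is the vertex-distribution analogue invoked in Step~2. Setting up the correspondence between large-influence coordinates and small non-expanding sets in $G$---while correctly tracking the interplay between the $M^{(\mu_i)}_{z}$ leakage operator, the tensor-noise walk $G^{\otimes R}_\eta$, and the random permutation $\pi_i$---is the main technical point. Fortunately, the structure of these operators is invariant under the choice of $i \in V_{\rm gap}$, so we expect the proof of Lemma \ref{lem:dec-set0} to carry over with only cosmetic modifications.
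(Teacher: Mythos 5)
Your proposal is correct, but it reaches the conclusion by a different decomposition than the paper. The paper re-runs the influence-decoding machinery directly on the $z$-averaged functions: it defines $\bar f_A(x) = \Ex_{z}\Ex_{(A',x')\sim M^{(\mu_i)}_z(A,x)}\Ex_\pi f(\pi(A',x',z))$, checks that this family is permutation respecting and that $\bar g_{A,i} = \Ex_{B\sim G^{\otimes R}_\eta(A)}\bar f_B$, and then invokes Lemma \ref{lem:sse-dec} afresh in the space $L_2(\{0,1\}^R_{\mu_i})$ — doing this uniformly for both the edge-indexed set $V'_{{\rm nice},2}$ and the vertex-indexed set $V'_{{\rm nice},1}$ (the latter "by identical arguments"). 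You instead prove the pointwise comparison ${\sf Inf}^{(\mu_i)}_j\big[\sT^{(\mu_i)}_{1-\eta}\bar g_{A,i}\big] \leq {\sf Inf}^{(\Omega_i)}_j\big[\sT^{(\Omega_i)}_{1-\eta}g_{A,i}\big]$, which is a valid consequence of the product Fourier basis (the coefficients of $\bar F$ are exactly the $\wh F(S,\emptyset)$, and the commutation $\sT^{(\mu_i)}_{1-\eta}\bar g = \Ex_z\sT^{(\Omega_i)}_{1-\eta}g$ holds because $\sT^{(\Omega_i)}_{1-\eta}=\sT^{(\mu_i)}_{1-\eta}\circ\sT^{(\beta)}_{1-\eta}$ and $\{\bot,\top\}^R_\beta$ is stationary for $\sT^{(\beta)}_{1-\eta}$). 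This buys you $V_{\rm nice}\subseteq V'_{{\rm nice},2}$, so the edge part comes for free from Lemma \ref{lem:dec-set0} with no new permutation-respecting verification; but it does not spare you the vertex part, where you still need a vertex-distribution analogue of Lemma \ref{lem:dec-set0}, obtained by the same decoding argument (your eigenfunction phrasing is looser than the paper's Lemma \ref{lem:rst-dec}, which decodes via a map $F:V^R\to[R]$ consistent under permutations, but the invocation is legitimate). Your final accounting also matches the paper's: the paper runs each half of the contradiction with threshold $\nu/2$, which is exactly the absorption you describe, and the constants survive since $\nu^2/(2r) > \nu^2/(8r)$.
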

The proof of the above lemma again goes through the influence-decoding argument, we defer its proof to Section \ref{sec:dec-set1} for now. Next, we have the following lemma which says that for most choices of $A$, the collection of functions $\{g_{A,i}\}_i$ satisfy \eqref{eqn:cond-2} (up to negligible error).
\begin{lemma}[Long Code Mixing Lemma]			\label{lem:conc}
	Suppose $f$ satisfies \eqref{eqn:bias-constr}, and let $\{\bar{g}_{A,i}\}_{A,i}$ be the averaged functions constructed from $f$ (using  \eqref{eqn:g-def} and Corollary \ref{corr:z-avg}). For every $A \in V^R$, define $\mu_A := \Ex_{i \sim G_{\rm gap}} \Ex_{x_i \sim \{0,1\}^R_{\mu_i}} \left[\bar{g}_{A,i}(x_i)\right]$. Then for every $\alpha > 0$ we have 
	\[
	\Pr_{A \sim V^R}\left[\Big| \mu_A - \mu\Big| \geq \alpha \sqrt{\mu}\right] \leq \frac{|V_{\rm gap}|\beta}{\alpha^2}.
	\]
\end{lemma}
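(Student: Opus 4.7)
The proof applies Chebyshev's inequality to $\mu_A$ viewed as a function of $A\in V^R$ sampled uniformly. We need two ingredients: (i) $\E_A[\mu_A]=\mu$, and (ii) a variance bound $\Var_A[\mu_A]\le |V_{\rm gap}|\beta\,\mu$.

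\textbf{Expectation step.} Unfolding the definitions of $\bar g_{A,i}$ and $g_{A,i}$ gives
\[
\E_A[\mu_A] \;=\; \Ex_{i\sim G_{\rm gap}}\,\Ex_{A,x,z,B,B',x',\pi}\bigl[f(\pi(B',x',z))\bigr].
\]
Since $G$ is regular, the operator $G_\eta$ preserves the uniform distribution on $V$; hence $B\sim G^{\otimes R}_\eta(A)$ with $A\sim V^R$ is itself uniform on $V^R$, and $B'$ obtained from $M^{(\mu_i)}_z(B,x)$ remains uniform because the operator only resamples coordinates. Since $x'$ is marginally $\{0,1\}^R_{\mu_i}$, $z\sim\{\bot,\top\}^R_\beta$, and $\pi\sim\mathbbm{S}_R$ acts as a symmetry of the relevant product measure, the bias constraint \eqref{eqn:bias-constr} yields $\E_A[\mu_A]=\mu$.

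\textbf{Variance step.} The central structural observation is that $\mu_A$ depends on a given coordinate $A(j)$ only through the internal event $\{z_i(j)=\top\}$, since $M^{(\mu_i)}_{z_i}$ resamples $B'(j)$ uniformly whenever $z_i(j)=\bot$, erasing the dependence on $A(j)$. This event has probability $\beta$ over the internal randomness, so a coupling between $A$ and $A_{,j,A'(j)}$ that shares all internal randomness gives the pointwise bound $|\mu_A-\mu_{A_{,j,A'(j)}}|\le\beta$. To promote this into the target bound we expand
\[
\Var_A[\mu_A] \;=\; \Ex_{i,i'\sim G_{\rm gap}}\,{\sf Cov}_A\bigl(h_i(A),h_{i'}(A)\bigr),
\]
where $h_i(A):=\E_x[\bar g_{A,i}(x)]$. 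For each pair $(i,i')$ we compute ${\sf Cov}_A(h_i,h_{i'})$ by running two copies of the internal randomness that share only $A$, then averaging $A$ out; a coordinate $j$ contributes non-trivially to the covariance only when both $z_i(j)=\top$ and $z_{i'}(j)=\top$ (probability $\beta^2$), in which case $(B'(j),B'_*(j))$ is a correlated pair with uniform marginals whose non-trivial Fourier modes in $L_2(V)$ are damped by at least $(1-\eta)^2$ by the spectral gap of $G_\eta$. Summing the coordinate-wise contributions and using the independence of the $z_i$'s across distinct $i\in V_{\rm gap}$ (which converts a sum over coordinates into an $i$-average) yields $\Var_A[\mu_A]\le|V_{\rm gap}|\beta\,\mu$. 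Chebyshev's inequality then produces the claimed tail bound.

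\textbf{Main obstacle.} The hard part is obtaining a variance estimate tight enough for the claimed $|V_{\rm gap}|\beta/\alpha^2$ bound. The naive Efron--Stein bound $\Var_A[\mu_A]\le R\beta^2/2 = \beta/(2r\delta)$ coming from the pointwise discrete-derivative estimate is too weak, since the SSE volume parameter $\delta$ can be arbitrarily small. The proof therefore needs to leverage simultaneously (a) the sparsity of coordinates where two independent copies both ``activate'' the dependence on $A$, (b) the $\eta$-spectral-gap contraction of $G_\eta$, and (c) the independence of the $z_i$'s across $i\in V_{\rm gap}$ to produce the $1/|V_{\rm gap}|$ saving that the trivial bound misses.
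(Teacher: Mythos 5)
Your overall strategy (show $\Ex_A[\mu_A]=\mu$, bound ${\rm Var}_A[\mu_A]$, apply Chebyshev) is a legitimate route, and your expectation step is correct. But the variance step has a genuine gap: you correctly identify that the naive Efron--Stein bound $R\beta^2$ is useless, and then assert that combining ``sparsity of jointly activated coordinates,'' the ``$(1-\eta)^2$ spectral contraction of $G_\eta$,'' and ``independence of the $z_i$'s across $i$'' yields ${\rm Var}_A[\mu_A]\le|V_{\rm gap}|\beta\mu$ --- without deriving it. The mechanisms you name would not deliver this. The factor $(1-\eta)^2$ is a constant close to $1$ and saves essentially nothing; and the independence of the $z$-averaging across different $i\in V_{\rm gap}$ does not produce any ``$1/|V_{\rm gap}|$ saving'' (indeed the $|V_{\rm gap}|$ in the statement is a \emph{loss}, not a gain --- see below). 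The actual source of the bound is a second-moment/spectral computation for a \emph{single} $i$: writing $\Ex_A[\mu_{A,i}^2]$ as a quadratic form in the operator $\Ex_z[M^{(\mu_i)}_z(\cdot)]$ (composed with $G_\eta^{\otimes R}$), one finds its largest non-constant eigenvalue is $\approx\beta$ (a single coordinate retained, all others rerandomized), and the total non-constant Fourier mass is at most $\|g_{A,i}\|_2^2\le\Ex[g_{A,i}]=\tilde{\mu}_i$ because the function is $[0,1]$-valued. This gives ${\rm Var}_A[\mu_{A,i}]\le\beta\tilde{\mu}_i$, i.e.\ the bound scales with the \emph{mean} of the function, not with $R$. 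That is exactly Lemma~6.7 of \cite{RST12}, and it is the step your sketch needs but does not supply.

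For comparison, the paper does not compute ${\rm Var}_A[\mu_A]$ at all: it invokes Lemma~6.7 of \cite{RST12} separately for each $i\in V_{\rm gap}$ to get $\Pr_A[|\mu_{A,i}-\tilde{\mu}_i|\ge\alpha\sqrt{\tilde{\mu}_i}]\le\beta/\alpha^2$, takes a union bound over the $|V_{\rm gap}|$ vertices (this is where the $|V_{\rm gap}|$ factor comes from), and then uses Jensen's inequality ($\Ex_i\sqrt{\tilde{\mu}_i}\le\sqrt{\mu}$) to convert the per-$i$ deviations into a deviation bound for $\mu_A=\Ex_i\mu_{A,i}$. If you do want to pursue your Chebyshev-on-$\mu_A$ route, the clean way is: establish (or cite) ${\rm Var}_A[\mu_{A,i}]\le\beta\tilde{\mu}_i$ per $i$, then bound ${\rm Var}_A[\mu_A]=\Ex_{i,i'}{\sf Cov}_A(\mu_{A,i},\mu_{A,i'})\le\Ex_{i,i'}\sqrt{\beta\tilde{\mu}_i}\sqrt{\beta\tilde{\mu}_{i'}}\le\beta\mu$ by Cauchy--Schwarz and Jensen; Chebyshev then gives a tail bound of $\beta/\alpha^2$, which is actually \emph{stronger} than the stated lemma (no $|V_{\rm gap}|$ factor). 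But as written, your proposal asserts the variance bound rather than proving it, and the heuristics offered in its support are not the ones that make it true.
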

The above lemma is a straightforward application of Lemma 6.7 from \cite{RST12}; we prove it in Section \ref{sec:conc}. Let $V''_{\rm nice} \subset V^R$ denote the subset of vertices $A$ for which $|\mu_A - \mu| \leq \beta^{1/4} \sqrt{\mu}$. Instantiating the above lemma with $\alpha = \beta^{1/4}$ and using our choice of $\nu$ we get that $|V''_{\rm nice}| \geq (1 - \nu)|V^R|$. Overall, the above arguments imply that $|V'_{\rm nice} \cap V''_{\rm nice}| \geq (1 - 2\nu)|V^R|$.

Now fix an $A \in V'_{\rm nice} \cap V''_{\rm nice}$; for every such $A$, we know $\{\bar{g}_{A,i}\}_i$ satisfies \eqref{eqn:cond-1}, \eqref{eqn:cond-1a} (using the definition of $V'_{\rm nice}$) and \eqref{eqn:cond-2} (using the definition of $V''_{\rm nice}$), and hence we can use Lemma \ref{lem:round} to bound 
\begin{equation}				\label{eqn:sound-val2}
	\Ex_{e \sim E_{\rm gap}} \Ex_{(x_i)_{i \in e} \sim \theta^R_e} \left[\prod_{i \in e} \sT^\Pmu_{1 - \eta}\bar{g}_{A,i}(x_i)\right] \leq {\sf Opt}_{\mu(1\pm 2\sqrt{\gamma})}(G_{\rm gap}) + C'_r\sqrt{r}\tau^{C\eta\kappa/r^2}.
\end{equation}
Since the fraction of $A$'s which can be bounded using the above argument is at least $(1 - 2\nu)$, plugging in the bound from \eqref{eqn:sound-val2} into \eqref{eqn:sound-val1} gives us: 
\begin{equation}				\label{eqn:invar-4}
	\sum_{a \in \psi^{-1}(1)}\Ex_{A \sim V^R} \Ex_{e \sim E_{\rm gap}} \Ex_{(x_i)_{i \in e} \sim \theta^R_e} \left[\prod_{i \in e} \sT^{(\mu_i)}_{1 - \eta} \bar{g}_{A,i}(x_i) \right] + \mu^r		
	\leq 2\cdot {\sf Opt}_{\mu(1\pm 2\sqrt{\gamma})}(G_{\rm gap}) + O(\nu) + C'_r\sqrt{r}\tau^{C\eta\kappa/r^2}. 
\end{equation}

\subsection{Finishing the Soundness Analysis}				\label{sec:sound-pf}

Denoting $C_r := 2^r$, we stitch together the bounds from the various steps:
\begin{align*}
	&\Pr\Big[\mbox{ Test Accepts }\Big] \\
	& = \sum_{a \in \psi^{-1}(1)}\Ex_{e \sim E_{\rm gap}} \Ex_{A \sim V^R} \Ex_{(x_i,z_i)_{i \in e} \sim \cD^{R}_e} \left[\prod_{i \in e} \sT^{(\Omega_i)}_{1 - \eta}g^{(a,e,i)}_{A,i}(x_i,z_i)\right] 	\tag{Using \eqref{eqn:sound-1}} \\
	&\leq C_r\sum_{a \in \psi^{-1}(1)}\Ex_{e \sim E_{\rm gap}} \Ex_{A \sim V^R} \Ex_{(x_i)_{i \in e}} \left[\prod_{i \in e} \sT^{(\mu_i)}_{1 - \eta} \bar{g}^{(a,e,i)}_{A,i}(x_i)\right] + 2^r\mu^r + C_rC'_r\sqrt{r}\tau^{C\eta\kappa/r^2} + O(2^{2r}\nu) 			\tag{Using \eqref{eqn:invar-2}}\\
	&\leq 2C_r\cdot {\sf Opt}_{\mu(1\pm 2\sqrt{\gamma})}(G_{\rm gap}) + C_rC'_r\sqrt{r}\tau^{C\eta\kappa/r^2} + O(\nu \cdot C_r). 			\tag{Using \eqref{eqn:invar-4}} \\
	&\leq 2C_r {\sf Opt}_{\mu(1 \pm 2\sqrt{\gamma})}(G_{\rm gap}) + 2s \\
	&\leq 4C_r s,
\end{align*}
where the penultimate step follows using our choice of the parameters $\tau$ and $\nu$, and the last step follows using the robust soundness of the instance $G_{\rm gap}$. This concludes the soundness analysis of the reduction.

\subsection{Proof of Theorem \ref{thm:main}}				\label{sec:thm-main}

In this section, we put together the analyses from the previous sections to prove Theorem \ref{thm:main}

\begin{proof}[Proof of Theorem \ref{thm:main}]
Let $(G_{\rm gap},\theta')$ be a $(\ell,\mu,c,s,\gamma)$-gap instance for the {\sc Max-CSP}$(\psi)$ as in the setting of the Theorem \ref{thm:main}. Then as described in Section \ref{sec:redn}, we construct a $(\ell - 1/\gamma^C,\mu,c - O(\gamma,r),s)$-gap instance $(G_{\rm gap},\theta)$. Let $\epsilon,M$ be parameters defined as in below Figure \ref{fig:test-1}, and let $G = (V,E)$ be a $(\epsilon,\delta,M)$-SSE instance as in Conjecture \ref{conj:sseh}. 
Finally, we run the reduction from Figure \ref{fig:test-1} on $(G_{\rm gap},\theta)$ and $G$, and let $\cH= (V_\cH,E_{\cH},\tilde{w}_\cH, w_{\cH})$ be the resulting $\mu$-constrained {\sc Max-CSP}$(\psi)$ instance.

{\bf Completeness}. Suppose $G$ is a YES instance. Then the arguments from Section \ref{sec:comp} imply that there exists a feasible assignment which satisfies at least $\Omega(\rho^2 (c - O(\gamma r) - O(\eta r)))/r = \Omega(c\cdot r^{-3}\log^{-2}(1/\mu))$-fraction of constraints.

{\bf Soundness}. Suppose $G$ is a NO instance. Then, the arguments from Section \ref{sec:sound-pf} imply that any feasible assignment can satisfy at most $O(2^r \cdot s)$-fraction of constraints.

Combining the above with the \NP-hardness of SSE from Conjecture \ref{conj:sseh} completes the proof of the theorem.
\end{proof}

\subsection{Proof of Lemma \ref{lem:conc}}					\label{sec:conc}

Denote $\ell := V_{\rm gap}$. Also for every $i \in V_{\rm gap}$, define $\mu_{A,i}$ as 
\[
\mu_{A,i} := \Ex_{x \sim \{0,1\}^R_{\mu_i}}\bar{g}_{A,i}(x) = \Ex_{x \sim \{0,1\}^R_{\mu_i}} \Ex_{z \sim \{\bot,\top\}^R_\beta} g_{A,i}(x,z)
\]
Finally, let $\tilde{\mu}_i := \Ex_{A \sim V^R}[\mu_{A,i}]$. The key tool here is the following concentration bound from the soundness analysis in \cite{RST12}.

\begin{lemma}[Restatement of Lemma 6.7~\cite{RST12}]
	Fix any $i \in V_{\rm gap}$. Then for any $\alpha > 0$, we have 
	\[
	\Pr_{A \sim V^R} \Big[|\mu_{A,i} - \tilde{\mu}_i| \geq \alpha\sqrt{\tilde{\mu}_i}\Big] \leq \frac{\beta}{\alpha^2}.
	\]
\end{lemma}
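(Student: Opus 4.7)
The plan is to express $\mu_{A,i}$ as the image of a $[0,1]$-valued function under a product Markov operator, and then bound its variance spectrally. The key structural observation is that, for each coordinate $j$, the conditional joint distribution of $(B'(j), x'(j), z(j))$ given $A(j)$ factorizes as: $z(j)\sim\{\bot,\top\}_\beta$, $x'(j)\sim\{0,1\}_{\mu_i}$, and $B'(j) \sim G_\eta(A(j))$ if $z(j)=\top$ while $B'(j) \sim \mathrm{Unif}(V)$ otherwise, all mutually independent and i.i.d.\ across $j$. Letting $\tilde f(B',x',z) := \Ex_\pi[f(\pi(B',x',z))] \in [0,1]$ denote the $\pi$-symmetrization of $f$, this gives
\[
H(A) \;:=\; \mu_{A,i} \;=\; (K^{\otimes R}\tilde f)(A),
\]
where $K$ is the single-coordinate Markov operator from $L_2(V \times \{0,1\}_{\mu_i} \otimes \{\bot,\top\}_\beta)$ to $L_2(V)$ defined by $(Kg)(a) := \Ex[g(B'(1),x'(1),z(1)) \mid A(1)=a]$.

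The next step is the spectral bound $\|Kg\|_2 \leq \sqrt{\beta}\,\|g\|_2$ for every $g$ with $\Ex g = 0$. Expanding $g$ in the product Fourier basis on $V \times \{0,1\}_{\mu_i} \otimes \{\bot,\top\}_\beta$ and using independence of $x'(1)$ from $(A(1), z(1), B'(1))$, $K$ annihilates every basis element with a non-trivial character on the $x'(1)$-factor. On the surviving modes, a direct calculation using the $\beta$-biased Fourier character on $\{\bot,\top\}_\beta$ (which takes values $\sqrt{(1-\beta)/\beta}$ on $\top$ and $-\sqrt{\beta/(1-\beta)}$ on $\bot$) yields: (i) modes non-constant in $B'$ but trivial in $z$ contribute a factor $\beta \cdot (G_\eta \phi)$ of $L_2$-norm at most $\beta(1-\eta)$; (ii) modes non-constant in both $B'$ and $z$ contribute $\sqrt{\beta(1-\beta)} \cdot (G_\eta \phi)$ of $L_2$-norm at most $\sqrt{\beta(1-\beta)}(1-\eta) \leq \sqrt{\beta}$. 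Hence the restriction of $K$ to mean-zero functions has operator norm at most $\sqrt{\beta}$.

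Tensorizing, the self-adjoint operator $(K^*K)^{\otimes R}$ has eigenvalue $1$ on the constant mode and at most $\beta$ on every non-constant direction in each coordinate. Decomposing $\tilde f$ in this eigenbasis,
\[
\|H\|_2^2 \;=\; \langle \tilde f,\, (K^*K)^{\otimes R}\tilde f\rangle \;\leq\; \widehat{\tilde f}(\mathbf 0)^2 + \beta\bigl(\|\tilde f\|_2^2 - \widehat{\tilde f}(\mathbf 0)^2\bigr).
\]
Since $\widehat{\tilde f}(\mathbf 0) = \Ex\tilde f = \tilde\mu_i$ and $\tilde f \in [0,1]$ yields $\|\tilde f\|_2^2 \leq \Ex\tilde f = \tilde\mu_i$, I would conclude $\mathrm{Var}(H) = \|H\|_2^2 - \tilde\mu_i^2 \leq \beta(\tilde\mu_i - \tilde\mu_i^2) \leq \beta\,\tilde\mu_i$. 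Chebyshev's inequality then gives $\Pr_{A \sim V^R}[|H - \tilde\mu_i| \geq \alpha\sqrt{\tilde\mu_i}] \leq \beta/\alpha^2$, as required.

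The main obstacle is isolating the correct single-coordinate channel $K$ from the layered definition of $g_{A,i}$ — which composes the noisy random walk $G_\eta^{\otimes R}$, the leakage operator $M_z^{(\mu_i)}$, and the random permutation $\pi$ — and verifying the tight $\sqrt{\beta}$ bound on its mean-zero operator norm. The crucial structural point is that the $\sqrt{(1-\beta)/\beta}$ scaling of the non-trivial $\beta$-biased character on $\{\bot,\top\}_\beta$ exactly matches the $\beta$-probability that $A(j)$ actually reaches $B'(j)$ (on the event $\{z(j)=\top\}$), producing an $L_2$-norm of $\sqrt{\beta(1-\beta)} \leq \sqrt{\beta}$ on mixed modes. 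Finally, exploiting $\tilde f \in [0,1]$ to replace $\|\tilde f\|_2^2$ by $\tilde\mu_i$ — rather than the trivial $\tilde\mu_i^2$ — is what makes the variance proportional to $\tilde\mu_i$, so that Chebyshev yields a deviation of scale $\sqrt{\beta \cdot \tilde\mu_i}$.
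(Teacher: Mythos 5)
Your proof is correct, and it is essentially the argument behind the cited Lemma 6.7 of [RST12] — the paper itself does not reprove this statement, deferring entirely to that reference, and the standard proof there is exactly this second-moment computation: the single-coordinate channel contracts mean-zero functions by $\sqrt{\beta}$, tensorization plus $\Ex[\tilde f^2]\le\Ex[\tilde f]$ gives $\mathrm{Var}(\mu_{A,i})\le\beta\tilde\mu_i$, and Chebyshev finishes. One small point of care: the images under $K$ of the modes $\chi\cdot 1$ and $\chi\cdot\phi^{(\beta)}$ are both proportional to $G_\eta\chi$ and hence not orthogonal, so the per-mode norm bounds alone do not give the operator norm — but your formulation via $\|K^*K\|\le\beta$ on the mean-zero subspace (equivalently, Cauchy--Schwarz on the coefficient pair $(\beta,\sqrt{\beta(1-\beta)})$) closes this correctly.
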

Then, applying the above lemma for every $i \in V_{\rm gap}$ followed by a union bound we get that
\[
\Pr_{A \sim V^R}\left[\exists i \in V_{\rm gap} : 
|\mu_{A,i} - \tilde{\mu}_i| \geq \alpha\sqrt{\tilde{\mu}_i}\right] \leq \ell \beta/\alpha^2.
\]
Finally, to conclude the proof, we observe that for any $A \in V^R$ for which we have $|\mu_{A,i} - \tilde{\mu}_i| \leq \alpha\sqrt{\mu_i}$ for every $i \in V_{\rm gap}$, we can show that
\[
\mu_A = \Ex_{i \sim V_{\rm gap}} \mu_{A,i} \leq \Ex_{i \sim V_{\rm gap}}\left[\tilde{\mu}_i + \alpha\sqrt{\tilde{\mu}_i}\right] \leq \mu + \alpha\sqrt{\mu}, 
\]
where in the last step, using Jensen's inequality we can bound that $\Ex_i\sqrt{\mu_i} \leq \sqrt{\Ex_{i} \tilde{\mu}_i}  = \sqrt{\mu}$. Using similar arguments we can also show that $\mu_A \geq \mu - \alpha\sqrt{\mu}$. This establishes the first inequality. The second inequality then follows directly from our choice of parameters.

\section{Proof of Lemma \ref{lem:prod}}					\label{sec:prod}

The proof of the lemma uses Theorem \ref{thm:clt} and Corollary \ref{corr:egt}, along with $r$-ary noise stability estimates for $\Lambda_\rho(\cdot)$. Recall that in \eqref{eqn:lambda-def}, we defined the $r$-ary Gaussian stability as
\[
\Lambda_\rho(\delta_1,\ldots,\delta_r) = \Pr_{g \sim N(0,1)} \Pr_{g_1,\ldots,g_r \underset{\rho}{\sim} g}\left[\forall j \in [r], g_j \leq \Phi^{-1}(\delta_j)\right],
\] 
where $\Phi:\mathbbm{R} \to [0,1]$ is the Gaussian CDF function. The proof of the lemma will require the following explicit bound on $\Lambda_\rho$.

\begin{lemma}[Folklore]					\label{lem:hspace}
	There exists $\delta_0 \in (0,1/2)$ such that the following holds. Let $\delta_1,\ldots,\delta_r \in [0,\delta_0]$, and let $0 
	\leq \rho \leq 1/\big({4r^2\log(1/\delta^*)}\big)$ where $\delta^* = \min_{i \in [r]} \delta_i$.  Then,
	\[
	\Lambda_{\rho}\big(\delta_1,\ldots,\delta_r\big) \leq 2^r \prod_{i \in [r]} \delta_i.
	\] 	
\end{lemma}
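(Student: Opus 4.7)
The plan is to reduce $\Lambda_\rho$ to a one-dimensional Gaussian integral by conditioning on the common base Gaussian $g$. Writing $g_i = \rho g + \sqrt{1-\rho^2}\,\zeta_i$ with $\zeta_i$ i.i.d.\ standard Gaussians independent of $g$, and setting $\tau_i := \Phi^{-1}(\delta_i) < 0$, I obtain
\[
\Lambda_\rho(\delta_1,\ldots,\delta_r) = \int_{-\infty}^{\infty} \phi(t)\prod_{i\in[r]} \Phi\!\left(\frac{\tau_i - \rho t}{\sqrt{1-\rho^2}}\right) dt.
\]
Heuristically, for $t$ not too negative the argument of each $\Phi$ is close to $\tau_i$, so each factor is close to $\delta_i$; the only dangerous region is the left tail of $g$, where $\tau_i - \rho t$ can become large and positive and the corresponding $\Phi$-factor can approach $1$.

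Accordingly, I will split the integral at $t = -T$, where $T := \sqrt{2(r+1)\log(1/\delta^*)}$. For the far-tail piece $t \leq -T$, I bound every factor $\Phi(\cdot) \leq 1$, so the contribution is at most the standard Gaussian tail $\Phi(-T) \leq e^{-T^2/2} \leq (\delta^*)^{r+1}$, which is $\leq \delta_0 \prod_i \delta_i$ and therefore negligible as long as $\delta_0$ is chosen small. For the main piece $t \geq -T$, I use monotonicity to replace the argument of $\Phi$ by its maximum at $t=-T$, namely $(\tau_i + \rho T)/\sqrt{1-\rho^2}$, and then by $\tau_i + \rho T$ itself (dividing a negative number by $\sqrt{1-\rho^2}\in(0,1]$ only makes it more negative, which shrinks $\Phi$); the $\phi(t)\,dt$ integrates to at most $1$. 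Hence this piece is bounded by $\prod_i \Phi(\tau_i + \rho T)$, and the whole task collapses to showing $\Phi(\tau_i + \rho T) \leq (1 + O(r^{-3/2}))\,\delta_i$.

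The key estimate will be derived from the monotonicity of $\phi$ on $(-\infty,0]$, which yields $\Phi(\tau_i + s) \leq \Phi(\tau_i) + s\,\phi(\tau_i + s)$, combined with the identity $\phi(\tau_i + s) = \phi(\tau_i)e^{|\tau_i|s - s^2/2}$ and Mill's ratio $\phi(\tau_i)/\Phi(\tau_i) \leq 2|\tau_i|$ (valid once $|\tau_i|\geq 1$, which is arranged by choosing $\delta_0$ small enough). This gives $\Phi(\tau_i + s)/\delta_i \leq 1 + 2|\tau_i| s \, e^{|\tau_i|s}$. Plugging in $s = \rho T$, and using the hypothesis $\rho \leq 1/(4r^2\log(1/\delta^*))$ together with $|\tau_i| \leq \sqrt{2\log(1/\delta_i)} \leq \sqrt{2\log(1/\delta^*)}$, the product $|\tau_i|\rho T$ collapses to $O(r^{-3/2})$ --- this is precisely where the quantitative assumption on $\rho$ is used. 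Taking the product over $i$ gives $\prod_i \Phi(\tau_i + \rho T) \leq (1+O(r^{-3/2}))^r \prod_i \delta_i \leq e^{O(r^{-1/2})}\prod_i \delta_i$, which is comfortably bounded by $2^r \prod_i \delta_i$ for $r \geq 2$; the $r=1$ case is trivial since $\Lambda_\rho(\delta_1) = \delta_1$.

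The only real obstacle is bookkeeping of the Mill's-ratio constants so that the final bound indeed comes in under $2^r$: one must calibrate $\delta_0$ small enough that $|\tau_i|\geq 1$ (so the clean form $\phi/\Phi \leq 2|\tau|$ is available) and verify that the exponential factor $e^{|\tau_i|\rho T}$ remains bounded by a constant close to $1$ under the given bound on $\rho$. Apart from this arithmetic, the proof is essentially a one-dimensional Gaussian tail estimate.
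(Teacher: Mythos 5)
Your proposal is correct and follows essentially the same route as the paper's proof: condition on the shared Gaussian $g$, truncate its far left tail (bounded by $(\delta^*)^{\Omega(r)} \ll \prod_i \delta_i$), and on the complementary event bound each factor by $\Phi(\tau_i + s)$ for a small shift $s$ controlled via the hypothesis on $\rho$. The only difference is in the per-coordinate step, where the paper invokes a crude estimate $\Phi(\tau_i + s) \leq 2\delta_i$ (its Fact on the Gaussian CDF) to collect the $2^r$, while your Mills-ratio argument gives the sharper $\left(1 + O(r^{-3/2})\right)\delta_i$ per coordinate; both suffice.
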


Several similar bounds are known in the literature for related notions of Gaussian stability (for e.g., see \cite{MNT16},\cite{KS15}). For completeness, we include a proof of the above in Section \ref{sec:hspace}.

{\bf Moving from $R$-variate to $2R$-variate space}. Before we prove the lemma, we first show that we can transfer the analysis from the $R$-variate spaces $\Omega^R_i$ to the $2R$-variate space $\{0,1\}^R_{\mu_i} \otimes \{\bot,\top\}^R_\beta$ -- this will enable us to deal with the covariance structure of the $x$ and $z$ variables separately in the subsequent steps. Recall that in the setting of the lemma, we have a collection of functions $h_1,\ldots,h_r:\Omega^R \to [0,1]$ where $h_i \in L_2(\Omega^R_i)$ for every $i \in [r]$. Now for every $i \in [r]$, let  $h'_i:\{0,1\}^R \times \{\bot,\top\}^R \to [0,1]$ be the corresponding $2R$-variate function in the space $L_2(\{0,1\}^R_{\mu_i} \otimes \{\bot,\top\}^R_\beta)$ i.e.,
\[
h'_i\Big(x_i(1),\ldots,x_i(R),z_i(1),\ldots,z_i(R)\Big) \defeq h_i\Big((x_i,z_i)(1),\ldots,(x_i,z_i)(R)\Big).
\]
Note that although $h_i(\cdot)$ and $h'_i(\cdot)$ are identical functions on $\Omega^R$, we will use $h'_i$ to explicitly denote that the function is being defined w.r.t the $2R$-dimensional probability space $\{0,1\}^R_{\mu_i} \otimes \{\bot,\top\}^R_\beta$. Then by definition, the influences of the function $h'_i$ will be defined with respect to the $2R$-variables $x_i(1),\ldots,x_i(R)$ and $z_i(1),\ldots,z_i(R)$. In particular, we shall use ${\sf Inf}_{x_i(j)}[h'_i]$ to denote the influence of the coordinate corresponding to the variable $x_i(j)$, and similarly we use ${\sf Inf}_{z_i(j)}[h'_i]$ to denote the influence of the coordinate corresponding to variable $z_i(j)$, for any $j \in [R]$. The following observation is a direct consequence of the definition of the probability spaces corresponding to the functions $\{h'_i\}_{i \in [r]}$.

\begin{observation}					\label{obs:h-exp}
	For every $i \in [r]$, the Fourier expansion of $h'_i$ with respect to the $2R$-dimensional probability space $\{0,1\}^R_{\mu_i} \otimes \{\bot,\top\}^R_\beta$ is given as 
	\[
	h'_i(x_i,z_i) = \sum_{S,T \subseteq [R]} \wh{h'_i}(S,T)\prod_{j \in S}  \phi^{(\mu_i)}(x_i(j)) \prod_{j' \in T}\phi^{(\beta)}(z_i(j')),
	\]
	where recall that $\phi^{(\mu_i)}$ and $\phi^{(\beta)}$ are the non-trivial Fourier characters for $\{0,1\}_{\mu_i}$ and $\{0,1\}_\beta$. Consequently, for every $j \in [R]$ we have that 
	\[
	{\sf Inf}_{x_i(j)}\left[h'_i\right] = \sum_{\substack{S,T \subseteq [R] \\ j \in S}} \wh{h'_i}(S,T)^2 
	\ \ \ \ \ \ \ \textnormal{and} \ \ \ \ \ \ \
	{\sf Inf}_{z_i(j)}\left[h'_i\right] = \sum_{\substack{S,T \subseteq [R] \\ j \in T}} \wh{h'_i}(S,T)^2.
	\]
\end{observation}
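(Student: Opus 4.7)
The plan is to derive both claims directly from the general Fourier-basis construction for product probability spaces recalled in Section~\ref{sec:prelim}. First I would observe that the underlying probability space of $h'_i$ is a $2R$-fold product of two-point spaces: the first $R$ factors are $\{0,1\}_{\mu_i}$ and the remaining $R$ factors are $\{\bot,\top\}_\beta$. Each two-point factor admits a unique (up to sign) orthonormal Fourier basis consisting of the constant $1$ and a single non-trivial character, namely $\phi^{(\mu_i)}$ for the $x_i(j)$ coordinates and $\phi^{(\beta)}$ for the $z_i(j')$ coordinates, by the definition in \eqref{four:def}.

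Next I would invoke the standard product construction of Fourier bases for product probability spaces recalled in Section~\ref{sec:prelim}: the orthonormal basis of $L_2\bigl(\{0,1\}^R_{\mu_i}\otimes\{\bot,\top\}^R_{\beta}\bigr)$ is the set of tensor products of the single-coordinate basis functions. Since each two-point factor contributes either the constant $1$ or its unique non-trivial character, the basis elements are naturally indexed by pairs $(S,T)$ with $S,T\subseteq[R]$: $S$ records the $x$-coordinates that use $\phi^{(\mu_i)}$ and $T$ records the $z$-coordinates that use $\phi^{(\beta)}$. Expanding $h'_i$ in this basis gives the displayed expression, and uniqueness of the coefficients $\wh{h'_i}(S,T)$ follows from orthonormality (this is the direct analogue of Fact~\ref{fact:p-fourier} applied to the $2R$-dimensional product).

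For the influence identities, I would apply the general fact (recalled in Section~\ref{sec:prelim}) that for any function on a product probability space and any fixed choice of Fourier basis, the influence of a coordinate equals the sum of squared Fourier coefficients of basis elements that are non-trivial on that coordinate. In our $2R$-coordinate indexing, a basis element indexed by $(S,T)$ is non-trivial on the coordinate $x_i(j)$ iff $j\in S$, and non-trivial on $z_i(j)$ iff $j\in T$. Substituting into the general formula yields
\[
{\sf Inf}_{x_i(j)}\left[h'_i\right] = \sum_{\substack{S,T\subseteq[R]\\ j\in S}} \wh{h'_i}(S,T)^2, \qquad {\sf Inf}_{z_i(j)}\left[h'_i\right] = \sum_{\substack{S,T\subseteq[R]\\ j\in T}} \wh{h'_i}(S,T)^2,
\]
which is exactly the claim. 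No step is expected to be a genuine obstacle: the observation is essentially bookkeeping that unpacks a function on $\Omega^R_i$ (viewed with the $R$-fold product basis on $4$-point factors) into the $2R$-fold product basis on two-point factors. The only point worth being careful about is matching conventions, namely that the unique non-trivial character on each two-point factor is indeed $\phi^{(\mu_i)}$ or $\phi^{(\beta)}$ as defined in \eqref{four:def}, so that no normalization constants appear in the stated expansion.
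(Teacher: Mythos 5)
Your proposal is correct and matches the paper's treatment: the paper states Observation \ref{obs:h-exp} as a direct consequence of the product Fourier-basis construction and the influence formula recalled in Section \ref{sec:prelim}, which is exactly the bookkeeping argument you give (two-point factors with unique non-trivial characters $\phi^{(\mu_i)}$, $\phi^{(\beta)}$, tensor-product basis indexed by pairs $(S,T)$, and influences as sums of squared coefficients over basis elements non-trivial on the given coordinate). No substantive difference from the paper's reasoning.
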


Next, we have the following claim which shows that if $h_i$ has small influences in the $R$-dimensional probability space $\Omega^R_i$, then $h'_i$ has small influences in the $2R$-dimensional space $\{0,1\}^R_{\mu_i} \otimes \{\bot,\top\}^R_\beta$.

\begin{claim}					\label{cl:inf-tr}
	The following holds for any fixed $i \in [r]$. For every $j \in [R]$, we have 
	\[
	\max\left\{{\sf Inf}_{x_i(j)}[h'_i], {\sf Inf}_{z_i(j)}[h'_i]\right\} \leq {\sf Inf}^{(\Omega_i)}_j[h_i],
	\]
	where recall that ${\sf Inf}^{(\Omega_i)}_j[h_i]$ is the influence of $(x_i,z_i)(j)$ in $h_i$ measured in the $R$-dimensional space $\Omega^R_i$ i.e.,
	\[
	{\sf Inf}^{(\Omega_i)}_j[h_i] = \Ex_{(x_i(j'),z_i(j'))_{j' \neq j}}\left[{\sf Var}_{(x_i(j),z_i(j))}\left[ f(x_i,z_i)~\Big|~(x_i(j'),z_i(j'))_{j' \neq j} \right] \right].
	\]
\end{claim}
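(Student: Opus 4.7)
The plan is to reduce both inequalities to an elementary Fourier-support containment in the $2R$-variate expansion of Observation~\ref{obs:h-exp}. First I would set up an explicit correspondence between the Fourier bases of the two ambient spaces. A Fourier basis for $L_2(\Omega_i) = L_2(\{0,1\}_{\mu_i}\otimes\{\bot,\top\}_\beta)$ is the four-element family $\{1,\;\phi^{(\mu_i)}(x),\;\phi^{(\beta)}(z),\;\phi^{(\mu_i)}(x)\phi^{(\beta)}(z)\}$, obtained from the product of the one-character bases of the two marginal spaces. Hence a basis for $L_2(\Omega_i^R)$ is indexed by multi-indices $\alpha\in\{0,1,2,3\}^R$, with the standard formula
\[
{\sf Inf}^{(\Omega_i)}_j[h_i] \;=\; \sum_{\alpha:\,\alpha(j)\neq 0}\wh{h_i}(\alpha)^2.
\]

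Next I would observe that this expansion and the $2R$-variate expansion of $h'_i$ from Observation~\ref{obs:h-exp} are literally the same orthogonal decomposition, just indexed differently. Concretely, the map $\alpha\mapsto(S(\alpha),T(\alpha))$ with $S(\alpha)=\{j:\alpha(j)\in\{1,3\}\}$ and $T(\alpha)=\{j:\alpha(j)\in\{2,3\}\}$ is a bijection between $\{0,1,2,3\}^R$ and $2^{[R]}\times 2^{[R]}$ under which the character $\prod_{j\in S(\alpha)}\phi^{(\mu_i)}(x_i(j))\prod_{j\in T(\alpha)}\phi^{(\beta)}(z_i(j))$ coincides with the $R$-variate character corresponding to $\alpha$. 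By uniqueness of Fourier coefficients, $\wh{h_i}(\alpha)=\wh{h'_i}(S(\alpha),T(\alpha))$, and the condition $\alpha(j)\neq 0$ translates into $j\in S\cup T$. This yields
\[
{\sf Inf}^{(\Omega_i)}_j[h_i] \;=\; \sum_{(S,T):\,j\in S\cup T}\wh{h'_i}(S,T)^2.
\]

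The conclusion is then immediate: using the Fourier formulas for ${\sf Inf}_{x_i(j)}[h'_i]$ and ${\sf Inf}_{z_i(j)}[h'_i]$ stated in Observation~\ref{obs:h-exp}, each of them is a sum over a subset of the index set $\{(S,T):j\in S\cup T\}$ (namely $\{j\in S\}$ and $\{j\in T\}$ respectively), so both are upper bounded term-by-term by ${\sf Inf}^{(\Omega_i)}_j[h_i]$, and taking the max gives the claim. I do not anticipate any real obstacles here, since the content is purely a basis-identification argument; the only mild care needed is to verify that the product of a basis-for-$\{0,1\}_{\mu_i}$ and a basis-for-$\{\bot,\top\}_\beta$ does give an orthonormal basis for the product probability space (which is standard, by independence of the marginals). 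If one prefers an analytic derivation avoiding Fourier, the same inequality drops out in one line from the law of total variance applied after conditioning on $(x_i(j'),z_i(j'))_{j'\neq j}$: independence of $x_i(j)$ and $z_i(j)$ under $\Omega_i$ gives ${\sf Var}_{(x_i(j),z_i(j))}[h_i]=\mathbb{E}_{z_i(j)}{\sf Var}_{x_i(j)}[h_i]+{\sf Var}_{z_i(j)}\mathbb{E}_{x_i(j)}[h_i]$, and dropping the (non-negative) second term yields the $x$-coordinate bound, with the symmetric argument handling the $z$-coordinate.
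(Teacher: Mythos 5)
Your proposal is correct and follows essentially the same route as the paper: identify the product Fourier basis for $\Omega_i$, set up the bijection between multi-indices and pairs $(S,T)$, invoke uniqueness of Fourier coefficients to get $\wh{h_i}(\alpha)=\wh{h'_i}(S_\alpha,T_\alpha)$, and compare the influence sums term by term. The closing remark about the law of total variance is a nice (and valid) shortcut, but the main argument matches the paper's.
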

\begin{proof}
	Fix an $i \in [r]$, and let $h_i \in L_2(\Omega^R_i)$ and $h'_i \in L_2(\{0,1\}^R_{\mu_i} \otimes \{\bot,\top\}^R_\beta)$ be as above. Now, since $\Omega_i := \{0,1\}_{\mu_i} \otimes \{\bot,\top\}_\beta$, we can use the Fourier characters of $\{0,1\}_{\mu_i}$ and $\{\bot,\top\}_\beta$ to construct a Fourier basis for $\Omega_i$. Let us denote the Fourier characters corresponding to the probability space $\Omega_i$ as $\{\bar{\phi}_\alpha:\Omega_i \to \mathbbm{R}\}_{\alpha \in \{0,1\}^2}$ and define them as:
	\[
	\bphi_{0,0} \equiv 1, \qquad \bphi_{1,0}(x,z) = \phi^{(\mu_i)}(x), \qquad \bphi_{0,1}(x,z) = \phi^{(\beta)}(z), \qquad \bphi_{1,1}(x,z) = \phi^{(\mu_i)}(x)\phi^{(\beta)}(z).
	\]
	As we will soon see, it is convenient for us to index the Fourier characters of $\Omega_i$ using elements of $\{0,1\}^2$ instead of $\mathbbm{Z}_{\leq 3}$. Note that since $h_i \in L_2(\Omega^R_i)$, we can use the above basis to write the Fourier expansion of $h_i$ as:
	\begin{equation}				\label{eqn:expr-1}
	h_i(x,z) = \sum_{\alpha \in (\{0,1\}^2)^R} \wh{h_i}(\alpha)\prod_{j \in [R]} \bphi_{\alpha(j)}\Big((x,z)(j)\Big). 
	\end{equation}
	Now consider the following one-to-one correspondence between $(\{0,1\}^2)^R$ and $2^{[R]} \times 2^{[R]}$. Given any $\alpha \in (\{0,1\}^2)^R$, we can uniquely identify sets $S_{\alpha}$ and $T_{\alpha}$ where
	\[
	S_{\alpha} = \left\{j \in [R]~\Big|~\alpha(j) \in \{(1,0),(1,1)\}\right\}
	\ \ \ \ \ \  \textnormal{and} \ \ \ \ \ \ 
	T_{\alpha} = \left\{j \in [R]~\Big|~\alpha(j) \in \{(0,1),(1,1)\}\right\}.
	\]
	Similarly, given $S,T \subseteq [R]$, there exists a unique $\alpha \in (\{0,1\}^2)^R$ such that $(S_{\alpha}, T_{\alpha}) = (S,T)$. Using this one-to-one correspondence along with the definition of $\bphi_\alpha$-characters, we re-index the summation and the summands in \eqref{eqn:expr-1}:
	\begin{align}
		&\sum_{\alpha \in (\{0,1\}^2)^R} \wh{h_i}(\alpha)\prod_{j \in [R]} \bphi_{\alpha(j)}\Big((x,z)(j)\Big) 		\non\\
		&= \sum_{\alpha \in (\{0,1\}^2)^R} \wh{h_i}(\alpha)\prod_{j: \alpha(j) \in \{(1,0),(1,1)\}} \phi^{(\mu_i)}(x(j))\prod_{j': \alpha(j') \in \{(0,1),(1,1)\}} \phi^{(\beta)}(z(j')) 		\non\\
		&= \sum_{\alpha \in (\{0,1\}^2)^R} \wh{h_i}(\alpha) \prod_{j \in S_\alpha} \phi^{(\mu_i)}(x(j)) \prod_{j' \in T_\alpha}\phi^{(\beta)}(z(j')). 	
				\label{eqn:expr}
	\end{align}
	But then \eqref{eqn:expr} expresses $h_i = h'_i$ as a multilinear polynomial in the Fourier basis corresponding to the $2R$-variate space $\{0,1\}^R_{\mu_i} \otimes \{\bot,\top\}^R_\beta$ and hence it follows that \eqref{eqn:expr} must be the Fourier expansion of $h'_i$ in the $2R$-dimensional space (as described in Observation \ref{obs:h-exp}). In particular, using the uniqueness of Fourier expansion, we must have that $\wh{h_i}(\alpha) = \wh{h'_i}(S_{\alpha},T_{\alpha})$ for every $\alpha \in \{0,1\}^2$.
	
	Using the above equivalence of the Fourier coefficients from the expansions of $h_i$ and $h'_i$, we can infer that
	\[
	{\sf Inf}^{(\Omega_i)}_j\big[h_i\big] = \sum_{\alpha : \alpha(j) \neq (0,0)} \wh{h_i}(\alpha)^2 = \sum_{j \in S \cup T} \wh{h'_i}(S,T)^2.
	\]
	Hence, using the above along with Observation \ref{obs:h-exp} we can conclude that
	\[
	{\sf Inf}_{x_i(j)}\big[h'_i\big] = \sum_{S \ni j}\sum_{T \subseteq [R]}\wh{h'_i}(S,T)^2 \leq \sum_{j \in S \cup T} \wh{h'_i}(S,T)^2 = {\sf Inf}^{(\Omega_i)}_j\big[h_i\big],
	\]
	and similarly,
	\[
	{\sf Inf}_{z_i(j)}\big[h'_i\big] = \sum_{S \subseteq [R]}\sum_{T \ni j}\wh{h'_i}(S,T)^2 \leq \sum_{j \in S \cup T} \wh{h'_i}(S,T)^2 = {\sf Inf}^{(\Omega_i)}_j\big[h_i\big],
	\]
	which completes the proof.
\end{proof}

\subsection{Proof of Lemma \ref{lem:prod}}

Before we begin, we point out to the reader that this section heavily uses the notation and terminology introduced in Section \ref{sec:clt}. For ease of notation, by re-indexing we may assume that $e = [r]$. In the setting of the lemma, we are given functions $h_1,\ldots,h_r:\Omega^R \to [0,1]$, where $h_i \in L_2(\Omega^R_i)$ is a function of $x_i $ and $z_i$ for every $i \in [r]$. We now introduce various quantities which will be used frequently in the proof of the lemma.

\begin{itemize}
	\item As defined above, given $h_1,\ldots,h_r$, let $h'_1,\ldots,h'_r$ be the corresponding $2R$-variate functions where $h'_i \in L_2(\{0,1\}^R_{\mu_i} \otimes \{\bot,\top\}^R_\beta)$ for every $i \in [r]$.  \\[-8pt]
	\item Let $\cW := (\cW^x_1,\ldots,\cW^x_R,\cW^z_1,\ldots,\cW^z_R)$ be an independent sequence of ensembles (see Section \ref{sec:clt} for the definition), where for every $j \in [R]$, the ensembles $\cW^x_j = (\cW^x_j[i])^r_{i = 0}$ and $\cW^z_j = (\cW^z_j[i])^r_{i = 0}$ are defined as functions of the variables $(x_i(j))_{i \in [r]}$ and $(z_i(j))_{i \in [r]}$ as follows:
	\[
	\cW^x_j[i] := 
	\begin{cases}
		1 & \mbox{ if } i = 0, 		\\
		\phi^{(\mu_i)}(x_i(j)) & \mbox{ if } i \in [r]
	\end{cases}
	\qquad\qquad  
	\cW^z_j[i] := 
	\begin{cases}
		1 & \mbox{ if } i = 0, 		\\
		\phi^{(\beta)}(z_i(j)) & \mbox{ if } i \in [r]
	\end{cases},
	\]
	where recall that $\phi^{(\mu_i)}$ and $\phi^{(\beta)}$ are the non-trivial Fourier characters for the $1$-dimensional spaces $\{0,1\}_{\mu_i}$ and $\{0,1\}_{\beta}$. Note that the ensemble sequences $(\cW^x_j)^R_{j = 1}$ and $(\cW^z_j)^R_{j = 1}$ are independent of each other.\\[-8pt]
	\item For any $i \in [r]$, since $h'_i \in L_2(\{0,1\}^R_{\mu_i} \otimes \{\bot,\top\}^R_\beta)$, using its Fourier expansion, we can express $h'_i$ as a polynomial $Q_i$ in the variables from the ensembles in $\cW$ (Observation \ref{obs:h-exp}): 
	\begin{align}				\label{eqn:Q-def1}
		Q_i({\cW}) 
		&:= \sum_{S,T \subseteq [R]} \wh{h'_i}(S,T) \prod_{j \in S} \cW^x_j[i]\prod_{j' \in T} \cW^z_{j'}[i] \\
		&= \sum_{S,T \subseteq [R]} \wh{h'_i}(S,T) \prod_{j \in S} \phi^{(\mu_i)}(x_i(j))\prod_{j' \in T} \phi^{(\beta)}(z_i(j')) 		\non\\
		&= h'_i(x_i,z_i). 		\non
	\end{align}
	Note that in the polynomial $Q_i$, each monomial contains {\em at-most} one variable from $\cW^x_j$ and $\cW^z_j$ for every $j \in [R]$ -- but we can also interpret it as having {\em exactly} one variable from $\cW^x_j$ and $\cW^z_j$ for every $j$, by including the `$1$' variable from the ensembles missing in that monomial. Hence $Q_i$ is a {\bf multi-linear} polynomial in the ensemble sequence $\cW$. \\[-8pt]
	\item Finally, let $Q := (Q_1,\ldots,Q_r): \cW \to \mathbbm{R}^r$ denote the $r$-dimensional vector-polynomial that is the concatenation of $Q_1,\ldots,Q_r$. Let $\Pi_{[0,1]}$ be the clipping function defined in \eqref{eqn:pi-def} and define $\Psi:\mathbbm{R}^r \to [0,1]$ as 
	\[
	\Psi(\ell) := \prod_{i \in [r]} \Pi_{[0,1]}(\ell_i),\qquad\forall~\ell\in \mathbbm{R}^r.
	\] 
	It is well-known that $\Psi$ is $\sqrt{r}$-lipschitz continuous.
\end{itemize}

Now using the above notation, we can rewrite the term we wish to upper bound in the lemma:
\begin{align}				\label{eqn:expr-0}
\Ex_{(x_i,z_i)_{i \in [r]} \sim \cD^R_e}\left[\prod_{i \in [r]} h_i(x_i,z_i)\right]
&= \Ex_{(x_i,z_i)_{i \in [r]} \sim \cD^R_e}\left[\prod_{i \in [r]} h'_i(x_i,z_i)\right] 		\non\\
&= \Ex_{\cW} \left[\prod_{i \in [r]} Q_i(\cW)\right] 	\non\\		
&= \Ex_{\cW} \left[\Psi(Q(\cW))\right].				
\end{align}

The rest of the proof consists of several steps which we outline below:
\begin{itemize}
	\item[1.] First, we construct a Gaussian ensemble sequence $\cG$ that matches the covariance structure of $\cW$. In particular, the ensemble sequence will have the property that the variables in $\cG$ correspond to coordinates of $R$-dimensional Gaussian vector variables $\{g_{x,i},g_{z,i}\}_{i \in [r]}$ where $\{g_{x,i}\}_{i \in [r]}$ and $\{g_{z,i}\}_{i \in [r]}$ are independent w.r.t. each other. Furthermore,  the vector variables $\{g_{z,i}\}_{i \in [r]}$ will be distributed as $r$ independent $\rho$-correlated copies of a Gaussian vector $g \sim N(0_R,I_R)$.
	\item[2.] Next, we will use the invariance principle to show that $\Ex\left[\prod_{i \in [r]}Q_i(\cW)\right] \approx \Ex\left[\prod_{i \in [r]} Q_i(\cG)\right]$, which will move the analysis to the Gaussian space (Lemma \ref{lem:prod-1}).
	\item[3.] We will then use the Exchangeable Gaussians Theorem (Theorem \ref{thm:egt}) along with Lemma \ref{lem:hspace} to show that we can switch from correlated structure in $\{g_{z,i}\}_i$ to independent structure in $\{g_{z,i}\}$ at the cost of a multiplicative factor depending only on $r$ (Lemma \ref{lem:prod-2}), which in turn will allow us to average out the $\cG_z$-variables.
	\item[4.] Finally, we will again use the invariance principle to move the analysis back into the Boolean space (Lemma \ref{lem:prod-3}).
\end{itemize}

\subsection{Step 1: Gaussian Ensemble Sequence Construction.} 

We begin by making a couple of observations about the first and second moments of the variables in $\cW$.

\begin{claim}			\label{cl:w-moment}
	For every $j \in [R]$ and $i \in [r]$, we have $\Ex\left[\cW^x_j[i]\right] = \Ex\left[\cW^z_j[i]\right] = 0$. Moreover, for every $j \in [R]$ and $i \in \{0,1,\ldots,r\}$ we have that $\Ex\left[\cW^x_j[i]^2\right] = \Ex\left[\cW^z_j[i]^2\right] = 1$.
\end{claim}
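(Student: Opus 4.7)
The plan is to verify both the vanishing first moments and the unit second moments by direct computation, using the explicit form of $\phi^{(p)}$ from \eqref{four:def} together with the marginal distributions of the underlying Boolean variables under $\cD^R_e$.

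First I would extract the relevant marginals from Definition \ref{defn:dist}. For any fixed $j \in [R]$, the vector $(x_{i_1}(j),\ldots,x_{i_r}(j))$ is distributed as $\theta_e$, whose marginal on coordinate $i$ is $\mathrm{Bernoulli}(\mu_i)$ by the first-order consistency of $\theta$ and the bias definition $\mu_i = \Pr_\theta[X_i = 1]$. Independently, each $z_i(j)$ is marginally the $\beta$-biased measure on $\{\bot,\top\}$: this holds in both branches of the coupling in Definition \ref{defn:dist}, since with probability $\rho$ every $z_i(j)$ equals a common draw $z(j) \sim \{\bot,\top\}_\beta$, and with probability $1-\rho$ each $z_i(j)$ is drawn independently from $\{\bot,\top\}_\beta$.

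Next I would invoke the standard moment computation for $\phi^{(p)}$. Under $x \sim \mathrm{Bernoulli}(p)$, the definition $\phi^{(p)}(x) = (x-p)/\sqrt{p(1-p)}$ gives $\Ex[\phi^{(p)}(x)] = 0$ and $\Ex[\phi^{(p)}(x)^2] = \bigl(p(1-p)^2 + (1-p)p^2\bigr)/(p(1-p)) = 1$; equivalently, $\{1,\phi^{(p)}\}$ is an orthonormal Fourier basis for $L_2(\{0,1\}_p)$ (cf.\ Fact \ref{fact:p-fourier}). Applying this identity with $p = \mu_i$ to the coordinate $x_i(j)$, and with $p = \beta$ to the coordinate $z_i(j)$, yields $\Ex[\cW^x_j[i]] = \Ex[\cW^z_j[i]] = 0$ and $\Ex[\cW^x_j[i]^2] = \Ex[\cW^z_j[i]^2] = 1$ for every $i \in [r]$ and every $j \in [R]$.

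Finally, the case $i = 0$ is immediate: by definition $\cW^x_j[0] \equiv \cW^z_j[0] \equiv 1$, so both second moments equal $1$. There is no real obstacle to this claim; it is a routine check once the two marginal distributions are read off Definition \ref{defn:dist}, and its entire role is to set up the matching-covariance hypothesis needed to invoke the invariance principle (Theorem \ref{thm:clt}) in the subsequent Gaussian-ensemble construction.
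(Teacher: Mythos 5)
Your proof is correct and follows essentially the same route as the paper: both reduce the claim to the orthonormality of the non-trivial biased Fourier character $\phi^{(p)}$ under the $p$-biased measure (with $p = \mu_i$ or $p = \beta$) and handle $i=0$ by noting the variable is the constant $1$. Your extra step of explicitly reading off the marginals of $x_i(j)$ and $z_i(j)$ from the test distribution is a harmless elaboration of what the paper leaves implicit.
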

\begin{proof}
	Fix a $j \in [R]$. Then observe that for every $i \in [r]$ we have
	\[
	\Ex\left[\cW^x_j[i]\right] = \Ex_{x_i \sim \{0,1\}^R_{\mu_i}}\left[\phi^{(\mu_i)}(x_i(j))\right] = 0,
	\ \ \ \ \ \  \textnormal{and} \ \ \ \ \ \ \
	\Ex\left[\cW^x_j[i]^2\right] = \Ex_{x_i \sim \{0,1\}^R_{\mu_i}}\left[\phi^{(\mu_i)}(x_i(j))^2\right] = 1,
	\] 
	since $\phi^{(\mu_i)}(\cdot)$ is orthonormal in the probability space $\{0,1\}_{\mu_i}$. Furthermore, $\Ex\left[\cW^x_j[0]^2 \right] = 1$ since $\cW^x_j[0]$ by definition is the constant `1' variable. This establishes the claim for $\cW^x_j$ ensembles. The claim for the $\cW^z_j$ ensembles follows similarly.
\end{proof}

Furthermore, it will be useful to explicitly characterize the covariance structure of the $\cW^z_j$ ensembles. Recall that under the test distribution (from Figure \ref{fig:test-1}), for every $j \in [R]$ the variables $(z_i(j))_{i \in [r]}$ are distributed as follows:
\begin{itemize}
	\item W.p. $\rho^2$, for every $i \in [r]$, $z_i(j) = z(j)$, where $z(j)$ is sampled as $\{\bot,\top\}_\beta$.
	\item W.p. $1 - \rho^2$, for every $i \in [r]$, $z_i(j)$ is sampled from $\{\bot,\top\}_\beta$ independently.
\end{itemize}
Using the above, we can infer that $\cW^z_j$ has the following covariance structure.
\begin{claim}						\label{cl:cov-z}
For every $j \in [R]$ we have
\begin{equation}					\label{eqn:cov}
\Ex\left[\cW^z_j[i]\cW^z_j[i']\right]
= 
\begin{cases}
	1 & \mbox{ if } i = i', \\
	0 & \mbox{ if } i \neq i'\mbox{ and } i = 0 \mbox{ or } i' = 0, \\
	\rho^2 & \mbox{ if } i \neq i' \mbox{ and } i,i'\neq 0.
\end{cases}
\end{equation}
\end{claim}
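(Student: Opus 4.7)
The plan is to verify the three cases in \eqref{eqn:cov} by direct computation, using only the definition of $\cW^z_j$, Claim~\ref{cl:w-moment}, and the description of the distribution of $(z_i(j))_{i \in [r]}$ given just above the claim.

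First, I would dispose of the diagonal case $i = i'$. When $i = 0$, we have $\cW^z_j[0] \equiv 1$ so $\Ex[(\cW^z_j[0])^2] = 1$ trivially, and when $i \in [r]$, the identity $\Ex[(\cW^z_j[i])^2] = 1$ is exactly what Claim~\ref{cl:w-moment} established. Next, the off-diagonal case where one index is $0$ and the other is some $i' \in [r]$ reduces to $\Ex[\cW^z_j[i']] = \Ex_{z_{i'}(j) \sim \{\bot,\top\}_\beta}[\phi^{(\beta)}(z_{i'}(j))] = 0$, again by Claim~\ref{cl:w-moment}.

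The only substantive case is $i \neq i'$ with both in $[r]$, which amounts to computing $\Ex[\phi^{(\beta)}(z_i(j)) \phi^{(\beta)}(z_{i'}(j))]$ under the joint distribution specified in Step~4b of Figure~\ref{fig:test-1}. I would condition on the indicator $\xi_e(j)$: with probability $\rho^2$ we have $z_i(j) = z_{i'}(j) = z(j)$ for $z(j) \sim \{\bot,\top\}_\beta$, in which case the product becomes $\phi^{(\beta)}(z(j))^2$ with expectation $1$; with probability $1 - \rho^2$ the variables $z_i(j)$ and $z_{i'}(j)$ are drawn independently from $\{\bot,\top\}_\beta$, so the product factors and each factor has expectation $0$. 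Combining, the overall expectation is $\rho^2 \cdot 1 + (1-\rho^2) \cdot 0 = \rho^2$, matching \eqref{eqn:cov}.

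There is no real obstacle here; the claim is a routine unpacking of the marginal distribution of $(z_i(j))_{i \in [r]}$ for a single coordinate $j$, and the only mild subtlety is remembering that the $0$-indexed entry of the ensemble is the constant $1$ variable rather than a nontrivial Fourier character, which must be handled separately from the $i,i' \in [r]$ case.
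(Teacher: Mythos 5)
Your proposal is correct and matches the paper's proof essentially verbatim: the paper likewise dispatches the first two cases via Claim~\ref{cl:w-moment} and handles the remaining case by writing the expectation as the mixture $\rho^2\,\Ex_z[\phi^{(\beta)}(z(j))^2] + (1-\rho^2)\,\Ex[\phi^{(\beta)}(z_i(j))]\,\Ex[\phi^{(\beta)}(z_{i'}(j))] = \rho^2$. Nothing is missing.
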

\begin{proof}
	The first two cases follow using Claim \ref{cl:w-moment}. For the last item, observe that if $i \neq i'$ and $i,i' \neq 0$ then,
	\begin{align*}
	\Ex\left[\cW^z_j[i]\cW^z_j[i']\right] 
	&= \Ex_{(z_i)_{i \in [r]}}\left[\phi^{(\beta)}(z_i(j))\phi^{(\beta)}(z_{i'}(j))\right] \\
	&= \rho^2\Ex_z\left[\phi^{(\beta)}(z(j))^2\right] + (1 - \rho^2)\Ex_{z_i(j),z_{i'}(j) \sim \{\bot,\top\}_{\beta}}\left[\phi^{(\beta)}(z_i(j))\phi^{(\beta)}(z_{i'}(j))\right]  \\
	&= \rho^2,
	\end{align*} 
	where the first step follows due to the distribution of $(z_i(j))_{i \in [r]}$, and in the second step, the first expectation evaluates to $1$ and the second expectation evaluates to $0$ since $\phi^{(\beta)}(\cdot)$ is  orthonormal in the probability space $\{\bot,\top\}_\beta$.
\end{proof}

{\bf Gaussian Ensemble Construction}. Considering the above observations, we shall now construct an independent sequence of Gaussian ensembles $\cG :=  (\cG^x_1,\ldots,\cG^x_R,\cG^z_1,\ldots,\cG^z_R)$ which will match the covariance structure of ${\cW}$. We begin by defining a collection of $2r$ jointly distributed $R$-dimensional Gaussian vector random variables $\{g_{x,i}\}_{i \in [r]} \cup \{g_{z,i}\}_{i \in [r]}$. For every $j \in [R]$, let $\Sigma^x_j$ denote the covariance matrix
\[
\Sigma^x_j := \Ex_{\cW}\left[\cW^x_j[1:r]\left(\cW^x_j[1:r]\right)^\top\right],
\]
where $\cW^x_j[1:r] = (\cW^x_j[i])^r_{i = 1}$. We now define the joint distribution over the Gaussian vectors as follows:
\begin{itemize}
	\item For every $j \in [R]$, independently sample $r$-dimensional Gaussian vector $(g_{x,1}(j),\ldots,g_{x,r}(j)) \sim N(0_r,\Sigma^x_j)$.
	\item Sample independent $R$-dimensional Gaussian vectors $g, \zeta_1,\ldots,\zeta_r \sim N(0_R,I_R)$. For every $i \in [r]$, set $g_{z,i} := \rho\cdot g + \sqrt{1 - \rho^2} \cdot \zeta_i$.
\end{itemize}
Note that since every non-constant variable in $\cW$ has mean $0$ and variance $1$, for every $j \in [R]$ and $i \in [r]$, the variables $g_{x,i}(j)$ and $g_{z,i}(j)$ are marginally distributed as $N(0,1)$ in the above construction.
Using these vector variables, for every $j \in [R]$, we define the $j^{th}$-Gaussian ensembles $\cG^x_j := (\cG^x_j[i])^{r}_{i = 0}$ and $\cG^z_j:= (\cG^z_j[i])^r_{i = 0}$ as
\begin{equation}				\label{eqn:cg-def}
\cG^x_j[i]:= 
\begin{cases}
	1 & \mbox{ if } i = 0, \\
	g_{x,i}(j) &\mbox{ if } i \in [r]
\end{cases}
\qquad\qquad 
\cG^z_j[i] :=  
\begin{cases}
	1 & \mbox{ if } i = 0, \\
	g_{z,i}(j) &\mbox{ if } i \in [r],
\end{cases}
\end{equation}
and finally let $\cG = (\cG^x_1,\ldots,\cG^x_R,\cG^z_1,\ldots,\cG^z_R)$. The following claim shows that the Gaussian ensemble sequence $\cG$ has covariance structure matching  with $\cW$.
\begin{claim}				\label{cl:cov}
	The ensemble sequences $\cW$ and $\cG$ have matching covariance structure.
\end{claim}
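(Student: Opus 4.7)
The plan is to verify the two conditions implied by ``matching covariance structure'' as defined in Section~\ref{sec:clt}: (i) both $\cW$ and $\cG$ are independent ensemble sequences, so that the definition applies, and (ii) for each $j \in [R]$, the second-moment matrix of $\cW^x_j$ equals that of $\cG^x_j$, and likewise for $\cW^z_j$ versus $\cG^z_j$.

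For (i), I would argue joint independence of the $2R$ ensembles in $\cW$ from the structure of $\cD^R_e$ in Figure~\ref{fig:test-1}. Step~4 samples the blocks $(x_i(j), z_i(j))_{i \in [r]}$ independently across $j \in [R]$, and within each block the $x$-coordinates (Step~4a, drawn from $\theta_e$) use randomness disjoint from the $z$-coordinates (Step~4b, driven by $z(j)$ and $\xi_e(j)$). Hence the collection $\{(x_i(j))_{i\in[r]}\}_{j \in [R]} \cup \{(z_i(j))_{i \in [r]}\}_{j \in [R]}$ is jointly independent, which in turn gives joint independence of the ensembles $\cW^x_1, \ldots, \cW^x_R, \cW^z_1, \ldots, \cW^z_R$. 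For $\cG$, joint independence follows directly from the construction: the $r$-dimensional Gaussians $(g_{x,i}(j))_{i \in [r]}$ are sampled independently across $j$; the vectors $g, \zeta_1, \ldots, \zeta_r \sim N(0_R, I_R)$ are independent (and independent of the $g_{x,\ast}$'s); and since $g_{z,i}(j) = \rho \cdot g(j) + \sqrt{1 - \rho^2}\cdot \zeta_i(j)$ depends only on the $j$-th coordinates of $g$ and $\zeta_i$, independence of $\cG^z_j$ across $j$ follows as well.

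For (ii), the matching of $\cW^x_j$ and $\cG^x_j$ is essentially true by construction: both have a constant $1$ in coordinate $0$, so the only non-trivial block of the covariance is indexed by $[r] \times [r]$, and this block was \emph{defined} to be $\Sigma^x_j = \Ex_\cW[\cW^x_j[1{:}r] (\cW^x_j[1{:}r])^\top]$ in~\eqref{eqn:cg-def}. For the $z$-ensembles, I would compare Claim~\ref{cl:cov-z} with a direct computation. Using bilinearity and the pairwise independence of $g, \zeta_1, \ldots, \zeta_r$, for $i \neq i'$ with $i, i' \in [r]$ one gets
\[
\Ex[g_{z,i}(j) g_{z,i'}(j)] = \rho^2 \Ex[g(j)^2] + \rho\sqrt{1 - \rho^2}\Bigl(\Ex[g(j)\zeta_{i'}(j)] + \Ex[\zeta_i(j)g(j)]\Bigr) + (1 - \rho^2)\Ex[\zeta_i(j)\zeta_{i'}(j)] = \rho^2,
\]
which matches the third case of \eqref{eqn:cov}; the diagonal $\Ex[g_{z,i}(j)^2] = \rho^2 + (1 - \rho^2) = 1$ and the entries involving the constant coordinate are immediate, agreeing with the first two cases of \eqref{eqn:cov}.

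There is no substantive obstacle here — the claim is essentially a checklist verification. The one subtlety worth being careful about is confirming joint (rather than merely pairwise) independence of the $2R$ ensembles comprising $\cW$, which requires tracing the sources of randomness in $\cD^R_e$ to see that the $x$-variables and $z$-variables are driven by disjoint independent coin flips and that the per-coordinate sampling in Step~4 is done independently across $j \in [R]$.
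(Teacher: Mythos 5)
Your proof is correct and takes essentially the same route as the paper's: the $x$-blocks match by the very definition of $\Sigma^x_j$, and the $z$-blocks are verified by the direct computation $\Ex[g_{z,i}(j)g_{z,i'}(j)]=\rho^2$ compared against Claim~\ref{cl:cov-z}. Your extra verification that the $2R$ ensembles are jointly independent (tracing the disjoint randomness in Steps 3--4 of Figure~\ref{fig:test-1}) is a point the paper leaves implicit, but it does not change the argument.
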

\begin{proof}
Fix a  $j \in [R]$. Since every ensemble in $\cW$ and $\cG$ consists of $(r + 1)$-variables, we have $|\cW^x_j| = |\cG^x_j|$ and |$\cW^z_j| = |\cG^z_j|$ . Next, by construction we have that 
\[
\Ex_{\cG}\left[\cG^x_j[1:r]\left(\cG^x_j[1:r]\right)^\top\right] 
= \Ex_{\cG}\left[(g_{x,i}(j))_{i \in [r]} \left((g_{x,i}(j))_{i \in [r]}\right)^\top\right]= \Sigma^x_j = \Ex_{\cW}\left[\cW^x_j[1:r]\left(\cW^x_j[1:r]\right)^\top\right].
\]
Since the $0^{th}$ variable in both $\cW^x_j$ and $\cG^x_j$ is the constant `1' variable, and for every $i \in [r]$, $\cG^x_j[i]$ is marginally distributed as $N(0,1)$, the above implies that 
\[
\Ex_{\cG}\left[\cG^x_j\left(\cG^x_j\right)^\top\right] =  \Ex_{\cW}\left[\cW^x_j\left(\cW^x_j\right)^\top\right].
\]
On the other hand, for the variables in $\cG^z_j$ we observe that
\begin{equation}				\label{eqn:cov-2}
	\Ex_{\cG}\left[\cG^{z}_j[i]\cG^z_j[i']\right] = 
	\begin{cases}
		1 & \mbox{ if } i = i' \\
		0 & \mbox{ if } i \neq i', i = 0 \mbox{ or } i' = 0 \\
		\rho^2 & \mbox{ if } i \neq i' \mbox{ and } i,i\neq 0. 
	\end{cases}
\end{equation}
Here the first two items follow using the fact that $\cG^z_j[i]$ is marginally distributed as $N(0,1)$ for every $i \in [r]$, and $\cG^z_j[0]$ is the `1' variable. For the third item, we use the definition of $\cG^z_j$ and observe that 
\[
\Ex\left[\cG^z_j[i]\cG^z_j[i']\right] 
= \Ex\Big[g_{z,i}(j)g_{z,i'}(j)\Big]
= \Ex\left[\left(\rho g(j) + \sqrt{1 - \rho^2}\zeta_i(j)\right)\left(\rho g(j) + \sqrt{1 - \rho^2}\zeta_{i'}(j)\right)\right] = \rho^2.
\]
Hence, using \eqref{eqn:cov-2} and Claim \ref{cl:cov-z}, we can conclude that $\Ex_{\cG}\left[\cG^z_j\left(\cG^z_j\right)^\top \right] = \Ex_{\cW}\left[\cW^z_j\left(\cW^z_j\right)^\top \right]$. Since the above arguments holds for any $j \in [R]$, we have that $\cG$ matches the covariance structure of $\cW$.
\end{proof}
We conclude with a couple of related observations on the Gaussian vectors $\{g_{x,i},g_{z,i}\}_{i \in [r]}$.
\begin{observation}			\label{obs:g-vec}
	The Gaussian vectors $\{g_{x,i},g_{z,i}\}_{i \in [r]}$ as described above satisfy the following properties:
	\begin{itemize}
		\item For every $i \in [r]$, the Gaussian vector variables $g_{x,i}$ and $g_{z,i}$ are marginally distributed as $N(0_R,I_R)$.
		\item Furthermore, $g_{z,1},\ldots,g_{z,r}$ are distributed as independent $\rho$-correlated copies of $g$.
	\end{itemize}
\end{observation}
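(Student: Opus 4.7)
The plan is to verify each of the two bullets of Observation \ref{obs:g-vec} by unwinding the explicit construction of the vectors $\{g_{x,i}, g_{z,i}\}_{i \in [r]}$ given just above the statement, together with Claim \ref{cl:w-moment} and Definition \ref{defn:g-corr}. No invariance-style argument is needed here; the observation simply records the marginal behavior of the Gaussian ensemble sequence $\cG$ that Claim \ref{cl:cov} already matched to $\cW$, and will be invoked later to justify averaging out the $g_{z,i}$'s via Theorem \ref{thm:egt}.

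For the marginal statement on $g_{x,i}$, I would note that the construction draws $(g_{x,1}(j),\ldots,g_{x,r}(j)) \sim N(0_r, \Sigma^x_j)$ independently across $j \in [R]$, so the coordinates of $g_{x,i}$ are mutually independent. The $i$-th diagonal entry of $\Sigma^x_j$ equals $\Ex[\cW^x_j[i]^2]$, which is $1$ by Claim \ref{cl:w-moment}; hence each $g_{x,i}(j) \sim N(0,1)$ and therefore $g_{x,i} \sim N(0_R, I_R)$. For $g_{z,i}$ I would use the explicit formula $g_{z,i} = \rho g + \sqrt{1-\rho^2}\, \zeta_i$ with $g, \zeta_i \sim N(0_R, I_R)$ mutually independent. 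As a linear combination of independent centered Gaussian vectors, $g_{z,i}$ is itself a centered Gaussian vector, and its covariance computes to $\rho^2 I_R + (1-\rho^2) I_R = I_R$, giving $g_{z,i} \sim N(0_R, I_R)$.

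For the second bullet I would compare the construction directly against Definition \ref{defn:g-corr}: a $\rho$-correlated copy $h$ of a standard Gaussian vector $g$ is prescribed there as $h = \rho g + \sqrt{1-\rho^2}\, \zeta$ for an independent $\zeta \sim N(0_R, I_R)$, which is precisely the form of each $g_{z,i}$. Because the noise vectors $\zeta_1,\ldots,\zeta_r$ are sampled mutually independently and independently of $g$, the random vectors $g_{z,1},\ldots,g_{z,r}$ are independent $\rho$-correlated copies of the common Gaussian $g$, exactly as stated.

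I do not anticipate any real obstacle: the argument is pure definition-chasing from the construction and from Claim \ref{cl:w-moment}. The one subtlety worth flagging is that the phrase \emph{independent $\rho$-correlated copies} refers to the mutual independence of the driving noise vectors $\zeta_i$ (equivalently, conditional independence of the $g_{z,i}$'s given $g$), not to any marginal independence of the $g_{z,i}$'s themselves, which are in fact pairwise $\rho^2$-correlated as already recorded in Claim \ref{cl:cov-z}.
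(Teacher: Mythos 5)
Your proposal is correct and follows exactly the route the paper intends: the observation is stated without a separate proof precisely because it is definition-chasing from the explicit construction (independence across $j$ of the draws $N(0_r,\Sigma^x_j)$ with unit diagonal from Claim \ref{cl:w-moment}, and the formula $g_{z,i}=\rho g+\sqrt{1-\rho^2}\,\zeta_i$ matching Definition \ref{defn:g-corr}). Your flagged subtlety about ``independent $\rho$-correlated copies'' meaning conditional independence given $g$ rather than marginal independence is also the right reading and consistent with Claim \ref{cl:cov-z}.
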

\begin{observation}
	For every $i \in [r]$, the polynomial $Q_i$ extends naturally the Gaussian ensemble $\cG$ as 
	\[
	Q_i(\cG) = \sum_{S,T \subseteq [R]} \wh{h'_i}(S,T) \prod_{i \in S} \cG^x_j[i] \prod_{j' \in T} \cG^z_{j'}[i].
	\]
\end{observation}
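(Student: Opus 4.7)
The claim is essentially definitional: once we have an independent sequence of ensembles of a given shape, any multilinear polynomial defined on that sequence extends verbatim to any other ensemble sequence of the same shape. My plan is to simply verify that $\cW$ and $\cG$ have the same shape, so that the formal expression defining $Q_i$ makes sense when the $\cW$-variables are replaced by the corresponding $\cG$-variables.

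Concretely, recall from \eqref{eqn:Q-def1} that $Q_i$ was introduced as the multilinear polynomial
\[
Q_i(\cW) = \sum_{S,T \subseteq [R]} \wh{h'_i}(S,T) \prod_{j \in S} \cW^x_j[i] \prod_{j' \in T} \cW^z_{j'}[i],
\]
which is a formal polynomial expression in the indeterminates $\{\cW^x_j[i], \cW^z_j[i]\}_{j \in [R], i \in \{0,1,\ldots,r\}}$. By construction, both $\cW$ and $\cG$ are independent sequences of $2R$ ensembles, where for each $j \in [R]$ the ensembles $\cW^x_j, \cW^z_j, \cG^x_j, \cG^z_j$ all have cardinality $r+1$ and index set $\{0,1,\ldots,r\}$ (with the $0$-th coordinate being the constant $1$ variable in both sequences, per \eqref{eqn:cg-def}). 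Thus the syntactic substitution $\cW^x_j[i] \mapsto \cG^x_j[i]$ and $\cW^z_j[i] \mapsto \cG^z_j[i]$ is well-defined.

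Performing this substitution yields
\[
Q_i(\cG) = \sum_{S,T \subseteq [R]} \wh{h'_i}(S,T) \prod_{j \in S} \cG^x_j[i] \prod_{j' \in T} \cG^z_{j'}[i],
\]
which is precisely the claimed formula. No analytic content is involved; the observation is recorded so that subsequent applications of the invariance principle (Theorem \ref{thm:clt}) can be applied to $Q = (Q_1,\ldots,Q_r)$ viewed alternatively on $\cW$ or $\cG$, exploiting the fact that these two sequences have matching covariance structure by Claim \ref{cl:cov}. There is no real obstacle: the only thing to be careful about is keeping the indexing of the ensembles consistent (in particular, that the $0$-th coordinate plays the same ``constant $1$'' role on both sides), which is exactly what was arranged in \eqref{eqn:cg-def}.
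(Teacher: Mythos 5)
Your proposal is correct and matches the paper's treatment: the paper records this as a purely definitional observation (the multilinear polynomial $Q_i$ is simply evaluated on the Gaussian ensemble sequence $\cG$, which has the same shape and indexing as $\cW$, including the constant `$1$' variable at index $0$), and offers no further argument. Your careful check of the syntactic substitution, including silently fixing the paper's index typo by writing $\prod_{j \in S}\cG^x_j[i]$, is exactly what is needed and nothing more.
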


\subsection{Step 2: Invariance Application Step} 					\label{sec:step-2}

In this section, we use the Invariance Principle to move the analysis to the Gaussian space, as stated in the following lemma.
\begin{lemma}		\label{lem:prod-1}
	Let $\cW,\cG$, and $Q$ be as above. Then,
	\[
	\Ex_{\cW} \left[\Psi(Q(\cW))\right] 				
	\leq \Ex_{\cG} \left[\Psi(Q(\cG))\right] + C_r\sqrt{r}\tau^{O(\eta \kappa/r^2)},
	\]
	where $\kappa = \beta/\log(1/\gamma)$.
\end{lemma}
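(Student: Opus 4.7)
The plan is to apply the Invariance Principle (Theorem~\ref{thm:clt}) to the $r$-dimensional multilinear polynomial $Q = (Q_1,\ldots,Q_r)$ built on the ensemble sequence $\cW$, with the $\sqrt{r}$-Lipschitz test function $\Psi$. Claim~\ref{cl:cov} already provides that $\cW$ and $\cG$ share matching covariance structure, so the theorem's guarantee directly controls $|\Ex\,\Psi(Q(\cW)) - \Ex\,\Psi(Q(\cG))|$ once I verify its four hypotheses: a uniform lower bound $\alpha$ on the atoms of each ensemble, the variance bound ${\rm Var}[Q_i] \leq 1$, a Fourier tail estimate, and a small-influence bound on each ensemble.

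The most delicate of these is the atom lower bound. For each $\cW^x_j$ ensemble, the underlying joint distribution on $(x_i(j))_{i \in [r]}$ is $\theta_e$, and the $(\gamma\mu)^r$-smoothness assumption in the hypothesis of Lemma~\ref{lem:prod} gives every atom probability at least $(\gamma\mu)^r$. For $\cW^z_j$, the joint distribution on $(z_i(j))_{i \in [r]}$ is the $\rho^2$-mixture of a fully correlated $\beta$-biased bit and a product of $r$ independent $\beta$-biased bits; a direct computation shows every atom has probability at least $(1-\rho^2)(1-\beta)\beta^{r-1} = \Omega(\beta^r)$. Taking $\alpha := \min\{(\gamma\mu)^r,\Omega(\beta^r)\}$ therefore works, and under the paper's parameter regime this gives $\log(1/\alpha) = O(r\log(1/\gamma))$.

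The variance bound is immediate since $Q_i(\cW) = h_i(x_i,z_i) \in [0,1]$. For the Fourier decay hypothesis, the standing assumption ${\rm Var}[h_i^{>d}] \leq (1-\eta)^d$ in Lemma~\ref{lem:prod} translates into ${\rm Var}[Q_i^{>d}] \leq (1-\eta/2)^{2d}$, so Theorem~\ref{thm:clt} applies with decay parameter $\eta/2$. For the influence bound, the key observation is that $Q_i$ depends only on the $i$-th coordinate of each ensemble, so ${\sf Inf}_{\cW^x_j}[Q_i] = {\sf Inf}_{x_i(j)}[h'_i]$ and ${\sf Inf}_{\cW^z_j}[Q_i] = {\sf Inf}_{z_i(j)}[h'_i]$. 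Claim~\ref{cl:inf-tr} bounds both of these by ${\sf Inf}^{(\Omega_i)}_j[h_i] \leq \tau$, and since truncation to degree $d$ can only decrease Fourier influences, the same bound holds for $Q_i^{\leq d}$.

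With all hypotheses in place, Theorem~\ref{thm:clt} (with $A = \sqrt{r}$) yields an error of at most $C_r\sqrt{r}\cdot\tau^{(\eta/2)/(18\log(1/\alpha))}$. Substituting $\log(1/\alpha) = O(r\log(1/\gamma))$, the obtained exponent is $\Omega(\eta/(r\log(1/\gamma)))$, which dominates the target $\eta\kappa/r^2 = \eta\beta/(r^2\log(1/\gamma))$ (since $\beta \leq 1$ and $r \geq 1$); because $\tau < 1$, the claimed bound follows. The only real obstacle is the bookkeeping around influences: one must carefully separate the $R$-dimensional space $\Omega^R_i$ on which $h_i$ lives from the $2R$-dimensional product space on which $h'_i$ lives, and then invoke Claim~\ref{cl:inf-tr} to bound the single-variable $x_i(j)$ and $z_i(j)$ influences on $h'_i$ in terms of the paired $(x_i,z_i)(j)$ influence on $h_i$ that is supplied by the lemma's hypothesis.
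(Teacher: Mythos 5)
Your proposal is correct and follows essentially the same route as the paper: verify the four hypotheses of Theorem~\ref{thm:clt} (matching covariance via Claim~\ref{cl:cov}, atom lower bounds from $(\gamma\mu)^r$-smoothness of $\theta_e$ and the structure of the $z$-distribution, the influence bound via Claim~\ref{cl:inf-tr}, and the Fourier decay hypothesis), then apply the invariance principle with the $\sqrt{r}$-Lipschitz clipped product $\Psi$. Your bookkeeping is in places slightly more careful than the paper's (the explicit $\Omega(\beta^r)$ atom computation for $\cW^z_j$, the $(1-\eta)^d \le (1-\eta/2)^{2d}$ translation, and the remark that truncation only decreases influences), but the argument and the final exponent comparison against $\eta\kappa/r^2$ match the paper's proof.
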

\begin{proof}

Our first step here is to verify that the ensemble sequences $\cW$,$\cG$ and the vector-valued polynomial $Q$ satisfy the conditions required to invoke Theorem \ref{thm:clt}. 

\begin{itemize}
	\item[$C_1$] From Claim \ref{cl:cov}, we have that the ensembles ${\cW}$ and $\cG$ have matching covariance structure.
	\item[$C_2$] Due to the $(\gamma\mu)^r$-smoothness of the local distributions (Lemma \ref{lem:smooth}), we know that for any $\alpha \in \{0,1\}^e$ we have $\Pr_{X_e \sim \theta_e}\left[X_e = \alpha\right] \geq (\gamma\mu)^r$. Since the random variables in $\cW^x_j$ are completely determined by the test distribution variables $x_1(j),\ldots,x_r(j)$, the above property also implies that $\Pr[\cW^x_j = \omega] \geq (\gamma \mu)^r$ for every $\omega$ in the event space corresponding to $\cW^x_j$, for every $j \in [R]$. Similarly, every event in the space corresponding to $\cW^z_j$ happens with probability at least $\eta^r \beta^r$.  
	\item[$C_3$] From the setting of the lemma, for every $i \in [r]$ and $j \in [R]$, we have that 
	\begin{align*}
	\Inf{\cW^x_j}{Q_i} 
	&= \Ex_{\cW^x_{j'\neq j},\cW^z}\left[{\rm Var}_{\cW^x_j}\left[Q_i(\cW) \Big|(\cW^x_{j'})_{j' \neq j},\cW^z\right]\right] \\
	&\overset{1}{=} \Ex_{(x_i(j'))_{j' \neq j},z_i}\left[{\rm Var}_{x_{i}(j)}\left[h'_i(x_i,z_i)\Big|(x_i(j'))_{j' \neq j},z_i\right]\right] \\
	&= {\sf Inf}_{x_i(j)}\left[h'_i\right] \\
	&\overset{2}{\leq} {\sf Inf}^{(\Omega_i)}_j\left[h_i\right] \leq \tau,
	\end{align*}
	where in step $1$ we again use the observation that for every $\ell \in [R]$, the only variable in $\cW^x_\ell$ which appears in $Q_i(\cW)$  is $\cW^x_\ell[i] = \phi^{(\mu_i)}(x_i(\ell))$ (and similarly for $\cW^z_\ell$),  step $2$ uses Claim \ref{cl:inf-tr}, and the final inequality follows from the assumption on $h_i$ in the setting of the lemma. Using identical arguments, we also have that ${\sf Inf}_{\cW^z_j}\left[Q_i\right] \leq \tau$ for every $i \in [r]$ and $j \in [R]$.	
	\item[$C_4$] Furthermore, for every $i \in [r]$, from the setting of the lemma we have ${\rm Var}[Q^{>d}_i] = {\rm Var}[h^{>d}_i] \leq (1 - \eta)^d$ for every $d \geq \Omega(\log(1/\tau)/\log(1/\gamma^r\eta^r\beta^r))$.
\end{itemize}

Then the conditions $C_1$-$C_4$ from above show that the ensemble sequences $\cW,\cG$ along the $r$-dimensional vector polynomial $Q(\cW)$ meet the conditions required to invoke Theorem \ref{thm:clt}. Therefore, using Theorem \ref{thm:clt} along with the fact that $\Psi(\cdot)$ is $\sqrt{r}$-lipschitz, we can upper bound \eqref{eqn:expr-0} as:
\begin{align}					
	\Ex_{\cW} \left[\Psi(Q(\cW))\right] 				
	&\leq \Ex_{\cG} \left[\Psi(Q(\cG))\right] + C_r\sqrt{r}\tau^{\frac{C\eta \mu \beta}{(r\log(1/\beta\eta\gamma))}} 		\non\\
	&\leq \Ex_{\cG} \left[\Psi(Q(\cG))\right] + C_r\sqrt{r}\tau^{O(\eta \kappa/r^2)},  \non
\end{align}
where the last step uses $\kappa = \beta/\log(1/\gamma)$.
\end{proof}

\subsection{Step 3: Decoupling the $\cG^z$ variables}

In this section, we show that we can move from $\{\cG^z_j\}_j$ to an ensemble of {\em completely} independent Gaussians (while retaining the correlation structure of $\{\cG^x_j\}_j)$, at the cost of multiplicative and (negligible) additive factors, as stated in the following lemma.
\begin{lemma}			\label{lem:prod-2}
	Let $\cG$ be as above, and let $\what{\cG} := (\what{\cG}^x_1,\ldots,\what{\cG}^x_R,\what{\cG}^z_1,\ldots,\what{\cG}^z_R)$ be another independent sequence of Gaussian ensembles defined as follows. Let $\tilde{g}_{z,1},\ldots,\tilde{g}_{z,r} \sim N(0_R,I_R)$ be $r$ independent $R$-dimensional Gaussian vector variables. For every $j \in [R]$, define
	\begin{equation}					\label{eqn:tilde-g}
		\what{\cG}^x_j:= \cG^x_j,
		\ \ \ \ \ \ \textnormal{and} \ \ \ \ \ \ 
		\what{\cG}^z_j[i] :=  
		\begin{cases}
			1 & \mbox{ if } i = 0, \\
			\tilde{g}_{z,i}(j) &\mbox{ if } i \in [r].
		\end{cases}
	\end{equation}
	Then we have that
	\[
	\Ex_{\cG} \left[\Psi(Q(\cG))\right] \leq 2^r\Ex_{\what{\cG}} \left[\Psi(Q(\what{\cG}))\right] + \mu^r.
	\]
\end{lemma}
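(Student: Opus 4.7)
The plan is to reduce the claim to a pointwise estimate on the $r$-ary Gaussian stability $\Lambda_\rho$ obtained after conditioning on the $x$-side Gaussians. First, I would observe that each polynomial $Q_i$ defined in \eqref{eqn:Q-def1} depends only on the ensemble variables $(\cG^x_j[i], \cG^z_j[i])_{j \in [R]}$, equivalently on the two $R$-dimensional Gaussian vectors $g_{x,i}$ and $g_{z,i}$. Thus I can write $Q_i(\cG) = R_i(g_{x,i}, g_{z,i})$ for a polynomial $R_i : \mathbb{R}^R \times \mathbb{R}^R \to \mathbb{R}$, and set $F_i(u,v) := \Pi_{[0,1]}(R_i(u,v)) \in [0,1]$. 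Then $\Psi(Q(\cG)) = \prod_{i} F_i(g_{x,i}, g_{z,i})$ and $\Psi(Q(\what\cG)) = \prod_{i} F_i(g_{x,i}, \tilde{g}_{z,i})$, using the fact that $\what{\cG}^x_j = \cG^x_j$ in the definition of $\what{\cG}$.

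Next, using the independence of $(g_{x,i})_i$ and $(g_{z,i})_i$ from Observation \ref{obs:g-vec}, I would condition on $(g_{x,i})$. For each fixing, set $\bar F_i(v) := F_i(g_{x,i}, v) \in [0,1]$ and $\delta_i := \Ex_{v \sim N(0_R, I_R)}[\bar F_i(v)]$. Since $g_{z,1},\ldots,g_{z,r}$ are independent $\rho$-correlated copies of a common $g \sim N(0_R, I_R)$ (again by Observation \ref{obs:g-vec}), Corollary \ref{corr:egt} applied to $\bar F_1,\ldots,\bar F_r$ gives the pointwise bound
\[
\Ex_{(g_{z,i})}\Big[\prod_{i \in [r]} \bar F_i(g_{z,i})\Big] \;\leq\; \Lambda_\rho(\delta_1,\ldots,\delta_r).
\]
On the other hand, since $\tilde g_{z,1},\ldots,\tilde g_{z,r}$ are fully independent $N(0_R, I_R)$ vectors, Fubini directly yields $\Ex_{\what\cG}[\Psi(Q(\what\cG))] = \Ex_{(g_{x,i})}\prod_{i \in [r]} \delta_i$.

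It therefore suffices to establish the pointwise inequality $\Lambda_\rho(\delta_1,\ldots,\delta_r) \leq 2^r \prod_{i} \delta_i + \mu^r$, which I would prove by case analysis on $\delta^* := \min_i \delta_i$. If $\delta^* \leq \mu^r$, use the trivial bound $\Lambda_\rho(\delta_1,\ldots,\delta_r) \leq \delta^* \leq \mu^r$. Otherwise $\delta^* > \mu^r$, so $\log(1/\delta^*) < r\log(1/\mu)$; together with the choice of $\rho$ in the reduction (which is tuned precisely to make this threshold work), we land in the regime where Lemma \ref{lem:hspace} applies and gives $\Lambda_\rho(\delta_1,\ldots,\delta_r) \leq 2^r \prod_i \delta_i$. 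Averaging this pointwise estimate against the distribution of $(g_{x,i})$ and combining with the expression above for $\what\cG$ completes the proof.

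The main obstacle will be the pointwise stability estimate in the third step: one must check that Lemma \ref{lem:hspace} is genuinely applicable when $\delta^* > \mu^r$ — this is what forces the $\mu^r$ threshold in the case split and accounts for the choice of $\rho$ — and gracefully handle the edge condition $\delta_i \leq \delta_0$ from Lemma \ref{lem:hspace}; in cases where some $\delta_i$ exceeds the universal constant $\delta_0$, the trivial bound $\Lambda_\rho \leq 1$ can be absorbed into $2^r \prod_i \delta_i$ after a routine adjustment of constants, so this subtlety is cosmetic rather than essential.
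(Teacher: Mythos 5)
Your proposal is correct and follows essentially the same route as the paper: condition on the $x$-side Gaussians, apply Corollary \ref{corr:egt} to reduce to $\Lambda_\rho(\delta_1,\ldots,\delta_r)$, and split on whether $\min_i \delta_i \leq \mu^r$ (trivial bound) or not (Lemma \ref{lem:hspace} via the choice of $\rho$), then average. The paper's proof is an instantiation of exactly this plan (its Lemma \ref{lem:gauss-avg} is your pointwise estimate), and you are in fact slightly more careful than the paper in flagging the $\delta_i \leq \delta_0$ hypothesis of Lemma \ref{lem:hspace}.
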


Before, we prove the above lemma, we need to setup some notation and observations. Denote $\cG = (\cG^x, \cG^z)$ where $\cG^x := (\cG^x_1,\ldots,\cG^x_R)$, and similarly $\cG^z := (\cG^z_1,\ldots,\cG^z_R)$. Then, note that by construction $\cG^x$ and $\cG^z$ satisfy the following properties:
\begin{itemize}
	\item[$(a)$] The ensemble sequences $\cG^x$ and $\cG^z$ are independent w.r.t. each other, and therefore for any fixing  of $\cG^x$, the distribution of $\cG^z$ is unchanged. 
	\item[$(b)$] By construction, the (vector) random variables $g_{z,1},\ldots,g_{z,r}$ from $\cG^z$ are independent $\rho$-correlated copies of a Gaussian vector $g$ (Observation \ref{obs:g-vec}).
\end{itemize}
Also observe that by definition, for every $i \in [r]$, we have that $\Pi_{[0,1]}(Q_i(\cG))$ is actually just a function of the Gaussian random vectors $g_{x,i}$ and $g_{z,i}$ -- this can be formalized as follows. For every $i \in [r]$, we define the function $P_i :\mathbbm{R}^R \times \mathbbm{R}^R \to [0,1]$ as:
\[
P_i(a,b) =  \Pi_{[0,1]}\left(\sum_{S,T \subseteq [R]}\wh{h'_i}(S,T) \prod_{j \in S} a(j)\prod_{j' \in T} b(j')\right).
\]
Then note that by definition of $Q_i$ (from \eqref{eqn:Q-def1}) and $\cG$ (from \eqref{eqn:cg-def}), we have that the following identity:
\[
P_i(g_{x,i},g_{z,i}) = \Pi_{[0,1]}\left(\sum_{S,T \subseteq [R]} \wh{h'_i}(S,T) \prod_{j \in S} \cG^x_j[i]\prod_{j' \in T} \cG^z_{j'}[i]\right) = \Pi_{[0,1]}(Q_i(\cG)).
\]
Next, for any fixing of $g_{x,i}$, let $P^{(i)}_{g_{x,i}}:\mathbbm{R}^R \to [0,1]$ denote the restriction of $P_i$ to the realization of $g_{x,i}$ i.e., $P^{(i)}_{g_{x,i}}(b) = P_i(g_{x,i},b)$. Using these notations and the above observations, we can rewrite:
\begin{align}						
	\Ex_{\cG} \left[\Psi(Q(\cG))\right] = \Ex_{\cG} \left[\prod_{i \in [r]} \Pi_{[0,1]}\left(Q_i(\cG)\right)\right]
	&= \Ex_{\cG^x}\Ex_{\cG^z}\left[\prod_{i \in [r]} P_{i}(g_{x,i},g_{z,i})\right] 		\tag{Using defn. of $\Psi(\cdot)$}		\non \\
	&= \Ex_{\cG^x}\Ex_{\cG^z}\left[\prod_{i \in [r]} P^{(i)}_{g_{x,i}}(g_{z,i})\right]   				\non \\
	&= \Ex_{\cG^x}\Ex_{g \sim N(0,I_R)}\Ex_{g_{z,1},\ldots,g_{z,r} \underset{\rho}{\sim}g}\left[\prod_{i \in [r]} P^{(i)}_{g_{x,i}}(g_{z,i})\right],
	\label{eqn:re-step}
\end{align}
where the second step uses item $(a)$ and the last step uses item $(b)$. Now we state and prove the following key lemma which bounds the inner expectation over $\cG^z$ for each fixing of $\cG^x$.
\begin{lemma}						\label{lem:gauss-avg}
	Fix a realization of $\cG^x$, and let $g_{x,1},\ldots,g_{x,r}$ be the vector-variables corresponding to $\cG^x$. Furthermore, for any $i \in [r]$, let $P^{(i)}_{g_{x,i}}$ be the restriction of $P_i$ to $g_{x,i}$ as defined above. Then,
	\[
	\Ex_{g \sim N(0_R,I_R)}\Ex_{g_{z,1},\ldots,g_{z,r} \underset{\rho}{\sim}g}\left[\prod_{i \in [r]} P^{(i)}_{g_{x,i}}(g_{z,i})\right] \leq 2^{r}\prod_{i \in [r]} \Ex_{\cG^z}\left[P^{(i)}_{g_{x,i}}(g_{z,i})\right] + \mu^r.
	\]
\end{lemma}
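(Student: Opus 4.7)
The plan is to bring the $r$-dimensional Borell-type estimate (Corollary \ref{corr:egt}) to bear, converting the LHS into an $r$-ary Gaussian halfspace stability $\Lambda_\rho$, and then to bound $\Lambda_\rho$ by the product of the marginals via the explicit halfspace estimate in Lemma \ref{lem:hspace}. The clipping in the definition of $P_i$ is what makes this applicable: because $\Pi_{[0,1]}$ is applied on top of the multilinear polynomial, each $P^{(i)}_{g_{x,i}}:\mathbbm{R}^R\to [0,1]$ is a valid target for Corollary \ref{corr:egt}.

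Concretely, first I would fix the realization of $\cG^x$ (so that only the Gaussians $g_{z,1},\ldots,g_{z,r}$ remain random), and recall from Observation~\ref{obs:g-vec} that these are independent $\rho$-correlated copies of a common $g \sim N(0_R,I_R)$ with each $g_{z,i}$ marginally $N(0_R,I_R)$. Applying Corollary \ref{corr:egt} with dimension $d=R$, correlation $\rho$, and the $[0,1]$-valued functions $(P^{(i)}_{g_{x,i}})_{i\in[r]}$ yields
\[
\Ex_{g}\Ex_{g_{z,1},\ldots,g_{z,r}\underset{\rho}{\sim}g}\left[\prod_{i\in[r]}P^{(i)}_{g_{x,i}}(g_{z,i})\right]\ \leq\ \Lambda_{\rho}(\delta_1,\ldots,\delta_r),
\]
where $\delta_i := \Ex_{g_{z,i}\sim N(0_R,I_R)}[P^{(i)}_{g_{x,i}}(g_{z,i})] = \Ex_{\cG^z}[P^{(i)}_{g_{x,i}}(g_{z,i})]$.

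The second step is a short case analysis on $\delta^* := \min_{i\in[r]}\delta_i$. If $\delta^* \leq \mu^r$, then by monotonicity of the halfspace stability in each of its arguments, $\Lambda_\rho(\delta_1,\ldots,\delta_r)\leq \delta^* \leq \mu^r$, which already yields the claim with slack absorbed in the additive $\mu^r$ term. Otherwise $\delta^* > \mu^r$, so $\log(1/\delta^*) \leq r\log(1/\mu)$; the parameter choice of $\rho$ (whose inverse scales with $r^2\log(1/\mu)$ up to polynomial factors in $r$) satisfies the hypothesis $\rho \leq 1/(4r^2\log(1/\delta^*))$ of Lemma~\ref{lem:hspace}, so
\[
\Lambda_\rho(\delta_1,\ldots,\delta_r)\ \leq\ 2^r\prod_{i\in[r]}\delta_i.
\]
The upper-bound constraint $\delta_i \leq \delta_0$ from Lemma~\ref{lem:hspace} is trivial to handle: if some $\delta_i > \delta_0$, then $2^r\prod_j \delta_j \geq 2^r\delta_0 \geq 1 \geq \Lambda_\rho$ already, and the inequality is vacuous. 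Combining the two cases yields the stated bound.

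The main obstacle is precisely this case split: a naive direct application of Lemma~\ref{lem:hspace} fails when some $\delta_i$ is extremely small, because then the condition $\rho \leq 1/(4r^2\log(1/\delta^*))$ is violated. The additive $\mu^r$ in the lemma statement is the clean way to absorb exactly that low-volume regime through the trivial monotonicity bound. The rest of the argument is essentially a clean invocation of the two multivariate Gaussian estimates that have already been catalogued in Section~\ref{sec:stab}.
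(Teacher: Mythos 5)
Your proposal matches the paper's proof essentially verbatim: fix $\cG^x$, apply Corollary \ref{corr:egt} to reduce the correlated expectation to $\Lambda_\rho(\delta_1,\ldots,\delta_r)$ with $\delta_i = \Ex_{\cG^z}[P^{(i)}_{g_{x,i}}(g_{z,i})]$, then split on whether $\min_i\delta_i \le \mu^r$ (in which case $\Lambda_\rho$ is at most the smallest marginal, absorbed into the additive $\mu^r$) or not (in which case the choice of $\rho$ lets you invoke Lemma \ref{lem:hspace}). One minor caveat: your aside handling the case that some $\delta_i > \delta_0$ is not actually valid as written, since when the remaining $\delta_j$ are small the quantity $2^r\prod_j\delta_j$ need not exceed $1$ -- but the paper's own proof silently omits checking that hypothesis of Lemma \ref{lem:hspace} altogether, so this does not separate your argument from theirs.
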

\begin{proof}
	We first observe that the vector random variables $g_{z,1},\ldots,g_{z,r}$ satisfy the following properties:
	\begin{itemize}
		\item For every $i \in [r]$, $g_{z,i}$ is marginally distributed as $N(0_R,I_R)$ (Observation \ref{obs:g-vec}).
		\item Using \eqref{eqn:cov-2}, and that $(g_{z,i}(\ell))_{i \in [r]}$ are independent across $\ell \in [R]$, for every $i \neq i'$ we have
		\[
		{\rm Cov}(g_{x,i},g_{x,i'}) = \Ex\left[g_{x,i}\left(g_{x,i'}\right)^\top\right] = \rho^2 I_R.
		\]
	\end{itemize}
	Furthermore, $P^{(1)}_{g_{x,1}},\ldots,P^{(r)}_{g_{x,r}}$ are all $[0,1]$-valued functions. Hence using the Exchangeable Gaussians Theorem (Corollary \ref{corr:egt}), we get that 
	\begin{equation}				\label{eqn:lambda}
	\Ex_{g \sim N(0_R,I_R)}\Ex_{g_{z,1},\ldots,g_{z,r} \underset{\rho}{\sim} g}\left[\prod_{i \in [r]} P^{(i)}_{g_{x,i}}(g_{z,i})\right]
	\leq \Lambda_\rho\Big(\delta_{1},\ldots,\delta_{r}\Big),
	\end{equation}
	where $\delta_i := \Ex_{\cG^z}P^{(i)}_{g_{x,i}}(g_{z,i})$ for every $i \in [r]$. We now proceed to further upper bound the above RHS. Let $\delta_\ell$ be the minimum among $\delta_1,\ldots,\delta_r$. We consider two cases depending on the value of $\delta_\ell$.
	
	{\bf Case (i)} Suppose $\delta_\ell \leq \mu^r$. Then,
	\begin{align*}
		\Lambda_\rho\Big(\delta_{1},\ldots,\delta_{r}\Big)
		&= \Pr_{g \sim N(0,1)} \Pr_{g_1,\ldots,g_r \underset{\rho}{\sim} g}\left[\forall j \in [r], g_j \leq \Phi^{-1}(\delta_j)\right] \\
		&\leq \Pr_{g \sim N(0,1)} \Pr_{g_\ell \underset{\rho}{\sim} g}\left[g_\ell \leq \Phi^{-1}(\delta_\ell)\right] \\
		&= \delta_\ell \leq \mu^r.
	\end{align*}
		
	{\bf Case (ii)} Suppose $\delta_\ell > \mu^r$.  Then our choice of $\rho$ implies that
	\[
	\rho = \frac{1}{4r^2 \log(1/\mu)} = \frac{1}{4r\log(1/\mu^r)} \leq \frac{1}{4r\log(1/\delta_\ell)}. 
	\]
	Then, using Lemma \ref{lem:hspace} we can bound,
	\[
	\Lambda_\rho(\delta_1,\ldots,\delta_r) \leq 2^r \prod_{i \in [r]} \delta_i.
	\] 
	Combining \eqref{eqn:lambda} with the bounds from the two cases we get that,
	\begin{align*}
	\Ex_{g \sim N(0,I_R)}\Ex_{g_{z,1},\ldots,g_{z,r} \underset{\rho}{\sim}g}\left[\prod_{i \in [r]} P^{(i)}_{g_{x,i}}(g_{z,i})\right]
	&\leq 2^r \prod_{i \in [r]} \delta_i + \mu^r \\
	&= 2^r \prod_{i \in [r]} \Ex_{\cG^z}\left[P^{(i)}_{g_{x,i}}(g_{z,i})\right] + \mu^r,
	\end{align*}	
	which completes the proof of the lemma.
\end{proof}

Using the above lemma, we can now prove Lemma \ref{lem:prod-2}
\begin{proof}[Proof of Lemma \ref{lem:prod-2}]
Recall that using the computation from \eqref{eqn:re-step} we have that 
\[
\Ex_{\cG} \left[\prod_{i \in [r]} \Pi_{[0,1]}\left(Q_i(\cG)\right)\right] = \Ex_{\cG^x}\Ex_{g \sim N(0,I_R)}\Ex_{g_{z,1},\ldots,g_{z,r} \underset{\rho}{\sim}g}\left[\prod_{i \in [r]} P^{(i)}_{g_{x,i}}(g_{z,i})\right].
\]
We continue with bounding the RHS from above. Since $\cG^x$ and $\cG^z$ are independent, we can apply Lemma \ref{lem:gauss-avg} to the inner expectation for each fixing of $\cG^x$ and get that:
\begin{align}
	\Ex_{\cG^x}\Ex_{g \sim N(0,I_R)}\Ex_{g_{z,1},\ldots,g_{z,r} \underset{\rho}{\sim}g}\left[\prod_{i \in [r]} P^{(i)}_{g_{x,i}}(g_{z,i})\right]
	&\leq 2^r\Ex_{\cG^x}\left[\prod_{i \in [r]} \Ex_{\cG^z} P^{(i)}_{g_{x,i}}(g_{z,i})\right] + \mu^r    \non\\
	&= 2^r\Ex_{\cG^x}\left[\prod_{i \in [r]}\Ex_{\cG^z} P_i(g_{x,i},g_{z,i}) \right] + \mu^r		\non\\
	&= 2^r\Ex_{\cG^x}\Ex_{\what{\cG}^z}\left[\prod_{i \in [r]} P_i(g_{x,i},\tilde{g}_{z,i}) \right] + \mu^r		\non\\
	&= 2^r \Ex_{\cG^x}\Ex_{\what{\cG}^z}\left[\prod_{i \in [r]}\Pi_{[0,1]}\left(Q_i(\cG^x,\what{\cG}^z)\right)\right] + \mu^r	\non \\
	&= 2^r \Ex_{\what{\cG}}\left[\prod_{i \in [r]} \Pi_{[0,1]}(Q_i(\what{\cG}))\right] + \mu^r  	\non\\
	&= 2^r \Ex_{\what{\cG}}\left[\Psi(Q(\what{\cG}))\right]	+ \mu^r,	 \non
\end{align}
which concludes the proof.
\end{proof}
\subsection{Step 4: Back to Boolean Space} 

Now we shall again use Theorem \ref{thm:clt} to shift the analysis back to the Boolean setting using arguments similar to Step 2. 
\begin{lemma}				\label{lem:prod-3}
	Let $\what{\cG}$ be the Gaussian ensemble sequence defined in \eqref{eqn:tilde-g}. Then,
	\[
	\Ex_{\what{\cG}}\Big[\Psi(\what{\cG})\Big]
	\leq \Ex_{(x_i)^r_{i=1} \sim \theta^{R}_e}\left[\prod_{i \in [r]}\bar{h}_{i}(x_i)\right] + C_r\sqrt{r}\tau^{O(\eta\kappa/r^2)},
	\]
where $\bar{h}_i(x_i) = \Ex_{z \sim \{\bot,\top\}^R_\beta}\left[h'_i(x_i,z) \right]$ for every $i \in [r]$.
\end{lemma}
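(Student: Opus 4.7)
The plan is to carry out the symmetric counterpart of Step 2: build a Boolean ensemble sequence $\what{\cW}$ that matches the covariance of $\what{\cG}$, invoke the invariance principle in the Gaussian-to-Boolean direction to convert $\Ex_{\what{\cG}}[\Psi(Q(\what{\cG}))]$ into its Boolean analogue, and then use the independence already baked into $\what{\cG}^z$ (and inherited by $\what{\cW}^z$) to factor the result into $\prod_i \bar h_i(x_i)$.

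For the construction, I would set $\what{\cW}^x_j := \cW^x_j$ for every $j \in [R]$, so that the $\what{\cW}^x$-ensembles encode a draw $(x_i)_{i \in [r]} \sim \theta^R_e$, and draw fresh independent Boolean vectors $\tilde z_1,\ldots,\tilde z_r \sim \{\bot,\top\}^R_\beta$ (independent of everything else), setting $\what{\cW}^z_j[0]=1$ and $\what{\cW}^z_j[i]:=\phi^{(\beta)}(\tilde z_i(j))$ for $i \in [r]$. Because the $\tilde z_i$'s are mutually independent, the off-diagonal covariances of $\what{\cW}^z_j$ vanish, matching the covariance of $\what{\cG}^z_j$ from \eqref{eqn:tilde-g}; the $\what{\cW}^x_j = \cW^x_j$ part already matches $\what{\cG}^x_j = \cG^x_j$ by Claim \ref{cl:cov}. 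With this choice, $Q_i(\what{\cW}) = h'_i(x_i,\tilde z_i)$ by the definition of $Q_i$. I would then verify the remaining hypotheses of Theorem \ref{thm:clt} exactly as in Lemma \ref{lem:prod-1}: atoms in $\what{\cW}^x_j,\what{\cW}^z_j$ have probability at least $(\gamma\mu)^r$ and $\beta^r$ respectively (the $z$-side is strictly easier now since the coordinates are fully independent), the influences of $Q_i$ against either ensemble are bounded by ${\sf Inf}^{(\Omega_i)}_j[h_i] \leq \tau$ via Claim \ref{cl:inf-tr}, and the Fourier decay of $Q_i$ is inherited from the hypothesis on $h_i$. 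Since $\Psi$ is $\sqrt{r}$-Lipschitz, this produces
\[
\Ex_{\what{\cG}}[\Psi(Q(\what{\cG}))] \;\leq\; \Ex_{\what{\cW}}[\Psi(Q(\what{\cW}))] + C_r\sqrt{r}\,\tau^{O(\eta\kappa/r^2)}.
\]

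To finish, I would use $h'_i \in [0,1]$ to drop the clipping in $\Psi$, so that the right-hand expectation becomes $\Ex_{(x_i)_{i},\tilde z_i}\bigl[\prod_i h'_i(x_i,\tilde z_i)\bigr]$, and then factor the expectation over the mutually independent $\tilde z_i$'s (which are jointly independent of $(x_i)_i$) to obtain $\Ex_{(x_i)_{i \in [r]} \sim \theta^R_e}\bigl[\prod_i \bar h_i(x_i)\bigr]$. There is no real obstacle in this step: the Exchangeable Gaussians / rearrangement work was already carried out in Lemma \ref{lem:prod-2}, and here one is simply transporting an integral whose independence structure has already been established. The main care needed is ensuring all the bookkeeping on atom probabilities, influences, and Fourier decay transfers cleanly from the proof of Lemma \ref{lem:prod-1}, which it does because the $x$-part of $\what{\cW}$ is unchanged and the $z$-part is only easier.
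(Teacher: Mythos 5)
Your proposal matches the paper's proof essentially step for step: the same construction of $\what{\cW}$ with fresh independent $\tilde z_i$'s and $\what{\cW}^x_j = \cW^x_j$, the same covariance-matching and verification of the invariance-principle hypotheses as in Lemma \ref{lem:prod-1}, the same application of Theorem \ref{thm:clt}, and the same final factorization over the independent $\tilde z_i$'s using $Q_i(\what{\cW}) = h'_i(x_i,\tilde z_i) \in [0,1]$ to discard the clipping. No gaps; this is the paper's argument.
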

\begin{proof}
For moving back, we will define a sequence of ensembles $\what{\cW} := (\what{\cW}^x_j,\what{\cW}^z_j)_{j \in [R]}$ as follows. Let $\tilde{z}_1,\ldots,\tilde{z}_r$ be a collection of independent random variables where $\tilde{z}_i \sim \{\bot,\top\}^R_\beta$ for every $i \in [r]$. In other words, $(\tilde{z}_i)_{i \in [r]}$ consists of $rR$ independent random variables, each of which is distributed as $\{\bot,\top\}_\beta$. Then, for every $j \in [R]$ define $\what{\cW}^x_j := \cW^x_j $ and 
\[
\what{\cW}^z_j[i] := 
\begin{cases}
	1 & \mbox{ if } i = 0, \\
	\phi^{(\beta)}(\tilde{z}_i(j)) & \mbox{ if } i \in [r].
\end{cases}
\]
Let $\what{\cG}$ be the Gaussian ensemble constructed in \eqref{eqn:tilde-g}. Then it is straightforward to verify that $\what{\cW}$ matches the covariance structure of $\what{\cG}$. Using identical arguments as in the proof of Lemma \ref{lem:prod-1}, we can see that conditions $C1$-$C4$ from Lemma \ref{lem:prod-1} again hold with respect to polynomials $Q_1,\ldots,Q_r$, and ensemble sequences $\what{\cW}$ and $\what{\cG}$. Therefore, applying Theorem \ref{thm:clt}, we again get that 
\begin{align}
	\Ex_{\what{\cG}}\left[\Psi(Q(\what{\cG}))\right] 
	&\leq \Ex_{\what{\cW}}\left[\Psi(Q(\what{\cW}))\right] + C_r\sqrt{r}\tau^{O(\eta\kappa/r^2)} 				\non\\
	&= \Ex_{\what{\cW}}\left[\prod_{i \in [r]} \Pi_{[0,1]}(Q_i(\what{\cW}))\right] + C_r\sqrt{r}\tau^{O(\eta\kappa/r^2)} 		\non\\
	&\overset{1}{=} \Ex_{(x_i)^r_{i=1} \sim \theta^{R}_e}\Ex_{\tilde{z}_1,\ldots,\tilde{z}_r \sim \{\bot,\top\}^R_\beta}\left[\prod_{i \in [r]} h'_i(x_i,\tilde{z}_i)\right] + C_r\sqrt{r}\tau^{O(\eta\kappa/r^2)}			\non \\
	&\overset{2}{=} \Ex_{(x_i)^r_{i=1} \sim \theta^{R}_e}\left[\prod_{i \in [r]} \Ex_{\tilde{z}_i \sim \{\bot,\top\}^R_\beta} h'_i(x_i,\tilde{z}_i)\right] + C_r\sqrt{r}\tau^{O(\eta\kappa/r^2)} 			\non\\
	&\overset{3}{=} \Ex_{(x_i)^r_{i=1} \sim \theta^{R}_e}\left[\prod_{i \in [r]}\bar{h}_{i}(x_i)\right] + C_r\sqrt{r}\tau^{O(\eta\kappa/r^2)}.	\non	
\end{align}
Here in step $1$ we substitute the expression for $Q_i(\what{\cW})$ and use Observation \ref{obs:h-exp} to get that 
\begin{align*}
Q_i(\what{\cW}) 
&= \sum_{S,T \subseteq [R]} \wh{h'_i}(S,T) \prod_{j \in S} \what{\cW}^x_j[i]\prod_{j' \in T} \what{\cW}^z_{j'}[i]    \\
&= \sum_{S,T \subseteq [R]} \wh{h'_i}(S,T) \prod_{j \in S} \phi^{(\mu_i)}(x_i(j))\prod_{j' \in T} \phi^{(\beta)}(\tilde{z}_{i}(j')) = h'_i(x_i,\tilde{z}_i).
\end{align*}
Step $2$ follows using that $\tilde{z}_1,\ldots,\tilde{z}_r$ are all independent, and step $3$ follows using the definition $\bar{h}_i(x) = \Ex_{z \sim \{\bot,\top\}^R_\beta}\left[h'_i(x,z) \right] = \Ex_{z \sim \{\bot,\top\}^R_\beta}\left[h_i(x,z)\right]$. 
\end{proof}

\subsection{Finishing the proof}

Now we put together the lemmas from the previous sections to finish the proof of Lemma \ref{lem:prod}:
\begin{align*}
	&\Ex_{(x_i,z_i)_{i \in [r]} \sim \cD^{R}_e}\left[\prod_{i \in [r]} h'_i(x_i,z_i)\right] \\
	&=  \Ex_{\cW} \Big[\Psi(Q(\cW))\Big]				\tag{Eq. \eqref{eqn:expr-0}} \\
	&\leq \Ex_{\cG} \Big[\Psi(Q(\cG))\Big] + C_r\sqrt{r}\tau^{O(\eta\kappa/r^2)}
	\tag{Lemma \ref{lem:prod-1}}\\
	&\leq 2^r\Ex_{\what{\cG}}\left[\Psi(Q(\what{\cG}))\right] + \mu^r + C_r\sqrt{r}\tau^{O(\eta\kappa/r^2)} 
	\tag{Lemma \ref{lem:prod-2}} \\
	&\leq 2^r \Ex_{(x_i)^r_{i=1} \sim \theta^{R}_e}\left[\prod_{i \in [r]}\bar{h}_{i}(x_i)\right] + \mu^r + 2C_r\sqrt{r}\tau^{O(\eta\kappa/r^2)}. 	
	\tag{Lemma \ref{lem:prod-3}}
\end{align*}

\section{SSEH Decoding Lemmas}				\label{sec:dec}

{\bf Influence Decoding in SSEH}. The following lemma says that if there exists a family of functions defined on the noisy graph $G^{\otimes R}_\eta$ such that a constant fraction of the functions have influential coordinates, then it can be used to decode a small non-expanding set in $G$.

\begin{lemma}[Lemma 10.1 \cite{GL22arXiv},\cite{RST12}]						\label{lem:sse-dec}
	Let $(\Omega^R,\gamma^R)$ be a product probability space. Let $\{f_A\}_{A \in V^R}$ be a set of functions such that $f_A:\Omega^R \to [0,1]$. Furthermore, suppose the class of functions are permutation respecting i.e., for every $A \in V^R$, $\omega \in \Omega^R$ and permutation $\pi:[R] \to [R]$ we have $f_{\pi(A)}(\pi(\omega)) = f_A(\omega)$. For every $A \in V^R$, define the averaged functions $g_A := \Ex_{B \sim G^{\otimes R}_\eta(A)} f_B$. Then if $G$ is a NO instance of $(\epsilon,\delta,M)$-\smallsetexpansion, then 
	\[
	\Pr_{A \sim V^R}\left[\max_{i \in [R]} \Inf{i}{\sT^{(\Omega)}_{1 - \eta}g_A} > \tau\right] \leq \frac{\nu^2}{8r}.
	\]
\end{lemma}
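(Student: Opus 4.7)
The approach will be the standard influence-decoding argument for \smallsetexpansion, taken contrapositively: if too many $A$'s have an influential coordinate in $\sT^{(\Omega)}_{1-\eta} g_A$, I will decode a subset $S\subseteq V$ of volume in the critical range $[\delta/M,\,M\delta]$ with $\phi_G(S) < 1 - \epsilon$, contradicting the NO case of $(\epsilon,\delta,M)$-\smallsetexpansion. I will mirror the strategy of Lemma 10.1 of \cite{GL22arXiv} (following \cite{RST12}), adapted to this setting.

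First, I will reduce to a single input coordinate via the permutation-respecting structure. Since $\{f_A\}$ is permutation-respecting and $G_\eta^{\otimes R}$ is invariant under joint coordinate permutations, the averaged family $\{g_A\}$ is also permutation-respecting: for any permutation $\pi$ of $[R]$, $g_{\pi(A)} = g_A \circ \pi^{-1}$. Consequently, $\Inf{i}{\sT^{(\Omega)}_{1-\eta} g_A}$ has the same distribution (over uniform $A\in V^R$) for every $i\in[R]$. Combined with the assumption $\Pr_A[\max_i \Inf{i}{\sT^{(\Omega)}_{1-\eta} g_A}>\tau] > \nu^2/(8r)$ and the standard total-influence bound $\sum_i \Inf{i}{\sT^{(\Omega)}_{1-\eta} g_A} = O(1/\eta)$ (from Fourier decay, using that $g_A\in[0,1]$), an averaging argument produces a constant (depending on $\eta,\tau,R,r,\nu$) fraction of $A$'s for which coordinate $1$ has influence at least $\tau' = \Omega(\tau \eta/R)$ in $\sT^{(\Omega)}_{1-\eta} g_A$.

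Next, I will use this to decode a candidate set $S\subseteq V$. Set
\[
\psi(v) := \Ex_{A:\,A(1)=v}\left[\Inf{1}{\sT^{(\Omega)}_{1-\eta} g_A}\right], \qquad S := \{v \in V : \psi(v) > \tau''\},
\]
for a suitably chosen $\tau''$. Markov's inequality combined with the bound on $\E_A[\Inf{1}{\sT^{(\Omega)}_{1-\eta} g_A}] = O(1/(R\eta))$ pins down $|S|/|V|$ to lie in $[\delta/M,\,M\delta]$ for the reduction parameters (which were set precisely so that $M,\delta$ accommodate this range). Then I will show $\phi_G(S) < 1-\epsilon$, which gives the contradiction. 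For this last step, I expand $g_A = \E_{B\sim G_\eta^{\otimes R}(A)} f_B$: each Fourier coefficient $\wh{g_A}(\alpha) = (G_\eta^{\otimes R} \wh{f_\cdot}(\alpha))(A)$ is a convolution over $V^R$ of the $B\mapsto\wh{f_B}(\alpha)$ function by the noisy walk. By expanding $\Inf{1}{\sT^{(\Omega)}_{1-\eta} g_A}$ as a sum of squares of such convolved coefficients and averaging over $A$ with $A(1)=v$, the influence $\psi(v)$ is bounded by the ``non-expanding mass at vertex $v$'' in $G$; if $S$ were expanding in $G$, the noisy-walk smoothing along the first coordinate would force $\psi$ to be spread out over $V$, contradicting the concentration of $\psi$ on $S$.

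The main technical obstacle is the final step: the quantitative argument translating high first-coordinate influence of $g_A$ into low expansion of $S$ in $G$. This requires carefully expanding in the eigenbasis of $G_\eta$, observing that eigenvalues close to $1$ correspond to non-expanding structure, and controlling the error terms introduced by the $\sT^{(\Omega)}_{1-\eta}$ weighting and the reduction from all coordinates to coordinate $1$. Since this is the content of the cited lemma from \cite{GL22arXiv,RST12}, my plan is to invoke their argument directly, tracking the $r$-dependence (which is responsible for the $\nu^2/(8r)$ threshold in the statement, arising from the fact that for each fixed edge $e$ we consider $r$ functions $g_{A,i}$, $i\in e$) and verifying that the parameter choices in Section~\ref{sec:redn} place $|S|$ in the required volume window.
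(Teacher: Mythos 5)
Your overall strategy (contrapositive influence decoding: many influential $g_A$'s imply a small non-expanding set in $G$) matches the paper's, but the concrete decoding you sketch is not the one the cited lemma actually uses, and it has gaps that the paper's machinery is specifically designed to avoid. The paper does not threshold an averaged influence $\psi(v)=\Ex_{A:A(1)=v}[\Inf{1}{\sT^{(\Omega)}_{1-\eta}g_A}]$ and run a spectral argument on $G_\eta$. Instead it works at two levels: it defines influence lists $L_{A,1}$ for the \emph{un-averaged} functions $\sT_{1-\eta}f_A$ and $L_{A,2}$ for the averaged $\sT_{1-\eta}g_A$ (each of size at most $2/\eta\tau$), uses convexity of influence to show that for each bad $A$ a $\tau/2$-fraction of neighbors $B\sim G^{\otimes R}_\eta(A)$ satisfy $j_A\in L_{B,1}$, builds a randomized labeling $F:V^R\to[R]$ that samples uniformly from $L_{\cdot,1}$ or $L_{\cdot,2}$ with probability $1/2$ each, verifies the permutation-consistency condition $\pi^{-1}_A(F(\pi_A(A)))=\pi^{-1}_B(F(\pi_B(B)))$ with probability at least $\nu^2\eta^2\tau^3/64$, and only then invokes Lemma \ref{lem:rst-dec} as a black box to extract a set of volume in $[\zeta/16R,\,3/\eta R]$ with expansion at most $1-\zeta$. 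Your sketch never relates influences of $g_A$ back to influences of the underlying $f_B$ (the convexity step), and it uses the permutation-respecting property only to symmetrize to coordinate $1$, whereas the paper needs it to show the lists transform equivariantly so that consistency holds for all $\pi_A,\pi_B$.

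Two further concrete problems. First, your symmetrization claim is off: from $\Pr_A[\max_i \Inf{i}{\cdot}>\tau]>\nu^2/8r$ and coordinate symmetry you get only $\Pr_A[\Inf{1}{\cdot}>\tau]\ge \nu^2/(8rR)$, equivalently $\Ex_A[\Inf{1}{\cdot}]\ge \tau\nu^2/(8rR)$, not a constant fraction of $A$'s with first-coordinate influence $\Omega(\tau\eta/R)$. Second, Markov gives only an upper bound on $|S|$; to contradict the NO case the decoded set's volume must lie in the window $[\delta/M,\,M\delta]$, so a lower bound is also needed, and your thresholding does not supply one. That lower bound is exactly the $\zeta/16R$ guarantee of Lemma \ref{lem:rst-dec}, and it comes out of the consistent-labeling structure, not out of an averaged-influence threshold. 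Since you explicitly defer the hard step to the cited lemma, the proposal could be repaired, but as written the route you describe is not the cited argument and does not close these gaps.
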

The proof of the above follows using the following lemma from \cite{RST12}.

\begin{lemma}[\cite{RST12}]				\label{lem:rst-dec}			
	There exists a constant $\eta_0 \in (0,1)$ such that the following holds for any $\eta \in (0,\eta_0]$ and $R$ large enough. Suppose $(A,B)$ be a distribution over pairs of vertices defined according to the following process: let $A \sim V^R$ and $B \sim G^{\otimes R}_{\eta}(A)$. Now, suppose there exists a function $F:V^R \to [R]$ such that it satisfies
	\begin{equation}				\label{eqn:match-prob}
	\Pr_{(A,B)}\Pr_{\pi_A,\pi_B \sim \mathbbm{S}_R}\left[\pi^{-1}_A(F(\pi_A(A))) = \pi^{-1}_A(F(\pi_B(B)))\right] \geq \zeta.
	\end{equation}
	Then there exists a subset $S \subseteq V$ such that ${\sf vol}(S) \in \left[\frac{\zeta}{16R}, \frac{3}{\eta R}\right]$ satisfying $\phi_S(G) \leq 1 - \zeta$.
\end{lemma}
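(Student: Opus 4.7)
The plan is to construct the small non-expanding set $S \subseteq V$ by decoding a vertex-valued random variable from each $A \in V^R$ using $F$, and then extracting $S$ via a random threshold on the induced distribution over $V$.

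First, set $v_A := A(j_A)$ where $j_A := \pi_A^{-1}(F(\pi_A(A)))$ for a uniformly random permutation $\pi_A \sim \mathbbm{S}_R$. A swap-symmetry argument (the involution $v \leftrightarrow v'$ applied coordinate-wise to $A$ preserves the uniform distribution on $V^R$ and interchanges the events $\{v_A = v\}$ and $\{v_A = v'\}$) shows that the marginal distribution of $v_A$ over $(A, \pi_A)$ is uniform on $V$. Similarly define $v_B$ for $B \sim G_\eta^{\otimes R}(A)$ with an independent $\pi_B$.

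Next, translate the matching hypothesis into a coupling on $(v_A, v_B)$: conditioned on $\{j_A = j_B = j\}$ one has $(v_A, v_B) = (A(j), B(j))$, and $B(j)$ is an independent $\eta$-noisy $G$-neighbor of $A(j)$. Hence the joint law of $(v_A, v_B)$ is a mixture of a $\zeta$-mass part whose conditional distribution is exactly the edge distribution of $G_\eta$ (the contribution of $\{j_A = j_B\}$) and a residual $(1-\zeta)$-mass part. In particular, under this coupling, $v_B$ is a single $\eta$-noisy $G$-step from $v_A$ with probability at least $\zeta$, while the marginals of $v_A$ and $v_B$ remain uniform on $V$.

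Now build $S$ by random thresholding. For each $A$, let $\mu_A(v) := \Pr_{\pi_A}[v_A = v \mid A]$; this distribution is supported on the at most $R$ distinct coordinate values of $A$. Sample $A^* \sim V^R$ and an independent uniform threshold $U^* \in [0,1]$, and set $S := \{v \in V : \mu_{A^*}(v) \ge U^*/R\}$, the scale $1/R$ being chosen so the expected volume is of order $1/R$. The upper volume bound follows from $|{\rm supp}(\mu_{A^*})| \le R$ together with $\sum_v \mu_{A^*}(v) = 1$, which controls $|S|$ by a Markov step; the sharp constant $3/(\eta R)$ appears after normalization against $|V|$ and the observation that sets substantially larger than $1/(\eta R)$ are already dispersed by one step of the $\eta$-noisy walk and so cannot survive the matching test. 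The lower volume bound is obtained by a Markov-type averaging using the $\zeta$-mass part of the coupling: if $|S|$ were smaller than $\zeta|V|/(16R)$ on average, then the decoded pair $(v_{A^*}, v_{B^*})$ could not simultaneously land in $S$ with the required $\zeta$-mass. Finally, $\phi_G(S) \le 1 - \zeta$ follows since, conditional on the $\zeta$-mass event, $v_{B^*}$ remains in $S$ whenever $v_{A^*}$ does, so at least a $\zeta$-fraction of $G_\eta$-edges from $S$ return to $S$, and removing the $O(\eta)$ noise contribution transfers this to a lower bound on the fraction of $G$-edges from $S$ that stay in $S$.

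The main obstacle is calibrating the random threshold so that the volume lands inside the narrow window $[\zeta/(16R),\, 3/(\eta R)]$ while simultaneously securing the expansion bound $\phi_G(S) \le 1 - \zeta$: the proof must balance the Markov-type lower bound on $|S|$ coming from the matching mass $\zeta$, the support-size upper bound on $\mu_{A^*}$, and the dispersion rate $\eta$ of the noisy walk, all routed through a single averaging step so that no factor is lost along the way. Once this accounting is done, the existence of a good $(A^*, U^*)$ pair yielding $S$ with all three desired properties follows by averaging.
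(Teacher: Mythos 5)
The paper does not actually prove this lemma --- it is imported verbatim from \cite{RST12} --- so I am judging your argument on its own terms against the known decoding argument. There are two gaps that I believe are fatal as written. First, your uniformity claim for the marginal of $v_A$ is false: the swap involution $\sigma$ exchanging $v\leftrightarrow v'$ coordinate-wise does preserve the uniform measure on $V^R$, but it does not interchange the events $\{v_A=v\}$ and $\{v_A=v'\}$, because $F$ is an arbitrary function with no equivariance under relabelings of $V$, so $F(\pi_A(\sigma(A)))\neq F(\pi_A(A))$ in general and hence $v_{\sigma(A)}\neq\sigma(v_A)$. (Indeed the marginal of $v_A$ \emph{must} be allowed to be non-uniform --- e.g.\ $F$ could always point at a coordinate carrying a fixed vertex $v_0$ when one exists --- and this concentration is precisely the signal the decoding is supposed to extract.) Second, and more seriously, your set $S$ is carved out of a \emph{single} sample $A^*$: since $\mu_{A^*}$ is supported on the at most $R$ distinct coordinate values of $A^*$, you always get $|S|\le R$, whereas the lemma demands $|S|\ge \zeta|V|/(16R)$. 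In the regime where the lemma is applied, $|V|$ grows while $R$ is fixed by the reduction parameters, so no choice of $A^*$ or threshold $U^*$, and no averaging over them, can meet the lower volume bound. The correct construction must aggregate over all of $V^R$, e.g.\ threshold the globally averaged quantity $p_v:=\E_{A,\pi}\bigl[\mathbb{1}[A(\tilde F(A))=v]\bigr]$ at a level calibrated to the noise floor $\eta R/|V|$ (which is where both the $3/(\eta R)$ upper bound, via Markov applied to $\sum_v p_v=1$, and the $\zeta/(16R)$ lower bound, via a per-vertex cap on how much matching mass a single vertex can absorb, come from).

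Two further steps are asserted but not justified. Your ``mixture'' claim --- that conditioned on the matching event $\{j_A=j_B\}$ the pair $(v_A,v_B)$ is distributed exactly as an $\eta$-noisy edge of $G$ --- does not hold: conditioning on $\{j_A=j_B=j\}$ is conditioning on an event that depends on all of $(A,B)$, which distorts the joint law of $(A(j),B(j))$ away from (vertex, noisy neighbor). The standard argument avoids this conditioning entirely and instead works with the unconditioned bilinear form $\E_{A,B}\sum_j\Pr_{\pi_A}[j_A=j]\Pr_{\pi_B}[j_B=j]\ge\zeta$, splitting it according to whether $A(j)$ and $B(j)$ land in $S$. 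Relatedly, converting ``$\zeta$ mass of decoded pairs lands in $S\times S$'' into $\phi_G(S)\le 1-\zeta$ requires relating the decoded-pair measure on $S\times S$ to the edge measure of $G$ conditioned on the first endpoint lying in $S$; this is exactly the accounting your last paragraph defers, and it is the heart of the proof rather than a routine clean-up.
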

Using the above, we can prove Lemma \ref{lem:sse-dec}.

\begin{proof}[Proof of Lemma \ref{lem:sse-dec}]
	Let $\{f_A\}_{A \in V^R}$ be the collection of functions in the setting of the lemma. For every $A \in V^R$, we introduce two sets $L_{A,1},L_{A,2} \subseteq [R]$ which are defined as follows.
	\[
	L_{A,1} := \left\{j \in [R]~\Big|~{\sf Inf}_j\left[\sT_{1 - \eta}f_A\right] \geq \frac{\tau}{2} \right\}
	\ \ \ \ \ \ \textnormal{ and } \ \ \ \ \ \
	L_{A,2} := \left\{j \in [R]~\Big|~{\sf Inf}_j\left[\sT_{1 - \eta}g_A\right] \geq \tau \right\}
	\]
	We can verify that $|L_{A,1}|,|L_{A,2}| \leq 2/\eta\tau$ (for e.g., \cite{KKMO07}). Let $V' \subset V^R$ be the subset of vertices for which 
	\[
	\max_{j \in [R]} {\sf Inf}^{(\Omega)}_j\left[\sT_{1 - \eta}g_A\right] \geq \tau,
	\] 
	and for every such choice of $A \in V'$, let $j_A$ be fixed element of $L_{A,2}$. Note that for any such choice of $A \in V'$ we have 
	\[
	\tau \leq {\sf Inf}^{(\Omega)}_{j_A}\left[\sT_{1 - \eta}g_A\right] =  {\sf Inf}^{(\Omega)}_{j_A}\left[\sT_{1 - \eta}\Ex_{B \sim G^{\otimes R}_{1 - \eta}(A)}f_B\right] \leq \Ex_{B \sim G^{\otimes R}_{1 - \eta}(A)} {\sf Inf}^{(\Omega)}_{j_A}\left[\sT_{1 - \eta}f_B\right],
	\]
	where the inequality follows using the convexity of influences. Therefore, for any such choices of $A$, for at least $\tau/2$ fraction of choices of $B \sim G^{\otimes R}_{\eta}(A)$, we have ${\sf Inf}^{(\Omega)}_{j_A}\left[\sT_{1 - \eta}f_B\right] \geq \tau/2$; call this set of vertices $V(A)$. Now consider the following distribution over functions $F:V^R \to [R]$: for every $A \in V^R$, do the following independently.
	\begin{itemize}
		\item W.p. $1/2$, if $L_{A,1} \neq \emptyset$, assign $F(A) \sim L_{A,1}$ u.a.r., otherwise assign $F(A)$ arbitrarily.
		\item W.p. $1/2$, if $L_{A,2} \neq \emptyset$, assign $F(A) \sim L_{A,2}$ u.a.r., otherwise assign $F(A)$ arbitrarily.
	\end{itemize}
	Now let us analyze the probability of the event in \eqref{eqn:match-prob} w.r.t above distribution over assignments. To that end, observe that for $A \sim V^R$, with probability at least $\nu^2/8r$, we have $A \in V'$. Fixing such a choice of $A \in V'$, note that for $B \sim G^{\otimes R}_{\eta}(A)$, $B \in V(A)$ with probability at least $\tau/2$. Fixing such a choice of $(A,B)$, note that $\{j_A\} \in L_{A,2} \cap L_{B,1}$. Furthermore, since the functions $\{f_{A'}\}$ and $\{g_{A'}\}$ are permutation respecting, for any pair of permutations $\pi_A,\pi_B$, we have that $\pi_{A}(j_A) \in L_{\pi_A(A),2}$ and $\pi_B(j_A) \in L_{\pi_B(B),1}$. Therefore fixing permutations $\pi_A,\pi_B$ , randomizing over the choices of $F$, with probability at least $\eta^2\tau^2/16$, we have $F(\pi_{A}(A)) = \pi_{A}(j_A)$, and $F(\pi_B(B)) = \pi_B(j_A)$, and hence $\pi^{-1}_A(F(\pi_A(A))) = \pi^{-1}_B(F(\pi_B(B)))$. Therefore, there exists a choice of $F$ such that 
	\[
	\Pr_{A,B} \Pr_{\pi_A,\pi_B}\left[\pi^{-1}_A(F(\pi_A(A))) = \pi^{-1}_B(F(\pi_B(B)))\right] \geq \frac{\nu^2\eta^2\tau^3}{64},
	\]
	which using Lemma \ref{lem:rst-dec} implies that there exist a set $S \subset V$ such that ${\sf vol}(S) \in [\nu^2/16R,3/\eta R]$ such that $\phi_G(S) \leq 1 - \nu^2$, which contradicts the fact that $G$ is a NO instance of $(\epsilon,\delta,M)$-SSE.
\end{proof}

\subsection{Proof of Lemma \ref{lem:dec-set0}}				\label{sec:dec-set0}

For contradiction, let us assume that
\[
\Pr_{A \sim V^R}\left[\Pr_{e \sim E_{\rm gap}}\left[\max_{i \in e} \max_{j \in [R]} \Inf{j}{\sT^\Pom_{1 - \eta}g_{A,i}} > \tau\right] > \nu\right] \geq \nu.
\]
Then it follows that
\[
\Pr_{A \sim V^R,e \sim E_{\rm gap}}\left[\max_{i \in e} \max_{j \in [R]} \Inf{j}{\sT^\Pom_{1 - \eta}g_{A,i}} > \tau \right] \geq \nu^2,
\]
which in turn implies that there exists a choice of an edge $e \in E_{\rm gap}$ for which 
\[
\Pr_{A \sim V^R}\left[\max_{i \in e} \max_{j \in [R]} \Inf{j}{\sT^\Pom_{1 - \eta}g_{A,i}} > \tau \right] \geq \nu^2.
\]
Further averaging over the choice of $i \in e$, it follows that there exists a vertex $i \in e$ such that 
\begin{equation}					\label{eqn:lb-1}
	\Pr_{A \sim V^R}\left[ \max_{j \in [R]} \Inf{j}{\sT^\Pom_{1 - \eta}g_{A,i}} > \tau \right] \geq \frac{\nu^2}{r}.
\end{equation}
Now consider the class of functions $\tilde{f}_{A}:\{0,1\}^R \times \{\bot,\top\}^R \to [0,1]$ defined in $L_2(\Omega^R_i)$ as 
\[
\tilde{f}_{A}(x,z) = \Ex_{(A',x') \sim M^{(\mu_i)}_{z}(B,x)}\Ex_{\pi \sim \mathbbm{S}_R}\left[f(\pi(A',x',z))\right]. 
\]	
Then it is easy to see that for every permutation $\pi':[R] \to [R]$ we have 
\begin{align*}
	\tilde{f}_{\pi'(A)}(\pi(x,z)) 
	&= \Ex_{(A',x') \sim M^{(\mu_i)}_{\pi'(z)}(\pi'(A),\pi'(x))}\Ex_{\pi \sim \mathbbm{S}_R}\left[f(\pi(A',x',\pi'(z)))\right]  \\
	&= \Ex_{(A',x') \sim M^{(\mu_i)}_{z}(A,x)}\Ex_{\pi \sim \mathbbm{S}_R}\left[f(\pi(\pi'(A'),\pi'(x'),\pi'(z)))\right]  \\
	&= \Ex_{(A',x') \sim M^{(\mu_i)}_{z}(A,x)}\Ex_{\pi \sim \mathbbm{S}_R}\left[f(\pi(A',x',z))\right]  \\
	&= \tilde{f}_{A}(x,z),
\end{align*}
i.e., the set of functions $\{\tilde{f}_A\}_{A \in V_R}$ (and consequently, the functions $\{\sT^{(\Omega_i)}_{1 - \eta}\tilde{f}_A\}_{A \in V^R}$) are also permutation respecting as in the statement of Lemma \ref{lem:sse-dec}. Furthermore, we have $g_{A,i} := \Ex_{B \sim G^{\otimes R}_\eta(A)}\tilde{f}_B$ for every $A \in V^R$ by definition. Since $G$ is a NO instance, instantiating Lemma \ref{lem:sse-dec} with $\Omega = \Omega^R_i$ and $f_A = \tilde{f}_A$ for every $A \in V^R$ we get that 
\[
\Pr_{A \sim V^R}\left[\max_{i \in [R]} \Inf{i}{\sT^{(\Omega)}_{1 - \eta}g_{A,\mu_i}} > \tau\right] \leq \frac{\nu^2}{8r},
\]	 
which contradicts \eqref{eqn:lb-1}, thus concluding the proof.

\subsection{Proof of Lemma \ref{lem:dec-set1}}				\label{sec:dec-set1}

Suppose for contradiction, 
\[
\Pr_{A \sim V^R}\left[\Pr_{e \sim E_{\rm gap}}\left[\max_{i \in e} \max_{j \in [R]} \Inf{j}{\sT^{(\mu_i)}_{1 - \eta}\bar{g}_{A,i}} > \tau\right] > \nu\right] \geq \nu/2.
\]
Again, following the arguments from the proof of Lemma \ref{lem:dec-set0} (Section \ref{sec:dec-set0}), we can find a choice of $i \in e$ for some $e \in E_{\rm gap}$ for which the following holds:
\begin{equation}				\label{eqn:lb-2}
	\Pr_{A \sim V^R}\left[\max_{j \in [R]} \Inf{j}{\sT^{(\mu_i)}_{1 - \eta}\bar{g}_{A,i}} > \tau\right] \geq \frac{\nu^2}{2r}.
\end{equation}
On the other hand, consider the set of functions $\{\bar{f}_A\}_{A \in V^R}$ defined on $L_2(\{0,1\}^R_{\mu_i})$ as follows. For every $A \in V^R$,
\[
\bar{f}_A(x) 
= \Ex_{z \sim \{\bot,\top\}^R_\beta} \Ex_{(A',x') \sim M^{(\mu_i)}_{z}(A,x)}\Ex_{\pi \sim \mathbbm{S}_R}\left[f(\pi(A',x',z))\right].
\]
Then note that for any permutation $\pi':[R] \to [R]$ we have
\begin{align*}
	\bar{f}_{\pi'(A)}(\pi'(x)) 
	&= \Ex_{z \sim \{\bot,\top\}^R_\beta} \Ex_{(A',x') \sim M^{(\mu_i)}_{z}(\pi'(A),\pi'(x))}\Ex_{\pi \sim \mathbbm{S}_R}\left[f\Big(\pi(A',x',z)\Big)\right] \\
	&= \Ex_{z \sim \{\bot,\top\}^R_\beta} \Ex_{(A',x') \sim M^{(\mu_i)}_{\pi'(z)}(\pi'(A),\pi'(x))}\Ex_{\pi \sim \mathbbm{S}_R}\left[f\Big(\pi(A',x',\pi'(z))\Big)\right] \\
	&= \Ex_{z \sim \{\bot,\top\}^R_\beta} \Ex_{(A',x') \sim M^{(\mu_i)}_{z}(A.x)}
	\Ex_{\pi \sim \mathbbm{S}_R}\left[f\Big(\pi\Big(\pi'(A'),\pi'(x'),\pi'(z)\Big)\Big)\right] \\
	&= \Ex_{z \sim \{\bot,\top\}^R_\beta} \Ex_{(A',x') \sim M^{(\mu_i)}_{z}(A,x)}\Ex_{\pi \sim \mathbbm{S}_R}\left[f\Big(\pi(A',x',z)\Big)\right] \\	
	&= \tilde{f}_{A}(x).
\end{align*}
Therefore, the set of functions $\{\bar{f}_A\}_{A \in V^R}$ is permutation respecting again. Furthermore, by definition, it follows that for every $A \in V^R$, we have $\bar{g}_A = \Ex_{B \sim G^{\otimes R}_\eta(A)}\bar{f}_{B}$. Therefore, instantiating Lemma \ref{lem:sse-dec} with $f_A :=  \cT^{(\mu_i)}_{1 - \eta}\bar{f}_A$, we get that 
\[
\Pr_{A \sim V^R}\left[\max_{j \in [R]} \Inf{j}{\sT^{(\mu_i)}_{1 - \eta}\bar{g}_{A,i}} > \tau\right] < \frac{\nu^2}{8r},
\]
which again contradicts \eqref{eqn:lb-2}, and hence we must have
\[
\Pr_{A \sim V^R}\left[\Pr_{e \sim E_{\rm gap}}\left[\max_{i \in e} \max_{j \in [R]} \Inf{j}{\sT^{(\mu_i)}_{1 - \eta}\bar{g}_{A,i}} > \tau\right] > \nu\right] \leq \nu/2.
\] 
Using identical arguments, we can also prove that 
\[
\Pr_{A \sim V^R}\left[\Pr_{i \sim E_{\rm gap}}\left[\max_{j \in [R]} \Inf{j}{\sT^{(\mu_i)}_{1 - \eta}\bar{g}_{A,i}} > \tau\right] > \nu\right] \leq \nu/2.
\]
The claim now follows by taking a union bound.

\section{Proof of Lemma \ref{lem:round}}				\label{sec:round}

Here we provide a proof of Lemma \ref{lem:round} detailing the changes needed in the proof of soundness analysis in Theorem 6.2 from \cite{RT12}, including the improved analysis of the variance bound (Lemma \ref{lem:r-bias}). To begin with, we describe the rounding scheme from \cite{RT12} adapted to our setting in Figure \ref{fig:round}:

\begin{figure}[ht!]
	\begin{mdframed}
		{\bf Input}: A collection of functions $\{g_i\}_{i \in V_{\rm gap}}$ where $g_i:\{0,1\}^R \to [0,1]$ is defined on the probability space $\{0,1\}^R_{\mu_i}$. Furthermore, the functions satisfy the global constraints:
		\begin{equation}				\label{eqn:r-cond1}
			\Ex_{i \sim V_{\rm gap}}\Ex_{x_i \sim \{0,1\}^R_{\mu_i}}\left[g_i(x_i)\right] = \mu,
		\end{equation}
		\begin{equation}				\label{eqn:r-cond2}
			\Pr_{i \sim G_{\rm gap}} \left[\max_{j \in [R]} \Inf{j}{\sT^{(\mu_i)}_{1 - \eta}g_i} > \tau\right] \leq \nu.
		\end{equation}
		\begin{equation}				\label{eqn:r-cond3}
			\Pr_{e \sim E_{\rm gap}} \left[ \max_{i \in e}\max_{j \in [R]} \Inf{j}{\sT^{(\mu_i)}_{1 - \eta}g_i} > \tau\right] \leq \nu.
		\end{equation}
	
		{\bf Setup}: 
		\begin{itemize}
		\item Let $\{u_i\}_{i \in V_{\rm gap}} \cup \{\uphi\}$ be the $d$-dimensional vector solution corresponding to $\{\theta_e\}_{e \in E_{\rm gap}}$ such that $u_i = \mu_i \uphi + w_i$ for every $i \in V_{\rm gap}$. 
		\item For every $i \in V_{\rm gap}$, let $H_{i}:\mathbbm{R}^R \to \mathbbm{R}$ be the multi-linear polynomial representation of $\sT^{(\mu_i)}_{1 - \eta}g_i$ in the variables $(x_i(j))_{j \in [R]}$ (from Fact \ref{fact:mult})\\
		\item Let $f_{[0,1]}:\mathbbm{R} \to [0,1]$ be the truncation function defined as
		\[
		f_{[0,1]}(x) := 
		\begin{cases}
			0 & \mbox{ if } x < 0 \\
			x & \mbox{ if } x \in [0,1] \\
			1 & \mbox{ if } x > 1
		\end{cases}
		\]
		\end{itemize}
		{\bf Rounding}: \\[1pt]
		\begin{enumerate}
			\item Sample Gaussian matrix ${\bf G} \sim N\left(0,I_{R \times d}\right)$, where $d$ is the ambient dimension of the vector solution. 
			\item For every $i \in V_{\rm gap}$, define $q_i = \mu_i \cdot {\bf 1}_R + {\bf G} \cdot {w}_i$.
			\item For every $i \in V_{\rm gap}$, compute $p_i = f_{[0,1]}\left(H_{i}(q_i)\right)$.
			\item Sample a random assignment $\sigma:V_{\rm gap} \to \{0,1\}$, by sampling $\sigma(i) \sim \{0,1\}_{p_i}$ independently for every $i \in V_{\rm gap}$.   
		\end{enumerate}
	\end{mdframed}
	\caption{Rounding Scheme}
	\label{fig:round}
\end{figure}

We point out a subtle difference between the rounding scheme described above and the one from \cite{Rag08}: here the different vertices $i \in V_{\rm gap}$ can have different rounding functions $H_i$, whereas in \cite{Rag08}, it suffices to work with a single rounding function. However, as we will see, this change does not cause additional issues in the analysis. The proof of Lemma \ref{lem:round} now follows immediately from the following lemmas which we prove in Sections \ref{sec:r-bias} and \ref{sec:r-round}.

\begin{lemma}						\label{lem:r-bias}
	Suppose the set of local distributions $\theta := \{\theta_e\}_e$ has average-correlation at most $\gamma^2$. Then, with probability at least $1 - 2\sqrt{\gamma}$ we have that 
	\[
	\left|\Ex_{i \sim V_{\rm gap}}\big[ \sigma(i)\big] - \mu \right| \leq 2\mu\sqrt{\gamma}.
	\]
\end{lemma}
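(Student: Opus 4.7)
Write $Y := \Ex_{i \sim V_{\rm gap}}[\sigma(i)]$; the strategy is a second-moment argument. We bound $|\Ex[Y] - \mu| \leq \mu\sqrt{\gamma}$ and ${\rm Var}[Y] \leq \gamma^2 + o(1)$, and conclude via Chebyshev's inequality.

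\emph{Expectation.} By the tower property, $\Ex[Y] = \Ex_i \Ex_{\mathbf{G}}[p_i]$. By Proposition~\ref{prop:vector} and the row-independence of $\mathbf{G}$, the coordinates $q_i(r) = \mu_i + \langle \mathbf{G}_r, w_i \rangle$ are i.i.d.\ Gaussians with mean $\mu_i$ and variance $\|w_i\|^2 = \mu_i(1-\mu_i)$, matching the first two moments of $x_i(r) \sim \{0,1\}_{\mu_i}$; combined with multilinearity of $H_i$ this yields $\Ex_{\mathbf{G}}[H_i(q_i)] = H_i(\mu_i\mathbf{1}_R) = \Ex_{x_i}[H_i(x_i)] = \Ex_{x_i}[g_i(x_i)]$, the last step using that $\sT^{(\mu_i)}_{1-\eta}$ preserves the mean. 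The clipping bias $|\Ex_\mathbf{G}[f_{[0,1]}(H_i(q_i))] - \Ex_\mathbf{G}[H_i(q_i)]|$ is at most $\Ex_\mathbf{G}[\Psi(H_i(q_i))]$ for $\Psi(y) := (y)_- + (y-1)_+$, which is $1$-Lipschitz and vanishes on $[0,1]$. Applying the invariance principle (Theorem~\ref{thm:clt}) to $H_i$ --- whose Fourier tails decay as $(1-\eta)^d$ by Fact~\ref{fact:decay} and whose coordinate-influences are bounded by $\tau$ for a $(1-\nu)$-fraction of $i$'s by \eqref{eqn:r-cond2} --- we obtain $\Ex_\mathbf{G}[\Psi(H_i(q_i))] \leq \Ex_{x_i}[\Psi(H_i(x_i))] + C_r\tau^{\Omega(\eta)} = C_r\tau^{\Omega(\eta)}$ for these $i$. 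Averaging (contributing at most $1$ for the remaining $\nu$-fraction) and using \eqref{eqn:r-cond1} gives $|\Ex[Y] - \mu| \leq \nu + C_r\tau^{\Omega(\eta)} \leq \mu\sqrt{\gamma}$ under the parameter choices.

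\emph{Variance --- the Hermite trick.} Conditioning on $\mathbf{G}$ makes the $\sigma(i)$'s independent Bernoullis, so ${\rm Var}[Y] \leq \max_i\tilde{w}(i)/4 + \Ex_{i, j \sim V_{\rm gap}}|{\rm Cov}_\mathbf{G}(p_i, p_j)|$, with the first term negligible under bounded weights. The key step --- and the main departure from \cite{RT12}'s $2^{O(R)}|{\sf Cov}_\theta(X_i, X_j)|$ bound --- is the sharp estimate $|{\rm Cov}_\mathbf{G}(p_i, p_j)| \leq |{\sf Corr}_\theta(X_i, X_j)|$. Set $\tilde{q}_i := (q_i - \mu_i\mathbf{1}_R)/\|w_i\|$; by Proposition~\ref{prop:vector} and row-independence of $\mathbf{G}$, each $\tilde{q}_i$ is a standard Gaussian in $\mathbbm{R}^R$ and
\[
\Ex\!\left[\tilde{q}_i(r)\tilde{q}_j(r')\right] = \rho_{ij}\cdot \mathbf{1}[r = r'], \qquad \rho_{ij} := {\sf Corr}_\theta(X_i, X_j).
\]
Writing $p_i = F_i(\tilde{q}_i)$ with $F_i : \mathbbm{R}^R \to [0,1]$ and expanding in the product Hermite basis $\{h_\alpha\}_{\alpha \in \mathbbm{Z}_{\geq 0}^R}$, the classical $1$-D orthogonality for correlated Gaussians together with coordinate-independence across $r$ gives $\Ex[h_\alpha(\tilde{q}_i)h_\beta(\tilde{q}_j)] = \delta_{\alpha\beta}\rho_{ij}^{|\alpha|}$, so
\[
{\rm Cov}_\mathbf{G}(p_i, p_j) = \sum_{\alpha \neq 0}\wh{F}_i(\alpha)\wh{F}_j(\alpha)\rho_{ij}^{|\alpha|}.
\]
Since $|\rho_{ij}|^{|\alpha|} \leq |\rho_{ij}|$ whenever $|\alpha| \geq 1$, Cauchy--Schwarz combined with $\|F_i\|_2 \leq \|F_i\|_\infty \leq 1$ closes off the sum as $|{\rm Cov}_\mathbf{G}(p_i, p_j)| \leq |\rho_{ij}|$. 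Averaging and invoking the $\gamma^2$-independence of $\theta$ from Lemma~\ref{lem:smooth} yields ${\rm Var}[Y] \leq \gamma^2 + o(1)$.

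\emph{Chebyshev and main obstacle.} Chebyshev's inequality gives $\Pr[|Y - \Ex[Y]| \geq \mu\sqrt{\gamma}] \leq {\rm Var}[Y]/(\mu\sqrt{\gamma})^2 \leq 2\gamma/\mu^2 \leq 2\sqrt{\gamma}$, where the last step uses $\gamma \leq \mu^{10r} \leq \mu^4$; combined with the expectation bound this gives $|Y - \mu| \leq 2\mu\sqrt{\gamma}$ with probability at least $1 - 2\sqrt{\gamma}$. The main technical obstacle is the Hermite covariance bound: it crucially exploits both (i) the $[0,1]$-valuedness of $F_i$ (yielding the tight $\|F_i\|_2 \leq 1$ that closes the Cauchy--Schwarz step) and (ii) the small-\emph{correlation} (rather than merely small-covariance) guarantee of the preprocessed Lasserre solution. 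Either feature alone would reintroduce the $2^{O(R)}$ blow-up that renders \cite{RT12}'s bound unusable in the parameter regime required by the reduction.
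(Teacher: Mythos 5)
Your proof is correct and follows essentially the same route as the paper's: the identical Hermite-expansion/Cauchy--Schwarz argument (using $\Ex[h_\alpha(\tilde q_i)h_\beta(\tilde q_j)]=\delta_{\alpha\beta}\rho_{ij}^{|\alpha|}$ and $\|F_i\|_{2}\le\|F_i\|_\infty\le 1$) to get $|{\rm Cov}_{\bf G}(p_i,p_j)|\le|{\sf Corr}_\theta(X_i,X_j)|$, an invariance-principle bound showing the mean is close to $\mu$, and a Chebyshev step; the only cosmetic difference is that you fold the Bernoulli-rounding stage into a single total-variance Chebyshev application where the paper dispatches it with a separate Chernoff bound. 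One caveat: your claim that the $O(\nu)$ mean error (from the $\nu$-fraction of vertices failing the influence condition) is at most $\mu\sqrt{\gamma}$ does not hold under the stated parameters (since $\gamma=2^{-10R}\nu^2$ gives $\mu\sqrt{\gamma}\ll\nu$), but the paper's own Lemma \ref{lem:exp-bound} incurs and then silently drops the same $O(\nu)$ term, so this is a shared issue with the source rather than a gap introduced by your argument.
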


\begin{lemma}				\label{lem:r-round}
	Suppose the functions $\{g_i\}_{i \in V_{\rm gap}}$ satisfy \eqref{eqn:r-cond3}. Then,
	\[
	\Ex_{\sigma}\Ex_{e \sim E_{\rm gap}}\left[\prod_{i \in e} \sigma(i)\right]
	\geq \Ex_{e \sim E_{\rm gap}}\Ex_{(x_i)_{i \in e} \sim \theta^R_e}\left[\prod_{i \in e} g_i(x_i)\right] - C_r\sqrt{r}\tau^{C\eta\kappa/r^2}.
	\]
\end{lemma}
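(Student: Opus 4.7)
The plan is to adapt Raghavendra–Tan's invariance-based analysis of the randomized rounding in Figure \ref{fig:round} by conditioning on the Gaussian matrix $\mathbf{G}$ and then applying Theorem \ref{thm:clt} edge-by-edge. First, I would note that given $\mathbf{G}$, the $\sigma(i)$'s are mutually independent Bernoulli$(p_i)$, so $\Ex_{\sigma}[\prod_{i \in e} \sigma(i) \mid \mathbf{G}] = \prod_{i \in e} p_i = \prod_{i \in e} f_{[0,1]}(H_i(q_i))$. I would restrict attention to the set of ``nice'' edges where $\max_{i \in e}\max_j \Inf{j}{\sT^{(\mu_i)}_{1-\eta} g_i} \leq \tau$; by \eqref{eqn:r-cond3} this set has measure at least $1-\nu$, and the remaining edges contribute at most $\nu$ which will be absorbed into the final error.

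Next, I would fix a nice edge $e$ and invoke invariance. Writing $q_i(j) = \mu_i + \sqrt{\mu_i(1-\mu_i)}\, g_{x,i}(j)$ (using $\|w_i\|^2 = \mu_i(1-\mu_i)$ from Proposition \ref{prop:vector}), the key covariance-matching fact is that $\mathrm{Cov}(g_{x,i}(j), g_{x,i'}(j)) = \langle w_i, w_{i'} \rangle / (\|w_i\|\|w_{i'}\|) = {\sf Corr}_\theta(X_i, X_{i'})$ matches $\Ex_{\theta_e}[\phi^{(\mu_i)}(X_i)\phi^{(\mu_{i'})}(X_{i'})]$, so the Gaussian ensembles $\mathcal{G}_j[i] := g_{x,i}(j)$ and Boolean ensembles $\mathcal{W}_j[i] := \phi^{(\mu_i)}(x_i(j))$ have matching covariance structure. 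I would express $H_i$ as a multilinear polynomial $Q_i$ in these ensembles so that $H_i(q_i) = Q_i(\mathcal{G})$ and $H_i(x_i) = Q_i(\mathcal{W})$, and verify the four hypotheses of Theorem \ref{thm:clt}: matching covariance (by construction); Fourier decay ${\rm Var}[Q_i^{>d}] \leq (1-\eta)^{2d}$ (inherited from the $\sT_{1-\eta}$ built into $H_i$); ensemble influences $\Inf{\mathcal{W}_j}{Q_i} \leq \Inf{j}{\sT g_i} \leq \tau$ (via an argument analogous to Claim \ref{cl:inf-tr}); and Boolean ensemble probability lower bounds from the $(\gamma\mu)^r$-smoothness of $\theta_e$. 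Applying invariance with the $\sqrt{r}$-Lipschitz function $\Psi(\ell) = \prod_i f_{[0,1]}(\ell_i)$ gives $\bigl|\Ex_{\mathbf{G}}[\Psi(Q(\mathcal{G}))] - \Ex_{x_e \sim \theta_e^R}[\Psi(Q(\mathcal{W}))]\bigr| \leq C_r\sqrt{r}\tau^{C\eta\kappa/r^2}$.

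Finally, I would unwrap both sides: the LHS equals $\Ex_\sigma[\prod_{i \in e} \sigma(i) \mid e]$, while the RHS equals $\Ex_{x_e}[\prod_i f_{[0,1]}(\sT g_i(x_i))] = \Ex_{x_e}[\prod_i \sT g_i(x_i)]$, using that $\sT g_i \in [0,1]$ on Boolean inputs. Using $\sT g_i(x_i) = \Ex_{\tilde{x}_i \mid x_i}[g_i(\tilde{x}_i)]$ together with independence of the noises across $i$, this rewrites as $\Ex[\prod_i g_i(\tilde{x}_i)]$ where $(\tilde{x}_i)$ arises from independently $(1-\eta)$-noising each coordinate of $x_e \sim \theta_e^R$. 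Bridging back to $\Ex_{x_e}[\prod_i g_i(x_i)]$ — by a coupling/telescoping argument that leverages the Fourier decay of the $g_i$'s (which arise as $g_i = \sT h_i$ in the intended application) so that applying an extra $\sT$-smoothing changes pairwise second moments only by an $O(\eta)$ factor — together with absorbing the $O(\nu)$ bad-edge error, yields the lemma.

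The main obstacle is precisely this last bridging step: unlike the clean invariance bound, controlling $|\Ex_{x_e}[\prod_i g_i(x_i)] - \Ex_{x_e}[\prod_i \sT g_i(x_i)]|$ requires more than boundedness of $g_i$, since the two noised distributions over $(\tilde{x}_i)_{i \in e}$ and $(x_i)_{i \in e}$ can have $\Theta(Rr\eta)$ total-variation distance and the naive bound is useless. A careful argument using Efron–Stein decomposition of $\prod_i g_i$ against the two distributions (or equivalently, Hermite-truncation of the $g_i$'s combined with the already-noised structure $g_i = \sT h_i$) should show the difference is $O_r(\sqrt{\eta})$ or similar, which is subsumed by the main error term under the parameter choice $\eta \leq \nu$.
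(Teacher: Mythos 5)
Your first two paragraphs reconstruct the paper's argument essentially verbatim: condition on $\mathbf{G}$ so that $\Ex_\sigma[\prod_{i\in e}\sigma(i)\mid \mathbf{G}] = \prod_{i\in e} f_{[0,1]}(H_i(q_i))$, restrict to the $(1-\nu)$-measure set of edges where all influences are at most $\tau$, set up per-edge ensembles $\cX^e$ and $\cQ^e$ with matching covariance (the paper does this with the un-normalized variables $x_i(j)$ and $q_i(j)=\mu_i+\langle\zeta_j,w_i\rangle$ via Proposition \ref{prop:vector}, which is equivalent to your normalized version), verify the four hypotheses of Theorem \ref{thm:clt}, and apply invariance with the $\sqrt{r}$-Lipschitz clipped product. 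Up to that point you match Lemma \ref{lem:round-nice} and the ``cleaning up'' step.

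The gap is your final ``bridging'' step, and it arises from a misreading of what the invariance needs to land on. In the paper, $H_i$ is by definition the multilinear representation of $\sT^{(\mu_i)}_{1-\eta}g_i$, so on the Boolean side $Q_i(\cX^e)=\sT^{(\mu_i)}_{1-\eta}g_i(x_i)$, which already lies in $[0,1]$ and is exactly the quantity appearing in the application (see \eqref{eqn:sound-val1}, Lemma \ref{lem:round}, and Lemma \ref{lem:round-nice}); the proof stops there and never removes the noise operator. (The RHS of Lemma \ref{lem:r-round} as printed drops the $\sT^{(\mu_i)}_{1-\eta}$, but comparison with Lemma \ref{lem:round} shows this is notational slippage, not a claim that needs proving.) More importantly, the step you propose would not go through if it were needed: for the direction required one must show $\Ex[\prod_i \sT^{(\mu_i)}_{1-\eta}g_i(x_i)]\geq \Ex[\prod_i g_i(x_i)]-o(1)$, and the telescoping/Cauchy--Schwarz bound reduces this to controlling $\|\sT^{(\mu_i)}_{1-\eta}g_i-g_i\|_2$, which can be $\Omega(1)$ whenever $g_i$ carries constant Fourier weight at degree $\Theta(1/\eta)$. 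The hypotheses give no control there: the Fourier-decay condition in Lemma \ref{lem:round} only constrains weight above degree $\approx \log(1/\tau)/\log(1/\alpha)\gg 1/\eta$ for the paper's parameters, and even your fallback ``$g_i=\sT h_i$ so an extra smoothing is harmless'' fails, since $\|\sT_{(1-\eta)^2}h-\sT_{1-\eta}h\|_2^2=\sum_S\big((1-\eta)^{|S|}-(1-\eta)^{2|S|}\big)^2\wh{h}(S)^2$ can be a constant fraction of ${\rm Var}[h]$ at $|S|\approx \ln 2/\eta$. So the correct resolution is not a sharper coupling argument but recognizing that the noised product is the intended target; with that reading your proof is the paper's proof.
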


\begin{proof}[Proof of Lemma \ref{lem:round}]
	The proof follows directly by combining the guarantees of Lemma \ref{lem:r-bias} and Lemma \ref{lem:r-round}.
\end{proof}

\subsection{Proof of Lemma \ref{lem:r-round}}				\label{sec:r-round}

Before we begin, we introduce some additional notation that will be used in the proof. For every edge $e \in E_{\rm gap}$, we introduce an independent sequence of  ensembles $\cX^e := (\cX^e_{j})_{j \in [R]}$, where for every $j \in [R]$, we have $\cX^e_j$ consisting of $\{1\} \cup \{x_i(j)\}_{i \in e}$. As before, we shall use $x_0(j) \equiv 1$ to denote the constant function of the ensemble $\cX^e_j$ for convenience. Analogously, we also define the Gaussian ensemble sequence $\cQ^e := (\cQ^e_j)_{j \in [R]}$, where for every $j \in [R]$ we have that $\cQ^e_j := \{1\} \cup \{q_i(j)\}_{i \in e}$.
The first step of the proof is the following useful observation. 

\begin{observation}			\label{obs:mult}
	For edge $e \in E_{\rm gap}$ and vertex $i \in e$, the polynomial $H_i(x_i)$ is multi-linear in $\cX^e$. Similarly, $H_i(q_i)$ is a multi-linear polynomial in the ensemble sequence $\cQ^e$.
\end{observation}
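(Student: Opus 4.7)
The plan is to unpack the two relevant notions of multi-linearity and observe that they match up coordinate-by-coordinate. First I will recall from the setup of Figure \ref{fig:round} that $H_i$ is defined as the (unique) multi-linear polynomial representation of $\sT^{(\mu_i)}_{1-\eta}g_i$ in the $R$ variables $x_i(1),\ldots,x_i(R)$; concretely, using Fact \ref{fact:mult} together with Fact \ref{fact:p-fourier}, this means we can expand
\[
H_i(x_i) = \sum_{S \subseteq [R]} c_{i,S} \prod_{j \in S} x_i(j)
\]
for some coefficients $\{c_{i,S}\}_{S \subseteq [R]}$, where every monomial contains at most one occurrence of each variable $x_i(j)$.

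Next I will match this expansion with the ensemble structure of $\cX^e = (\cX^e_j)_{j \in [R]}$. Recall that each ensemble $\cX^e_j = \{x_0(j) \equiv 1\} \cup \{x_{i'}(j)\}_{i' \in e}$ contains exactly the constant $1$ together with the $j$-th coordinates of the vector variables $x_{i'}$ for $i' \in e$. Since $i \in e$, the variable $x_i(j)$ belongs to the ensemble $\cX^e_j$. Rewriting each monomial $\prod_{j \in S} x_i(j)$ as the ensemble-indexed product
\[
\prod_{j \in S} x_i(j) \cdot \prod_{j \notin S} x_0(j),
\]
we see that each monomial of $H_i(x_i)$ picks exactly one variable from each ensemble $\cX^e_j$ (namely $x_i(j)$ if $j \in S$ and the constant $x_0(j)$ otherwise). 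This is precisely the definition of a multi-linear polynomial on $\cX^e$ from Section \ref{sec:clt}, establishing the first claim.

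For the second claim regarding $H_i(q_i)$, I will observe that the same argument goes through verbatim: substituting $q_i(j)$ in place of $x_i(j)$ yields
\[
H_i(q_i) = \sum_{S \subseteq [R]} c_{i,S} \prod_{j \in S} q_i(j),
\]
and since $q_i(j) \in \cQ^e_j$ (with constant $1$ playing the role of $q_0(j)$), the same rewriting shows that each monomial contains exactly one variable from each ensemble $\cQ^e_j$, so $H_i(q_i)$ is multi-linear in $\cQ^e$. The observation is essentially a bookkeeping statement with no real obstacle; the only thing to be careful about is that the notion of multi-linearity in the Section \ref{sec:clt} sense (one variable per ensemble) agrees with ordinary multi-linearity when the ensembles are built by grouping the $j$-th coordinates together, which is exactly the construction of $\cX^e$ and $\cQ^e$.
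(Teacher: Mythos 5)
Your proposal is correct and matches the paper's own proof essentially verbatim: both expand $H_i$ via Fact \ref{fact:mult} as a multilinear polynomial in $x_i(1),\ldots,x_i(R)$, pad each monomial with the constant variables $x_0(j)$ for $j\notin S$ so that it contains exactly one variable from each ensemble $\cX^e_j$, and note the argument for $\cQ^e$ is identical. (Your rewriting $\prod_{j\in S}x_i(j)\prod_{j\notin S}x_0(j)$ is in fact cleaner than the paper's, which has a stray index set in the corresponding display.)
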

\begin{proof}
	From Fact \ref{fact:mult} we know that $H_i$ can be expressed in terms of the variables $x_i$ as 
	\[
	H_i(x) = \sum_{S \subseteq [R]} \wh{H_i}(S) \prod_{j \in S}x_i(j) = \sum_{S \subseteq [R]}\wh{H_i}(S) \prod_{j \in S \setminus T} x_0(j) \prod_{j \in S} x_i(j),
	\]
	i.e., each monomial in $H_i$ consists of exactly one variable from $\cX^e_j$ for every $j \in [R]$, and hence $H_i$ is a multi-linear polynomial in $\cX^e$. The second part of the observation follows identically.
\end{proof}
  
Our next observation is that for $e \in E$, the ensemble sequences $\cX^e$ and $\cQ^e$ have up to matching second moments.

\begin{claim}			\label{cl:match}
	For any edge $e \in E$, the ensemble sequences $\cX^e$ and $\cQ^e$ have matching covariance structure. 
\end{claim}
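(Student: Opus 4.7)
The plan is to verify the matching covariance structure coordinate-by-coordinate. Fix an edge $e \in E_{\rm gap}$ and an index $j \in [R]$. The dimensions match trivially, since $|\cX^e_j| = |\cQ^e_j| = r+1$. For the second-moment condition, I will compute $\Ex[\cX^e_j[i]\,\cX^e_j[i']]$ and $\Ex[\cQ^e_j[i]\,\cQ^e_j[i']]$ for every pair $i,i' \in \{0\} \cup e$ and show they agree.

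First I would handle the cases involving the constant entry ($i = 0$ or $i' = 0$), where the identities $\Ex[x_i(j)] = \mu_i$ (by construction of the test distribution, since the $j$-th coordinate is marginally distributed as $\theta_e$) and $\Ex[q_i(j)] = \Ex[\mu_i + {\bf G}[j,:]\cdot w_i] = \mu_i$ (since ${\bf G}[j,:]$ is a standard Gaussian vector independent of $w_i$) make the entries match directly. Next, for a pair $i,i' \in e$ with $i \neq i'$, I would use that under $\theta_e$ we have $\Ex[x_i(j) x_{i'}(j)] = \Pr_{\theta_e}[X_i = 1, X_{i'} = 1]$, and Proposition \ref{prop:vector} gives $\Pr_{\theta}[X_i = 1, X_{i'} = 1] = \langle u_i, u_{i'} \rangle = \mu_i\mu_{i'} + \langle w_i, w_{i'}\rangle$ (using $u_i = \mu_i u_\emptyset + w_i$ with $w_i \perp u_\emptyset$). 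On the Gaussian side, independence of the rows of ${\bf G}$ combined with the standard identity $\Ex[({\bf G}[j,:]w_i)({\bf G}[j,:]w_{i'})] = \langle w_i, w_{i'}\rangle$ yields $\Ex[q_i(j) q_{i'}(j)] = \mu_i\mu_{i'} + \langle w_i, w_{i'}\rangle$, giving the desired equality.

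Finally, for the self-moments ($i = i'$), observe that $\Ex[x_i(j)^2] = \mu_i$ since $x_i(j) \in \{0,1\}$, whereas $\Ex[q_i(j)^2] = \mu_i^2 + \|w_i\|^2$. These match precisely because Proposition \ref{prop:vector} identifies $\|w_i\| = {\sf stdev}_\theta(X_i) = \sqrt{\mu_i(1-\mu_i)}$, so $\mu_i^2 + \|w_i\|^2 = \mu_i$. Since the above covers every entry of the second-moment matrices, and since the ensembles indexed by different $j$ are not required to be jointly matched in the definition of matching covariance structure (only per-index matching is needed for Theorem \ref{thm:clt}), the two ensemble sequences have matching covariance structure. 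I do not anticipate any obstacle here; the argument is a direct bookkeeping exercise that crucially uses the Cholesky-based identification of $\{u_\emptyset, u_i\}$ from the degree-$2$ part of $\theta$ recorded in Proposition \ref{prop:vector}.
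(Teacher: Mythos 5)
Your proposal is correct and follows essentially the same route as the paper: a per-coordinate verification that reduces both second-moment matrices to $\mu_i\mu_{i'} + \langle w_i, w_{i'}\rangle = \Pr_{\theta_e}[X_i = 1, X_{i'} = 1]$ via Proposition \ref{prop:vector} and the definition $q_i = \mu_i \cdot \mathbf{1}_R + \mathbf{G}\cdot w_i$. Your explicit treatment of the diagonal case $i = i'$ (using $\|w_i\|^2 = \mu_i(1-\mu_i)$ and $x_i(j)^2 = x_i(j)$) is subsumed in the paper's general computation but is a harmless and correct elaboration.
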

\begin{proof}
	We first establish that the ensembles have matching first moments. Towards that, fix an edge-coordinate index pair $(i,j) \in e \times [R]$. Then,
	\[
	\Ex_{\cQ^e} \left[q_i(j)\right] = \Ex_{\zeta_j \sim N(0,I_d)} \left[\mu_i + \langle \zeta_j,{w}_i \rangle \right] = \mu_i = \Ex_{\cX^e}\left[x_i(j)\right].
	\]	
	Now towards verifying the matching second moments condition, for any choice of $(i,j),(i',j)$ we observe that
	\begin{align*}
		\Ex_{\cQ} \left[q_i(j)q_{i'}(j')\right] 
		&=\Ex_{\zeta_j}\left[\left(\mu_i + \langle \zeta_j, w_i \rangle\right)\left(\mu_{i'} + \langle \zeta_j,w_{i'} \rangle \right)\right] \\
		&=\mu_i \mu_{i'} + \langle w_i,w_{i'} \rangle \\
		&=\Pr_{X_e \sim \theta_e} \left[X_i = 1, X_{i'} = 1\right] 			\tag{Proposition \ref{prop:vector}}\\
		&=\Ex_{\cX} \left[x_i(j) x_{i'}(j)\right].
	\end{align*}
	Since the above holds for any $i,i' \in e$ and $j \in [R]$, along with the fact that $\cX^e_j[0] = \cQ^e_j[0] = 1$, this establishes that $\cX^e_j$ and $\cQ^e_j$ have matching covariances structure, The proof is concluded by arguing the above for all $j \in [R]$.
\end{proof}

Now by assumption on the functions $\{g_i\}_{i \in V_{\rm gap}}$ we have
\[
\Pr_{i \sim G_{\rm gap}} \left[\max_{j \in [R]} \Inf{j}{\sT^{(\mu_i)}_{1 - \eta} g_{i}} > \tau\right] \leq \nu.
\]
Let $E_{\rm nice} \subset E_{\rm gap}$ be the subset of edges for which we have 
\[
\max_{i \in e} \max_{j \in [R]} \Inf{j}{\sT^{(\mu_i)}_{1 - \eta} g_{i}} \leq \tau.
\]
The following is the key technical component of this lemma:
\begin{lemma}				\label{lem:round-nice}
	The following holds every $e=(i_1,\ldots,i_r) \in E_{\rm nice}$:
	\[
	\Ex_{(q_i)_{i \in e}} \left[ \psi\left(f_{[0,1]}\left(H_{i_1}(q_{i_1})\right),\ldots,f_{[0,1]}\left(H_{i_1}(q_{i_1})\right)\right)\right] \geq 
	\sum_{a \in \psi^{-1}(1)}\Ex_{(x_i)_{i \in e} \sim \theta^R_e} \left[\prod_{i \in e} \sT^{(\mu_i)}_{1 - \eta}g^{(a,e,i)}_i(x_i)\right] - C_r2^r\sqrt{r}\tau^{C\eta\kappa/r^2}.
	\]
\end{lemma}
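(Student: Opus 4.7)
The plan is to apply the invariance principle (Theorem~\ref{thm:clt}) via the matching covariance structure of $\cX^e$ and $\cQ^e$ (Claim~\ref{cl:match}), using $(H_{i_1},\ldots,H_{i_r})$ as the vector polynomial and a suitable clipped composition of $\psi$ as the Lipschitz test function. First, I would rewrite the right hand side in a more convenient form. Using the multilinear expansion $\psi(y) = \sum_{a \in \psi^{-1}(1)} \prod_{j \in S_+(a)} y_j \prod_{j \in S_-(a)} (1 - y_j)$, together with the identity $1 - \sT^{(\mu_i)}_{1 - \eta} g_i = \sT^{(\mu_i)}_{1 - \eta}(1 - g_i)$, the RHS rewrites as
\[
\sum_{a \in \psi^{-1}(1)}\Ex_{(x_i)_{i \in e} \sim \theta^R_e} \left[\prod_{i \in e} \sT^{(\mu_i)}_{1 - \eta}g^{(a,e,i)}_i(x_i)\right] = \Ex_{(x_i)_{i \in e} \sim \theta^R_e}\bigl[\psi\bigl(H_{i_1}(x_{i_1}),\ldots,H_{i_r}(x_{i_r})\bigr)\bigr].
\]

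Next, I define the test function $\Psi:\mathbbm{R}^r \to [0,1]$ by $\Psi(y) := \psi(f_{[0,1]}(y_1),\ldots,f_{[0,1]}(y_r))$. Since $f_{[0,1]}$ is $1$-Lipschitz, $\psi$ has at most $2^r$ monomials, and each such monomial on $[0,1]^r$ has partial derivatives bounded by $1$, the function $\Psi$ is Lipschitz with constant $A_r = O(2^r\sqrt{r})$. The essential property is that $\Psi$ coincides with $\psi$ on the hypercube: since $g_i \in [0,1]$, we have $H_i(x_i) = \sT^{(\mu_i)}_{1 - \eta}g_i(x_i) \in [0,1]$, so on the Boolean side $\Psi(Q(\cX^e)) = \psi(H_{i_1}(x_{i_1}),\ldots,H_{i_r}(x_{i_r}))$; on the Gaussian side, by construction, $\Psi(Q(\cQ^e))$ is exactly the integrand on the LHS of the lemma.

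I would then apply Theorem~\ref{thm:clt} to the multilinear vector polynomial $Q := (H_{i_1},\ldots,H_{i_r})$ (which is multilinear in the ensemble sequences by Observation~\ref{obs:mult}) with Lipschitz test function $\Psi$. The preconditions are verified as follows: (i) matching covariance of $\cX^e$ and $\cQ^e$ via Claim~\ref{cl:match}; (ii) each atom of $\cX^e_j$ has probability at least $(\gamma\mu)^r$ due to the $(\gamma\mu)^r$-smoothness of $\theta_e$ guaranteed by Lemma~\ref{lem:smooth}; (iii) ${\rm Var}[H_{i_j}] \leq 1$ since $g_{i_j}$ is $[0,1]$-valued; (iv) the Fourier tail bound ${\rm Var}[H_{i_j}^{>d}] \leq (1 - \eta)^d$ follows from Fact~\ref{fact:decay} applied to $\sT^{(\mu_{i_j})}_{1 - \eta}g_{i_j}$; and (v) $\max_{j \in [R]} {\sf Inf}_j[H_{i_j}^{\leq d}] \leq \tau$ for every $i_j \in e$, which is precisely the definition of $e \in E_{\rm nice}$. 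This yields
\[
\Ex_{\cQ^e}\bigl[\Psi(Q(\cQ^e))\bigr] \geq \Ex_{\cX^e}\bigl[\Psi(Q(\cX^e))\bigr] - C_r 2^r \sqrt{r} \cdot \tau^{C \eta \kappa / r^2},
\]
which after the identifications above is exactly the lemma. The only delicate point is to absorb the Lipschitz constant $A_r = O(2^r\sqrt{r})$ into the prefactor $C_r 2^r \sqrt{r}$ of the error term; this is routine because $\psi$ has at most $2^r$ monomials, so the total Lipschitz blow-up is exactly of this order. The exact form of the exponent $C\eta\kappa/r^2$ in $\tau$ follows from the bound in Theorem~\ref{thm:clt} with $\alpha = (\gamma\mu)^r$ and Fourier decay rate $\eta$, combined with the definition $\kappa = \beta/\log(1/\gamma)$ used uniformly throughout the soundness analysis.
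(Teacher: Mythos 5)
Your proof is correct and rests on the same core ingredients as the paper's: the invariance principle (Theorem~\ref{thm:clt}) applied to the multilinear polynomials $H_{i_j}$ over the ensembles $\cX^e$ and $\cQ^e$, with Claim~\ref{cl:match} supplying the matching covariance, the smoothness of $\theta_e$ supplying the atom lower bound, and $e \in E_{\rm nice}$ supplying the influence bound. The one structural difference is how you handle the predicate: you apply the invariance principle \emph{once}, with the composite test function $\Psi(y) = \psi(f_{[0,1]}(y_1),\ldots,f_{[0,1]}(y_r))$, absorbing all of $\psi$'s monomials into a single Lipschitz constant $O(2^r\sqrt{r})$. The paper instead first proves the bound \eqref{eqn:lem} for the clipped product $\Psi_S(\ell) = \prod_{i \in S} f_{[0,1]}(\ell_i)$ over an arbitrary subset $S \subseteq e$ (a $\sqrt{r}$-Lipschitz function), and then recovers the $\psi$-expression by an inclusion--exclusion over the $2^r$ subsets of $e^-_a$, which is where its factor $2^r$ in the error enters. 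The two bookkeeping schemes yield the same final error $C_r 2^r\sqrt{r}\,\tau^{C\eta\kappa/r^2}$; yours is a bit more economical (one invariance application instead of up to $2^r$), while the paper's subset-level statement \eqref{eqn:lem} is reusable elsewhere (e.g., in the proof of Lemma~\ref{lem:exp-bound}). Your verification of the Lipschitz constant — each monomial of the multilinear extension of $\psi$ has partial derivatives bounded by $1$ on $[0,1]^r$, and $f_{[0,1]}$ is $1$-Lipschitz into $[0,1]$, so the composition is globally $O(2^r\sqrt{r})$-Lipschitz on $\mathbbm{R}^r$ — is the right justification and is needed for Theorem~\ref{thm:clt} to apply off the cube.
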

\begin{proof}
	Our proof constitutes showing that for any $S \subseteq e$ we have 
	\begin{equation}				\label{eqn:lem}
		\left|\Ex_{(x_i)_{i \in e}}\left[\prod_{i \in S} \sT^{(\mu_i)}_{1 - \eta}g_i(x_i) \right]
		-\Ex_{\cQ}\left[\prod_{i \in S} f_{[0,1]}\left(H_i(q_i)\right)\right] \right| \leq C_r\sqrt{r}\tau^{C\eta\kappa/r^2}.
	\end{equation}
	We point out that the above immediately implies the lemma. To see this, for any fixed $a \in \psi^{-1}(1)$, let us partition $e = e^+_a \sqcup e^-_a$ where $e^+_a \subseteq e$ consists of the vertices $i \in e$ for which $a_i = 1$. Then using this definition, we have that 
	\begin{align*}
	&\Ex_{(q_i)_{i \in e}} \left[ \psi\left(f_{[0,1]}\left(H_{i_1}(q_{i_1})\right),\ldots,f_{[0,1]}\left(H_{i_1}(q_{i_1})\right)\right)\right] \\
	&= \sum_{a \in \psi^{-1}(1)}\Ex_{(q_i)_{i \in e}} \left[ \prod_{i \in e^+_{a}} f_{[0,1]}\left(H_i(q_i)\right) \prod_{i \in e^-_a}
	\left(1 - f_{[0,1]}\left(H_{i}(q_{i})\right)\right)\right] \\
	&= \sum_{a \in \psi^{-1}}\sum_{S \subseteq e^a_-}(-1)^{|e^a_- \setminus S|}\Ex_{(q_i)_{i \in e}}\left[\prod_{i \in S \cup e^a_+} f_{[0,1]}\left(H_{i}(q_{i})\right)\right] \\
	&\geq  \sum_{a \in \psi^{-1}}\sum_{S \subseteq e^a_-}(-1)^{|e^a_- \setminus S|}\Ex_{(x_i)_{i \in e}}\left[\prod_{i \in S \cup e^a_+} \sT^{(\mu_i)}_{1 - \eta}g_i(x_{i})\right] - 2^rC_r\sqrt{r}\tau^{C\eta\kappa/r^2} \\
	&= \sum_{a \in \psi^{-1}}\Ex_{(x_i)_{i \in e}}\left[\prod_{i \in  e^a_+} \sT^{(\mu_i)}_{1 - \eta}g_i(x_{i})
	\prod_{i \in  e^a_-}\left(1 - \sT^{(\mu_i)}_{1 - \eta}g_i(x_{i})\right)\right] - 2^rC_r\sqrt{r}\tau^{C\eta\kappa/r^2} \\
	&= \sum_{a \in \psi^{-1}(1)}s\Ex_{(x_i)_{i \in e}} \left[ \prod_{i \in e} \sT^{(\mu_i)}_{1 - \eta}g^{(a,e,i)}_i(x_{i})\right] - 2^rC_r\sqrt{r}\tau^{C\eta\kappa/r^2},
	\end{align*}
	which establishes the claim of the lemma.
	
	{\bf Establishing \eqref{eqn:lem}}. Fix a subset $S \subseteq e$. From observation \ref{obs:mult} we know that for every $i \in S$, $H_i$ is a multi-linear polynomial in $\cX^e:= (\cX^e_j)_{j \in [R]}$. Hence for every $i \in S$ and $j \in [R]$ we have that 
	\begin{align}
		{\sf Inf}_{\cX^e_j}\left[H_i\right] 
		&= \Ex_{(\cX^e_{j'})_{j' \neq j}}\left[{\sf Var}_{\cX^e_j}\left[H_i(\cX^e)~\big|~\cX^e_{j' \neq j}\right] \right] 		\non\\
		&= \Ex_{(x_{j'}(i))_{j' \neq j}}\left[{\sf Var}_{x_i(j)}\left[\sT^{(\mu_i)}_{1 - \eta}g_i(x_i)~\big|~(x_i(j'))_{j' \neq j}\right] \right] 		\non\\
		&= {\sf Inf}^{(\mu_i)}_j\left[\sT^{(\mu_i)}_{1 - \eta}g_i\right] 
		\leq \tau,				\label{eqn:inf-1}
	\end{align}
	where the last step follows due the fact that $e \in E_{\rm nice}$.
	
	Next, let $\Psi_S$ be the clipped-product function $\Psi_S(\ell) = \prod_{i \in S}f_{[0,1]}(\ell_i)$.
	
	{\bf Invariance Step}. As in the proof of Lemma \ref{lem:prod}, let us verify the conditions required to invoke Theorem \ref{thm:clt}.
	\begin{itemize}
		\item[$C_1$] From Claim \ref{cl:match}, $\cX^e$ and $\cQ^e$ have matching covariance structure.
		\item[$C_2$] From \eqref{eqn:inf-1}, for every $j \in [R]$ and $i \in S$ we have 
		\[
		{\sf Inf}^{(\mu_i)}_{\cX^e_j}\left[H_i\right] \leq \tau.
		\] 
		\item[$C_3$] Due to $(\gamma\mu)^r$-smoothness of the local distribution $\theta$ (Lemma \ref{lem:smooth}) we have $\Pr\left[\cX^e_j = \omega\right] \geq \gamma^r\mu^r$ for every atom $\omega$ in the event space of $\cX^e_j$
		\item[$C_4$] Since $\|g_i\|_{\infty} \leq 1$, we have ${\rm Var}[H_i(\cX^e)] \leq 1$ and using the Fourier decay property (Fact \ref{fact:decay-2}) we have ${\rm Var}\left[H^{\geq d}_i\right] \leq (1 - \eta/2)^d$ for $d \geq \frac{1}{18}\log(1/\tau)/\log(1/\alpha)$.
	\end{itemize}
	The above checks imply that the vector polynomial $H_S := (H_i)_{i \in S}$ along with the ensemble sequences $\cX^e$, $\cG_e$ satisfy the conditions required to invoke Theorem \ref{thm:clt}. Hence applying Theorem \ref{thm:clt} we get that 
	\begin{align*}
		\Ex_{\cQ^e}\left[\prod_{i \in S} f_{[0,1]}(H_i(Q_i))\right]
		= 	\Ex_{\cQ^e}\Big[\Psi_S(H_S(\cQ^e))\Big] 
		&\geq \Ex_{\cX^e}\Big[\Psi_S(H_S(\cX^e))\Big] - C_r\sqrt{r}\tau^{C\eta\kappa/r^2} 	\\
		&= \Ex_{(x_i)_{i \in e}}\left[\prod_{i \in S}\sT^{(\mu_i)}_{1 - \eta}g_{i}(x_i)\right] - C_r\sqrt{r}\tau^{C\eta\kappa/r^2} 	\\
	\end{align*}
	which establishes \eqref{eqn:lem}. Since we have already established that \eqref{eqn:lem} implies the lemma, this completes the proof. 
\end{proof}

{\bf Cleaning Up}. Now we use Lemma \ref{lem:round-nice} to finish the proof. Recall that in the setting of the lemma we have 
\[
\Pr_{e \sim E_{\rm gap}}\left[\max_{i \in e}\max_{j \in [R]}{\sf Inf}^{(\mu_i)}_{j}\left[\sT^{(\mu_i)}_{1 - \eta}g_{i}\right] > \tau\right] \leq \nu,
\]
i.e., $E_{\rm nice}$ has weight at least $1 - \nu$ in $E_{\rm gap}$. Combining this with Lemma \ref{lem:round-nice} we get that
\begin{align*}
	&\Ex_{\sigma}\Ex_{e = (i_1,\ldots,i_r) \sim E_{\rm gap}} \bigg[\psi\Big(\sigma(i_1),\ldots,\sigma(i_r)\Big)\bigg] \\
	& \geq \Ex_{\sigma}\Ex_{e = (i_1,\ldots,i_r) \sim E_{\rm gap}| e \in E_{\rm nice}} \bigg[\psi\Big(\sigma(i_1),\ldots,\sigma(i_r)\Big)\bigg] - \nu \\
	&= \Ex_{e = (i_1,\ldots,i_r) \sim E_{\rm gap}| e \in E_{\rm nice}}\Ex_{(q_i)_{i \in e}} 
	\bigg[\psi\left( f_{[0,1]}\left( H_{i_1}(q_{i_1})\right),\ldots,f_{[0,1]}\left(H_{i_r}(q_{i_r})\right) \right) \bigg] - \nu \\
	&\geq \Ex_{e \sim E_{\rm gap}|e \in E_{\rm nice}}\Ex_{(x_i)_{i \in e} \sim \theta^R_e} 
	\left[\sum_{a \in \psi^{-1}(1)} \prod_{i \in e} \sT^{(\mu_i)}_{1 - \eta} g^{(a,e,i)}_i(x_i) \right] - \nu - 2^rC_r\sqrt{r}\tau^{C\eta\kappa/r^2} \\
	&\geq \Ex_{e \sim E_{\rm gap}}\Ex_{(x_i)_{i \in e} \sim \theta^R_e} 
	\left[\sum_{a \in \psi^{-1}(1)} \prod_{i \in e} \sT^{(\mu_i)}_{1 - \eta} g^{(a,e,i)}_i(x_i) \right] - 2\cdot 2^r  \nu - 2^rC_r\sqrt{r}\tau^{\eta\kappa/r^2},
\end{align*}
where the second last inequality uses Lemma \ref{lem:round-nice}, and the last inequality uses the fact that for every fixed $a \in \psi^{-1}(1)$, we have 
\[
\Ex_{e \sim E_{\rm nice}|e \in E_{\rm gap}}\left[\Ex_{(x_i)_{i \in e} \sim \theta^R_e}\prod_{i \in e} \sT^{(\mu_i)}_{1 - \eta}g^{(a,e,i)}_i(x_i) \right] \geq  
\Ex_{e \sim E_{\rm gap}}\left[\Ex_{(x_i)_{i \in e} \sim \theta^R_e}\prod_{i \in e} \sT^{(\mu_i)}_{1 - \eta}zg^{(a,e,i)}_i(x_i) \right] - 2\nu,
\]
since $\Pr_{e \sim E_{\rm gap}}\left[e \in E_{\rm nice}\right] \geq 1 - \nu$, and the inner expectation in the above expressions is a $[0,1]$-valued random variable. 

\subsection{Proof of Lemma \ref{lem:r-bias}}			\label{sec:r-bias}

The key technical tool used in the proof is the following lemma that bounds the variance of the weight of rounded solution by average correlation of the gap distribution.

\begin{lemma}			\label{lem:var-bound}
	\[
	\Ex_{{\bf G}}\left[\left(\Ex_i p_i\right)^2 \right] - \left(\Ex_{{\bf G}}\left[\Ex_i\left[ p_i\right]\right]\right)^2 \leq \Ex_{i,j}\left[|\rho_{ij}|\right],
	\]
	where $\rho_{ij} := \langle \bar{w}_i,\bar{w}_j \rangle$ is the pseudo-correlation between the variables $X_i,X_j$ under the pseudo-distribution $\theta$.
\end{lemma}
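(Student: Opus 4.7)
The plan is to reduce the lemma to a pointwise covariance bound $|{\sf Cov}_{{\bf G}}(p_i,p_j)| \leq |\rho_{ij}|$ for every pair $(i,j)$ and then exploit Hermite orthogonality on Gaussian space.  By linearity,
\[
\Ex_{{\bf G}}\big[(\Ex_i p_i)^2\big] - \big(\Ex_{{\bf G}}\Ex_i p_i\big)^2
= \Ex_{i,j}\big[{\sf Cov}_{{\bf G}}(p_i,p_j)\big]
\leq \Ex_{i,j}\big[|{\sf Cov}_{{\bf G}}(p_i,p_j)|\big],
\]
so the pointwise bound indeed implies the lemma.  (Pairs with $w_i=0$ or $w_j=0$ contribute zero, and we may take $\rho_{ij}:=0$ there.)

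\textbf{Reduction to correlated standard Gaussians and Hermite expansion.}  By construction $q_i(k) = \mu_i + \langle \zeta_k, w_i\rangle$, where $\zeta_1,\ldots,\zeta_R$ are the i.i.d.\ $d$-dimensional standard Gaussian rows of $\bf G$.  Hence the coordinates of $q_i$ are independent across $k\in[R]$, and for each $k$ the pair $(q_i(k),q_j(k))$ is jointly Gaussian with covariance $\langle w_i,w_j\rangle$.  Normalising $\tilde g_i(k) := (q_i(k)-\mu_i)/\|w_i\|$ produces standard Gaussian vectors $\tilde g_i,\tilde g_j \in \mathbb{R}^R$ whose coordinate pairs $(\tilde g_i(k),\tilde g_j(k))$ are $\rho_{ij}$-correlated standard Gaussians, independent across $k$.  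Setting $\tilde F_i(y) := f_{[0,1]}\!\big(H_i(\mu_i{\bf 1}_R + \|w_i\|\,y)\big)$, which is still $[0,1]$-valued, one has $p_i = \tilde F_i(\tilde g_i)$.  Expanding $\tilde F_i,\tilde F_j$ in the orthonormal multivariate Hermite basis $\{h_\alpha\}$ on $\mathbb{R}^R$ and using the scalar identity $\Ex[h_m(X)h_n(Y)]=\delta_{mn}\rho^m$ for $\rho$-correlated standard Gaussians $(X,Y)$ together with coordinate independence yields $\Ex_{{\bf G}}[h_\alpha(\tilde g_i)\,h_\beta(\tilde g_j)] = \delta_{\alpha\beta}\,\rho_{ij}^{|\alpha|}$, so
\[
{\sf Cov}_{{\bf G}}(p_i,p_j) = \sum_{|\alpha|\geq 1}\wh F_i(\alpha)\,\wh F_j(\alpha)\,\rho_{ij}^{|\alpha|}.
\]

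\textbf{Finishing and remarks.}  Since $|\rho_{ij}|\leq 1$ and every surviving summand has $|\alpha|\geq 1$, pulling out one factor of $|\rho_{ij}|$ and applying Cauchy–Schwarz yields
\[
|{\sf Cov}_{{\bf G}}(p_i,p_j)|
\leq |\rho_{ij}|\sum_{\alpha}|\wh F_i(\alpha)\,\wh F_j(\alpha)|
\leq |\rho_{ij}|\cdot \|\tilde F_i\|_2\,\|\tilde F_j\|_2
\leq |\rho_{ij}|,
\]
where the last step uses $\|\tilde F_i\|_\infty \leq 1 \Rightarrow \|\tilde F_i\|_2\leq 1$.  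Averaging over $(i,j)$ finishes the proof.  There is no serious technical obstacle; the whole argument is a short Hermite-orthogonality computation.  The conceptual point, which is the improvement over \cite{RT12}, is that comparing the rounded-variable covariance to $\rho_{ij} = \langle\bar w_i,\bar w_j\rangle$ rather than to $\langle w_i,w_j\rangle$ absorbs the scale factors $\|w_i\|,\|w_j\|$ into the normalisation, and the $[0,1]$-boundedness of $p_i,p_j$ then prevents any dimension-dependent blow-up.
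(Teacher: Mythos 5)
Your proposal is correct and follows essentially the same route as the paper: the paper also reduces to the pointwise bound $|\Ex_{\bf G}[p_ip_j]-\Ex_{\bf G}[p_i]\Ex_{\bf G}[p_j]|\leq|\rho_{ij}|$, writes $p_i=F_i(\zeta_i)$ with $\zeta_i={\bf G}\bar{w}_i$ (which is exactly your normalised $\tilde g_i$), and concludes via the Hermite expansion, Plancherel, one extracted factor of $|\rho_{ij}|$, Cauchy--Schwarz, and the $L_\infty\leq 1$ bound. Your explicit handling of the degenerate pairs with $w_i=0$ is a minor extra care the paper omits, but the argument is otherwise identical.
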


The above lemma is the key technical improvement on \cite{RT12}'s analysis. In \cite{RT12}, the authors show that that the LHS in the above lemma is upper bounded by $2^{O(R)}\cdot \Ex_{{i,j}}\left[|\langle w_i,w_j \rangle|\right]$ -- however since $R$ depends on the volume parameter $\delta$ of the SSE instance, this bound becomes unusable in this context. Our analysis avoids the $2^{O(R)}$ blow up by controlling the RHS using average correlation instead of average covariance. We defer the prove Lemma \ref{lem:var-bound} to Section \ref{sec:var-bound} and continue with the analysis. 

Next we have the following lemma which states that in expectation, the weight of the rounded solution is $\approx \mu$.

\begin{lemma}			\label{lem:exp-bound}
	Suppose we have that 
	\[
	\Pr_{i \sim G_{\rm gap}}\left[\max_{j \in [R]}{\sf Inf}^{(\mu_i)}_j\left[\sT^{(\mu_i)}_{1 - \eta}g_i\right] > \tau\right] \leq \nu. 
	\]
	Then,
	\[
	\Ex_{{\bf G}}\left[\Ex_{i \sim G_{\rm gap}}p_i\right]  \in \left[\mu \pm \tau^{O(\mu\kappa/r)} \right].
	\]
\end{lemma}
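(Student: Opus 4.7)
\textbf{Proof proposal for Lemma \ref{lem:exp-bound}.} The plan is to show, for each vertex $i$ with small influences, that the Gaussian expectation $\Ex_{{\bf G}}[p_i] = \Ex_{{\bf G}}[f_{[0,1]}(H_i(q_i))]$ is approximately $\mu_i$, and then to average over $i \sim G_{\rm gap}$ using the assumption of the lemma to discard the bad vertices. Let $V_{\rm good} \subseteq V_{\rm gap}$ denote the set of vertices $i$ with $\max_{j \in [R]}{\sf Inf}^{(\mu_i)}_j[\sT^{(\mu_i)}_{1-\eta}g_i] \le \tau$, so by assumption $\Pr_{i \sim G_{\rm gap}}[i \notin V_{\rm good}] \le \nu$, and for bad vertices the contribution to $\Ex_i \Ex_{{\bf G}}[p_i]$ is at most $\nu$ in absolute value (since $p_i \in [0,1]$).

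Fix $i \in V_{\rm good}$. Introduce the $\{0,1\}_{\mu_i}$-ensemble sequence $\cX^i = (\cX^i_j)_{j \in [R]}$ with $\cX^i_j = \{1, \phi^{(\mu_i)}(x_i(j))\}$ and the Gaussian ensemble sequence $\cQ^i = (\cQ^i_j)_{j \in [R]}$ with $\cQ^i_j = \{1, \zeta_j\cdot \bar{w}_i/\|w_i\|\}$ for $\zeta_j \sim N(0_d, I_d)$ independent, so that (by Proposition \ref{prop:vector}) $\cX^i$ and $\cQ^i$ have matching first two moments and $H_i(\cX^i) = \sT^{(\mu_i)}_{1-\eta}g_i(x_i)$ while $H_i(\cQ^i)$ has the same distribution as $H_i(q_i)$. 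Since $H_i$ is the multilinear expansion of $\sT^{(\mu_i)}_{1-\eta} g_i$, we have ${\sf Inf}_{\cX^i_j}[H_i] = {\sf Inf}_j^{(\mu_i)}[\sT^{(\mu_i)}_{1-\eta}g_i] \le \tau$ and ${\rm Var}[H_i^{>d}] \le (1-\eta)^{2d}$ by Fact \ref{fact:decay-2}; the minimum atom probability on $\{0,1\}_{\mu_i}$ is $\min(\mu_i, 1-\mu_i) \ge \mu_i$. Applying Theorem \ref{thm:clt} with the $1$-Lipschitz function $\Psi = f_{[0,1]}$ (here $r=1$) yields
\[
\Big|\Ex_{\cQ^i}\left[f_{[0,1]}(H_i(\cQ^i))\right] - \Ex_{\cX^i}\left[f_{[0,1]}(H_i(\cX^i))\right]\Big| \le C\cdot \tau^{\Omega(\eta/\log(1/\mu_i))}.
\]

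To evaluate the Boolean term, observe that since $g_i:\{0,1\}^R \to [0,1]$ and $\sT^{(\mu_i)}_{1-\eta}$ is an averaging operator, $H_i(\cX^i) = \sT^{(\mu_i)}_{1-\eta}g_i(x_i) \in [0,1]$ pointwise, so $f_{[0,1]}$ acts as the identity here and $\Ex_{\cX^i}[f_{[0,1]}(H_i(\cX^i))] = \Ex_{x_i}[\sT^{(\mu_i)}_{1-\eta}g_i(x_i)] = \Ex_{x_i}[g_i(x_i)]$, where the last equality uses that the noise operator preserves means. Combining with the invariance estimate, for every $i \in V_{\rm good}$,
\[
\Big|\Ex_{{\bf G}}[p_i] - \Ex_{x_i}[g_i(x_i)]\Big| \le C \cdot \tau^{\Omega(\eta\kappa/r)},
\]
where we have absorbed the dependence on $\log(1/\mu_i)$ into the $\kappa = \beta/\log(1/\gamma)$ and $r$ factors as in Section~\ref{sec:step-2} (and using our parameter choices $\eta,\beta \ll \mu$).

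Finally, averaging over $i \sim G_{\rm gap}$, splitting on $V_{\rm good}$ and its complement, and using \eqref{eqn:r-cond1},
\[
\Ex_{{\bf G}}\Ex_{i \sim G_{\rm gap}}[p_i] = \Ex_{i \sim G_{\rm gap}}\Ex_{x_i}[g_i(x_i)] \pm \big(C\tau^{\Omega(\eta\kappa/r)} + 2\nu\big) = \mu \pm \tau^{\Omega(\mu\kappa/r)},
\]
where the last step uses the choices $\nu = s/10^r$ and $\tau = (s\mu/r)^{100r^2\log(1/\gamma)/\eta\beta}$ to absorb $\nu$ into the exponential error term and to convert the $\eta$-factor into an effective $\mu$-factor in the exponent. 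The main step requiring care is verifying that the invariance bound and the parameter setting correctly combine to yield the stated exponent $\tau^{O(\mu\kappa/r)}$, but no new conceptual ideas beyond the single-function invariance principle of Theorem~\ref{thm:clt} are needed; the Lipschitz composition with $f_{[0,1]}$ is standard, and the preservation of mean under $\sT^{(\mu_i)}_{1-\eta}$ makes the Boolean-side computation exact.
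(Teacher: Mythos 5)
Your proof is correct and takes essentially the same route as the paper's: the paper also writes $\Ex_{\bf G}[p_i] = \Ex_{\bf G}[f_{[0,1]}(H_i(q_i))]$, invokes the single-function invariance argument (citing Lemma \ref{lem:round-nice}) to pass to $\Ex_{x_i}[f_{[0,1]}(H_i(x_i))]$ up to $O(\nu)+\tau^{O(\eta\kappa/r)}$, notes that $H_i(x_i)=\sT^{(\mu_i)}_{1-\eta}g_i(x_i)\in[0,1]$ so the clipping is the identity and the noise operator preserves the mean, and concludes via the global bias constraint. Your write-up merely spells out the invariance step (ensembles, moment matching, Lipschitz $f_{[0,1]}$) that the paper leaves implicit; the minor notational slip in your Gaussian ensemble normalization ($\zeta_j\cdot\bar{w}_i/\|w_i\|$ should be $\langle\zeta_j,w_i\rangle/\|w_i\|$) does not affect the argument.
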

\begin{proof}
	By definition, we have that 
	\begin{align*}
		\Ex_{i \sim G_{\rm gap}}\Ex_{\bf G}\left[p_i\right]
		&= \Ex_{i \sim G_{\rm gap}} \Ex_{\bf G}\left[f_{[0,1]}\left(H_i(q_i)\right)\right] \\
		&= \Ex_{i \sim G_{\rm gap}} \Ex_{x_i \sim \{0,1\}^R_{\mu_i}}\left[f_{[0,1]}\left(H_i(x_i)\right)\right] \pm \left(O(\nu) +  \tau^{O(\eta\kappa/r)}\right) \\
		&= \Ex_{i \sim G_{\rm gap}} \Ex_{x_i \sim \{0,1\}^R_{\mu_i}}\left[f_{[0,1]}\left(\sT^{(\mu_i)}_{1 - \eta}g_i(x_i)\right)\right] \pm \left(O(\nu) +  \tau^{O(\eta\kappa/r)}\right) \\
		&= \Ex_{i \sim G_{\rm gap}} \Ex_{x_i \sim \{0,1\}^R_{\mu_i}}\left[g_i(x_i)\right] \pm \left(O(\nu) + \tau^{O(\eta\kappa/r^2)}\right) \\
		&=\mu \pm \left(O(\nu) +  \tau^{O(\eta\kappa/r^2)}\right),
	\end{align*}
	where the second step follows using the Invariance-principle based arguments from Lemma \ref{lem:round-nice}, and the last step follows from \eqref{eqn:cond-1}.
\end{proof}

By combining the variance and the expectation bounds from Lemmas \ref{lem:var-bound} and \ref{lem:exp-bound}, using Chebyshev's inequality we get that 
\begin{equation}					\label{eqn:wt-1}
\Pr_{{\bf G}}\left[\left|\Ex_{i \sim G_{\rm gap}} p_i - \mu \right| \geq \mu\sqrt{\gamma}\right] \leq \frac{\Ex_{i,j \sim G_{\rm gap}} \left[|\rho_{ij}|\right]}{\mu^2\gamma} \leq \frac{\gamma^2}{\mu^2\gamma} \leq \sqrt{\gamma},
\end{equation}
wherein the second inequality uses the fact that the gap-distribution $\theta:=\{\theta_S\}$ satisfies
\[
\Ex_{i,j \sim G_{\rm gap}}\left[|\rho_{ij}|\right] = \Ex_{i,j \sim G_{\rm gap}}\left[\Big|{\sf Corr}_{\theta}(X_i,X_j)\Big|\right] \leq \gamma^2.
\] 	
Now to finish the proof observe that $\Ex_{i \sim G_{\rm gap}}\left[\sigma(i)\right]$ is just a weighted sum of $|V_{\rm gap}|$ independent $\{0,1\}$-valued random variables where the weights are bounded in $\Theta(1/|V_{\rm gap})$, and 
\[
		\Ex_{\sigma}\left[\Ex_{i \sim G_{\rm gap}} \sigma(i)\right] = \Ex_{i \sim G_{\rm gap}} p_i. 
\]
Hence, using Chernoff Bound we have that 
\begin{equation}			\label{eqn:wt-2}
\Pr_{\sigma}\left[\left|\Ex_{i \sim G_{\rm gap}} \sigma(i) - \Ex_{i \sim G_{\rm gap}}p_i\right| \geq O(\sqrt{n}) \right] \leq \exp\left(-\Omega(\sqrt{n})\right).
\end{equation}
Therefore, combining \eqref{eqn:wt-1} and \eqref{eqn:wt-2}, with probability at least $1 - \sqrt{\gamma} - e^{-\Omega(\sqrt{n})}$, we have that 
\[
\left|\Ex_{i \sim G_{\rm gap}} \sigma(i) - \mu \right| 
\leq \left| \Ex_{i \sim G_{\rm gap}} \sigma(i) - \Ex_{i \sim V_{\rm gap}} p_i \right|
+ \left| \Ex_{i \sim G_{\rm gap}} p_i - \mu \right| \leq \mu\sqrt{\gamma} + O(n^{-1/2}) \leq 2\mu\sqrt{\gamma},
\]
since $|V_{\rm gap}| \gg 1/\mu^2\gamma$ in the setting of the reduction.

\subsection{Proof of Lemma \ref{lem:var-bound}}			\label{sec:var-bound}

We prove the following well known lemma.

\begin{lemma}[Folklore]
For every $i,j \in V_{\rm gap}$ we have 
\[
\left| \Ex_{{\bf G}}\left[p_ip_j\right] - \Ex_{{\bf G}}\left[p_i\right] \Ex_{{\bf G}}\left[p_j\right] \right| \leq  |\langle \bar{w}_i,\bar{w}_j\rangle|  = |{\rm Corr}_{\theta}(X_i,X_j)|.
\]
\end{lemma}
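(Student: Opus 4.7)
The plan is to reparametrize the Gaussian vectors $q_i, q_j$ so that they are jointly standard Gaussians with a simple correlation structure, and then expand $p_i, p_j$ in a Hermite basis where the covariance computation becomes transparent.

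First, I would set $\bar w_i := w_i / \|w_i\|$ and $Y_i(k) := \langle \mathbf{G}_k, \bar w_i \rangle$ for each row $\mathbf{G}_k$ of $\mathbf{G}$, so that $q_i = \mu_i \mathbf{1}_R + \|w_i\| \cdot Y_i$. Since rows of $\mathbf{G}$ are independent standard Gaussians in $\mathbb{R}^d$, the vector $Y_i$ is a standard Gaussian in $\mathbb{R}^R$; moreover for any $k \ne k'$, the pairs $(Y_i(k), Y_j(k))$ and $(Y_i(k'), Y_j(k'))$ are independent, while at each coordinate $k$, the pair $(Y_i(k),Y_j(k))$ is jointly Gaussian with $\mathbb{E}[Y_i(k) Y_j(k)] = \langle \bar w_i, \bar w_j \rangle =: \rho_{ij}$. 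By Proposition~\ref{prop:vector}, $\rho_{ij} = \mathrm{Corr}_\theta(X_i, X_j)$, giving the second equality in the lemma.

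Next, let $\tilde F_i : \mathbb{R}^R \to [0,1]$ denote the function $y \mapsto f_{[0,1]}(H_i(\mu_i \mathbf{1}_R + \|w_i\| y))$, so that $p_i = \tilde F_i(Y_i)$. Since $\tilde F_i \in L_2(\mathbb{R}^R, \gamma^R)$ (where $\gamma$ is the standard Gaussian measure), I would expand it in the orthonormal multivariate Hermite basis $\{h_\alpha\}_{\alpha \in \mathbb{N}^R}$:
\[
\tilde F_i = \sum_{\alpha \in \mathbb{N}^R} c_{i,\alpha}\, h_\alpha, \qquad \text{with } c_{i,\alpha} = \langle \tilde F_i, h_\alpha \rangle.
\]
The key identity I invoke is the classical fact that if $(X,Y)$ are jointly Gaussian with $\mathbb{E}[X^2]=\mathbb{E}[Y^2]=1$ and $\mathbb{E}[XY]=\rho$, then $\mathbb{E}[h_m(X) h_n(Y)] = \rho^m \delta_{mn}$. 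Combined with independence across coordinates, this yields
\[
\mathbb{E}[h_\alpha(Y_i)\, h_\beta(Y_j)] = \delta_{\alpha\beta}\, \rho_{ij}^{|\alpha|},
\]
where $|\alpha| = \sum_k \alpha_k$. Plugging into the Hermite expansion gives
\[
\mathbb{E}[p_i p_j] - \mathbb{E}[p_i]\mathbb{E}[p_j] = \sum_{\alpha \ne 0} c_{i,\alpha}\, c_{j,\alpha}\, \rho_{ij}^{|\alpha|}.
\]

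Finally, since $|\rho_{ij}| \le 1$ (by Cauchy--Schwarz applied to unit vectors $\bar w_i, \bar w_j$), we have $|\rho_{ij}^{|\alpha|}| \le |\rho_{ij}|$ for every $\alpha \ne 0$. Applying this termwise and then Cauchy--Schwarz on the Hermite coefficients,
\[
\bigl|\mathbb{E}[p_i p_j] - \mathbb{E}[p_i]\mathbb{E}[p_j]\bigr|
\;\le\; |\rho_{ij}| \sum_{\alpha \ne 0} |c_{i,\alpha}\, c_{j,\alpha}|
\;\le\; |\rho_{ij}|\, \|\tilde F_i\|_2\, \|\tilde F_j\|_2
\;\le\; |\rho_{ij}|,
\]
where the last step uses $\|\tilde F_i\|_2 \le \|\tilde F_i\|_\infty \le 1$ because $f_{[0,1]}$ clips into $[0,1]$. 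This is precisely the stated bound. The step that does the essential work is the reduction to the Hermite identity combined with the $L^\infty$-bound on $\tilde F_i$; without the $[0,1]$-clipping (and the resulting unit $L^2$-bound), the RHS would inherit the $2^{O(R)}$-type blow-up as in \cite{RT12}, and this is exactly the improvement the paper needs.
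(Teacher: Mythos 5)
Your proposal is correct and follows essentially the same route as the paper's proof: reparametrize $q_i,q_j$ as coordinate-wise $\rho_{ij}$-correlated standard Gaussians, expand the clipped functions in the Hermite basis, isolate the constant coefficient, and bound the remaining terms via $|\rho_{ij}^{|\alpha|}|\leq|\rho_{ij}|$, Cauchy--Schwarz, Parseval, and the $L^\infty$ bound from the clipping. No gaps.
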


The proof of the above lemma uses elementary Hermite analysis; we point the readers to Section \ref{sec:hermite} for the basic facts used here.

\begin{proof}
Let $F_i:\mathbbm{R}^R \to [0,1]$ be the function 
\[
F_i(a):= f_{[0,1]}\Big(H_i(\mu_i\cdot{1}_R + \|w_i\|\cdot a)\Big),
\]
and define $F_j:\mathbbm{R}^R \to [0,1]$ similarly. Recall that in the rounding scheme, ${\bf G} \sim N(0,I_{R \times d})$ is a $R \times d$-dimensional Gaussian matrix. Let $\zeta_i := {\bf G} \cdot \bar{w}_i$ and $\zeta_j = {\bf G} \cdot \bar{w}_j$. Then by definition 
\[
F_i(\zeta_i) = f_{[0,1]}\Big(H_i(\mu_i\cdot{1}_R + \|w_i\|\cdot {\bf G}\bar{w}_i)\Big) = f_{[0,1]}(H_i(q_i)) = p_i, 
\]
and similarly, $p_j = F_j(\zeta_j)$. Furthermore, we also have the following properties:
\begin{itemize}
	\item $\zeta_i$ and $\zeta_j$ are marginally distributed as $N(0,I_R)$ i.e., $R$-dimensional standard Gaussian random variables.
	\item For every $u \in [R]$, $\zeta_i(u)$ and $\zeta_j(u)$ are $\rho_{ij}$-correlated.
\end{itemize}
Hereon, we will think of $F_i,F_j$ as functions in $L_2(\gamma^R)$, where $L_2(\gamma^R)$ is the space of square-integrable functions under the $R$-dimensional Gaussian measure. In particular, we can write $F_i$ and $F_j$ in the Hermite basis as 
\[
F_i = \sum_{\sigma} \wh{F_i}(\sigma)h_\sigma  
\ \ \ \ \ \ \ \textnormal{and} \ \ \ \ \ \
F_j = \sum_{\sigma} \wh{F_j}(\sigma)h_\sigma.
\]
Then,
\begin{align*}
	\Ex_{{\bf G}}\Big[p_ip_j\Big] 
	&= \Ex_{{\bf G}}\Big[F_i(\zeta_i)F_j(\zeta_j)\Big] \\
	&= \Big\langle F_i, U_{\rho_{ij}} F_j \Big\rangle_{\gamma^R} 				\tag{Inner product w.r.t $\gamma^R$} \\
	&= \sum_{\sigma} \rho^{|\sigma|}_{ij} \wh{F_i}(\sigma)\wh{F_j}(\sigma),			\tag{Plancharel's}		\\
	&= \wh{F_i}(\emptyset)\wh{F_j}(\emptyset) + \sum_{\sigma \neq \emptyset} \rho^{|\sigma|}_{ij} \wh{F_i}(\sigma)\wh{F_j}(\sigma) \\
	&= \Ex_{\bf G}\big[ p_i\big]\Ex_{{\bf G}}\big[p_j\big] + \sum_{\sigma \neq \emptyset} \rho^{|\sigma|}_{ij} \wh{F_i}(\sigma)\wh{F_j}(\sigma)
\end{align*}
which on rearranging and applying the triangle-inequality gets us
\begin{align*}
	\left|\Ex_{{\bf G}}\big[p_ip_j\big] - \Ex_{\bf G}\big[ p_i\big]\Ex_{{\bf G}}\big[p_j\big] \right|
	&\leq \sum_{\sigma \neq \emptyset} |\rho^{|\sigma|}_{ij}| |\wh{F_i}(\sigma)| |\wh{F_j}(\sigma)| \\
	&\leq |\rho_{ij}|\sum_{\sigma} |\wh{F_i}(\sigma)| |\wh{F_j}(\sigma)| \\
	&\leq |\rho_{ij}|\sqrt{\sum_\sigma \wh{F_i}(\sigma)^2}\sqrt{\sum_\sigma \wh{F_j}(\sigma)^2}		\tag{Cauchy-Schwarz} \\
	&= |\rho_{ij}| \|F_i\|_{L_2} \|F_j\|_{L_2} 			\tag{Parseval's}\\
	&\leq |\rho_{ij}| \|F_i\|_{L_\infty} \|F_j\|_{L_\infty} \\
	&\leq |\rho_{ij}|, 
\end{align*}
where the last inequality is due to the fact that $F_i$ and $F_j$ are $[0,1]$-valued functions.
\end{proof}

Using the above, we immediately get the following corollary.

\begin{corollary}						
\[
\Ex_{{\bf G}}\left[\left(\Ex_i p_i\right)^2 \right] - \left(\Ex_{{\bf G}}\left[\Ex_i\left[ p_i\right]\right]\right)^2 \leq \Ex_{i,j}\left[|\rho_{ij}|\right].
\]
\end{corollary}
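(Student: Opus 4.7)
The plan is to derive the corollary as an almost immediate consequence of the preceding per-pair bound $|\Ex_{\bf G}[p_ip_j] - \Ex_{\bf G}[p_i]\Ex_{\bf G}[p_j]| \leq |\rho_{ij}|$ by expanding the square and averaging. First I would rewrite the squared inner average as a double expectation over two independent draws $i,j \sim G_{\rm gap}$:
\[
\Ex_{\bf G}\left[\left(\Ex_i p_i\right)^2\right] = \Ex_{\bf G}\Ex_{i,j}[p_ip_j] = \Ex_{i,j}\Ex_{\bf G}[p_ip_j],
\]
where the swap of expectations is justified by Fubini (all quantities are bounded since $p_i \in [0,1]$). Similarly,
\[
\left(\Ex_{\bf G}\Ex_i p_i\right)^2 = \Ex_{i,j}\Ex_{\bf G}[p_i]\Ex_{\bf G}[p_j],
\]
again using independence of the two samples $i,j$.

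Next, subtracting the two identities, combining the expectations, and applying the triangle inequality yields
\[
\Ex_{\bf G}\left[\left(\Ex_i p_i\right)^2\right] - \left(\Ex_{\bf G}\left[\Ex_i p_i\right]\right)^2 \leq \Ex_{i,j}\left|\Ex_{\bf G}[p_ip_j] - \Ex_{\bf G}[p_i]\Ex_{\bf G}[p_j]\right|.
\]
The final step is to invoke the preceding Hermite-analytic lemma pointwise for each pair $(i,j)$, which bounds the quantity inside the absolute value by $|\rho_{ij}|$, giving the claimed $\Ex_{i,j}|\rho_{ij}|$ bound.

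There is no real obstacle here; the content is entirely in the preceding lemma (the Hermite expansion plus the $L_\infty$ bound on the clipped rounding functions). The only mild subtlety to state cleanly is that the outer difference on the LHS of the corollary is automatically nonnegative (it is a variance of the random variable $\Ex_i p_i$ over the Gaussian matrix ${\bf G}$), so dropping the absolute value on the LHS when applying the triangle inequality inside the expectation is valid. Beyond noting this, the proof is essentially just expanding the square and quoting the per-pair inequality.
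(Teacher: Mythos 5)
Your proposal is correct and matches the paper's own argument: both expand $\left(\Ex_i p_i\right)^2$ as a double expectation over independent $i,j$, swap the order of expectation, and invoke the per-pair Hermite bound $\left|\Ex_{\bf G}[p_ip_j] - \Ex_{\bf G}[p_i]\Ex_{\bf G}[p_j]\right| \leq |\rho_{ij}|$ before rearranging. Your extra remarks on Fubini and the nonnegativity of the left-hand side are fine but not needed beyond what the paper already does implicitly.
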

\begin{proof}
Observe that 
\[
\Ex_{{\bf G}}\left[\left(\Ex_i p_i\right)^2 \right]
= \Ex_{i,j} \Ex_{{\bf G}}\left[p_ip_j\right] \\
\leq \Ex_{i,j}\left[\Ex_{{\bf G}}p_i\Ex_{{\bf G}}p_j\right] + \Ex_{i,j} \left[|\rho_{ij}|\right] \\
\leq \left(\Ex_{{\bf G}}\left[\Ex_{i}p_i\right] \right)^2 + \Ex_{i,j}\left[ |\rho_{ij}|\right] ,
\]
which on rearranging finishes the proof.
\end{proof}

\bibliographystyle{alpha}
\bibliography{main}

\appendix
\section{Multivariate Gaussian CDF Bound}				\label{sec:hspace}

We begin by establishing by recalling a couple of elementary facts about the Gaussian CDF.

\begin{fact}				\label{fact:f1}
	For every $\epsilon \in (0,1)$, there exists $\delta(\epsilon) \in (0,1)$ such that for every $\delta \leq \delta(\epsilon)$, we have that 
	\[
	(1 - \epsilon)\sqrt{2\log(1/\delta)} \leq |\Phi^{-1}(\delta)| \leq (1 + \epsilon)\sqrt{2\log(1/\delta)}.
	\] 
\end{fact}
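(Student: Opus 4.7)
The plan is to reduce the claim to the classical two-sided Gaussian tail (Mills') bound and then invert. For every $t > 0$ one has
\[
\left(\frac{1}{t} - \frac{1}{t^3}\right)\frac{1}{\sqrt{2\pi}}e^{-t^2/2} \leq \Phi(-t) \leq \frac{1}{t\sqrt{2\pi}}e^{-t^2/2},
\]
which follows by standard integration by parts on the Gaussian density. Since $\Phi$ is strictly decreasing and $\delta < 1/2$ gives $\Phi^{-1}(\delta) < 0$, proving the two halves of the target inequality amounts to establishing
\[
\Phi\bigl(-(1+\epsilon)\sqrt{2\log(1/\delta)}\bigr) \leq \delta \leq \Phi\bigl(-(1-\epsilon)\sqrt{2\log(1/\delta)}\bigr)
\]
for $\delta$ small enough; then monotonicity of $\Phi$ translates these into the desired bracketing of $|\Phi^{-1}(\delta)|$.

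For the upper bound $|\Phi^{-1}(\delta)| \leq (1+\epsilon)\sqrt{2\log(1/\delta)}$, I would substitute $t_+ = (1+\epsilon)\sqrt{2\log(1/\delta)}$ into the Mills upper bound to obtain
\[
\Phi(-t_+) \leq \frac{1}{t_+\sqrt{2\pi}}\cdot \delta^{(1+\epsilon)^2}.
\]
Since $(1+\epsilon)^2 = 1 + 2\epsilon + \epsilon^2 > 1$, the factor $\delta^{(1+\epsilon)^2}$ beats $\delta$ by a polynomial-in-$\delta$ margin, so the right-hand side is at most $\delta$ once $\delta$ is below some threshold $\delta_+(\epsilon)$.

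For the lower bound $|\Phi^{-1}(\delta)| \geq (1-\epsilon)\sqrt{2\log(1/\delta)}$, I would substitute $t_- = (1-\epsilon)\sqrt{2\log(1/\delta)}$ into the Mills lower bound,
\[
\Phi(-t_-) \geq \left(\frac{1}{t_-} - \frac{1}{t_-^3}\right)\frac{1}{\sqrt{2\pi}}\cdot \delta^{(1-\epsilon)^2}.
\]
Since $(1-\epsilon)^2 < 1$, the ratio $\delta^{(1-\epsilon)^2}/\delta = \delta^{-\epsilon(2-\epsilon)}$ tends to infinity as $\delta \to 0$, and this polynomial blow-up easily absorbs the Mills prefactor $\Theta(1/\sqrt{\log(1/\delta)})$. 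Hence $\Phi(-t_-) \geq \delta$ for $\delta$ below some threshold $\delta_-(\epsilon)$, and taking $\delta(\epsilon) := \min\{\delta_+(\epsilon),\delta_-(\epsilon)\}$ completes the proof.

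There is no real obstacle here: the whole argument is routine asymptotic analysis of the Gaussian CDF, driven entirely by the fact that the exponent in Mills' bound scales as $-t^2/2$. The only care needed is to keep the sign of $\Phi^{-1}(\delta)$ straight (it is negative for $\delta < 1/2$), so as to correctly translate between $|\Phi^{-1}(\delta)|$ and $\Phi(-\cdot)$, and to verify that the Mills prefactor corrections are subdominant relative to the polynomial gap $\delta^{(1\pm\epsilon)^2 - 1}$. Both checks are immediate once the substitutions are written down.
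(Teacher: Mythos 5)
Your argument is correct, and the paper in fact states Fact \ref{fact:f1} without proof (it is treated as a recalled elementary fact), so there is nothing to compare against; your Mills-ratio argument is the standard way to supply it. The two substitutions $t_\pm = (1\pm\epsilon)\sqrt{2\log(1/\delta)}$ produce the exponents $\delta^{(1\pm\epsilon)^2}$, whose polynomial gap relative to $\delta$ dominates the $\Theta(1/\sqrt{\log(1/\delta)})$ prefactors, and the monotonicity step translating $\Phi(-t_+)\leq\delta\leq\Phi(-t_-)$ into the bracketing of $|\Phi^{-1}(\delta)|$ is handled with the correct sign convention; the only implicit requirement is that $\delta$ be small enough that $t_->1$ so the lower Mills bound is positive, which is absorbed into the choice of $\delta(\epsilon)$.
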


\begin{fact}				\label{fact:f2}
	There exists $0<\delta_0 \leq 1/2$ such that the following holds for every $\delta \in (0,\delta_0]$. Let $t = \Phi^{-1}(\delta)$. Then for any $0 \leq \Delta \leq 1/\sqrt{4 \log(1/\delta)}$ we have $\Phi(t + \Delta) \leq 2\delta$.
\end{fact}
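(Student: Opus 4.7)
The plan is to prove this by a direct calculation: express the increment $\Phi(t+\Delta) - \Phi(t)$ as an integral of the Gaussian density, use that $t < 0$ with large $|t|$ to factor out $\phi(t)$, and then relate $\phi(t)$ back to $\Phi(t) = \delta$ via the standard Mills-ratio inequality $\phi(x)/x \le \Phi(-x)(1 + 1/x^2)$ for $x > 0$.

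First, I would invoke Fact \ref{fact:f1} with a small fixed $\epsilon$ (say $\epsilon = 1/10$) to ensure that, for $\delta$ below some threshold $\delta_0$, we have $|t| \le (1+\epsilon)\sqrt{2\log(1/\delta)}$. Combined with the hypothesis $\Delta \le 1/\sqrt{4\log(1/\delta)}$ this yields the key estimate $|t|\,\Delta \le (1+\epsilon)/\sqrt{2}$, so in particular $|t|\,\Delta$ is bounded by an absolute constant. Note also that, for $\delta_0$ small, $|t|$ itself is large, and therefore $1/t^2$ and $\Delta/|t|$ can both be made arbitrarily small.

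Next, I would do the main computation. Writing $s = t + u$ with $u \in [0,\Delta]$, and using $\phi(t+u) = \phi(t)\,e^{|t|\,u - u^2/2}$ (valid because $t < 0$), we get
\[
\Phi(t+\Delta) - \Phi(t) \;=\; \int_t^{t+\Delta} \phi(s)\,ds \;=\; \phi(t)\int_0^{\Delta} e^{|t|u - u^2/2}\,du \;\le\; \phi(t)\cdot \frac{e^{|t|\Delta}-1}{|t|}.
\]
Then I would apply the Mills-ratio upper bound in the form $\phi(t)/|t| \le \Phi(t)\bigl(1 + 1/t^2\bigr)$ (a rewriting of $\phi(x)/x \le \Phi(-x)(1+1/x^2)$ for $x = |t| > 0$) to conclude
\[
\Phi(t+\Delta) \;\le\; \Phi(t)\,\Bigl[\,1 + \bigl(1 + 1/t^2\bigr)\bigl(e^{|t|\Delta} - 1\bigr)\Bigr].
\]
Since $\Phi(t) = \delta$, it remains to check that the bracketed factor is at most $2$ once $\delta_0$ is chosen small enough.

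The hard part will be verifying the constants: the asymptotic ratio $\Phi(t+\Delta)/\Phi(t)$ behaves like $e^{|t|\Delta - \Delta^2/2}\cdot |t|/(|t|-\Delta)$ as $|t|\to\infty$, which at the worst case $|t|\Delta \to 1/\sqrt{2}$ approaches $e^{1/\sqrt{2}}\approx 2.03$, uncomfortably close to the target constant $2$. To push the ratio safely below $2$ I would use two lower-order gains: the Mills-ratio correction $(1+1/t^2)$ becomes $1 + o(1)$ as $\delta_0 \to 0$, and the $e^{-u^2/2}$ factor dropped in the integration step contributes a further shrinkage $(1 - \Delta^2/2 + \cdots)$ since $\Delta^2 \to 0$. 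Combined with the flexibility to pick $\epsilon$ in Fact \ref{fact:f1} (and, if necessary, to absorb into a slightly stricter choice of $\delta_0$), these refinements bring the full bracket below $2$ and yield the claim $\Phi(t+\Delta) \le 2\delta$ for all $\delta \le \delta_0$.
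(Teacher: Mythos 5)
Your plan follows essentially the same route as the paper's own proof (write $\Phi(t+\Delta)-\Phi(t)=\int_t^{t+\Delta}\phi$, control the density via the ratio $e^{|t|u-u^2/2}$, and convert $\phi(t)$ back to $\delta$ by a Mills-ratio bound), and your intermediate inequality $\Phi(t+\Delta)\le\delta\bigl[1+(1+1/t^2)\bigl(e^{|t|\Delta}-1\bigr)\bigr]$ is correct. The genuine gap is the final step. At the extremal $\Delta=1/\sqrt{4\log(1/\delta)}$ and with $|t|=(1-o(1))\sqrt{2\log(1/\delta)}$, the product $|t|\Delta$ tends to $1/\sqrt{2}$ from below, so your bracket tends to $e^{1/\sqrt{2}}\approx 2.028>2$. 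The lower-order gains you invoke (the $1/t^2$ Mills correction, the discarded $e^{-u^2/2}$ factor, the freedom in $\epsilon$) are all $o(1)$ as $\delta\to0$ and cannot close a deficit that converges to the constant $\approx 0.028$; shrinking $\delta_0$ only makes things worse, since the bracket increases toward $e^{1/\sqrt{2}}$ as $\delta$ decreases. Worse, this is not slack in your estimates: the true ratio $\Phi(t+\Delta)/\delta$ equals $e^{|t|\Delta-\Delta^2/2}\cdot\tfrac{|t|}{|t|-\Delta}\,(1+o(1))$ and also tends to $e^{1/\sqrt{2}}>2$; numerically, already at $\delta=e^{-100}$ one finds $\Phi(t+\Delta)\approx 2.006\,\delta$ at the extremal $\Delta$. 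Since Fact~\ref{fact:f1} forces $\log$ to be the natural logarithm, the statement with the constant exactly $2$ fails for all sufficiently small $\delta$, so no completion of your last step is possible as stated.

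For comparison, the paper's proof takes the same path but hides exactly this point behind the unjustified claim that $\phi(t+\Delta)/\phi(t)=e^{|t|\Delta-\Delta^2/2}\le 1.1$; under the hypotheses of the fact this ratio can be as large as roughly $e^{1/\sqrt{2}}\approx 2$, and as large as $e^{\sqrt{3}/2}\approx 2.4$ if one only uses the paper's own bound $|t|\le\sqrt{3\log(1/\delta)}$. So the difficulty you flag is a real defect in the source, not a weakness of your toolkit. The honest repair is to weaken the statement slightly, e.g.\ conclude $\Phi(t+\Delta)\le 3\delta$, or keep the constant $2$ but require $\Delta\le 1/\sqrt{8\log(1/\delta)}$; with either change your bracket argument closes immediately, since then $|t|\Delta\le(1+\epsilon)/2$ and $(1+1/t^2)\bigl(e^{(1+\epsilon)/2}-1\bigr)<1$ once $\epsilon$ and $\delta_0$ are small. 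Such a weaker form is all that the sole application, Lemma~\ref{lem:hspace}, requires: there the per-coordinate bound $C\delta_i$ for an absolute constant $C$ merely turns the factor $2^r$ into $C^r$, which is absorbed by the $r$-dependent constants downstream.
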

\begin{proof}
	Let $\delta_0$ be such that for every $\delta' \leq \delta_0$ we have $|\Phi^{-1}(\delta')| \leq \sqrt{3\log(1/\delta')}$, and fix a $\delta \in (0,\delta_0)$. Then by definition we have that
	\[
	\frac{\phi(t + \Delta)}{\phi(t)} = \frac{e^{-(t + \Delta)^2/2}}{e^{-t^2/2}} = e^{-\Delta^2/2 + \Delta|t|} \leq 1.1.
	\]
	Then,
	\[
	\Phi(t + \Delta) - \Phi(t) = \int^{t + \Delta}_t \phi(x)dx \leq 1.1\int^{t + \Delta}_t \phi(t)dx = 1.1 \Delta \phi(t) 
	\leq  \frac{1.1}{2\sqrt{\log(1/\delta)}} \cdot\delta\sqrt{2.1 \log(1/\delta)} \leq \delta, 
	\]
	where the second inequality uses Fact \ref{fact:f1}. Using the above inequality we can conclude that $\Phi(t + \Delta) \leq \Phi(t) + \delta = 2\delta$ which concludes the proof.
\end{proof}

Now we prove the following noise stability bound.

\subsection{Proof of Lemma \ref{lem:hspace}}

	Without loss of generality, assume that $\delta_r \leq \delta_{r-1} \leq \cdots \leq \delta_1$. Now, we begin by observing that 
	\[
	\Lambda_{\rho}\left(\delta_1,\ldots,\delta_r\right) 
	= \Pr_{g \sim N(0,1)^r} \Pr_{g_1,\ldots,g_r \underset{\rho}{\sim} g}\left[\forall i \in [r], g_i \leq t_i\right]
	\]
	where $t_i := \Phi^{-1}(\delta_i)$ for every $i \in [r]$. Now recall that $g_1,\ldots,g_r$ are generated using the following process:
	\begin{itemize}
		\item Sample $g \sim N(0,1)^r$ and $z_1,\ldots,z_r \sim N(0,1)^r$ independently.
		\item For every $i \in [r]$, set $g_i := \rho \cdot g + \sqrt{1 - \rho^2} \cdot z_i$.
	\end{itemize}
	Now denote $\Delta:= 1/\sqrt{2\log(1/\delta^*)}$. Then, we can proceed by upper bounding:
	\begin{align}
		&\Pr_{g \sim N(0,1)^r} \Pr_{g_1,\ldots,g_r \underset{\rho}{\sim} g}\left[\forall i \in [r], g_i \leq t_i\right]
		&\leq \Pr_{g \sim N(0,1)^r}\left[g \leq \frac{\Delta}{\rho}t_r \right]
		+ \Pr_{(g_i)^r_{i = 1}| g \geq \Delta \rho/t_r}\left[\forall i \in [r], g_i \leq t_i\right].			\label{eqn:pr-1}
	\end{align}
	Then for the first term, we have that  
	\begin{equation}					\label{eqn:pr-2}
	\Pr_{g \sim N(0,1)^r}\left[g \leq \frac{\Delta}{\rho}t_r \right] 
	= \Phi\left(\frac{\Delta t_r}{\rho}\right)
	\leq e^{-\Delta^2 t^2_r/2\rho^2} \leq e^{-2r\log(1/\delta^*)} = (\delta^*)^r/2,
	\end{equation}
	where the penultimate inequality uses $\rho^2 \leq 1/(16r\log(1/\delta^*)) \leq \Delta^2/2r$. This gives us a bound for the first term from \eqref{eqn:pr-1}. Next we proceed to address the second probability term from \eqref{eqn:pr-2}:
	\begin{align*}
		\Pr_{(g_i)^r_{i = 1}| g \geq \Delta t_r/\rho}\left[\forall i \in [r], g_i \leq t_i\right]
		&= \Pr_{z_1,\ldots,z_r| g \geq \Delta t_r/\rho}\left[\forall i \in [r], \rho g + \sqrt{1 - \rho^2} z_i   \leq t_i\right] \\
		&\leq \Pr_{z_1,\ldots,z_r| g \geq \Delta t_r/\rho}\left[\forall i \in [r],  z_i  \leq \frac{t_i - \rho g}{\sqrt{1 - \rho^2}}\right] \\
		&\leq \Pr_{z_1,\ldots,z_r}\left[\forall i \in [r],  z_i  \leq \frac{t_i + |t_r| \rho}{\sqrt{1 - \rho^2}}\right] \\
		&= \Pr_{z_1,\ldots,z_r}\left[\forall i \in [r],  z_i  \leq \frac{t_i + \Delta}{\sqrt{1 - \rho^2}}\right]		 \\
		&\leq \Pr_{z_1,\ldots,z_r}\Big[\forall i \in [r],  z_i  \leq t_i + \Delta\Big] 		\tag{Since $t_i + \Delta \leq 0$}\\
	\end{align*}
	To finish the proof we observe that 
	\[
	\prod_{i \in [r]}\Pr_{z_i \sim N(0,1)}\Big[ z  \leq t_i + \Delta\Big] 
	= \prod_{i \in [r]} \Phi\left(t_i + \Delta\right)
	\overset{1}{\leq} \prod_{i \in [r]} 2 \delta_i 
	= 2^r \prod_{i \in [r]} \delta_i,
	\]
	where inequality $1$ is due to the Fact \ref{fact:f2}.

\section{Miscellaneous Fourier Analysis Facts}

In this section, we cover some basic facts that are quite well-known in the literature; we still state and prove them in the precise form that is needed in our proofs for completeness.

\subsection{Fourier Decay of $\cT^{(\Omega_i)}_\rho$}

\begin{fact}			\label{fact:decay-2}
	Let $h_i = h(x_i,z_i)$ be a function in the $2R$-dimensional probability space $\{0,1\}^R_{\mu_i} \otimes \{\bot,\top\}^R_\beta$. Then for any $d \geq 2$ we have 
	\[
	\left\|\left(\sT^{\Omega_i}_{1 - \eta}h\right)^{\geq d}\right\|^2_2 \leq (1 - \eta)^d.
	\]
\end{fact}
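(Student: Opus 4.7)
The proof will be a routine consequence of Parseval's identity combined with the diagonal action of the composed noise operator in the Fourier basis of the $2R$-variate product space. To begin, I would invoke Fact \ref{fact:p-fourier} (applied separately to the $\mu_i$-biased cube and the $\beta$-biased cube) together with the standard construction of Fourier bases for product probability spaces (Section \ref{sec:prelim}) to write $h$ uniquely as
\[
h(x,z) = \sum_{S,T \subseteq [R]} \wh{h}(S,T)\, \phi^{(\mu_i)}_S(x)\, \phi^{(\beta)}_T(z),
\]
where the monomials $\phi^{(\mu_i)}_S \phi^{(\beta)}_T$ form an orthonormal basis for $L_2(\{0,1\}^R_{\mu_i} \otimes \{\bot,\top\}^R_\beta)$, and where the natural degree of the monomial is $|S|+|T|$ (consistent with the grading used implicitly in the proof of Lemma \ref{lem:prod-1}).

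Next, I would exploit the fact that $\sT^{(\Omega_i)}_{1-\eta}$ is by definition the composition $\sT^{(\mu_i)}_{1-\eta} \circ \sT^{(\beta)}_{1-\eta}$, where the two factors act independently on the $x$-coordinates and the $z$-coordinates respectively. Each factor is diagonalized by its own biased Fourier basis, with $\sT^{(\mu_i)}_{1-\eta}\phi^{(\mu_i)}_S = (1-\eta)^{|S|}\phi^{(\mu_i)}_S$ and analogously for the $\beta$-side. Consequently the composition acts on the joint basis as
\[
\sT^{(\Omega_i)}_{1-\eta}\Big(\phi^{(\mu_i)}_S \phi^{(\beta)}_T\Big) = (1-\eta)^{|S|+|T|}\, \phi^{(\mu_i)}_S \phi^{(\beta)}_T.
\]

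Finally, applying Parseval's identity to the degree-$\geq d$ truncation yields
\[
\left\|\left(\sT^{(\Omega_i)}_{1-\eta} h\right)^{\geq d}\right\|^2_2 = \sum_{|S|+|T|\geq d} (1-\eta)^{2(|S|+|T|)} \wh{h}(S,T)^2 \leq (1-\eta)^{2d} \|h\|^2_2,
\]
which, under the implicit assumption $\|h\|_\infty \leq 1$ (mirroring the hypothesis of the companion Fact \ref{fact:decay} under which this variant is used), gives $\|h\|_2^2 \leq 1$ and hence the bound $(1-\eta)^{2d} \leq (1-\eta)^d$ since $1-\eta \in [0,1]$. There is no real obstacle here; the only point that warrants care is confirming that the meaning of ``$\geq d$'' in the claim is the $|S|+|T|$-grading induced by the $2R$-variate Fourier expansion (rather than an $R$-variate grading on $\Omega_i^R$), so that each retained mode carries an eigenvalue of at most $(1-\eta)^d$ and the use of this decay estimate in the invariance application (Lemma \ref{lem:prod-1}, condition $C_4$) remains consistent.
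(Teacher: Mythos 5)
Your proposal is correct and follows essentially the same route as the paper: expand $h$ in the product Fourier basis $\phi^{(\mu_i)}_S\phi^{(\beta)}_T$, use that $\sT^{(\Omega_i)}_{1-\eta}$ acts diagonally with eigenvalue $(1-\eta)^{|S|+|T|}$, and conclude by Parseval. You are in fact slightly more careful than the paper's write-up, which drops the square on the eigenvalue and leaves the normalization $\|h\|_2^2\leq 1$ implicit; both points are harmless and your handling of them is fine.
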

\begin{proof}
	Recall that we can write $\sT^{(\Omega_i)}_{1 - \eta}h_i$ with respect to the Fourier basis corresponding to the probability space $\{0,1\}^R_{\mu_i} \otimes \{\bot,\top\}^R_\beta$ as.
	\[
	\sT^{(\Omega_i)}_{1 - \eta}h_i(x_i,z_i) = \sum_{S,T \subseteq [R]} \wh{h_i}(S,T)(1 - \eta)^{|S| + |T|} \phi^{(\mu_i)}_S(x_i)\phi^{(\beta)}_T(z_i)
	\]
	Then 
	\[
	\left(\sT^{(\Omega_i)}_{1 - \eta}h_i\right)^{>d}(x_i,z_i) = \sum_{|S| + |T| > d}  \wh{h_i}(S,T)(1 - \eta)^{|S| + |T|} \phi^{(\mu_i)}_S(x_i)\phi^{(\beta)}_T(z_i).
	\]
	Therefore, using Parseval's identity we have
	\[
	\left\|\left(\sT^{(\Omega_i)}_{1 - \eta}h_i\right)^{>d}\right\|^2_2 
	= \sum_{|S| + |T| > d}(1 - \eta)^{|S| + |T|}\wh{h_i}(S,T)^2 \leq (1 - \eta)^d\sum_{S,T} \wh{h_i}(S,T)^2 \leq (1 - \eta)^d.
	\]
\end{proof}

\subsection{Transformations}

\begin{fact}					\label{fact:mult}
	Let $g = g(x_1,\ldots,x_R)$ where $(x_i)_{i \in [R]} \sim \{0,1\}^R_\mu$ for some $\mu \in [0,1]$. Then for any $\rho \in [0,1]$, the function $\sT^{(\mu)}_{\rho}g$ can be expressed as multi-linear polynomial in $x_1,\ldots,x_R$.
\end{fact}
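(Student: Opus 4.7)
The plan is to establish this via the Fourier expansion in the $\mu$-biased cube, which the paper has already developed (Fact \ref{fact:p-fourier} and the corollary immediately following). The core idea is that each non-trivial Fourier character $\phi^{(\mu)}(x) = (x-\mu)/\sqrt{\mu(1-\mu)}$ is an affine function of $x$, so products of these characters give multi-linear monomials in the variables $x_1,\ldots,x_R$, and the noise operator acts diagonally on this basis (it only rescales Fourier coefficients), thus preserving multi-linearity.

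First, I would write $g \in L_2(\{0,1\}^R_\mu)$ in its $\mu$-biased Fourier expansion
\[
g(x) = \sum_{S \subseteq [R]} \wh{g}(S) \prod_{i \in S} \phi^{(\mu)}(x_i),
\]
which is uniquely guaranteed by Fact \ref{fact:p-fourier}. Next, using the standard fact that the noise operator is diagonalized by the Fourier basis (which follows directly from the independent re-randomization of coordinates in the definition of $\sT^{(\mu)}_\rho$ together with $\Ex_{x \sim \{0,1\}_\mu}\phi^{(\mu)}(x) = 0$), I would write
\[
\sT^{(\mu)}_\rho g(x) = \sum_{S \subseteq [R]} \wh{g}(S)\, \rho^{|S|} \prod_{i \in S} \phi^{(\mu)}(x_i).
\]

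The final step is purely a substitution. Since $\phi^{(\mu)}(x_i) = (x_i - \mu)/\sqrt{\mu(1-\mu)}$ is an affine (hence multi-linear) function of $x_i$, the product $\prod_{i \in S} \phi^{(\mu)}(x_i)$ expands into a multi-linear polynomial in $\{x_i\}_{i \in S}$. A linear combination of multi-linear polynomials is multi-linear, so $\sT^{(\mu)}_\rho g$ is a multi-linear polynomial in $x_1,\ldots,x_R$. Equivalently, one can appeal to the corollary stated just after Fact \ref{fact:p-fourier} (which asserts exactly that any $f \in L_2(\{0,1\}^R_\mu)$ has a multi-linear representation), applied to $\sT^{(\mu)}_\rho g$ viewed as an element of $L_2(\{0,1\}^R_\mu)$.

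There is no real obstacle here: the statement is essentially a restatement of the multi-linear representation corollary applied to $\sT^{(\mu)}_\rho g$. The only minor subtlety worth spelling out is the diagonal action of $\sT^{(\mu)}_\rho$ on the Fourier basis, which should be verified by a one-line calculation using the coordinate-wise definition of $\sT^{(\mu)}_\rho$ and orthonormality of $\{1,\phi^{(\mu)}\}$ on $\{0,1\}_\mu$.
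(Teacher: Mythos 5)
Your proposal is correct and follows essentially the same route as the paper: expand $\sT^{(\mu)}_\rho g$ in the $\mu$-biased Fourier basis using the diagonal action $\wh{g}(S)\mapsto\rho^{|S|}\wh{g}(S)$, then substitute the affine character $\phi^{(\mu)}(x_i)=(x_i-\mu)/\sqrt{\mu(1-\mu)}$ and expand the products into multi-linear monomials. No gaps; the one subtlety you flag (verifying the diagonal action of the noise operator) is indeed the only step the paper also takes for granted.
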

\begin{proof}
	Writing $\sT^{(\mu)}_\rho g$ in the Fourier basis corresponding to $L_2(\{0,1\}^R_\mu)$, we get that 
	\begin{align*}
	\sT^{(\mu)}_\rho g(x) 
	&= \sum_{S \subseteq [R]} \wh{g}(S) \rho^{|S|}\phi^{(\mu)}_S(x) \\
	&= \sum_{S \subseteq [R]} \wh{g}(S) \rho^{|S|}\prod_{i \in S} \frac{x_i - \mu}{\sqrt{\sigma}} \\\
	&= \sum_{S \subseteq [R]} \wh{g}(S)(\rho/\sigma)^{|S|} \sum_{T \subseteq S} (-\mu)^{|T\setminus S|}\prod_{i \in T} x_i,
	\end{align*}
	which is clearly a multi-linear polynomial in $x_1,\ldots,x_R$.
\end{proof}

\subsection{Lasserre Feasibility}

\begin{claim}						\label{cl:feas}	
	Let $\theta$ be a degree-$\ell$ pseudo-distribution. Let $\hat{\theta} := \{\hat{\theta}_S\}_{|S| \leq \ell}$ be a collection of $R$-local distributions which are defined in terms of $\theta$ as follows. For every $S \subseteq [n]$ of size at most $\ell$, define $\hat{\theta}_S$ as
	\begin{itemize}
		\item Sample $X_S \sim \theta_S$.
		\item For every $i \in S$, do the following independently: with probability $1 - \eta$, set $\hat{X}_i = X_i$ and with probability $\eta$ sample $\hat{X}_i \sim \{0,1\}_\mu$.
	\end{itemize}
	Then $\hat{\theta}$ is a valid degree-$\ell$ pseudo-distribution as well.
\end{claim}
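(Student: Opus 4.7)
The plan is to express $\hat\theta$ as a convex combination of valid pseudo-distributions and then invoke closure of degree-$\ell$ pseudo-distributions under convex combinations. For each subset $K\subseteq[n]$, I will introduce an auxiliary pseudo-distribution $\theta^K$ whose local distribution on any $S$ with $|S|\le\ell$ samples $X_{S\cap K}\sim\theta_{S\cap K}$ (the genuine local marginal of $\theta$) and, independently, samples $X_{S\setminus K}$ from a product measure with marginal $\{0,1\}_\mu$. Unpacking the definition of $\hat\theta_S$, the independent re-randomization step can be decoupled by first flipping $|S|$ independent biased coins that select the ``keep'' set $K\cap S$, yielding
\[
\hat\theta_S \;=\;\sum_{K\subseteq[n]}(1-\eta)^{|K|}\eta^{n-|K|}\,\theta^K_S.
\]

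Next I will verify that each $\theta^K$ is a valid degree-$\ell$ pseudo-distribution. Local consistency follows immediately from the consistency of $\theta$ on $K$-indexed subsets combined with the trivial marginalization of a product measure on $[n]\setminus K$. For the PSD property of $M_\ell(\theta^K)$, I will use the observation that $\theta^K$ is essentially the tensor product, across the disjoint variable sets $K$ and $[n]\setminus K$, of the restriction of $\theta$ to $K$ (which is a valid degree-$\ell$ pseudo-distribution, with vectors $\{u^\theta_{A,\alpha}\}_{A\subseteq K}$) and a genuine product distribution on $[n]\setminus K$ (which is always a valid pseudo-distribution at every degree, with Cholesky vectors $\{e_{B,\beta}\}$ obtained from its PSD moment matrix). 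Explicit vectors for $\theta^K$ can then be exhibited via
\[
u^{\theta^K}_{A,\alpha}\;=\;u^\theta_{A\cap K,\,\alpha_{A\cap K}}\,\otimes\,e_{A\setminus K,\,\alpha_{A\setminus K}},
\]
whose inner products factor and reproduce the product-form joint probabilities that define $\theta^K_{A\cup B}$.

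With each $\theta^K$ valid, the claim will follow from closure under convex combinations: $M_\ell(\hat\theta)=\sum_K(1-\eta)^{|K|}\eta^{n-|K|}\,M_\ell(\theta^K)$ is a convex combination of PSD matrices and hence PSD, while local consistency of $\hat\theta$ is inherited from that of each $\theta^K$ (both marginalization and the convex-combination operation commute). I do not expect a substantive obstacle here; the only delicate piece is the bookkeeping in the tensor-product vector construction, ensuring that the inner products precisely match the product-form joint probabilities on subsets that straddle the $K/[n]\setminus K$ split.
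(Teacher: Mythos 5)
Your proof is correct and is essentially the paper's argument: the paper realizes the same convex decomposition iteratively, re-randomizing one coordinate at a time so that each step is a two-term mixture of the current pseudo-distribution with a version in which that single coordinate is replaced by an independent $\{0,1\}_\mu$ variable, and fully expanding that iteration yields exactly your one-shot sum over keep-sets $K$. The only point you treat more carefully than the paper is the validity of the tensor-product pseudo-distributions $\theta^K$ (the paper simply asserts that adjoining an independent coordinate preserves degree-$\ell$ feasibility), and your explicit vector construction $u^{\theta}_{A\cap K,\alpha_{A\cap K}}\otimes e_{A\setminus K,\alpha_{A\setminus K}}$ correctly supplies that missing verification.
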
	
\begin{proof}
	Given $\theta$, define $\theta^{(1)} = \{\theta^{(1)}_S\}_{|S| \leq \ell}$ as follows. For any subset $S$ of size at most $\ell$, define $\theta^{(1)}_S$ as the distribution over partial assignments in $\{0,1\}^S$ generated using the following process: 
	\begin{itemize}
		\item Sample $X_S \sim \theta_S$. 
		\item If $i \in S$, resample $X_i$ with probability $\eta$ from the distribution $\{0,1\}_\mu$. Output $X_S$.
	\end{itemize}
	We claim that $\theta^{(1)}$ is also a valid degree-$\ell$ pseudo-distribution: this follows from the observation that it is mixture of the degree-$\ell$ pseudo-distributions $\theta$ and $\theta'$ where $\theta'$ is the pseudo-distribution which is defined identically as $\theta$ on variables $X_2,\ldots,X_n$, and variable $X_1$ is always independently sampled from $\{0,1\}_\mu$.
	
	Summarizing, given $\theta$, we construct another degree-$\ell$ psuedo-distribution $\theta^{(1)}$ which re-randomizes only the first variable. Applying the above argument iteratively on coordinates $2,\ldots,n$, will yield a sequence of degree-$\ell$ pseudo-distributions $\theta^{(2)},\ldots,\theta^{(n)}$. The proof is concluded by observing that $\hat{\theta} = \theta^{(n)}$.
\end{proof}

\subsection{Hermite Analysis}		\label{sec:hermite}

Let $\gamma^R$ denote the $R$-dimensional Gaussian measure. Then $L_2(\gamma^R)$ denotes the vector space of real-valued functions defined over $\mathbbm{R}^R$ that are square-integrable w.r.t. $\gamma^R$. Then $L_2(\gamma^R)$ can be equipped with the inner product $\langle\cdot,\cdot\rangle_{\gamma^R}$ where for any $f,g \in L_2(\gamma^R)$ we have
\[
\Langle f,g \Rangle_{\gamma^R} = \Ex_{x \sim \gamma^R}\left[f(x)g(x)\right].
\]
We will need the notion of Hermite polynomials.
\begin{definition}[Hermite Polynomials]
	There exists an infinite sequence of polynomials $h_0 \equiv 1, h_1,h_2,\ldots$ where $h_i:\mathbbm{R} \to \mathbbm{R}$ is a polynomial degree $i$ such that $(h_i)_i$ forms an orthonormal basis for functions in $L_2(\gamma)$. In particular, any functions $f \in L_2(\gamma)$ admits a (Hermite) decomposition of the form
	\[
	f = \sum_{\sigma = 0}^{\infty} \wh{f}(\sigma)h_\sigma,
	\]
	where $\wh{f}(\sigma)$ is referred to as the Hermite coefficient corresponding to polynomial $h_\sigma$.
\end{definition}
For $R \geq 2$, by extending the above into the product space $\gamma^R$, for any $f \in L_2(\gamma^R)$, the corresponding Hermite decomposition is given as 
\[
f = \sum_{\sigma \in \mathbbm{Z}^R_{\geq 0}}\wh{f}(\sigma)h_{\sigma},
\]
where for any $\sigma = (i_1,\ldots,i_R) \in \mathbbm{Z}^R_{\geq 0}$ we have $h_\sigma(x) = \prod_{j \in [R]} h_{i_j}(x(j))$. We recall some basic facts about the Hermite decomposition of functions.

\begin{fact}[Plancharel's]
	For any $f,g \in L_2(\gamma^R)$, $\langle f,g \rangle_{\gamma^R} = \sum_{\sigma} \wh{f}(\sigma)\wh{g}(\sigma)$.
\end{fact}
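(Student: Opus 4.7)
The plan is to derive Plancharel's identity directly from the orthonormality of the Hermite basis in $L_2(\gamma^R)$, so the first task will be to verify that the product polynomials $\{h_\sigma\}_{\sigma \in \mathbbm{Z}^R_{\geq 0}}$ do in fact form an orthonormal system. I would start from the one-dimensional fact stated in the preceding definition, namely that $\{h_i\}_{i \geq 0}$ is an orthonormal basis for $L_2(\gamma)$, so $\Ex_{x \sim \gamma}[h_i(x) h_j(x)] = \delta_{ij}$. Since $\gamma^R$ is a product measure, for any two multi-indices $\sigma, \sigma' \in \mathbbm{Z}^R_{\geq 0}$ I can factor the inner product coordinate-wise:
\[
\langle h_\sigma, h_{\sigma'}\rangle_{\gamma^R} = \Ex_{x \sim \gamma^R}\left[\prod_{j \in [R]} h_{\sigma_j}(x(j)) h_{\sigma'_j}(x(j))\right] = \prod_{j \in [R]} \Ex_{x(j) \sim \gamma}\left[h_{\sigma_j}(x(j)) h_{\sigma'_j}(x(j))\right] = \prod_{j \in [R]} \delta_{\sigma_j, \sigma'_j},
\]
which equals $1$ if $\sigma = \sigma'$ and $0$ otherwise. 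This establishes orthonormality of the product Hermite basis.

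Next, I would invoke the Hermite expansions of $f$ and $g$ guaranteed by the definition, namely $f = \sum_\sigma \wh{f}(\sigma) h_\sigma$ and $g = \sum_{\sigma'} \wh{g}(\sigma') h_{\sigma'}$, where convergence is understood in the $L_2(\gamma^R)$ norm. Using bilinearity of the inner product together with the orthonormality established above, I would write
\[
\langle f, g \rangle_{\gamma^R} = \left\langle \sum_\sigma \wh{f}(\sigma) h_\sigma, \sum_{\sigma'} \wh{g}(\sigma') h_{\sigma'}\right\rangle_{\gamma^R} = \sum_{\sigma, \sigma'} \wh{f}(\sigma)\wh{g}(\sigma') \langle h_\sigma, h_{\sigma'}\rangle_{\gamma^R} = \sum_\sigma \wh{f}(\sigma)\wh{g}(\sigma),
\]
which is exactly the claimed identity.

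The only mildly nontrivial step is justifying the exchange of the infinite sums with the inner product, but this is a standard consequence of the fact that the Hermite expansions converge in $L_2(\gamma^R)$ and that the inner product is continuous in each argument with respect to the $L_2$-norm (by Cauchy–Schwarz). Since both $f$ and $g$ lie in $L_2(\gamma^R)$, their Hermite coefficient sequences are in $\ell^2(\mathbbm{Z}^R_{\geq 0})$, and the double sum $\sum_{\sigma, \sigma'} \wh{f}(\sigma)\wh{g}(\sigma') \langle h_\sigma, h_{\sigma'}\rangle_{\gamma^R}$ is absolutely summable along the diagonal by Cauchy–Schwarz in $\ell^2$, so the interchange is legitimate. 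No significant obstacle is expected; this fact is essentially a direct corollary of the orthonormal basis property asserted in the preceding definition.
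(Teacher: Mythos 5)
Your proof is correct and is the standard argument: orthonormality of the tensor-product Hermite basis via the product structure of $\gamma^R$, followed by bilinear expansion and the usual $L_2$-continuity justification for interchanging the sums. The paper states this fact without proof (it is folklore), so there is nothing to compare against; your write-up supplies exactly the argument the paper implicitly relies on.
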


\begin{fact}[Empty Fourier Coefficient]
	For any $f \in L_2(\gamma^R)$, we have $\wh{f}({\bf 0}_R) = \Ex_{x \sim \gamma^R} \left[f(x)\right]$.
\end{fact}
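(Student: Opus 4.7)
The plan is to reduce the claim to a direct unpacking of the orthonormality of the Hermite basis together with the normalization $h_0 \equiv 1$ stipulated in the Definition of Hermite polynomials immediately above. There is no estimation or asymptotic argument needed; the statement is a one-line identity.

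First I would note that the univariate Definition explicitly sets $h_0 \equiv 1$, so by the product construction of the multivariate basis on $L_2(\gamma^R)$, the basis element indexed by the zero multi-index is identically the constant function $1$:
\[
h_{\mathbf{0}_R}(x) \;=\; \prod_{j \in [R]} h_0(x(j)) \;\equiv\; 1.
\]
Next I would invoke orthonormality of $(h_\sigma)_{\sigma \in \mathbbm{Z}^R_{\geq 0}}$ under $\langle \cdot, \cdot \rangle_{\gamma^R}$ to obtain the Fourier inversion formula $\wh{f}(\sigma) = \langle f, h_\sigma \rangle_{\gamma^R}$ for every $\sigma$; this follows either directly from orthonormality applied to the Hermite decomposition $f = \sum_\tau \wh{f}(\tau) h_\tau$, or from the preceding Plancharel fact applied with $g = h_\sigma$ (noting that $\wh{h_\sigma}(\tau) = \mathbbm{1}[\tau = \sigma]$).

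Specializing to $\sigma = \mathbf{0}_R$ and substituting $h_{\mathbf{0}_R} \equiv 1$, the inner product collapses to
\[
\wh{f}(\mathbf{0}_R) \;=\; \langle f, 1 \rangle_{\gamma^R} \;=\; \Ex_{x \sim \gamma^R}\!\left[f(x)\right],
\]
which is precisely the claim. No real obstacle arises here: the proof is a short unwinding of definitions, and the only thing one must be careful about is that the chosen Hermite normalization from the Definition is the one giving $h_0 \equiv 1$ (as opposed to, say, the physicists' convention), which is already built into the statement of the Definition and so requires no additional work.
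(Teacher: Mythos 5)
Your proof is correct and is exactly the standard one-line verification that the paper leaves implicit (it states this as a Fact without proof): since $h_{\mathbf{0}_R}\equiv 1$, orthonormality gives $\wh{f}(\mathbf{0}_R)=\langle f,1\rangle_{\gamma^R}=\Ex_{x\sim\gamma^R}[f(x)]$. Nothing further is needed.
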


{\bf Noise Operator}. For any $\rho \in [-1,1]$, a pair of jointly distributed Gaussian vectors $g,h$ are said to be $\rho$-correlated if $g$ and $h$ are marginally distributed as $\gamma^R \equiv N(0,1)^R$ and for every $i \in [R]$, $\Ex\left[g(i)h(i)\right] = \rho$. We use the notation $h \underset{\rho}{\sim} g$ to denote a random draw of $\rho$-correlated pair $g$ and $h$. 

\begin{definition}[Gaussian Noise Operator]
	For any $\rho \in [-1,1]$, $U_\rho$ is a stochastic functional on $L_2(\gamma^R)$ which is defined as follows: for any $f \in L_2(\gamma^R)$ we define
	\[
	U_\rho f(g) = \Ex_{g \sim h}\left[f(h)\right]
	\]
\end{definition}

Similar to the noise operator in the finite probability space setting, we have the following fact.
\begin{fact}			\label{fact:herm-noise}
	For any $f \in L_2(\gamma^R)$ and $\rho \in [-1,1]$ the Hermite decomposition of $U_\rho f$ is uniquely given by 
	\[
	U_\rho f = \sum_{|\sigma|} \rho^{|\sigma|} \wh{f}(\sigma)h_{\sigma},
	\]
	where $|\sigma|$ denotes the number of non-zero indices in the multi-index $\sigma$.
\end{fact}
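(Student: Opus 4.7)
The plan is to reduce the claim to a one-dimensional eigenvalue identity for the noise operator on the Hermite basis and then tensorize across coordinates. First I would establish that for the univariate Gaussian noise operator, $U_\rho h_k = \rho^k h_k$ for every $k \geq 0$. The cleanest route is via the Hermite generating function $\exp(tx - t^2/2) = \sum_{k \geq 0} h_k(x) t^k / \sqrt{k!}$: sampling $h = \rho g + \sqrt{1-\rho^2}\,\zeta$ with $\zeta \sim N(0,1)$ independent of $g$ (see Definition \ref{defn:g-corr}), the standard Gaussian moment generating function gives
\[
\Ex_\zeta\!\left[\exp(th - t^2/2)\right] = \exp(t\rho g - \rho^2 t^2/2) = \sum_{k \geq 0} h_k(g)\,(\rho t)^k/\sqrt{k!}.
\]
Matching coefficients of $t^k$ on both sides yields the one-dimensional eigenvalue identity.

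Next I would tensorize. Since a $\rho$-correlated pair $(g,h) \in \mathbbm{R}^R \times \mathbbm{R}^R$ is generated coordinatewise from $R$ independent $\rho$-correlated one-dimensional pairs $(g_j, h_j)$, the operator $U_\rho$ on $L_2(\gamma^R)$ factors as the tensor product of the one-dimensional noise operators. Applied to the product Hermite polynomial $h_\sigma(x) = \prod_j h_{\sigma_j}(x_j)$, independence across coordinates gives
\[
U_\rho h_\sigma(g) = \prod_j \Ex_{h_j} \left[h_{\sigma_j}(h_j) \mid g_j\right] = \prod_j \rho^{\sigma_j} h_{\sigma_j}(g_j) = \rho^{|\sigma|} h_\sigma(g),
\]
which is precisely the claimed eigenvalue relation (with $|\sigma|$ interpreted consistently with the rest of the paper's Hermite conventions).

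Finally, for a general $f \in L_2(\gamma^R)$ with Hermite expansion $f = \sum_\sigma \wh f(\sigma) h_\sigma$, I would invoke the fact that $U_\rho$ is a contraction on $L_2(\gamma^R)$ (for $|\rho| \leq 1$), so it commutes with the $L_2$-limit defining the Hermite series. Moving $U_\rho$ inside the sum and using the eigenvalue identity gives $U_\rho f = \sum_\sigma \rho^{|\sigma|} \wh f(\sigma) h_\sigma$; since the Hermite polynomials form an orthonormal basis of $L_2(\gamma^R)$, uniqueness of the decomposition identifies this series as the Hermite expansion of $U_\rho f$.

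The only real computation is the univariate generating-function calculation, which reduces to a single Gaussian moment generating function evaluation; the tensorization and $L_2$-extension are essentially formal. Thus there is no genuine conceptual obstacle — the argument is a standard bookkeeping exercise once the one-dimensional eigenvalue identity is in hand.
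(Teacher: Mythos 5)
The paper states this as a bare Fact with no proof (it is treated as folklore in the Hermite-analysis appendix), so there is no "paper's proof" to compare against; your argument is the standard one and it is correct. The univariate generating-function computation, the tensorization across independent coordinates, and the extension to general $f$ by $L_2$-contractivity of $U_\rho$ are all sound. One point worth making explicit, which your parenthetical only hints at: your derivation yields $U_\rho h_\sigma = \rho^{\sigma_1+\cdots+\sigma_R} h_\sigma$, i.e.\ the exponent is the \emph{total degree} of the multi-index, whereas the Fact as written defines $|\sigma|$ to be the number of non-zero entries of $\sigma$. With that literal reading the statement is false as soon as some $\sigma_j \geq 2$ (e.g.\ $R=1$, $\sigma = 2$ gives eigenvalue $\rho^2$, not $\rho^1$); the correct convention is $|\sigma| = \sum_j \sigma_j$, which is what your proof establishes. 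This discrepancy is harmless for the paper's single application of the Fact (the proof of Lemma \ref{lem:var-bound} only uses $|\rho_{ij}|^{|\sigma|} \leq |\rho_{ij}|$ for $\sigma \neq \mathbf{0}$, which holds under either convention), but your proof should state the total-degree convention rather than defer to the paper's.
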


\section{Separating Example}			\label{sec:examp}

Consider the $3$-ary predicate $\psi:\{0,1\}^3 \to \{0,1\}$ whose accepting set is $\psi^{-1}(1) = \{(1,0,0),(0,1,1)\}$. Following \cite{GL22}'s notation, the set of minimal accepting strings of $\psi$, denoted by $\cM(\psi)$ is the accepting set itself. Since $(0,1,1) \in \psi^{-1}(1)$ has Hamming weight $2$, as per \cite{GL22}'s characterization, $\mu$-Biased CSPs on predicate $\psi$ should be at least as hard as the \dksh~problem of arity $2$ (i.e, the Densest-$k$-Subgraph problem with $k = \mu|V|$), and hence it is at least $\Omega(\mu \log(1/\mu))$ hard to approximate in their unbounded weight setting -- we point out that the hard instances in \cite{GL22}'s reduction crucially use the fact that the weights are allowed to be unbounded. On the other hand, for our setting we have the following lemma.

\begin{lemma}
	There exists a $\mu^{3/4}$-approximation algorithm for $\mu$-constrained instances of Max-CSP$(\psi)$.
\end{lemma}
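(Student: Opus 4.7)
The plan is to show that combining a uniformly random $\mu$-subset with a simple LP-rounding algorithm already yields a $\mu^{3/4}$-approximation. The starting observation is that $\psi(x_1,x_2,x_3) = 1$ iff $x_2 = x_3 \neq x_1$, so under a $\mu$-biased i.i.d.\ assignment $\Pr[\psi = 1] = \mu(1-\mu)^2 + \mu^2(1-\mu) = \mu(1-\mu)$. This already furnishes Algorithm~A: output $S \subseteq V$ uniformly among subsets with $\sum_v \tilde w(v) \mathbf 1[v \in S] = \mu$; by symmetry this achieves expected value $\Omega(\mu)$ per edge.

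For a second algorithm I would use the standard LP relaxation with variables $x_v \in [0,1]$ and, for each ordered edge $e = (v,a,b)$, auxiliary variables $y_e^{(1)} \le \min(x_v, 1-x_a, 1-x_b)$ and $y_e^{(2)} \le \min(1-x_v, x_a, x_b)$ corresponding to the two accepting patterns of $\psi$, maximizing $\sum_e w_e(y_e^{(1)}+y_e^{(2)})$ subject to $\sum_v \tilde w(v) x_v = \mu$. Write $y_e = y_e^{(1)} + y_e^{(2)}$ and $L = \sum_e w_e y_e$, so $L \ge {\sf Opt}_\mu$. Algorithm~B rounds independently ($v \in S$ with probability $x_v$) and then applies a standard dependent-rounding correction to enforce the exact weight constraint at $o(1)$ loss. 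The key per-edge inequality, following from AM-GM on the factors and from convexity of $t \mapsto t^3$, is
\[
x_v(1-x_a)(1-x_b) + (1-x_v)x_a x_b \;\ge\; (y_e^{(1)})^3 + (y_e^{(2)})^3 \;\ge\; y_e^3/4.
\]

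Having both bounds in hand, I would output the better of the two algorithms and analyse via a threshold split. Partition edges into $E_s = \{e: y_e \le \mu^{3/8}\}$ and $E_\ell = \{e: y_e > \mu^{3/8}\}$, with LP contributions $L_s, L_\ell$. If $L_s \ge L_\ell$, then $L \le 2 L_s \le 2\mu^{3/8}$ because each $y_e \le \mu^{3/8}$ and $\sum_e w_e \le 1$, so ${\sf Opt}_\mu \le 2\mu^{3/8}$ and Algorithm~A's value $\Omega(\mu)$ already yields ratio $\Omega(\mu / \mu^{3/8}) = \Omega(\mu^{5/8}) \ge \Omega(\mu^{3/4})$. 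Otherwise $L_\ell > L_s$, and Algorithm~B's contribution from $E_\ell$ alone is $\ge \sum_{e \in E_\ell} w_e y_e^3/4 \ge (\mu^{3/4}/4)\,L_\ell \ge (\mu^{3/4}/8)\,L \ge (\mu^{3/4}/8)\,{\sf Opt}_\mu$. Either way we obtain an $\Omega(\mu^{3/4})$-approximation.

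The main technical obstacle I foresee is cleanly enforcing the exact weight constraint $\sum_v \tilde w(v)\sigma(v) = \mu$ after the LP rounding: naive conditioning introduces correlations that must be controlled. I would address this with a pipage-style dependent-rounding scheme that preserves the weight constraint exactly while keeping per-vertex marginals intact, so that the per-edge inequality above still holds up to an $o(1)$ loss; the bounded-weight assumption on $\tilde w$ ensures this correction cannot concentrate on a single heavy vertex.
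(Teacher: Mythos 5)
Your proposal is correct in its essentials, but it takes a genuinely different route from the paper. The paper splits on the value of ${\sf Opt}_\mu(\cH)$ against the threshold $\mu^{1/4}$ and, in the high-value case, further splits according to which accepting pattern ($ (1,0,0)$ or $(0,1,1)$) carries half the optimum; it then runs a Max-$k$-Coverage-style LP for the weight-one pattern and a Densest-$k$-Subgraph SDP (with a cube-of-the-value rounding bound, Observation \ref{obs:dks}) for the weight-two pattern, followed by a re-randomization step to remove the unwanted first coordinate from the rounded set. You instead write a single LP whose auxiliary variables $y_e^{(1)},y_e^{(2)}$ capture both patterns simultaneously, prove the per-edge bound $x_v(1-x_a)(1-x_b)+(1-x_v)x_ax_b\ge (y_e^{(1)})^3+(y_e^{(2)})^3\ge y_e^3/4$ (both inequalities check out: the first by multiplying the three constraints defining each $y_e^{(i)}$, the second by convexity of $t^3$), and then do a threshold split on the per-edge LP values rather than on the global optimum, trading off against the trivial random-$\mu$-set algorithm. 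Your version is more uniform --- one relaxation, one rounding, no SDP, and no need for the paper's fix-up that flips a random $\mu/2$-fraction of coordinates, since your LP already forces the first coordinate toward $0$ in the $(0,1,1)$ pattern. The arithmetic closes: in the small-$y_e$ case you get ratio $\Omega(\mu^{5/8})\ge\Omega(\mu^{3/4})$, and in the large-$y_e$ case $\Omega(\mu^{3/4})$ directly. The one genuine loose end is the exact-weight constraint after independent rounding, which you correctly flag; note that the paper's own proof handles this with the same level of informality (``Using Chernoff bound, the weight of the rounded set will be $\approx\mu$ w.h.p.''), relying on the bounded-weight assumption $\tilde w(v)\le |V|^{-1/100}$ so that the post-rounding correction touches a vanishing fraction of the weight, and your pipage-style fix is an acceptable way to make this rigorous.
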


Towards proving the above lemma, we shall need the following useful observation.
\begin{observation}				\label{obs:dks}
	Suppose $H = (V,E)$ is a \dks~instance with optimal value at least $c$. Then there exists an efficient algorithm that returns an assignment with value at least $c^3$.
\end{observation}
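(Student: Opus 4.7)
The plan is to prove Observation \ref{obs:dks} by splitting on the magnitude of $c$ relative to $\mu := k/n$, handling the small-$c$ regime with a trivial random $k$-subset and the large-$c$ regime with the natural LP relaxation of \dks~followed by independent randomized rounding.

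In the regime $c \leq \mu^{2/3}$, I would output a uniformly random $k$-subset $T \subseteq V$. For every $(u,v) \in E$, the probability that both endpoints land in $T$ equals $\binom{n-2}{k-2}/\binom{n}{k} \geq \mu^2/2$, so $\mathbb{E}[|E(T)|/|E|] \geq \mu^2/2 \geq c^3/2$ by the assumption $c \leq \mu^{2/3}$. The method of conditional expectations derandomizes this into a deterministic $k$-subset of value at least $c^3/2$.

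For the regime $c > \mu^{2/3}$, I would use the standard LP relaxation
\[
\max \frac{1}{|E|}\sum_{(u,v) \in E} y_{uv} \quad \text{s.t.} \quad y_{uv} \leq x_u,\ y_{uv} \leq x_v,\ \sum_v x_v \leq k,\ 0 \leq x_v,\, y_{uv} \leq 1,
\]
which has value at least $c$ because the integer optimum (setting $x_v = \mathbbm{1}[v \in S^*]$ and $y_{uv} = \mathbbm{1}[(u,v) \in E(S^*)]$) is feasible. Given an LP optimum $(x^*, y^*)$, I would include each vertex $v$ in $T$ independently with probability $x^*_v$, giving $\mathbb{E}[|T|] \leq k$ and, using $y^*_{uv} \leq \sqrt{x^*_u x^*_v}$ together with Cauchy--Schwarz,
\[
\mathbb{E}[|E(T)|] \;=\; \sum_{(u,v) \in E} x^*_u x^*_v \;\geq\; \sum_{(u,v) \in E} (y^*_{uv})^2 \;\geq\; \frac{\bigl(\sum_{(u,v)} y^*_{uv}\bigr)^2}{|E|} \;\geq\; c^2 |E|.
\]

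Finally, to enforce $|T| = k$ exactly, I would subsample to $k$ elements when $|T| > k$ and pad with arbitrary vertices when $|T| < k$ (padding only increases $|E(T)|$). By Markov's inequality, $|T| \leq 2k$ with probability at least $1/2$, and on that event each surviving edge is kept under random subsampling with probability at least $1/4$, so the resulting $k$-set $T'$ satisfies $\mathbb{E}[|E(T')|/|E|] = \Omega(c^2) = \Omega(c^3)$ since $c \leq 1$. Boosting via polynomially many repetitions and returning the best output yields a deterministic efficient algorithm producing a $k$-subset of value at least $c^3$ (absorbing the implicit constant). The main technical point to verify carefully is that the size-correction step degrades the edge count by at most a constant factor; this follows from elementary concentration, and a slight cleaner variant simply returns the better of the LP-rounded output and the uniform random $k$-subset, bypassing the explicit case split.
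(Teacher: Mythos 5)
Your overall route is the same in spirit as the paper's: solve a convex relaxation of \dks, round each vertex independently with its fractional value, and then repair the cardinality. The paper works with the basic SDP, notes that at least a $c/2$ fraction of edges have SDP value at least $c/2$, bounds the probability that such an edge survives independent rounding by the product of its marginals (each at least $c/2$), and gets $c^3/8$ in expectation, fixing the set size via Chernoff using the bounded vertex weights. Your LP with $y_{uv}\le \min(x_u,x_v)$ together with Cauchy--Schwarz is a cleaner accounting of the same rounding and even yields expected edge mass $c^2|E|\ge c^3|E|$, and your ``better of LP-rounding and a uniform random $k$-set'' fallback plays the role that the small-$c$ regime plays in the paper's application. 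So the core rounding analysis is correct and slightly stronger than needed.

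The genuine gap is in the size-correction step. From $\mathbb{E}[|T|]\le k$, Markov gives $\Pr[|T|\le 2k]\ge 1/2$, but that alone does not justify the conclusion $\mathbb{E}[|E(T')|]=\Omega(c^2|E|)$: the $c^2|E|$ expected edges could, for all Markov tells you, be carried almost entirely by the event $\{|T|>2k\}$, which you then discard, so ``each surviving edge is kept with probability $1/4$'' only bounds the expectation restricted to an event whose edge mass you have not lower-bounded. The fix is easy because the vertex indicators are independent: $\mathrm{Var}(|T|)\le\sum_v x^*_v\le k$, so Chebyshev gives $\Pr[|T|>2k]\le 1/k$, hence the edge mass lost by discarding that event is at most $|E|/k$, which is negligible against $c^2|E|$ in the branch where you use the LP (there $c>\mu^{2/3}$, a quantity independent of $n$, while $k=\mu|V|\to\infty$); alternatively, in the paper's bounded-weight setting you can invoke Chernoff exactly as the paper does. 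Two minor points: the per-pair survival probability $k(k-1)/(|T|(|T|-1))\ge 1/4$ under subsampling from $|T|\le 2k$ requires $k$ to be large (it fails for constant $k$, but is fine here since $k=\mu|V|$), and repeating the randomized rounding polynomially many times gives a high-probability randomized algorithm rather than a deterministic one --- which suffices, since the observation (and the paper's own proof) only asks for an efficient randomized procedure.
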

\begin{proof}
	The algorithm is the following: solve the basic SDP for \dks~on $H$, and then do independent rounding from the SDP solution. If the optimal value of the instance is $c$, then the optimal SDP value is also at least $c$. By averaging, at least $c/2$ edges would then have SDP value at least $c/2$. Fix any such edge $e \in E$. Then, denoting the local distributions corresponding to the SDP solution as $\theta$, we have 
	\[
	\Pr_{\theta}\Big[X_i = 1\Big], \Pr_{\theta}\left[X_j = 1\right] \geq \Pr_{\theta}\left[X_i = X_j = 1\right] \geq c/2.
	\]
	Therefore, any such edge will be induced by the sampled set with probability at least $c^2/4$, which implies that the expected number of edges induced by the rounded solution will be at least $c^3/8$. Using Chernoff bound, the weight of the rounded set will be $\approx \mu$ w.h.p, from which the claim follows.  
\end{proof}

Now we prove the above lemma.
\begin{proof}
	Let $H = (V,E,w)$ be the Max-CSP($\psi$) instance. Denote $c := {\sf Opt}_\mu(H)$. We consider two cases:
	
	{\bf Case(i)} Suppose $c \leq \mu^{1/4}$. Note that, since the weights are bounded by $1/|V|^{-1/100}$, randomly setting $\mu$-fraction of variables to $1$ will yield a feasible labeling that satisfies $\mu/2$-fraction of constraints using the assignment $(1,0,0)$ w.h.p, and hence this will yield a $\mu/\mu^{1/4} = \mu^{3/4}$-approximation.
	
	{\bf Case (ii)} Suppose $c > \mu^{1/4}$, and let $\sigma^*$ be the corresponding optimal labeling. 	Now suppose $\sigma^*$ satisfies at least $c/2$-fraction of constraints using $(1,0,0)$ assignment. Then solving the LP relaxation corresponding to the accepting string $(1,0,0)$ and rounding it (similar to Max-$k$-Coverage) yields a $\mu$-weight assignment that satisfies at least $(1 - 1/e)c/2$-fraction of constraints.
	
	Otherwise, if $\sigma^*$ satisfies at least $c/2$-constraints using the $(0,1,1)$ assignment, then we can do the following. Let $\tilde{H} = (V,\tilde{E})$ be the graph where for every edge $e = (i,j,k) \in E$, we introduce an edge $(j,k)$ of weight $w(e)$. Then note that there exists a set of weight $\mu$ that induces at least $c/2$ weight of edges in $\tilde{H}$. Then using the algorithm from Observation \ref{obs:dks}, we can find a set of weight $\mu$ that induces at least $c^3/8$-weight of edges in $\tilde{H}$ --  denote the corresponding labeling as $\sigma$. Note that $\sigma$ is not guaranteed to be a good labeling for $H$ as for some of the edges in $\tilde{H}$ induced by set indicated by the labeling, the first vertex from the corresponding constraint in $H$ might also be included in the set i.e., the assignment to the corresponding edge ends up being $(1,1,1)$ instead of $(0,1,1)$. To fix this, we construct another labeling $\sigma'$ from $\sigma$ by setting a random $\mu/2$-fraction of ones to zeros, and $\mu/2$-fraction of zeros to ones. It is easy to see that this is also a feasible labeling that will now satisfy at least $\Omega(c^3)$-constraints in $\tilde{H}$, thus yielding a $\Omega(c^2) = \Omega(\sqrt{\mu})$ approximation.
\end{proof}

\end{document}